\newcommand{\mytodo}[2]{\xspace}
\newcommand{\myrevtodo}[2]{{
        \let\marginpar\marginnote
        \reversemarginpar
        \renewcommand{\baselinestretch}{0.8}
        }}
\newcommand{\myinlinetodo}[2]{\todo[size=\small, color=#1!50!white, inline, 
caption={}]{#2}\xspace}
\newcommand{\registerAuthor}[3]{
    \expandafter\newcommand\csname #2com\endcsname[1]{\mytodo{#3}{\textsc{#2}: 
    ##1}}
    \expandafter\newcommand\csname 
    #2revcom\endcsname[1]{\myrevtodo{#3}{\textsc{#2}: ##1}}
    \expandafter\newcommand\csname 
    #2inline\endcsname[1]{\myinlinetodo{#3}{\textsc{#2}: ##1}}
    \expandafter\newcommand\csname 
    #2inlineLater\endcsname[1]{\lv{\myinlinetodo{#3}{\textsc{#2}: ##1}}}
}
\def\moverlay{\mathpalette\mov@rlay}
\def\mov@rlay#1#2{\leavevmode\vtop{
        \baselineskip\z@skip \lineskiplimit-\maxdimen
        \ialign{\hfil$\m@th#1##$\hfil\cr#2\crcr}}}
\newcommand{\charfusion}[3][\mathord]{
    #1{\ifx#1\mathop\vphantom{#2}\fi
        \mathpalette\mov@rlay{#2\cr#3}
    }
    \ifx#1\mathop\expandafter\displaylimits\fi}
\newtheorem{observation}{Observation}
\newcommand{\const}{\textsc{Constructive}\xspace}
\newcommand{\dest}{\textsc{Destructive}\xspace}
\newcommand{\exact}{\textsc{Exact}\xspace}
\newcommand{\ex}{\textsc{Exists}\xspace}
\newcommand{\swap}{\textsc{Swap}\xspace}
\newcommand{\addNoSpace}{\textsc{Add}}
\newcommand{\add}{\addNoSpace\xspace}
\newcommand{\addag}{\text{add}\xspace}
\DeclareMathOperator{\orig}{orig}
\newcommand{\delete}{\textsc{Delete}\xspace}
\newcommand{\accD}{\textsc{DeleteAcceptability}\xspace}
\newcommand{\reor}{\textsc{Reorder}\xspace}
\newcommand{\swapR}{\textsc{SwapRestricted}\xspace}
\newcommand{\uni}{\textsc{Unique}\xspace}
\newcommand{\Iconst}{\textsl{Constructive}\xspace}
\newcommand{\Idest}{\textsl{Destructive}\xspace}
\newcommand{\Iexact}{\textsl{Exact}\xspace}
\newcommand{\Iex}{\textsl{Exists}\xspace}
\newcommand{\Iuni}{\textsl{Unique}\xspace}
\newcommand{\Iswap}{\textsl{Swap}\xspace}
\newcommand{\Iadd}{\textsl{Add}\xspace}
\newcommand{\Idelete}{\textsl{Delete}\xspace}
\newcommand{\IaccD}{\textsl{DeleteAcceptability}\xspace}
\newcommand{\Ireor}{\textsl{Reorder}\xspace}
\newcommand{\IswapR}{\textsl{SwapRestricted}\xspace}
\newcommand{\Mst}{M^*}
\DeclareMathOperator{\ma}{ma}
\DeclareMathOperator{\rank}{rank}
\DeclareMathOperator{\bp}{bp}
\DeclareMathOperator{\dist}{dist}
\DeclareMathOperator{\id}{id}
\newtheorem{theorem}{Theorem}
\newtheorem{lemma}{Lemma}
\newtheorem{proposition}{Proposition}
\newtheorem{example}{Example}
\theoremstyle{remark}
\newtheorem*{remark}{Remark}
\DeclareMathOperator{\pend}{
    \succ 
    \overset{\raise0.3em\hbox{\text{\scriptsize{(rest)}}}}{\ldots}}
\newcommand{\bigO}{\ensuremath{\mathcal{O}}}
\tikzstyle{vertex}=[draw, circle, fill, inner sep = 2pt]
\tikzstyle{squared-vertex}=[draw, fill, inner sep = 2pt]
\newlength{\RoundedBoxWidth}
\newsavebox{\GrayRoundedBox}
\newenvironment{GrayBox}[1]
{\setlength{\RoundedBoxWidth}{.93\columnwidth}
	\def\boxheading{#1}
	\begin{lrbox}{\GrayRoundedBox}
		\begin{minipage}{\RoundedBoxWidth}}
		{   \end{minipage}
	\end{lrbox}
	\begin{center}
		\begin{tikzpicture}
		\node(Text)[draw=black!20,fill=white,rounded corners,inner sep=2ex,text 
		width=\RoundedBoxWidth]
		{\usebox{\GrayRoundedBox}};
		\coordinate(x) at (current bounding box.north west);
		\node [draw=white,rectangle,inner sep=3pt,anchor=north west,fill=white]
		at ($(x)+(6pt,.75em)$) {\boxheading};
		\end{tikzpicture}
\end{center}}
\newenvironment{defproblemx}[1]{\noindent\ignorespaces
	\FrameSep=6pt
	\parindent=0pt
	\begin{GrayBox}{#1}
		\begin{tabular*}{\columnwidth}{!{\extracolsep{\fill}}@{\hspace{.1em}} 
		>{\itshape} p{1.5cm} p{0.82\columnwidth} @{}}
		}{
		\end{tabular*}
	\end{GrayBox}
	\ignorespacesafterend
}
\newcommand{\defProblemQuestion}[3]{
	\begin{defproblemx}{#1}
		Input: & #2 \\
		Question: & #3
	\end{defproblemx}
}
\begin{document}

\title{\Large \bf {Bribery and Control in Stable Marriage}}

\author{Niclas~Boehmer, Robert~Bredereck\thanks{New affiliation: Institut f\"ur 
		Informatik, 
		Humboldt-Universit\"at zu 
		Berlin, Berlin, Germany.}, Klaus~Heeger, and 
	Rolf~Niedermeier}

\date{
	\small
	TU Berlin, Algorithmics and Computational Complexity\\
	\texttt{\{niclas.boehmer,robert.bredereck,heeger,rolf.niedermeier\}@tu-berlin.de}\\
}

\maketitle 
	
\begin{abstract}
    We initiate the study of external manipulations in 
    \textsc{Stable Marriage} by 
    considering several 
    manipulative actions as well as several 
    manipulation goals. 
    For instance, one goal is to make sure 
    that a given pair of agents is matched in a stable solution, and this 
    may be achieved by the manipulative action of reordering 
    some agents' preference lists. 
    We present a comprehensive study of the computational complexity of all 
    problems arising in this way. We find several polynomial-time 
    solvable 
    cases as well
    as NP-hard ones. For the NP-hard cases, focusing  on the natural 
    parameter ``budget'' (that is, the number of manipulative actions one is 
    allowed to perform),
    we also conduct a parameterized complexity analysis and
    encounter mostly 
parameterized hardness results.
\end{abstract}

\section{Introduction}
In the \textsc{Stable Marriage} problem, we have two sets of 
agents, 
each
agent has preferences over all agents from the other set, and the goal is to 
find a
matching between agents of the one set and agents of the other set such that
no two agents prefer each other to their assigned partners. In this paper, we 
study the manipulation of a \textsc{Stable Marriage} instance by an external 
agent 
that is able to change the set of agents or (parts of) their preferences. 
To motivate our studies, consider the following example.

Due to the high demand from students, Professor X decides to implement a 
central matching scheme using \textsc{Stable Marriage} to assign interested 
students to final-year projects 
offered by X's 
group: Every term X asks each of her group members to propose a 
project. Subsequently, projects are put online and interested
students are asked to submit their preferences over projects, while group 
members are asked to submit their preferences over the
students.
Afterwards, a stable matching of group members/projects 
and students is computed and implemented.\footnote{Notably, there is a 
wide spectrum of literature concerned 
	with finding a (stable) matching of students to final-year 
	projects/supervisors 
	(see the survey of 
	\citet[Chapter 
	5.5]{DBLP:books/ws/Manlove13}). A 
	central 
	mechanism for finding a (stable) matching of students to final-year 
	projects has been 
	implemented at the University of Glasgow \citep{DBLP:books/ws/Manlove13}, 
	the 
	University of York \citep{York}, and several other 
	universities \citep{hussain2019systematic,DBLP:journals/te/AnwarB03}.}  As 
	the submitted preferences are visible to all group members, two 
days before the deadline, one group member realizes that in the currently 
only stable matching he is not matched to the student with whom he already 
started thinking about his proposal. That is why, in order to be matched to his preferred student in at least 
some stable matching, he motivates two so far non-participating students that 
were unsure whether to do their final project this year or next year to register 
already this year. After this change, another group member notices that she is 
now 
matched to her least preferred choice in some of the matchings that are 
currently stable but she already has
an idea how to change this: She quickly 
visits two of her fellow group members and tells a story about how 
great two of the registered students performed in her least year's tutorial. In 
this way, 
she convinces them 
to rank these two students higher in their preferences. On the last day before 
the 
deadline, Professor X realizes that there currently exist multiple 
stable matchings. As X believes that this may cause unnecessary discussions 
on 
which stable matching to choose, X tells the two newest group members who 
anyway know only few of the students how they should change their 
preferences. 

 Another 
application of the non-bipartite version of \textsc{Stable Marriage} (called 
\textsc{Stable Roommates}), which might be 
easily susceptible to external manipulation arises in the context of P2P 
networks. 
This affects,
in particular, the BitTorrent protocol for content 
distribution \citep{DBLP:journals/corr/abs-cs-0612108,DBLP:conf/icdcs/GaiMMR07}.
Here, users rank each other based on some technical data, e.g., 
their download and upload bandwidth and based on the similarity of their 
interests. However, a user may easily add new users by simply 
entering the network multiple times with different accounts, thereby 
influencing 
the computed 
matching. 

Looking at further applications of \textsc{Stable Marriage}
and corresponding generalizations in the context of matching markets, there
is clear 
evidence of external manipulations in modern applications. For instance, 
surveys 
reported that in college 
admission systems in China, Bulgaria, Moldova, and Serbia, bribes have been 
performed 
in 
order to gain desirable admissions~\citep{doi:10.1086/524367,LIU2015104}.
Focusing on the most basic scenario \textsc{Stable Marriage}, we initiate 
a thorough study of manipulative actions (bribery and control) 
from a computational complexity perspective.
Notably, bribery  scenarios have also been used as a motivation in other 
papers
around \textsc{Stable Marriage}, e.g., when finding robust stable 
matchings~\citep{DBLP:conf/ec/ChenSS19} or when studying strongly stable 
matchings in 
the \textsc{Hospitals/Residents} problem
with ties~\citep{DBLP:conf/stacs/IrvingMS03}.

External manipulation may have many faces such as deleting agents,
adding agents, or changing agents' preference lists.
We consider three different manipulation goals and 
five different manipulative actions.

We introduce the manipulation goals \Iconst-\Iex, \Iexact-\Iex, 
and 
\Iexact-\Iuni,
where \Iconst-\Iex  is the least restrictive goal and asks
for modifications such that a desired agent pair is contained in some stable 
matching.
More restrictively, \Iexact-\Iex asks for modifications such that a 
desired matching is stable.
Most restrictively, \Iexact-\Iuni requires that a desired matching becomes the 
only
stable matching.

As manipulative actions, we investigate \Iswap, \Ireor, \IaccD, \Idelete, and 
\Iadd.
The actions \Iswap and \Ireor model bribery through an external agent. 
While a single \Ireor action allows to completely change the preferences of an 
agent (modeling a briber who can ``buy an agent''),
a \Iswap action is more fine-granular and only allows to swap two neighboring 
agents in some agent's preference list (modeling a briber who has to slightly
convince agents where the costs/effort to convince an agent of some 
preferences is larger if the preferences are further away from the agent's true 
preferences).
For both actions,  
the external agent might actually change the true preferences of the influenced 
agent, for example, by advertising some possible partner. However, in settings 
where the agents' preferences serve as an input for a centralized mechanism 
computing a stable matching which is subsequently implemented and cannot be 
changed, it is enough to bribe the agents to cast untruthful preferences.
\Idelete and \Iadd model control of the instance.
They are useful to model an external agent, i.e., the organizer of some matching 
system,
with the power to accept or reject agents to participate or to change the 
participation rules.
While a \Idelete (resp.\ \Iadd) action allows to delete (resp.\ add) an agent 
to the instance,
a \IaccD action forbids for a specific pair of agents the possibility to be 
matched to each other and to be blocking.
The latter can be seen as a hybrid between bribery and control because it can
model an external agent that changes acceptability rules 
or it can model a briber who convinces an agent
that another agent is unacceptable at some cost.

We conduct a complete analysis of all fifteen combinations of 
our five manipulative actions and three manipulation goals. Note, however, that 
for the two actions \Idelete and \Iadd the definition of \Iexact-\Iex and 
\Iexact-\Iuni introduced above cannot be directly applied, which is why we 
propose an adapted definition in \Cref{sub:goals}.

\paragraph*{Related Work.} 
Since its introduction by \citet{GaleS62}, \textsc{Stable
    Marriage} has been intensely studied by researchers from different
disciplines and in many 
contexts~(see, 
e.g., the 
surveys of \citet{DBLP:books/daglib/0066875,Knuth76,DBLP:books/ws/Manlove13}).

A topic related to manipulation in stable matchings is the study of 
\emph{strategic behavior}, which focuses on the question
whether agents can misreport their preferences
to fool a given matching algorithm to match them to a better partner. Numerous 
papers 
have addressed the question of
strategic behavior for different variants of computing stable matchings,
matching algorithms, types of agents' preferences and restrictions on the
agents that are allowed to misreport their preferences
(e.g., 
\citep{DBLP:conf/atal/AzizSW15,DBLP:journals/corr/abs-2012-04518,DBLP:journals/aamas/PiniRVW11,DBLP:journals/mor/Roth82,DBLP:journals/mansci/TeoST01,DBLP:conf/aaai/ShenTD18};
see the survey of \citet[Chapter~2.9]{DBLP:books/ws/Manlove13}). This 
setting is related to ours in the
sense that the preferences of agents are modified to achieve a desired outcome,
while it is fundamentally different with respect to the allowed modifications
and their
goal: In
the context of strategic behavior an agent is only willing to change \emph{its}
preferences if the agent directly benefits from it. 
Notably, in the context of strategic behavior, 
\citet{DBLP:conf/sigecom/Gonczarowski14} investigated how all agents from one 
side 
together can alter their preferences (including to declare some agents 
unacceptable) in order to achieve that a given matching is the unique stable 
matching.
This goal corresponds to our \Iexact-\Iuni setting. However, while 
\citet{DBLP:conf/sigecom/Gonczarowski14} allows for arbitrary changes in the 
preferences of all agents from one side including the deletion of the 
acceptability of agent 
pairs,
in the problems considered in this paper we always allow the briber to 
influence all agents and only either allow for reordering the preferences 
arbitrarily (\Ireor) 
or deleting the acceptability of agent pairs (\IaccD) (and aim to minimize the 
number of such manipulations).

Generally 
speaking, using our manipulative action \Iswap combined with an appropriate 
manipulation
goal, it is possible to model most computational problems where a single agent 
or a  
group of agents wants to fool an algorithm by misreporting their preferences: In 
order to do so, we set the budget such that the preferences of the agents from 
the cheating group can be modified arbitrarily. Further, we introduce dummy 
agents that are always matched among themselves in any stable matching (see
\Cref{lem:dummy-agents}). We add these dummy agents between each pair of 
consecutive agents 
in the preferences of all non-manipulating agents such that the budget is not 
sufficient to swap two non-dummy agents in the preferences of these agents. 
While this reduction shows how \Iswap can model computational problems related 
to strategic behavior of a group of agents, as already mentioned, the 
goals typically considered in 
the study of strategic behavior 
differ significantly from the ones studied in our paper. In the context of 
strategic behavior, the focus of a cheating group usually lies on 
misreporting their preferences such that all of them are matched to a better 
partner in a matching returned by a specific matching 
algorithm (as done, e.g., by 
\citet{DBLP:conf/atal/AzizSW15,DBLP:journals/aamas/PiniRVW11,DBLP:journals/mor/Roth82,DBLP:journals/mansci/TeoST01}),
 while our 
focus lies on making a specific pair or 
matching stable.

While we are interested in finding
ways to influence a profile to change the set of stable matchings, finding
\emph{robust} stable 
matchings~\citep{DBLP:conf/ec/ChenSS19,DBLP:conf/esa/MaiV18,DBLP:journals/corr/abs-1804-05537}
corresponds to finding stable matchings such that a briber cannot easily make 
the matching unstable.
For instance, \citet{DBLP:conf/ec/ChenSS19} introduced the concept of
\emph{$d$-robustness}: A matching is $d$-robust if it is stable in the given 
instance and remains stable even if~$d$~arbitrary swaps in preference lists are 
performed.
One motivation for their study of $d$-robust stable matchings is that $d$-robust stable matchings withstand bribers which may perform up to $d$ swaps.

Conceptually, our work is closely related to the study of bribery and
control in elections
(see the survey of \citet{DBLP:reference/choice/FaliszewskiR16}). 
In election control
problems~\citep{bartholdi1992hard},
the goal is to change the structure of a given election, e.g., by modifying the 
candidate or voter set, such that a
designated candidate becomes the winner/looser of the resulting election. In
bribery 
problems~\citep{DBLP:journals/jair/FaliszewskiHH09}, the briber is
allowed to modify the votes in the election to achieve the goal. Most of the
manipulative actions we consider
are inspired by either some control operation or bribery action
already studied in the context of voting. 

Our manipulation goals are also related to
problems previously studied in the stable matching literature: For example, the 
\Iconst-\Iex problem with given budget zero reduces to the
\textsc{Stable Pair} problem, which aims at deciding whether a given agent pair is contained in at least one
stable
matching. While the problem
is polynomial-time solvable for \textsc{Stable Marriage} instances without 
ties~\citep{Gusfield87}, deciding whether an agent pair is part of a 
``weakly stable'' matching is NP-hard if ties are 
allowed~\citep{DBLP:journals/tcs/ManloveIIMM02}.
This directly implies hardness of the \Iconst-\Iex problem if ties are
allowed even
when the budget is zero.
Similarly, deciding whether there exists a weakly stable matching \emph{not} 
containing a given agent pair is also NP-hard in the presence of 
ties~\citep{DBLP:journals/disopt/CsehH20}.
Moreover, several authors studied sufficient or necessary conditions for a 
\textsc{Stable Marriage} instance to admit a unique stable 
matching~\citep{Clark2006,CONSUEGRA2013468,MR3023042,EECKHOUT20001,10.1007/978-3-642-22427-0_4,Reny21}.

\paragraph*{Our Contributions.}

Providing a complete polynomial-time solvability vs.\ NP-hardness dichotomy,
we settle the computational complexity
of all problems emanating from our manipulation 
scenarios. 
We also conduct a parameterized
complexity analysis of these problems based on the budget parameter~$\ell$, 
that is, the number of elementary manipulative actions that we are allowed to perform.
At some places,
we also consider the approximability of our problems in polynomial time or FPT time.
For instance, we prove that 
\const-\ex-\swap does not admit
an~$\bigO(n^{1-\epsilon})$-approximation in~$f(\ell) n^{\bigO(1)}$ time for any 
$\epsilon >0$ and any computable function $f$ unless FPT=W[1].
\Cref{ta:sum} gives an overview of our results. 
Furthermore, for all problems we observe 
XP-algorithms\footnote{That is, polynomial-time algorithms if $\ell$~is constant.} 
with respect to
the parameter~$\ell$.
The 
    \const-\ex-\reor and 
    \exact-\uni-\reor problem require non-trivial algorithms to show this.
    
\setlength{\tabcolsep}{4.5pt}
\begin{table}[t!]
	\centering
	\resizebox{\linewidth}{!}{\begin{tabular}{l l l l}
			\toprule
			Action/Goal & \Iconst-\Iex & \Iexact-\Iex &
			\Iexact-\Iuni \\
			\midrule
			\Iswap & $\forall \epsilon > 0$: W[1]-hard wrt.~$\ell$ to approx. & 
			P (Th.
			\ref{th:ex-ex-swap}) & NP-c. (Pr. 
			\ref{th:ex-uni-swap})\\
			& within a factor of $\mathcal{O}(n^{1-\epsilon})$ (Th. 
			\ref{th:const-ex-swap})& & \\
			\midrule
			\Ireor & W[1]-h. wrt.~$\ell$ (Th. \ref{th:const-ex-reor})  & P (Pr.
			\ref{th:ex-ex-reor}) & W[2]-h. 
			wrt.~$\ell$
			(Th.
			\ref{th:ex-uni-reor})\\
			& 2-approx
			in
			P (Pr. \ref{th:const-ex-reor-2}) & &\\ \midrule
			\textsl{Delete} & W[1]-h. wrt.~$\ell$ (Th.
			\ref{th:const-ex-reor})& P (Ob.
			\ref{th:ex-ex-accD})&
			P (Th. \ref{thm:exact-uni-accD})\\
			\textsl{Accept.}& 
			& & \\ \midrule
			\Idelete & P (Th. \ref{const-ex-delete})& NP-c. (Pr.
			\ref{th:ex-ex-del}) & NP-c. (Pr.
			\ref{th:ex-ex-del})\\
			& & FPT wrt.
			$\ell$ (Pr.
			\ref{th:ex-ex-delFPT}) & \\
			\midrule
			\Iadd & W[1]-h. wrt.\ $\ell$ (Th.
			\ref{cor:const-ex-add})& P (Pr. \ref{th:ex-ex-add})
			&
			W[2]-h. wrt.~$\ell$
			(Pr.
			\ref{th:ex-uni-add})\\
			& NP-c.\ even if~$\ell=\infty$ (Th.
			\ref{cor:const-ex-add})  & & \\
			\bottomrule
	\end{tabular}} \caption{Overview of our results, where 
		$\ell$~denotes the given budget.
		All stated
		W[1]- and W[2]-hardness results also imply NP-hardness.  See 
		\Cref{se:prel} for definitions of parameterized complexity classes and 
		formal problem definitions.}
	\label{ta:sum}
\end{table}

We highlight the following five results and techniques. 
\begin{itemize}
        \item We develop a quite general framework for constructing 
        parameterized
        reductions from the W[1]-hard graph problem \textsc{Clique} to the 
\Iconst-\Iex problem and design the
        required gadgets for \Iadd, \IaccD, and \Ireor (\Cref{cor:const-ex-add,th:const-ex-reor}).
        \item We design a simple and efficient 
        algorithm for
        \const-\ex-\delete (\cref{const-ex-delete}), based on a non-trivial 
        analysis.
        \item We design a concise parameterized reduction from the W[2]-hard 
    problem \textsc{Hitting
        Set} to \exact-\uni-\reor. Surprisingly, in the constructed instance, to 
        make the target matching the unique stable matching, the preferences of 
some agents need to
        be reordered by swapping \emph{down} their (desired) partner in the given 
        matching~(\cref{th:ex-uni-reor}).
        \item We analyze how the manipulative actions \IaccD and \Ireor can be 
used
        to modify the so-called rotation poset to 
        make a given matching the unique stable
        matching (\cref{thm:exact-uni-accD,pr:ex-uni-reor-XP}).
        \item Our polynomial-time algorithms exhibit surprising connections 
between
        manipulations in \textsc{Stable Marriage} and the classical
        graph problems \textsc{Bipartite Vertex Cover}~(\cref{th:ex-ex-reor}),
        \textsc{Minimum Cut} (\cref{th:ex-ex-swap}), and \textsc{Weighted
        Minimum
        Spanning Arborescence} (\cref{thm:exact-uni-accD}).
\end{itemize}

Comparing the results for the different combinations of
manipulation goals and manipulative actions, we observe a quite
diverse complexity landscape: While for all other manipulative actions the 
corresponding problems are computationally hard, \const-\ex-\delete and 
\exact-\uni-\accD are polynomial-time solvable.
Relating the different manipulation goals to each other, we show that
specifying a complete matching that should be made stable instead of just one 
agent pair
that should be part of some stable matching
makes the problem of finding a successful manipulation significantly easier. 
In contrast to
this, providing even more information about the resulting instance by requiring
that
the given matching is the unique stable matching instead of just one of the
stable matchings makes the problem of finding a successful manipulation again 
harder.

From a high-level perspective, our computational hardness results can be seen as 
a shield against manipulative attacks. Of course, 
these shields are not unbreakable, as they only offer a worst-case protection 
against computing an attack of minimum cost. However, we slightly strengthen 
these (worst-case) shields 
by also 
proving parameterized hardness results for the parameter budget, 
which might be small compared to the number of agents especially in large 
matching markets. In contrast to this, our polynomial-time algorithms suggest 
that 
market makers shall be extra cautious in situations where the corresponding 
manipulative action can be easily performed. 

There also is a more positive interpretation of bribery and control
\citep{DBLP:conf/atal/FaliszewskiST17,DBLP:conf/ijcai/BoehmerBKL20}:
The minimum cost of a successful attack for \Iconst-\Iex can be 
interpreted as a measure for the ``distance from stability''  of the 
corresponding pair. Similarly, the minimum cost for \Iexact-\Iex can be 
interpreted as the ``distance from stability'' of the corresponding matching. 
Both metrics might be particularly interesting in applications where a central 
authority decides on a matching for which stability-related considerations are 
important but perfect stability is not vital. For instance, in our 
introductory example, Professor X might be dissatisfied with the matchings that 
are 
currently stable and therefore could use the ``distance from stability'' 
measure to decide between a few different matchings she deems acceptable. As 
\Iswap can be understood as the most fine-grained of our considered 
manipulative actions, in such situations it might be 
particularly 
appealing to use our polynomial-time algorithm for  \exact-\ex-\swap to compute 
the swap distance from stability of a matching.  Lastly, \Iexact-\Iuni offers a 
measure for the ``distance from 
unique stability'' of a matching. In practice, this distance could, for 
instance, serve as a 
tie-breaker between different stable matchings. 
However, there also exists 
a destructive view on bribery and control problems, where the goal is to 
prevent a given pair/matching from 
being stable. Destructive bribery can thus be interpreted as a distance measure 
from being unstable and can be used to quantify the robustness of the stability 
of a pair or a matching 
\citep{DBLP:conf/atal/FaliszewskiST17,DBLP:conf/ijcai/BoehmerBKL20,DBLP:journals/corr/abs-2010-09678}.
 While we focus on a constructive view, 
some of our results such as the polynomial-time algorithm for 
\const-\ex-\delete carry over to the destructive variant (see 
\Cref{subsub:const-reor}).

\paragraph*{Organization of the Paper.} 
\Cref{se:prel} delivers background on parameterized complexity analysis and the \textsc{Stable Marriage}
problem, and it formally defines the different considered manipulative actions 
and 
manipulation goals.
Afterwards, we devote one section to each of the manipulation goals we analyze.
In \Cref{se:const-ex}, we consider the \Iconst-\Iex setting for all
manipulative actions. We split this section into two parts; in the first part, 
we
prove several W[1]-hardness results, and in the second
part, we present a polynomial-time algorithm for the \Idelete action 
and a polynomial-time 
factor-2-approximation algorithm
for the \Ireor action. In \Cref{se:ex-ex}, we present our results for \Iexact-\Iex. After
considering the manipulative actions \Iswap, \Ireor, and \IaccD in the first
part of this section, we analyze the actions \Iadd and \Idelete in the second 
part. 
In \Cref{ex-uni},
we start by presenting hardness results for the \Iexact-\Iuni setting and then 
derive a polynomial-time for \IaccD as well as one XP algorithm for~\Ireor.
We conclude in \Cref{se:conc}, indicating directions for future research and 
presenting few very preliminary insights from experimental work with our 
algorithms.

\section{Preliminaries and First Observations} \label{se:prel}
In this section, we start by recapping some fundamentals of parameterized 
complexity theory (\Cref{sub:para}) and defining the \textsc{Stable Marriage} 
problem and related concepts (\Cref{sub:mar}). Subsequently, we introduce and 
formally define the five manipulative actions~(\Cref{sub:actions}) and three 
manipulation goals (\Cref{sub:goals}) we study. Lastly, in 
\Cref{se:relManip}, we make some first observations about the relationship of 
the different manipulative actions on a rather intuitive level.

\subsection{Parameterized Complexity} \label{sub:para}
A \emph{parameterized problem} consists of a problem instance~$\mathcal{I}$ 
and a (typically integer)
parameter value~$k$ (in our case the budget $\ell$).\footnote{To
simplify complexity-theoretic matters, by default parameterized problems are framed as decision problems. However, our positive algorithmic results easily extend to the corresponding optimization and search problems.}
It is called 
\emph{fixed-parameter tractable} with respect to
$k$ if it can be solved by an
\emph{FPT-algorithm}, i.e., an algorithm running in 
$f(k)|\mathcal{I}|^{O(1)}$ time for a computable function 
$f$. Moreover, it lies in XP with respect to~$k$ if it can be solved in $|\mathcal{I}|^{f(k)}$ time for some computable function 
$f$.
There is also a theory of hardness of parameterized
problems that includes the notion of W$[t]$-hardness with 
$\text{W}[t]\subseteq \text{W}[t']$ for $t\leq t'$. If a problem is
W[$t$]-hard for a given parameter for any $t\ge 1$, then it is widely believed
not to be
fixed-parameter tractable for this parameter. The usual approach to prove that 
a given parameterized problem is W[$t$]-hard is to describe a parameterized reduction
from a known W[$t$]-hard problem to it.
In our case, we only use the following special case of 
parameterized reductions: Standard many-one
reductions that run in polynomial time and ensure that the parameter of the 
output
instance is upper-bounded by a function of the parameter of the input
instance. 

\subsection{Stable Marriage} \label{sub:mar}
An instance $\mathcal{I}$ of the \textsc{Stable Marriage} (SM) problem consists
of a 
set
$U=\{m_1,\dots m_{n}\}$ of men and a set~$W=\{w_1,\dots,w_{n}\}$ of women, 
together
with a strict \emph{preference list}~$\mathcal{P}_a$ for each~$a\in U \cup 
W$.\footnote{We are well aware of the fact that \textsc{Stable Marriage} can be
	criticized for advocating and transporting outdated role models or 
	conservative marriage concepts. First, we emphasize that we use the old 
	concepts (men matching with women) for notational convenience and for
	being in accordance with the very rich, also recent literature. Second, we 
	remark that in real-world 
	applications \textsc{Stable Marriage} models general two-sided matching 
	markets, 
	which may appear in different scenarios such as matching students with 
	supervisors or matching mines with deposits.}
Note that following conventions from the literature and as this simplifies
discussions in some places, we 
assume that in all considered SM instances  $\mathcal{I}=(U, W, 
\mathcal{P})$, it holds that $|U|=|W|=n$ (for \Iadd, this 
means that after adding all agents to the instance the number of women and men 
is the same).\footnote{Notably, this assumption is crucial for our 
polynomial-time 2-approximation algorithm for \const-\ex-\reor in 
\Cref{subsub:const-reor}.}
Moreover, we call the elements of~$U \cup W$ \emph{agents} and 
$A=U\cup W$ denotes
the set of agents.
The preference list~$\mathcal{P}_a$ of an agent~$a$ is a strict order over the 
agents of the opposite gender.
We denote the preference list of an agent~$a\in A$ by~$a: a_1 \succ a_2 \succ 
a_3 
\succ \dots$, where~$a_1$ is~$a$'s most preferred agent,~$a_2$ is~$a$'s second 
most preferred agent, and so on. For the sake of readability, we sometimes only 
specify parts of the agents' preference relation and end the preferences with 
``$\pend$''. 
In this case, it is possible to complete the given profile by adding the 
remaining agents in an arbitrary order.
We say that~$a$ \emph{prefers}~$a'$ to~$a''$ if $a$ ranks $a'$ above $a''$ in 
its preference list, i.e.,~$a' \succ_a a''$. For two agents
$a,a'\in A$ of opposite gender, let~$\rank(a,a')$ denote the rank of~$a'$ in
the preference relation of $a$, i.e., one plus the number of agents which~$a$ prefers 
to~$a'$.

A \emph{matching}~$M$ is a set of pairs $\{m, w\}$ with $m \in U$ and $w\in W$ 
such that each agent is
contained 
in at most one pair.
An agent is \emph{assigned} in a matching~$M$ if some pair of~$M$ contains this agent, and \emph{unassigned} otherwise.
For a matching~$M$ and an assigned agent~${a\in A}$, we denote by~$M(a)$ the 
agent~$a$ 
is matched to in~$M$, i.e.,~$M(a) = a'$ if~${\{a, a'\}\in M}$.
We slightly abuse notation and write~$a\in M$ for an agent~$a$ if there exists 
some agent~$a'$ such that~$\{a, a'\}\in M$.
A matching is called \emph{complete} if no agent is unassigned.
For a matching~$M$, a pair~$\{m, w\}$ with $m\in U$ and $w\in W$ is \emph{blocking} if 
both~$m$ is 
unassigned or prefers~$w$ to~$M(m)$ and~$w$ is unassigned or prefers~$m$ 
to~$M(w)$.
A matching is \emph{stable} if it does not admit a blocking pair. We denote the set of stable matchings in an SM
instance~$\mathcal{I}$ by
$\mathcal{M}_{\mathcal{I}}$. Given a matching~$M$ and some 
subset~$A'\subseteq A$ of agents, we denote by~$M|_{A'}$ the restriction of~$M$ 
to~$A'$, i.e.,~$M|_{A'}=\{\{u,w\}\in M: u,w\in A'\}$.
A stable matching~$M$ is called \emph{man-optimal} if for every man $m\in U$ 
and every stable matching $M'$ it holds that $m$ does not prefer $M' (m)$ to 
$M(m)$.
Symmetrically, a stable matching~$M$ is called \emph{woman-optimal} if for every 
woman~$w\in W$ and every stable matching~$M'$ it holds that $w$ does not 
prefer~$M' (w)$ to $M(w)$.
\citet{GaleS62} showed that a man-optimal as well as a woman-optimal matching 
always exist.

The \textsc{Stable Marriage with Incomplete 
    Lists} (\textsc{SMI}) problem is a generalization of the \textsc{Stable 
    Marriage} 
problem where each agent~$a$ is allowed to specify 
incomplete preferences of agents of the 
opposite gender.
Then, a pair of agents~$\{m,w \}$ with $m\in U$ and $w\in W$
can only be part of a stable matching~$M$ if they both appear in each others'
preference list.
A pair~$\{m,w \}$ with $m\in U$ and $w\in W$ is blocking if $m$ and $w$ appear on each other's preferences and 
both~$m$ is 
unassigned or prefers~$w$ to~$M(m)$, and~$w$ is unassigned or prefers~$m$ 
to~$M(w)$.
Let~$\ma(M)$ denote the set of agents matched in a stable matching~$M$. 
Note that by the Rural Hospitals Theorem~\citep{roth1986allocation} 
it holds for all stable matchings~$M,M'\in \mathcal{M}_{\mathcal{I}}$ that~
$\ma(M)=\ma(M')$.
Moreover, for an \textsc{SMI} instance $\mathcal{I}$, let~$\ma(\mathcal{I})$ 
denote the set of agents that are matched
in a stable matching in~$\mathcal{I}$.

\subsection{Manipulative Actions} \label{sub:actions}
We introduce five different manipulative actions and 
necessary notation in this subsection. We denote by~$\mathcal{X}\in$
$\{\Iswap,\Ireor,\IaccD,\Idelete,\Iadd\}$ the type of a manipulative action.

\medskip
\noindent \textbf{Swap.} A \Iswap operation changes the order of two 
neighboring 
agents in the 
preference list of an agent.

\begin{example}
Let~$a$ be an agent, and let its preference list be~$a: a_1 \succ a_2\succ 
a_3$.
There are two possible (single)
swaps:
Swapping~$a_1$ and~$a_2$, resulting in~$a: a_2 \succ a_1 \succ a_3$, and 
swapping~$a_2$ and~$a_3$, resulting in~$a: a_1 \succ a_3 \succ a_2$.
\end{example}

\medskip
\noindent \textbf{Reorder.} A \Ireor operation of an agent's preference list 
reorders its preferences 
arbitrarily, i.e., one performs an arbitrary permutation.

\begin{example}
For an agent~$a$ with preference list~$a: a_1 \succ a_2\succ a_3$, there 
are six possible reorderings, resulting in one of the six 
possible strict total orders over $\{a_1,a_2,a_3\}$.
\end{example}

\medskip
\noindent \textbf{Delete Acceptability.} A \IaccD
    operation is understood as deleting the mutual acceptability of a 
    man and a woman.
    This enforces that such a deleted pair cannot be part of any
    stable matching and cannot be a blocking pair for any stable matching. 
    Thus, after applying a \IaccD action, the given SM instance is transformed 
    into an SMI instance. For two agents $a,a'\in A$, we sometimes also say 
that we delete the pair or edge $\{a,a'\}$ if we delete the mutual 
acceptability of the two agents $a$ and $a'$. 

\begin{example}
Let~$m$ be a man with preferences~$m : w_1 \succ w_2 \succ w \succ w_3$, 
and~$w$ be a woman with preferences~$w: m \succ m_1 \succ m_2 \succ m_3$.
Deleting the pair~$\{m, w\}$ results in the following preferences:\\
~$m : w_1 \succ w_2 \succ w_3$ \hspace{0.5cm} and \hspace{0.5cm}~$w: m_1 \succ 
m_2 \succ m_3$.
\end{example}

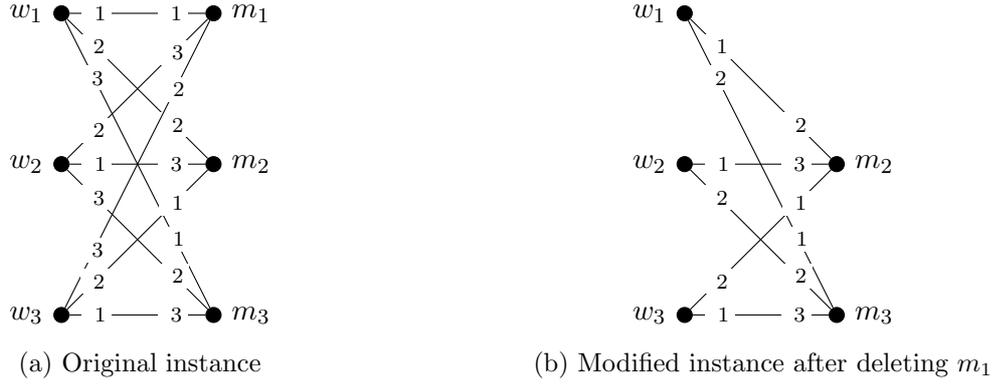
\begin{figure}[t]
    \captionsetup[subfigure]{justification=centering}
    \begin{subfigure}[b]{.45\textwidth}
        \centering
        \begin{tikzpicture}[yscale=2]
        \node[vertex, label=180:$w_1$] (w1) at (0, 0) {};
        \node[vertex, label=180:$w_2$] (w2) at (0, -1) {};
        \node[vertex, label=180:$w_3$] (w3) at (0, -2) {};
        
        \node[vertex, label=0:$m_1$] (m1) at (2, 0) {};
        \node[vertex, label=0:$m_2$] (m2) at (2, -1) {};
        \node[vertex, label=0:$m_3$] (m3) at (2, -2) {};
        
        \draw (w1) edge node[pos=0.2, fill=white, inner sep=2pt] 
        {\scriptsize~$1$}  node[pos=0.76, fill=white, inner sep=2pt] 
        {\scriptsize~$1$} (m1);
        \draw (w1) edge node[pos=0.2, fill=white, inner sep=2pt] 
        {\scriptsize~$2$}  node[pos=0.76, fill=white, inner sep=2pt] 
        {\scriptsize~$2$} (m2);
        \draw (w1) edge node[pos=0.2, fill=white, inner sep=2pt] 
        {\scriptsize~$3$}  node[pos=0.76, fill=white, inner sep=2pt] 
        {\scriptsize~$1$} (m3);
        \draw (w2) edge node[pos=0.2, fill=white, inner sep=2pt] 
        {\scriptsize~$2$}  node[pos=0.76, fill=white, inner sep=2pt] 
        {\scriptsize~$3$} (m1);
        \draw (w2) edge node[pos=0.2, fill=white, inner sep=2pt] 
        {\scriptsize~$1$}  node[pos=0.76, fill=white, inner sep=2pt] 
        {\scriptsize~$3$} (m2);
        \draw (w2) edge node[pos=0.2, fill=white, inner sep=2pt] 
        {\scriptsize~$3$}  node[pos=0.76, fill=white, inner sep=2pt] 
        {\scriptsize~$2$} (m3);
        \draw (w3) edge node[pos=0.2, fill=white, inner sep=2pt] 
        {\scriptsize~$3$}  node[pos=0.76, fill=white, inner sep=2pt] 
        {\scriptsize~$2$} (m1);
        \draw (w3) edge node[pos=0.2, fill=white, inner sep=2pt] 
        {\scriptsize~$2$}  node[pos=0.76, fill=white, inner sep=2pt] 
        {\scriptsize~$1$} (m2);
        \draw (w3) edge node[pos=0.2, fill=white, inner sep=2pt] 
        {\scriptsize~$1$}  node[pos=0.76, fill=white, inner sep=2pt] 
        {\scriptsize~$3$} (m3);
        \end{tikzpicture}  
        \caption{Original instance}
        \label{fig:example-delete-agent-1}
    \end{subfigure}
    \hfill
    \begin{subfigure}[b]{.45\textwidth}
        \centering
        \begin{tikzpicture}[yscale=2]
        \node[vertex, label=180:$w_1$] (w1) at (0, 0) {};
        \node[vertex, label=180:$w_2$] (w2) at (0, -1) {};
        \node[vertex, label=180:$w_3$] (w3) at (0, -2) {};
        
        \node[vertex, label=0:$m_2$] (m2) at (2, -1) {};
        \node[vertex, label=0:$m_3$] (m3) at (2, -2) {};
        
        \draw (w1) edge node[pos=0.2, fill=white, inner sep=2pt] 
        {\scriptsize~$1$}  node[pos=0.76, fill=white, inner sep=2pt] 
        {\scriptsize~$2$} (m2);
        \draw (w1) edge node[pos=0.2, fill=white, inner sep=2pt] 
        {\scriptsize~$2$}  node[pos=0.76, fill=white, inner sep=2pt] 
        {\scriptsize~$1$} (m3);
        \draw (w2) edge node[pos=0.2, fill=white, inner sep=2pt] 
        {\scriptsize~$1$}  node[pos=0.76, fill=white, inner sep=2pt] 
        {\scriptsize~$3$} (m2);
        \draw (w2) edge node[pos=0.2, fill=white, inner sep=2pt] 
        {\scriptsize~$2$}  node[pos=0.76, fill=white, inner sep=2pt] 
        {\scriptsize~$2$} (m3);
        \draw (w3) edge node[pos=0.2, fill=white, inner sep=2pt] 
        {\scriptsize~$2$}  node[pos=0.76, fill=white, inner sep=2pt] 
        {\scriptsize~$1$} (m2);
        \draw (w3) edge node[pos=0.2, fill=white, inner sep=2pt] 
        {\scriptsize~$1$}  node[pos=0.76, fill=white, inner sep=2pt] 
        {\scriptsize~$3$} (m3);
        \end{tikzpicture} 
        \caption{Modified instance after deleting $m_1$}
        \label{fig:example-delete-agent-2}
    \end{subfigure}
    \caption{Visualization of \Cref{ex:Del} for a \Idelete operation.
    The numbers on the edges encode the preferences of the agents:
    The number~$i$ of an edge closer to an agent $a$ means that $a$ considers
    the other endpoint of this edge as its $i$-th best partner.}
    \label{fig:example-delete-agent}
\end{figure}

\medskip
\noindent \textbf{Delete.} A \Idelete operation deletes an agent from the 
instance. Note that we allow for deleting an unequal number of men and women. 

\begin{example} \label{ex:Del}
Let $\mathcal{I}$ be an SM instance with agents~$U= \{m_1, m_2, m_3\}$ and
~$W =\{w_1, w_2, w_3\}$, and the following preferences:
\begin{align*}
m_1 & : w_1 \succ w_3 \succ w_2, & m_2& : w_3 \succ w_1 \succ w_2, & m_3 & : 
w_1 \succ w_2 \succ w_3,\\
w_1 & : m_1 \succ m_2 \succ m_3, & w_2&: m_2 \succ m_1 \succ m_3, & w_3 &: 
m_3 
\succ m_2 \succ m_1.
\end{align*}
The instance 
$\mathcal{I}$ is visualized in \Cref{fig:example-delete-agent-1}.
Deleting agent~$m_1$ results in the instance~$\mathcal{I}'=(U', W', 
\mathcal{P}')$ with~$U' = \{m_2, m_3\}$,~$W' = \{w_1, w_2, w_3\}$, and 
preferences:
\begin{align*}
m_2& : w_3 \succ w_1 \succ w_2, & m_3 & : 
w_1 \succ w_2 \succ w_3,& &\\
w_1&: m_2 \succ m_3, & w_2&: m_2 \succ m_3, & w_3&: m_3 \succ m_2.
\end{align*}
The modified instance is visualized in \Cref{fig:example-delete-agent-2}.
\end{example}
Given an SM 
    instance~$\mathcal{I}=(U,W,\mathcal{P})$
    and 
    a subset of agents~$A'\subseteq A$, we define~$\mathcal{I}\setminus A'$ to
    be the instance that results from deleting the 
    agents~$A'$ 
    from~$\mathcal{I}$.
    
\medskip

\noindent \textbf{Add.} An \Iadd operation adds an agent from a predefined set 
    of agents to the 
    instance.
    Formally, the input for a computational problem considering the 
    manipulative 
    action \Iadd consists 
    of an SM instance~$(U, W, \mathcal{P})$ together with two 
    subsets~$U_{\addag}
    \subseteq
    U$ and~$W_{\addag} \subseteq W$. The sets $U_{\addag}$ and $W_{\addag}$ contain 
    agents 
    that are not initially present and can be added to the original instance. 
    All other men $U_{\orig}:=U\setminus 
    U_{\addag}$ and women  $W_{\orig}:=W\setminus W_{\addag}$ are already 
    initially present and part of the 
    original instance.
    Adding a set of agents~$X_A=X_U \cup X_W~$ with~$X_U \subseteq U_{\addag}$ 
    and~$X_W 
    \subseteq W_{\addag}$ results in the instance~$(U_{\orig} \cup X_U, 
    W_{\orig} 
    \cup X_W, 
    \mathcal{P}')$, where~$\mathcal{P}'$ is the restriction of~$\mathcal{P}$ to 
    agents from~$U_{\orig} \cup X_U\cup W_{\orig} \cup X_W$. Note that as 
    mentioned in \Cref{sub:mar}, we
    require that $|U|=|W|$ but do not impose constraints on~$|U_{\addag}|$ and 
    $|W_{\addag}|$ or whether the same number of men and women is added to the 
    instance by the manipulation.

\begin{example}
    An example for the application of an \Iadd operation is depicted in 
    \Cref{fig:example-add-agent}. The instance consists of three 
    men~$U= \{m_1, m_2, m_3\}$ and three women~$W =\{w_1, w_2, w_3\}$ 
    with~$U_{\addag} 
    = \{m_2,m_3\}~$ and~$W_{\addag}=\{w_2,w_3\}$. That is, only the agents 
    $m_1$ and $w_1$ are 
    initially present in the instance and all other agents can be added by a 
    manipulative action. After adding~$X_U = \{m_2\}$ and~$X_W = \{w_2, w_3\}$, 
    the agent sets change to~$U^*= \{m_1, m_2\}$ and~$W^* =\{w_1, w_2, w_3\}$, 
    resulting in the preference profile shown in 
    \Cref{fig:example-add-agent-2}.
\end{example}
\begin{figure}[t]
	\captionsetup[subfigure]{justification=centering}
	\begin{subfigure}[b]{.45\textwidth}
		\centering
		\begin{tikzpicture}[yscale=2]
		\node[vertex, label=180:$w_1$] (w1) at (0, 0) {};
		\node[vertex, label=180:$w_2$, gray] (w2) at (0, -1) {};
		\node[vertex, label=180:$w_3$, gray] (w3) at (0, -2) {};
		
		\node[fit=(w1), circle, draw, inner sep = 0.4 cm, dashed, 
		label=90:$W_{\orig}$] (c) {};
		
		\node[vertex, label=0:$m_1$] (m1) at (2, 0) {};
		\node[vertex, label=0:$m_2$, gray] (m2) at (2, -1) {};
		\node[vertex, label=0:$m_3$, gray] (m3) at (2, -2) {};
		
		\node[fit=(m1), circle, draw, inner sep = 0.4 cm, dashed, 
		label=90:$U_{\orig}$] (d) {};
		
		\draw (w1) edge node[pos=0.2, fill=white, inner sep=2pt] 
		{\scriptsize~$1$}  node[pos=0.76, fill=white, inner sep=2pt] 
		{\scriptsize~$1$} (m1);
		\draw (w1) edge[gray] node[pos=0.2, fill=white, inner sep=2pt] 
		{\scriptsize~$2$}  node[pos=0.76, fill=white, inner sep=2pt] 
		{\scriptsize~$2$} (m2);
		\draw (w1) edge[gray] node[pos=0.2, fill=white, inner sep=2pt] 
		{\scriptsize~$3$}  node[pos=0.76, fill=white, inner sep=2pt] 
		{\scriptsize~$1$} (m3);
		\draw (w2) edge[gray] node[pos=0.2, fill=white, inner sep=2pt] 
		{\scriptsize~$2$}  node[pos=0.76, fill=white, inner sep=2pt] 
		{\scriptsize~$3$} (m1);
		\draw (w2) edge[gray] node[pos=0.2, fill=white, inner sep=2pt] 
		{\scriptsize~$1$}  node[pos=0.76, fill=white, inner sep=2pt] 
		{\scriptsize~$3$} (m2);
		\draw (w2) edge[gray] node[pos=0.2, fill=white, inner sep=2pt] 
		{\scriptsize~$3$}  node[pos=0.76, fill=white, inner sep=2pt] 
		{\scriptsize~$2$} (m3);
		\draw (w3) edge[gray] node[pos=0.2, fill=white, inner sep=2pt] 
		{\scriptsize~$3$}  node[pos=0.76, fill=white, inner sep=2pt] 
		{\scriptsize~$2$} (m1);
		\draw (w3) edge[gray] node[pos=0.2, fill=white, inner sep=2pt] 
		{\scriptsize~$2$}  node[pos=0.76, fill=white, inner sep=2pt] 
		{\scriptsize~$1$} (m2);
		\draw (w3) edge[gray] node[pos=0.2, fill=white, inner sep=2pt] 
		{\scriptsize~$1$}  node[pos=0.76, fill=white, inner sep=2pt] 
		{\scriptsize~$3$} (m3);
		\end{tikzpicture}
		\caption{Original instance with only agents $w_1$ and $m_1$ present 
		(to which agents $w_2$, $m_2$, $w_3$, and $m_3$ can be added).}
		\label{fig:example-add-agent-1}
	\end{subfigure}
	\hfill
	\begin{subfigure}[b]{.45\textwidth}
		\centering
		\begin{tikzpicture}[yscale=2]
		\node[vertex, label=180:$w_1$] (w1) at (0, 0) {};
		\node[vertex, label=180:$w_2$] (w2) at (0, -1) {};
		\node[vertex, label=180:$w_3$] (w3) at (0, -2) {};
		
		\node[vertex, label=0:$m_1$] (m1) at (2, 0) {};
		\node[vertex, label=0:$m_2$] (m2) at (2, -1) {};
		
		\draw (w1) edge node[pos=0.2, fill=white, inner sep=2pt] 
		{\scriptsize~$1$}  node[pos=0.76, fill=white, inner sep=2pt] 
		{\scriptsize~$1$} (m1);
		\draw (w1) edge node[pos=0.2, fill=white, inner sep=2pt] 
		{\scriptsize~$2$}  node[pos=0.76, fill=white, inner sep=2pt] 
		{\scriptsize~$2$} (m2);
		\draw (w2) edge node[pos=0.2, fill=white, inner sep=2pt] 
		{\scriptsize~$2$}  node[pos=0.76, fill=white, inner sep=2pt] 
		{\scriptsize~$3$} (m1);
		\draw (w2) edge node[pos=0.2, fill=white, inner sep=2pt] 
		{\scriptsize~$1$}  node[pos=0.76, fill=white, inner sep=2pt] 
		{\scriptsize~$3$} (m2);
		\draw (w3) edge node[pos=0.2, fill=white, inner sep=2pt] 
		{\scriptsize~$2$}  node[pos=0.76, fill=white, inner sep=2pt] 
		{\scriptsize~$2$} (m1);
		\draw (w3) edge node[pos=0.2, fill=white, inner sep=2pt] 
		{\scriptsize~$1$}  node[pos=0.76, fill=white, inner sep=2pt] 
		{\scriptsize~$1$} (m2);
		\end{tikzpicture}
		\caption{Modified instance after adding $m_2$, $w_2$, and $w_3$.}
		\label{fig:example-add-agent-2}
	\end{subfigure}
	\caption{Example for the application of manipulative action \Iadd.}
	\label{fig:example-add-agent}
\end{figure}
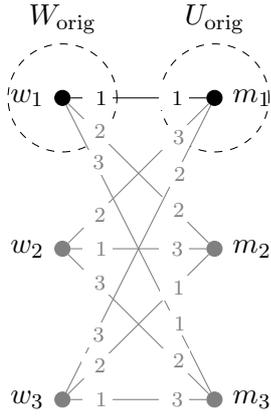
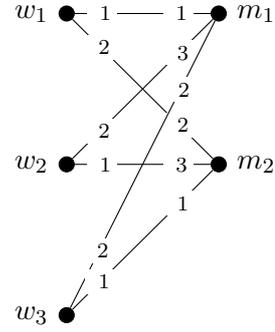

\subsection{Manipulation Goals} \label{sub:goals}
In the \Iconst-\Iex setting, the goal is to modify a given SM instance using
manipulative actions of some given type such that a designated man-woman pair 
is 
part of some stable matching.
For $\mathcal{X}\in \{\Iswap, \IaccD, \Idelete, \Iadd\}$, the formal definition 
of the problem is presented below. 
\defProblemQuestion{\const-\ex-$\mathcal{X}$}
{Given an SM instance~$\mathcal{I}=(U,W,\mathcal{P})$, a man-woman
    pair~$\{m^*,w^*\}$, and a budget~$\ell\in \mathbb{N}$.}
{Is it possible to perform at most~$\ell$ manipulative actions of type~$\mathcal{X}$
    such that 
    $\{m^*,w^*\}$ is part of at least one matching that is stable in the altered 
    instance?}

If one applies the above problem formulation to \Ireor, then the resulting 
problem always allows a trivial solution of size two by reordering the 
preferences of $m^*$ and~$w^*$ such that they are each other's top-choice.
Hence, for \const-\ex-\reor, we forbid the reordering of the preferences of 
$m^*$ and 
$w^*$, 
resulting in the
following problem formulation.
\defProblemQuestion{\const-\ex-\reor}
{Given an SM instance~$\mathcal{I}=(U,W,\mathcal{P})$, a man-woman
pair~$\{m^*,w^*\}$, and a budget~$\ell\in \mathbb{N}$.}
{Is it possible to perform at most~$\ell$ reorderings of the preferences of agents other 
than~$m^*$ and $w^*$
such that
$\{m^*,w^*\}$ is part of at least one stable matching in the altered instance?}

\medskip
In the \Iexact setting, in contrast to the \Iconst setting, we are given a 
complete 
matching. Within this setting, we consider two different computational 
problems. 
First, we consider the \exact-\ex problem where the goal is to modify a given SM
instance such that the given matching is stable in the instance. Second, we 
consider the \exact-\uni problem where the goal is to modify a given 
SM instance such that the given matching is the \emph{unique} stable 
matching.   
\defProblemQuestion{\exact-\ex (\uni)-$\mathcal{X}$}
{Given an SM instance~$\mathcal{I}=(U,W,\mathcal{P})$, a complete
    matching~$\Mst$, and budget~$\ell\in \mathbb{N}$.}
{Is it possible to perform at most~$\ell$ manipulative actions of type~$\mathcal{X}$ 
    such that $\Mst$ is a (the unique) stable matching in the altered instance?}
    
    For manipulative actions \Idelete and \Iadd, the definitions of 
\exact-\ex-$\mathcal{X}$ and \exact-\uni-$\mathcal{X}$ are not 
directly 
applicable, as the set of agents changes by applying \Idelete or \Iadd 
operations. That is 
why, for these actions, we need to slightly adapt the definitions from above. 
The general idea 
behind the proposed adaption is that we specify a complete matching on all 
agents (including those from $U_{\addag}\cup W_{\addag}$ for \Iadd), and 
require that the restriction of the specified matching~$M^*$ to the agents 
contained in the manipulated instance 
should be the (unique) stable matching in the manipulated instance. The reasoning 
behind this is that we only want to allow pairs that we approve to be part of 
a (the) stable matching. This results in the following definition for \exact-\ex(\uni)-\delete: 
\defProblemQuestion{\exact-\ex(\uni)-\delete}
{Given an SM instance~$\mathcal{I}=(U,W,\mathcal{P})$, a complete
matching~$\Mst$, and budget~$\ell\in \mathbb{N}$.}
{Is it possible to delete at most~$\ell$ agents from~$U\cup W$ such that there exists some 
$M'\subseteq \Mst$ which is a
(the unique) stable matching in the altered instance?}
Similarly, we get the following definition for \exact-\ex(\uni)-\add:
\defProblemQuestion{\exact-\ex(\uni)-\add}
{Given an SM instance~$\mathcal{I}=(U,W,\mathcal{P})$ together
with subsets~$U_{\addag}\subseteq U$ and~$W_{\addag}\subseteq W$, a complete
matching~$\Mst$, and budget~$\ell\in \mathbb{N}$.}
{Is it possible to add at most~$\ell$ agents~$X_A \subseteq U_{\addag}\cup W_{\addag}$ 
such that there exists some 
$M'\subseteq \Mst$ which is a
(the unique) stable matching in the altered instance?}

There also exist natural optimization 
variants of all considered decision problems which ask for the minimum number 
of manipulative actions that are 
necessary to alter a given SM instance to achieve the specified goal.
The (in-)approximability results in \Cref{th:const-ex-swap,th:const-ex-reor-2} refer to the optimization variants of these problems.

\subsection{Relationship Between Different Manipulative Actions} 
\label{se:relManip} 
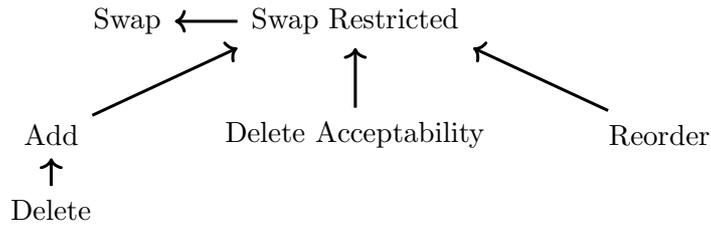
\begin{figure}[bt]
    \begin{center}
      \begin{tikzpicture}
        \node[inner sep=5pt] (sw)  at (3,4.5) {Swap Restricted};
        \node[inner sep=5pt] (usw)  at (0,4.5) {Swap};
        \node[inner sep=5pt] (dA)  at (3,3) {Delete Acceptability};
        \node[inner sep=5pt] (re)  at (7,3) {Reorder};
        \node[inner sep=5pt] (A)  at (-1,3) {Add};
        \node[inner sep=5pt] (D)  at (-1,2) {Delete};
        
        \draw[draw=black, line width=0.4mm,->]     (D) edge node  {} (A);
        \draw[draw=black, line width=0.4mm,->]     (A) edge node  {} 
(sw.south west);
        \draw[draw=black, line width=0.4mm,->]     (dA) edge node  {} (sw);
        \draw[draw=black, line width=0.4mm,->]     (re) edge node  {} (sw.south 
east);
        \draw[draw=black, line width=0.4mm,->]     (sw) edge node  {} (usw);
      \end{tikzpicture}

    \end{center}
    \caption{Relationship between manipulative actions as 
described in \Cref{se:relManip}. An arc from an action $\mathcal{X}$ to an 
action $\mathcal{Y}$ indicates that we present a ``simple'' 
description how to model action $\mathcal{X}$ by action 
$\mathcal{Y}$. Note that an arc does not imply that a computational 
problem for $\mathcal{X}$ is always reducible to the same problem for 
$\mathcal{Y}$.}\label{fig:aov}
  \end{figure}

In this section, to get an overview of the different manipulative actions, we 
analyze how they relate to each other.
This section, however, does not aim at introducing formal relationships in the
sense of a general notion of reducibility of 
two manipulative actions~$\mathcal{X}$ and $\mathcal{Y}$, as different manipulation goals require different properties of a reduction. 
Instead, we present high-level ideas how it is
possible to simulate one action with another action.
We mainly aim at giving an 
intuition for the relationships between the actions that helps to 
understand, 
relate, and classify the results we present in the paper. Note that most of the 
sketched relationships are applicable when relating computational 
problems around the \Iconst-\Iex goal for the different manipulative actions to 
each other (and not so much for \Iexact-\Iex and 
\Iexact-\Iuni). For an overview of 
the relations see \Cref{fig:aov}.

\medskip
\noindent \textbf{\Idelete via \Iadd.}
It is possible to model \Idelete actions by \Iadd actions by adapting the 
considered SM instance as follows.  We keep all agents $a\in A$ from the
original instance and introduce for each of them a new designated 
\textit{binding 
    agent}~$a'$ of opposite gender that ranks~$a$ first and all other agents 
in an arbitrary ordering afterwards. Moreover, we add~$a'$ at the 
first position of the preferences of~$a$. The set of 
agents 
that can be added to the instance are the binding agents.\\ 
Then, adding
the binding agent~$a'$ in the modified instance corresponds to deleting the
corresponding non-binding agent~$a$ in the original instance, as in this case 
the
agent and its binding agent 
are their mutual top-choices and thereby always matched in a stable matching.

\medskip
\noindent \textbf{Restricted \Iswap via \Iswap.} 
Before we describe how different manipulative actions can be modeled by \Iswap 
actions, we first sketch how it is possible to model a variant of \Iswap 
where for a given set of pairs consisting of adjacent positions swapping 
agents on each 
such pair of neighboring positions is forbidden and swapping 
the first and 
the second element in the preference relation of an agent may have some 
specified non-unit cost. To model this variant, we
introduce $n(\ell+1)$ 
dummy men and $n(\ell+1)$ dummy 
women each ranking all dummy agents of the opposite gender before
the other agents. Due to the preferences of the dummy agents,
regardless of which $\ell$ swaps are performed in the preferences of dummy 
agents,
in all stable matchings all dummy agents are matched to dummy agents and a 
dummy agent is 
never part of a blocking pair together with a non-dummy agent (see 
\Cref{lem:dummy-agents}). Now, for each
preference 
list of a non-dummy agent $a$, if we want to restrict that agent~$a'$  at
rank $i$ cannot be swapped with agent~$a''$ ranked directly behind~$a'$ in the 
preferences of~$a$, we place the 
$\ell+1$ dummy agents with indices $(i-1)(\ell+1)+1$ to $i(\ell+1)$ of opposite 
gender between~$a'$ and~$a''$ in the preference list of~$a$. Thus, the given 
budget never suffices to swap agents $a'$ and~$a''$ in the preference 
list of~$a$. Moreover, if we want to introduce a non-unit cost $2\leq c\leq \ell$ of 
swapping 
the first and the second agent in some agent's preference list, then we put the 
dummy agents $1$ to $c-1$ of opposite gender between its most preferred and 
second-most preferred agent in its preference list. 
We use this variant of \Iswap (and a more restrictive version of it) in the 
proofs of 
\Cref{th:const-ex-swap} and \Cref{th:ex-uni-swap}.

\medskip
\noindent \textbf{\Iadd via Restricted \Iswap.}
It is possible to model \Iadd actions by (restricted) \Iswap actions by 
modifying a
given SM instance $(A=U\cup W, \mathcal{P})$ with sets $U_{\addag}\subseteq U$ 
and ${W_{\addag}\subseteq W}$ as follows. We keep all agents 
$a\in A$ from
the original instance and introduce for each agent~$a\in  U_{\addag}\cup 
W_{\addag}$ one agent~$a'$ of
the
opposite gender and one agent~$a''$ of the same gender with preferences 
$a':a\succ a'' \pend $ and $a'': a' \pend$. Moreover, we put for each~$a\in
 U_{\addag}\cup W_{\addag}$ agent~$a'$ as the top-choice of~$a$.
All preference lists are completed by appending the remaining agents in an arbitrary order at the end.
Now, we introduce dummy agents such that the only allowed swaps 
are swapping $a$ with $a''$ in $a'$'s preference list for some~$a\in 
U_{\addag}\cup W_{\addag}$. Not adding an agent~$a\in 
 U_{\addag}\cup W_{\addag}$ corresponds to leaving the preferences of $a'$ 
unchanged, which 
results in $\{a',a\}$ being part of every stable matching. Adding an agent~$a\in 
 U_{\addag}\cup W_{\addag}$ corresponds to modifying the preferences of~$a'$ by 
swapping~$a$ 
and~$a''$, which results in~$\{a',a''\}$ being part of every stable matching. 
In this case, $a$ is now able to pair up with other agents from the original 
instance. We prove the correctness of this transformation in 
\Cref{cor:const-ex-swapr} and use it in the proof of
\Cref{th:const-ex-swap}.

\medskip
\noindent \textbf{\Ireor via Restricted \Iswap.}
It is possible to model \Ireor actions by (restricted) \Iswap actions. To do 
so, we 
construct 
a new instance from a given SM instance with agent set $A$ and a given budget 
$\ell$ as 
follows. 
First of all, we keep all agents $a\in
A$. Moreover, for each agent $a\in A$, we add a copy $a'$ with the same 
preferences as 
well as a \textit{binding agent} $\widetilde{a}$ of opposite gender with 
preferences $\widetilde{a}: a'\succ a \pend$. 
We adjust the budget to~$\ell' = \ell \cdot (4n^2+2n)$. For each $a\in A$, we 
modify the 
preferences 
of agents~$a$ and $a'$ by inserting $\widetilde{a}$ as their top-choice. 
Moreover, for 
each agent $b$ of the opposite gender of $a$ in the constructed instance, we 
insert the copy $a'$ 
directly after 
the agent $a$ in the preferences of $b$. All preference lists are completed 
arbitrarily.
 We  now
only allow to swap the first two agents in the preferences of
$\widetilde{a}$ at cost $4n^2$ and to swap all agents except $\widetilde{a}$ in 
the preferences of $a'$ at unit cost. The general idea of the construction is 
that only one 
of $a$ and $a'$ can be 
free to pair up with a non-binding agent, as the other is matched to the 
corresponding binding agent in all stable matchings. We cannot change the 
preferences of $a$ 
at all, while we can change the preferences of $a'$ at cost 
$2n$ arbitrarily (except its top-choice). Initially, $a'$ is always matched to 
the binding agent but can be 
``freed'' by modifying the preferences of $\widetilde{a}$ at cost~$4n^2$. 
Reordering the preferences of an agent $a$ in the 
original instance then corresponds to freeing $a'$ and reordering the 
preferences 
of $a'$ arbitrarily (except 
its top-choice which is irrelevant in this case).  
Overall, we can free at most $\ell$ agents $a'$, while for each of them 
we can fully reorder the relevant part of their preferences. We use a similar 
construction in the proof of \Cref{th:ex-uni-swap}.

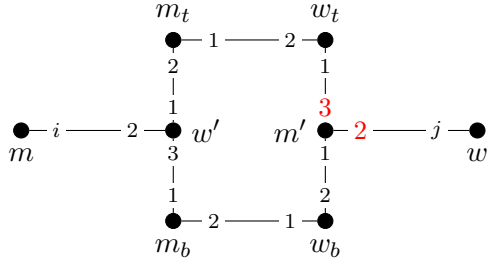
\begin{figure}[bt]
    \begin{center}
        \begin{tikzpicture}[xscale =2 , yscale = 1.2]
        \node[vertex, label=270:$m$] (m) at (-1, 0) {};
        \node[vertex, label=0:$w'$] (wf) at (0, 0) {};
        \node[vertex, label=90:$m_t$] (mt) at (0, 1) {};
        \node[vertex, label=270:$m_b$] (mb) at (0, -1) {};
        \node[vertex, label=90:$w_t$] (wt) at (1, 1) {};
        \node[vertex, label=270:$w_b$] (wb) at (1, -1) {}; \node[vertex, 
        label=180:$m'$] (mf) at (1, 0) {};
        \node[vertex, label=270:$w$] (w) at (2, 0) {};
        \draw (m) edge node[pos=0.2, fill=white, inner sep=2pt] {\scriptsize 
            $i$}  node[pos=0.76, fill=white, inner sep=2pt] {\scriptsize $2$} 
        (wf);
        \draw (mf) edge node[pos=0.2, fill=white, inner sep=2pt] {\small  
            $\textcolor{red}{2}$}  node[pos=0.76, fill=white, inner sep=2pt] 
        {\scriptsize $j$} (w);
        \draw (mf) edge node[pos=0.2, fill=white, inner sep=2pt] {\small  
            $\textcolor{red}{3}$}  node[pos=0.76, fill=white, inner sep=2pt] 
        {\scriptsize $1$} (wt);
        \draw (mf) edge node[pos=0.2, fill=white, inner sep=2pt] {\scriptsize 
            $1$}  node[pos=0.76, fill=white, inner sep=2pt] {\scriptsize $2$} 
        (wb);
        \draw (wf) edge node[pos=0.2, fill=white, inner sep=2pt] {\scriptsize 
            $3$}  node[pos=0.76, fill=white, inner sep=2pt] {\scriptsize $1$} 
        (mb);
        \draw (wf) edge node[pos=0.2, fill=white, inner sep=2pt] {\scriptsize 
            $1$}  node[pos=0.76, fill=white, inner sep=2pt] {\scriptsize $2$} 
        (mt);
        \draw (wt) edge node[pos=0.2, fill=white, inner sep=2pt] {\scriptsize 
            $2$}  node[pos=0.76, fill=white, inner sep=2pt] {\scriptsize $1$} 
        (mt);
        \draw (wb) edge node[pos=0.2, fill=white, inner sep=2pt] {\scriptsize 
            $1$}  node[pos=0.76, fill=white, inner sep=2pt] {\scriptsize $2$} 
        (mb);
        \end{tikzpicture}
        
    \end{center}
    \caption{Gadget to model \IaccD by \Iswap.
    Swapping $w$ and $w^t$ in the preferences of $m'$ (marked in red) corresponds to deleting edge $\{m, w\}$.
    }\label{fig:dellA-swap}
\end{figure}
\medskip
\noindent \textbf{\IaccD via Restricted \Iswap.}
It is possible to model deleting the acceptability of two agents by performing 
(restricted)
swaps. To do so, we modify the given SM instance by introducing for each 
man-woman
pair $\{m,w\}$ with $m\in U$ and $w\in W$ where~$m$ ranks~$w$ at position~$i$ 
and $w$~ranks
$m$ at position~$j$ the gadget depicted in \Cref{fig:dellA-swap} (note that 
this gadget was introduced by \citet{CechlarovaF05} to model parallel edges in 
a \textsc{Stable Marriage} instance).
Moreover, we only allow swapping $w$ and $w_t$ in the preferences of some 
man~$m'$ (as 
indicated in \Cref{fig:dellA-swap}). 
Matching $m$ to $w$ in the original instance corresponds to matching $m$ to
$w'$ and $m'$ to $w$ in the modified instance. Note that it is never possible
that only one of $\{m,w'\}$ and $\{m',w\}$ is part of a stable matching. 
Deleting the acceptability of a pair $\{m,w\}$ now corresponds to 
swapping $w$ and $w_t$ in $m'$'s preference relation, as in this case neither 
$\{m,w'\}$ nor $\{m',w\}$ can be part of any stable matching.

\smallskip
Summarizing, we conclude that performing swaps is, in some sense, 
the most 
powerful manipulative action considered, as all other actions can be modeled 
using it. However, this does not imply that if one of our computational 
problems is computationally hard for some manipulative action, then it is also 
hard 
for \Iswap since, for example, for the \Iexact-\Iex setting a modified problem 
definition is used for the manipulative actions \Iadd and \Idelete.

    \section{\Iconst-\Iex} \label{se:const-ex}

In this section, we analyze the computational complexity of
\const-\ex-$\mathcal{X}$.
In \Cref{se:Const-Ex-W1}, we start with showing intractability for $\mathcal{X} 
\in\{$\Iadd, \Iswap,
\IaccD, \Ireor{}$\}$.
We complement these intractability results with an XP-algorithm for 
\const-\ex-\reor (for the other manipulative actions, an XP-algorithm is 
trivial) in \Cref{section:const-ex-reor}. Subsequently, in  
\Cref{section:const-ex-poly}, we
show that \const-\ex-\delete is solvable in~$\mathcal{O}(n^2)$ time, and 
\const-\ex-\reor admits a 2-approximation with the same running time.

\subsection{A Framework for Computational Hardness} \label{se:Const-Ex-W1}

All our W[1]-hardness results for \Iconst-\Iex essentially follow from the same 
basic idea for a
parameterized reduction.
We now explain the general framework of the reduction, using the manipulative
action \Iadd as an example.
The modifications needed to transfer the approach to the
manipulative actions \Iswap, \IaccD, and \Ireor are described afterwards.

We construct a parameterized reduction from \textsc{Clique}, where given an 
undirected graph~$G = (V, E)$ 
and an
integer~$k$,
the question is whether $G$ admits a size-$k$ clique, i.e., a set of 
$k$~vertices
that are pairwise adjacent. Parameterized by $k$, \textsc{Clique} is W[1]-hard 
\citep{DBLP:series/mcs/DowneyF99}.
Fix an instance~$(G = (V, E), k)$ of \textsc{Clique} and denote
the set of vertices by~$V = \{v_1, \dots, v_{|V|}\}$ and
the set of edges by~$E = \{e_1, \dots, e_{|E|}\}$.
Let~$d_v$ denote the degree of vertex~$v$.
Moreover, let~$e^v_1,\dots, e^v_{d_v}$ be a list of all edges
incident to~$v$.

The high-level idea is
as follows.
We start by introducing two agents~$m^*$ and~$w^*$, and the edge~$\{m^*, w^*\}$
is the edge
which shall be contained in a stable matching.
Furthermore, we add~$q:= \binom{k}{2}$ women~$w_1^\dagger, \dots, w_q^\dagger$,
which we call \emph{penalizing women}.
The idea is that~$m^*$ prefers every penalizing woman to~$w^*$, and thereby, a
stable matching containing the edge~$\{m^*, w^*\}$ can only exist if every
penalizing woman~$w_j^\dagger$ is matched to a man she prefers to~$m^*$, as
otherwise~$\{m^*,
w_j^\dagger\}$ would be a blocking pair for any matching containing~$\{m^*, w^*\}$.
In addition, we introduce one vertex gadget for every
vertex and one edge
gadget for every
edge;
these differ for the different manipulative actions. Each vertex gadget
includes a \textit{vertex woman} and each edge gadget an \textit{edge man}:
A penalizing woman can only be matched to an edge man in a stable matching 
containing $\{m^*,w^*\}$.
However, an edge man can only be matched to a
penalizing woman if the gadgets corresponding to the endpoints of the edge and
the gadget corresponding to the edge itself are manipulated. Thus, one has to 
perform manipulations in at least $\binom{k}{2}$ edge gadgets
and in all vertex gadgets corresponding to the endpoints of these edges.
In this way, a budget of $\ell = k+\binom{k}{2}$ suffices if and only if $G$~contains 
a 
clique of size $k$.
\subsubsection{\Iadd} \label{se:Const-Ex-Add}
We now give the details of the parameterized reduction (following the general 
approach sketched before) for the
manipulative action \Iadd. For each vertex~$v\in V$, we introduce a vertex 
gadget consisting of
one vertex woman~$w_v$ and two men~$m'_v$ and~$m_v$.
For each edge~$e\in E$, we introduce an edge gadget consisting of an edge 
man~$m_e$,
one man~$m'_e$, and one woman~$w_e$.
Additionally, we introduce a set of~$k$ women~$\widetilde{w}_1,\dots,
\widetilde{w}_k$.
The agents that can be added are~$U_{\addag}:=\{m_v': v\in V\}\cup \{m'_e: e\in
E\}$ and~$W_{\addag}
:= \emptyset$, while all other agents are part of the original instance.
We set the budget~$\ell:= k + \binom{k}{2}$. (Note that we will show that 
the reduction, in fact,
works even if~$\ell=\infty$.)

In this reduction, adding the man~$m_v'$ for some
$v\in V$
corresponds to
manipulating the corresponding vertex gadget, whereas adding~$m'_e$ for 
some~$e\in E$ corresponds
to manipulating the corresponding edge gadget. We call the constructed 
\const-\ex-\add instance~$\mathcal{I}_{\text{add}}$.

For each vertex $v\in V$ that is incident to edges $e^v_1,\dots, e^v_{d_v}$, the 
preferences of the agents from the corresponding vertex gadget are as follows:
\begin{align*}
    w_v&: m'_v\succ m_{e^v_1}\succ \dots \succ m_{e^v_{d_v}} \succ m_v \pend,
    & 
    m'_v 
&: w_v \pend,\\
    m_v  &: w_v\succ \widetilde{w}_1 \succ \dots
    \succ
    \widetilde{w}_k \succ w^* \pend.
    &
\end{align*}
For each edge $e=\{u,v\}\in E$, the agents from the corresponding edge gadget 
have the following preferences: 
\begin{align*}
      w_e &: m'_e\succ m_e \pend,
    &
    m_e&: w_e\succ w_{u}\succ w_v \succ w^\dagger _1 
\succ\dots\succ  w^\dagger
    _q \pend,\\
    m'_e &: w_e \pend. & 
\end{align*}
Lastly, the agents $m^*$ and $w^*$ and, for $i\in [q]$ (recall that $q = \binom{k}{2}$) and 
$t\in [k]$, the agents~$w^\dagger_i$ and $\tilde{w}_t$ have the
following preferences:  
\begin{align*}
   \widetilde{w}_t& \colon m_{v_1}\succ \dots \succ
    m_{v_{|V|}} \pend,
    & 
    w^\dagger_i & : m_{e_1}\succ \dots \succ m_{e_{|E|}}\succ m^* \pend, \\
    m^* & : w^\dagger_1 \succ \dots\succ w^\dagger_q \succ w^* \pend, &
    w^* &: m_{v_1} \succ \dots \succ m_{v_{|V|}} \succ  m^* \pend.
\end{align*}

Note that in all stable matchings containing~$\{m^*, w^*\}$, every penalizing 
woman $w^\dagger _i$ is matched to a man she prefers to~$m^*$ and every 
man~$m_v$ is
matched to a woman which he prefers to
$w^*$, as otherwise the matching is blocked by~$\{m^*,w^\dagger _i\}$ or~$\{m_v,
w^*\}$.
This ensures that at most~$k$ men~$m_v'$ can be added to the instance (which will be used later to show that the reduction also works if $\ell = \infty$), as
there exist only~$k$ women~$\widetilde{w}_i$ that can be matched to some~$m_v$
from a manipulated vertex gadget. Parts of the
construction are visualized
in \Cref{fig:const-ex-add}.

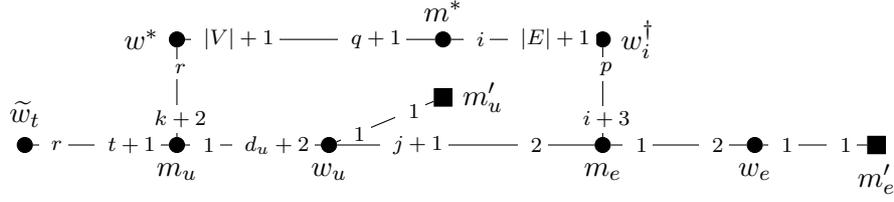
\begin{figure}[bt]
\begin{center}
    \begin{tikzpicture}[xscale =2 , yscale = 0.7]
    \node[vertex, label=270:$w_u$] (wu) at (0, 0) {};

    \node[vertex, label={[xshift=0.cm, yshift=0.cm]0:$w_i^\dagger$}] (wp) at 
(1.8, 2) {};

    \node[squared-vertex, label=0:$m_u'$, minimum size=0.23cm] (mu) at (0.75,
    0.9) {};
    \node[vertex, label=270:$m_u$] (mup) at (-1, 0) {};
    \node[vertex, label=270:$m_e$] (me) at ($(wp)+ (0, -2)$) {};
    \node[squared-vertex, label=270:$m_e'$, minimum size=0.23cm] (mep) at (3.6,
    0) {};
    \node[vertex, label=270:$w_e$] (we) at (2.8, 0) {};
    \node[vertex, label=90:$\widetilde{w}_t$] (wt) at (-2., 0) {};

    \node[vertex, label={90:$m^*$}] (ms) at (0.75, 2) 
{};
    \node[vertex, label=180:$w^*$] (ws) at (-1, 2) {};

    \draw (wu) edge node[pos=0.35, fill=white, inner sep=2pt] {\scriptsize
    ~$d_u+2$}  node[pos=0.85, fill=white, inner sep=2pt] {\scriptsize~$1$}
    (mup);
    \draw (wu) edge node[pos=0.2, fill=white, inner sep=2pt]
    {\scriptsize~$1$}  node[pos=0.76, fill=white, inner sep=2pt]
    {\scriptsize~$1$} (mu);
    \draw (me) edge node[pos=0.2, fill=white, inner sep=2pt]
    {\scriptsize~$1$}  node[pos=0.76, fill=white, inner sep=2pt]
    {\scriptsize~$2$} (we);
    \draw (we) edge node[pos=0.2, fill=white, inner sep=2pt]
    {\scriptsize~$1$}  node[pos=0.76, fill=white, inner sep=2pt]
    {\scriptsize~$1$} (mep);
    \draw (wu) edge node[pos=0.3, fill=white, inner sep=2pt]
    {\scriptsize~$j+1$}  node[pos=0.76, fill=white, inner sep=2pt]
    {\scriptsize~$2$} (me);

    \draw (me) edge node[pos=0.2, fill=white, inner sep=2pt]
    {\scriptsize~$i+3$}  node[pos=0.76, fill=white, inner sep=2pt]
    {\scriptsize~$p$} (wp);
    \draw (mup) edge node[pos=0.3, fill=white, inner sep=2pt] {\scriptsize~$t
    + 1$}  node[pos=0.85, fill=white, inner sep=2pt] {\scriptsize~$r$} (wt);

    \draw (ms) edge node[pos=0.18, fill=white, inner sep=2pt]
    {\scriptsize~$i$}  node[pos=0.7, fill=white, inner sep=2pt]
    {\scriptsize~$|E|+1$} (wp);
    \draw (mup) edge node[pos=0.2, fill=white, inner sep=2pt]
    {\scriptsize~$k+2$}  node[pos=0.76, fill=white, inner sep=2pt]
    {\scriptsize~$r$} (ws);

    \draw (ms) edge node[pos=0.25, fill=white, inner sep=2pt] {\scriptsize
    ~$q+ 1$}  node[pos=0.8, fill=white, inner sep=2pt] {\scriptsize~$|V| +1$}
    (ws);
    \end{tikzpicture}
\end{center}
\caption{A vertex gadget and an edge gadget for the hardness reduction for
\add, where
~$e = e_j^u = e_p$ and~$u = v_r$.
The squared vertices are the vertices from~$U_{\addag}$ that can be added
to the instance.
In the figure, we only exemplarily
depict one penalizing woman~$w^\dagger_i$ for some~$i \in [q]$ and one woman~$\widetilde{w}_{t}$ for some~$t\in [k]$.
For an edge $\{x, y\}$, the number on this edge closer to $x$ indicates the 
rank of~$y$ in $x$'s preference order.
}\label{fig:const-ex-add}
\end{figure} 

 Note that in the instance as described above there are $2|V| + 2|E|+1$ 
 men and $|V| + |E| + k+q+1$ women.
 However, our definition of \textsc{Stable Marriage} requires the instance to 
 have the same number of men and women.
 This can be achieved by adding $|V| + |E| - k-q$ \emph{filling women} 
 (to 
 $W_{\orig}$).
 These filling women have arbitrary preferences and every man prefers any 
 non-filling woman to any filling woman.
 The presence or absence of filling women does not change the existence of a 
 stable matching containing~$\{m^*, w^*\}$ because every stable matching in the 
 presence of filling women can be transformed into a stable matching in the 
 absence of filling women by deleting all edges incident to a filling woman. 
 Moreover, 
 every stable matching~$M$ in the absence of filling women can be transformed 
 to 
 a stable matching in the presence of filling women by adding to $M$ a stable 
 matching 
 between the set of men unassigned by~$M$ and filling women (note that 
 such a stable matching has to exist since every \textsc{Stable Marriage} 
 instance admits a stable matching).
 In order to keep the proof of correctness of the reduction simpler, we 
 will 
 ignore all filling women and assume they are not part of the instance.

\begin{lemma} \label{le:const-ex-add-L1}
If~$G$ contains a clique of size~$k$, then~$\mathcal{I}_{\mathrm{add}}$ is a
YES-instance.
\end{lemma}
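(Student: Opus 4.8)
The plan is as follows. Let $K = \{v_{i_1}, \dots, v_{i_k}\}$ with $i_1 < \dots < i_k$ be a clique in $G$, and let $F \subseteq E$ be the set of the $q = \binom{k}{2}$ edges having both endpoints in $K$. I would perform exactly the $\ell = k + \binom{k}{2}$ additions $\{m'_v : v \in K\} \cup \{m'_e : e \in F\}$, and then exhibit a stable matching $M$ of the resulting instance containing $\{m^*, w^*\}$. Concretely, take $M$ to consist of: the pair $\{m^*, w^*\}$; the binding pairs $\{m'_v, w_v\}$ for $v \in K$ and $\{m'_e, w_e\}$ for $e \in F$; the pairs $\{m_{v_{i_t}}, \widetilde{w}_t\}$ for $t \in [k]$ (the displaced vertex men filling up the $\widetilde{w}$-women); the pairs $\{m_{f_j}, w^\dagger_j\}$ for $j \in [q]$, where $f_1, \dots, f_q$ is the list $F$ sorted by increasing index in the global enumeration $e_1, \dots, e_{|E|}$ (the displaced edge men filling up the penalizing women); and the internal pairs $\{m_v, w_v\}$ for $v \notin K$ and $\{m_e, w_e\}$ for $e \notin F$. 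A direct count shows that exactly $\ell$ agents are added and that $M$ assigns every agent present in the modified instance (the filling women are ignored as agreed above), so $M$ is a complete matching, and it contains $\{m^*, w^*\}$ by construction.

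It remains to verify that $M$ is stable, which I would do by checking, for each agent, that it has no blocking partner. The agents $m'_v$, $m'_e$, $w_e$, the women $w_v$ with $v \in K$, the men $m_v$ with $v \notin K$, and the men $m_e$ with $e \notin F$ are all matched to the top entry of their current preference list, and $m^*$ is matched to the last entry $w^*$ of its list, so none of these wants to deviate. The agents that could in principle deviate are $m^*$ (only towards some $w^\dagger_i$), the women $w_v$ with $v \notin K$ (only towards edge men incident to $v$), the displaced vertex men $m_{v_{i_t}}$ (towards $w^*$ or a lower-indexed $\widetilde{w}$-woman), the displaced edge men $m_{f_j}$ (towards $w_{f_j}$, an endpoint woman, or a lower-indexed penalizing woman), and the $\widetilde{w}$- and penalizing women. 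For the first four I would argue as follows: every $w^\dagger_i$ is matched to an edge man and ranks all edge men above $m^*$, so $\{m^*, w^\dagger_i\}$ does not block (this is precisely the role of the penalizing women); for $v \notin K$, every edge $e$ incident to $v$ lies outside $F$, hence $m_e$ is matched to its top choice $w_e$ and will not deviate to $w_v$; each $m_{v_{i_t}}$ prefers all of $\widetilde{w}_1, \dots, \widetilde{w}_k$ to $w^*$, so $\{m_{v_{i_t}}, w^*\}$ does not block; and the women $w_{f_j}$, $w_u$, $w_v$ associated with an edge $f_j = \{u,v\} \in F$ are all bound to the corresponding $m'$-agent, whom they prefer to $m_{f_j}$.

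The one point that needs a genuine argument --- and the main obstacle in the write-up --- is that the two auxiliary sub-markets are themselves internally stable under $M$: the $\widetilde{w}$-women do rank the vertex men and the penalizing women do rank the edge men, so a careless pairing there would create a blocking pair between two displaced agents (for instance some $m_{f_j}$ and some $w^\dagger_i$ with $i \ne j$). I would resolve this by observing that inside each sub-market all the men share a single common linear order over the women ($\widetilde{w}_1 \succ \dots \succ \widetilde{w}_k$ for the vertex men, and $w^\dagger_1 \succ \dots \succ w^\dagger_q$ for the edge men) and all the women share a single common linear order over the men (by vertex index, resp.\ by global edge index), and the sorted pairing chosen for $M$ matches the $t$-th best man to the $t$-th best woman in each case; such a sorted matching is the unique stable matching of the sub-market, so no blocking pair lies inside either sub-market, and moreover no $\widetilde{w}$- or penalizing woman wants an ``outside'' man, since any such man ($m_v$ with $v \notin K$, or $m_e$ with $e \notin F$) is matched to his top choice. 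Combining this with the case analysis of the previous paragraph, $M$ has no blocking pair, so it is a stable matching of the modified instance containing $\{m^*, w^*\}$; hence $\mathcal{I}_{\mathrm{add}}$ is a YES-instance. I expect the actual proof to be dominated by this bookkeeping rather than by any single difficult step.
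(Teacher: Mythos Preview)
Your proof is correct and takes essentially the same approach as the paper: the same set of $\ell$ additions, the same matching $M$, and the same case-by-case stability verification, including the key observation that the two sub-markets (displaced vertex men with the $\widetilde{w}$-women, and displaced edge men with the penalizing women) are internally stable under the sorted pairing because both sides share a common linear order. There is only a minor writing slip where you say that $m^*$, being ``matched to the last entry $w^*$ of its list,'' does not want to deviate---that is of course false, but you immediately and correctly handle $m^*$'s potential deviations to the penalizing women in the next sentence, so the argument stands.
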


\begin{proof}
Let~$C$ be a clique in~$G$. For an edge $e=\{u,v\}\in E$, we write
$e\subseteq C$ to express that~$e$ lies in $C$, i.e., $u\in C$ and $v\in C$.
Further, let~$C[i]$ denote the vertex with~$i$-th
lowest index in~$C$ and~$D[i]$ the edge with~$i$-th lowest index in~$C$.
We add the $\ell$~agents~$\{m_v' : v\in C\}$ and~$\{m_e': e \subseteq C\}$,
and claim that
\begin{align*}
    M:= & \{\{m^*, w^*\}\} \cup \{\{m_v', w_v\} : v\in C\} \cup
    \{\{m_{C[i]}, \widetilde{w}_i\} : i \in [k]\} \cup\\ & \{\{m_v, w_v\} : v\in V
    \setminus C\} \cup \{ \{m_e, w_e\} : e\not \subseteq C\}
    \cup \\ & \{\{m_e', w_e\} :
    e
    \subseteq C\} \cup \{\{m_{D[i]}, w_i^\dagger\} : i\in [q]\}
\end{align*}
is a
stable matching, containing $\{m^*,w^*\}$. We now iterate over all agents
present in the instance after adding $\{m_v' : v\in C\}$ and~$\{m_e': e
\subseteq C\}$ and argue why they cannot be part of a blocking pair.
Let $A':= U_{\orig} \cup W_{\orig} \cup \{m_v' : v\in C\} \cup \{m_e': e 
\subseteq 
C\}$ be the agents contained in the instance arising through the addition of 
$\{m_v' : v\in C\} \cup \{m_e': e \subseteq C\}$.

First, note that for each~$e\not\subseteq C$ the agents~$m_e$
and~$w_e$ are matched to their top-choice in the instance and, therefore,
cannot be part of a blocking pair.

For each vertex~$v\in V\setminus C$, man~$m_v$ is matched to his first
choice and thus is not part of a blocking pair.
Since $v \in V\setminus C$, every edge $e $ incident to $v$ is not contained 
in the clique, and consequently, $m_e$ is not part of a blocking pair.
As all agents in $A'$ which $w_v$ prefers to~$m_v$ are not part of a blocking 
pair, also~$w_v$ is not part of
a blocking pair.

For each vertex~$v\in C$, the agents~$m_v'$ and~$w_v$ are matched to their
top-choice and thus are not part of a blocking pair. 
Consider a man~$m_v$ for some $v\in C$.
This man is matched to woman~$\widetilde{w}_j$ for some $j\in [k]$.
Edge~$\{m_v, w_v\}$ is not blocking, as $w_v$ is not part of a blocking pair.
Moreover, there cannot exist a blocking pair of the
form~$\{m_v, \widetilde{w}_i\}$ for some~$i\in [k]$, as $m_v$ only
prefers women $\widetilde{w}_i$ with $i<j$. However, all women
$\widetilde{w}_i$ with~$i<j$ prefer their current partner to $m_v$, as they
are all assigned a man corresponding to a vertex with a smaller index than
$v$.

Recall that there cannot exist a blocking pair involving an agent from 
$\{w_v: v\in V\}$. Thus,  since all agents from~$\{m_e: e\subseteq C\}$
have the same preferences over
the penalizing women, and the penalizing women prefer each man from~$\{m_e:
e\subseteq C\}$ to~$m^*$, there is no blocking pair involving agents
from~$\{m_e: e\subseteq C\} \cup \{w_i^\dagger : i\in
[q]\}$. 

Finally, neither $m^*$ nor $w^*$ are part of a blocking pair, as all agents
which they prefer to each other (i.e., penalizing women $w_i^\dagger$ or men $m_v$)
are not contained in a blocking pair.
Thus,~$M$ is stable. Note that if the $k$ vertices from $C$ were not to form a 
clique, then the set $\{m_e : e\subseteq C\}$ would consist of less than 
$\binom{k}{2}$ men. Thus, not all penalizing women are matched to an edge 
man in $M$ which implies that $m^*$ and a
penalizing woman form a blocking pair for $M$.
\end{proof}

We now proceed with the backward direction.

\begin{lemma} \label{le:const-ex-add-L2}
If there exists a set $X_A$ of agents (of arbitrary size) such that 
after their addition there exists a stable matching containing $\{m^*, w^*\}$, 
then~$G$
contains a clique of size~$k$.
\end{lemma}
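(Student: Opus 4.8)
The plan is to take an arbitrary addable set $X_A$ (recall $W_{\addag}=\emptyset$, so $X_A$ consists only of agents of the form $m'_v$ and $m'_e$) together with a stable matching $M$ of the resulting instance with $\{m^*,w^*\}\in M$, and to extract a $k$-clique from the set of vertices $v$ for which $w_v$ is matched to someone other than $m_v$. Every agent the argument reasons about ($m^*$, $w^*$, the penalizing women $w^\dagger_i$, the $\widetilde w_t$, and each $w_v$, $m_v$, $m_e$) belongs to the original instance and is therefore present no matter which agents were added, so the reasoning does not depend on $X_A$; the filling women were already argued to be irrelevant just before the lemma.

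First I would pin down the penalizing women. Since $m^*$ prefers every $w^\dagger_i$ to $w^*=M(m^*)$, stability forbids the blocking pair $\{m^*,w^\dagger_i\}$, so each $w^\dagger_i$ is matched to a man it strictly prefers to $m^*$; by construction the only such men are the edge men. As $M$ is a matching and there are $q=\binom{k}{2}$ penalizing women, this yields a set $F\subseteq E$ with $|F|=\binom{k}{2}$ such that for every $e\in F$ the edge man $m_e$ is matched to a penalizing woman (and, as a consistency check that is not strictly needed, $w_e$ is then forced onto $m'_e$, so $m'_e\in X_A$).

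Next I would identify the relevant vertices. Set $C:=\{v\in V:\{m_v,w_v\}\notin M\}$. For $e=\{u,v\}\in F$, the man $m_e$ is matched to a penalizing woman whom he ranks below both $w_u$ and $w_v$; to avoid the blocking pairs $\{m_e,w_u\}$ and $\{m_e,w_v\}$, each of $w_u,w_v$ must be matched to a man it prefers to $m_e$, and inspecting $w_u$'s preference list $m'_u\succ m_{e^u_1}\succ\dots\succ m_{e^u_{d_u}}\succ m_u$ (with $m_e=m_{e^u_j}$ for the appropriate $j$) shows this partner is never $m_u$; symmetrically for $v$. Hence both endpoints of every edge of $F$ lie in $C$, i.e.\ $F\subseteq\binom{C}{2}$, so $\binom{|C|}{2}\ge\binom{k}{2}$ and thus $|C|\ge k$.

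Finally I would bound $|C|$ from above. Because $w^*$ prefers every $m_v$ to its partner $m^*$, stability forbids the blocking pair $\{m_v,w^*\}$, so every $m_v$ is matched to a woman it prefers to $w^*$ — namely $w_v$ or one of $\widetilde w_1,\dots,\widetilde w_k$. For $v\in C$ the option $w_v$ is excluded, so $v\mapsto M(m_v)$ is an injection of $C$ into $\{\widetilde w_1,\dots,\widetilde w_k\}$, whence $|C|\le k$. Combining the two bounds gives $|C|=k$ and $F=\binom{C}{2}$, i.e.\ $G[C]$ is a complete graph on $k$ vertices, so $C$ is the desired clique. I expect the only slightly delicate point to be this last double-counting step — squeezing $C$ between the $\binom{k}{2}$ edges it must span and the $k$ available women $\widetilde w_t$ — whereas everything else is a routine verification of blocking pairs against the explicit preference lists.
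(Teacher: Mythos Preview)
Your proof is correct and follows essentially the same approach as the paper: pin down the penalizing women on edge men via the blocking pairs $\{m^*,w_i^\dagger\}$, push the endpoint vertex women off $m_v$ via the blocking pairs $\{m_e,w_u\}$, and then squeeze the relevant vertex set between the $\binom{k}{2}$ edges it must span and the $k$ available women $\widetilde w_t$ via the blocking pairs $\{m_v,w^*\}$. Your write-up is in fact a bit more explicit than the paper's about the double-counting step (defining $C$ and using $\binom{k}{2}=|F|\le\binom{|C|}{2}$ together with $|C|\le k$), but the argument is the same.
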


\begin{proof}
Let $M$ be a stable matching containing $\{m^*, w^*\}$.
Since the edges~$\{m^*,
w_i^\dagger\}$ are not blocking, all
penalizing women are matched to an edge man~$m_e$ for some~$e\in E$.
This
requires that~$m'_e\in X_A$, as otherwise~$\{m_e,w_e\}$ is a blocking pair.
Moreover, for each such edge~$e = \{u, v\}$, the vertex women~$w_u$ and~$w_v$
have to be either matched to other edge men or to the men~$m_u'$ or~$m_v'$.
Note that in both cases, the corresponding agents~$m_u$ and~$m_v$ are matched 
to one of the women~$\widetilde{w}_i$, as otherwise~$\{m_u, w^*\}$
or~$\{m_v, w^*\}$ is a blocking pair. Thus, there exist at most~$k$ vertices~$v\in
V$ where $w_v$ is matched to an edge men or to~$m_v'$.

Since there are~$\binom{k}{2}$ penalizing women, and each of them is matched
to an edge man, it follows that these edge men correspond to the edges 
in the clique formed by the $k$ vertices~$v\in V$ where $w_v$ is either matched 
to an edge men or $m_v'$.
\end{proof}

From \Cref{le:const-ex-add-L1} and \Cref{le:const-ex-add-L2}, we 
conclude
that there exists a parameterized reduction from
\textsc{Clique} parameterized by~$k$ to \const-\ex-\add parameterized
by~$\ell$, implying the following.

\begin{theorem}\label{cor:const-ex-add}
Parameterized by the budget~$\ell$, it is W[1]-hard  to decide whether
\const-\ex-\add has
a solution with at most~$\ell$ additions or has no solution with an arbitrary
number of additions, even if we are only allowed to add agents of one gender.
\end{theorem}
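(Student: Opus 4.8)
The plan is to obtain the theorem as an immediate consequence of \Cref{le:const-ex-add-L1} and \Cref{le:const-ex-add-L2}, together with two easy bookkeeping checks on the reduction producing $\mathcal{I}_{\mathrm{add}}$.

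First I would combine the two lemmas to pin down the exact logical content of the reduction. By \Cref{le:const-ex-add-L1}, whenever $G$ has a clique of size $k$ the instance $\mathcal{I}_{\mathrm{add}}$ admits a successful manipulation using at most $\ell = k + \binom{k}{2}$ additions. By \Cref{le:const-ex-add-L2}, if $\mathcal{I}_{\mathrm{add}}$ admits \emph{any} successful manipulation --- even one that adds an arbitrary number of agents --- then $G$ has a clique of size $k$. Chaining these gives the three-way equivalence: $G$ has a size-$k$ clique $\iff$ $\mathcal{I}_{\mathrm{add}}$ has a solution with at most $\ell$ additions $\iff$ $\mathcal{I}_{\mathrm{add}}$ has a solution with an arbitrary number of additions. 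In particular, on NO-instances of \textsc{Clique} there is no solution at all, so the same single reduction distinguishes ``YES with budget $\ell$'' from ``NO even with unbounded budget'', which is precisely the promise-problem formulation stated in the theorem.

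Next I would verify that this is a legitimate parameterized reduction: the construction of $\mathcal{I}_{\mathrm{add}}$ --- the vertex gadgets, the edge gadgets, the $q = \binom{k}{2}$ penalizing women, the $k$ women $\widetilde{w}_t$, the pair $m^*, w^*$, and the filling women added to make $|U| = |W|$ --- is computable in time polynomial in $|V| + |E|$, and the new budget $\ell = k + \binom{k}{2}$ depends only on $k$. Since \textsc{Clique} parameterized by $k$ is W[1]-hard \citep{DBLP:series/mcs/DowneyF99}, this yields W[1]-hardness of \const-\ex-\add parameterized by $\ell$ (and hence also NP-hardness). For the single-gender strengthening I would simply note that in the construction $W_{\addag} = \emptyset$ while $U_{\addag} = \{m_v' : v \in V\} \cup \{m_e' : e \in E\}$, so every agent that is ever added is a man; the hardness therefore already holds when additions are restricted to one gender.

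I do not expect a genuine obstacle at this stage: all of the difficulty is concentrated in the gadget design --- arranging that the penalizing women can only be ``absorbed'' by edge men whose two incident vertex gadgets and own edge gadget have all been manipulated, so that $\binom{k}{2}$ manipulated edge gadgets force their at most $k$ endpoint vertex gadgets to be manipulated as well --- and in the correctness arguments of \Cref{le:const-ex-add-L1} and \Cref{le:const-ex-add-L2}. The theorem itself is just the corollary that packages these together, incorporating the observation (already flagged after the construction) that exceeding the budget bound $\ell = k + \binom{k}{2}$ can never help, since only $k$ women $\widetilde{w}_t$ are available to free up manipulated vertex gadgets.
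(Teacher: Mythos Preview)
Your proposal is correct and matches the paper's approach: the theorem is stated as an immediate corollary of \Cref{le:const-ex-add-L1} and \Cref{le:const-ex-add-L2}, with the bookkeeping checks (polynomial-time construction, $\ell = k + \binom{k}{2}$ depending only on $k$, and $W_{\addag} = \emptyset$ for the single-gender restriction) being exactly the ones you spell out.
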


\subsubsection{\IaccD and \Ireor}

The reduction for \Iadd presented in \Cref{se:Const-Ex-Add} cannot be directly 
applied for \IaccD and \Ireor. Instead, new vertex and edge gadgets need to be 
constructed.
One reason for this is that for \Iadd (and \Iswap), we could ensure that the
penalizing women are not manipulated.
However, they can be manipulated by \IaccD and \Ireor operations, and,
therefore, in our construction for \Iadd, there
exists an easy solution with~$q$~manipulations (recall that $q= \binom{k}{2}$), which just deletes
the acceptabilities~$\{m^*, w_i^\dagger\}$ in the case of \IaccD or moves~$m^*$
to
the end of~$w_i^\dagger$'s preferences for each~$i\in [q]$ in the case
of \Ireor.
To avoid such solutions, we modify the construction for \Iadd as follows. We 
add~$q$ additional penalizing men~$m_1^\dagger,
\dots,
m_{q}^\dagger$, and one manipulation of an edge gadget will now allow to match
both a penalizing woman and a penalizing man to this edge gadget.
We assume without loss of generality that $k \ge 6$,
as this makes the proof of \Cref{le:const-ex-reor-L2,le:const-ex-reor-L3} 
easier.

We now describe the details of the construction of the SM instance that we want
to manipulate, which is the same for manipulative actions \Ireor and \IaccD.
Given an instance of \textsc{Clique} consisting of a graph $G=(V,E)$ and an 
integer $k$, for
each vertex~$v\in V$ we introduce
a gadget consisting of one vertex woman
$w_v$ and one vertex man~$m_v$ together with one woman~$w'_v$ and one
man~$m_v'$.
The preferences are as follows:
\begin{align*}
w_v&: m_v' \succ m_{e^v_1}\succ \dots \succ m_{e^v_{d_v}} \succ m_v \pend,
&  w_v' &: m_v' \succ m_v \pend,   \\
m_v &: w'_v\succ w_{e^v_1} \succ \dots \succ w_{e^v_{d_v}} \succ w_v \pend,
&   m'_v &: w'_v\succ w_v \pend.
\end{align*}
For each edge $e=\{u,v\}\in E$, we introduce a gadget consisting of one edge 
man~$m_e$, one edge
woman~$w_e$ together with two men~$m_e'$ and
$m_e''$, and two women~$w_e'$ and~$w_e''$.
The preferences are as follows:
\begin{align*}
m_e&: w_e'\succ w_{u}\succ w_v \succ w^\dagger _1 \succ\dots\succ  w^\dagger
_q\pend,  &   m_e' &: w_e''\succ  w_e \pend,\\
w_e&: m_e'\succ m_{u}\succ m_v \succ m^\dagger _1 \succ\dots\succ  m^\dagger
_q  \pend, &   w_e' &: m_e''\succ  m_e \pend,\\
m_e'' &: w_e'' \succ w_e' \pend,  & w_e'' &: m_e'' \succ m_e' \pend.
\end{align*}
See \Cref{fig:const-ex-del-acc} for an example of a vertex gadget and an edge
gadget.
\begin{figure}[bt]
\begin{center}
    \begin{tikzpicture}[xscale =2 , yscale = 1.2]
    \node[vertex, label=90:$w_u$] (wu) at (0, 0.5) {};
    \node[vertex, label=270:$w_u'$] (wup) at (-1, -1.5) {};

    \node[vertex, label=270:$m_u$] (mu) at (0, -1.5) {};
    \node[vertex, label=90:$m_u'$] (mup) at (-1, 0.5) {};
    \node[vertex, label=90:$m_e$] (me) at (2, 0) {};
    \node[vertex, label=90:$m_e''$] (mep) at (4, 0) {};
    \node[vertex, label=90:$w_e'$] (we) at (3, 0) {};
    \node[vertex, label=270:$w_e$] (xe) at (2, -1) {};
    \node[vertex, label=270:$m_e'$] (ne) at (3, -1) {};
    \node[vertex, label=270:$w_e''$] (xep) at (4, -1) {};

    \node[vertex, label=270:$m^*$] (ms) at (4.5, -2) {};
    \node[vertex, label=90:$w^*$] (ws) at (4.5, 1) {};
    \node[vertex, label=270:$m_i^\dagger$] (mp) at (2.8, -2) {};
    \node[vertex, label=90:$w_i^\dagger$] (wp) at (2.8, 1) {};

    \draw (wu) edge node[pos=0.2, fill=white, inner sep=2pt]
    {\scriptsize~$1$}  node[pos=0.76, fill=white, inner sep=2pt]
    {\scriptsize~$2$} (mup);
    \draw (mup) edge node[pos=0.2, fill=white, inner sep=3pt]
    {\scriptsize~$1$}  node[pos=0.76, fill=white, inner sep=3pt]
    {\scriptsize~$1$} (wup);
    \draw (wup) edge node[pos=0.2, fill=white, inner sep=2pt]
    {\scriptsize~$2$}  node[pos=0.76, fill=white, inner sep=2pt]
    {\scriptsize~$1$} (mu);
    \draw (me) edge node[pos=0.2, fill=white, inner sep=2pt]
    {\scriptsize~$1$}  node[pos=0.76, fill=white, inner sep=2pt]
    {\scriptsize~$2$} (we);
    \draw (we) edge node[pos=0.2, fill=white, inner sep=2pt]
    {\scriptsize~$1$}  node[pos=0.76, fill=white, inner sep=2pt]
    {\scriptsize~$2$} (mep);
    \draw (mep) edge node[pos=0.2, fill=white, inner sep=2pt]
    {\scriptsize~$1$}  node[pos=0.76, fill=white, inner sep=2pt]
    {\scriptsize~$1$} (xep);
    \draw (xep) edge node[pos=0.2, fill=white, inner sep=2pt]
    {\scriptsize~$2$}  node[pos=0.76, fill=white, inner sep=2pt]
    {\scriptsize~$1$} (ne);
    \draw (ne) edge node[pos=0.2, fill=white, inner sep=2pt] {\scriptsize
    ~$2$}  node[pos=0.76, fill=white, inner sep=2pt] {\scriptsize~$1$} (xe);
    \draw (mu) edge node[pos=0.2, fill=white, inner sep=2pt] {\scriptsize
    ~$d_v+2$}  node[pos=0.76, fill=white, inner sep=2pt] {\scriptsize~$d_v+2$}
    (wu);

    \draw (wu) edge[dotted] node[pos=0.2, fill=white, inner sep=2pt]
    {\scriptsize~$j+1$}  node[pos=0.76, fill=white, inner sep=2pt]
    {\scriptsize~$2$} (me);
    \draw (mu) edge[dotted] node[pos=0.2, fill=white, inner sep=2pt]
    {\scriptsize~$j+1$}  node[pos=0.76, fill=white, inner sep=2pt]
    {\scriptsize~$2$} (xe);

    \draw (ms) edge node[pos=0.25, fill=white, inner sep=2pt] {\scriptsize
    ~$q+ 1$}  node[pos=0.8, fill=white, inner sep=2pt] {\scriptsize
    ~$q+1$} (ws);
    \draw (ms) edge node[pos=0.25, fill=white, inner sep=2pt] {\scriptsize
    ~$i$}  node[pos=0.8, fill=white, inner sep=2pt] {\scriptsize
    ~$|E|+1$} (mp);
    \draw (ws) edge node[pos=0.25, fill=white, inner sep=2pt] {\scriptsize
    ~$i$}  node[pos=0.8, fill=white, inner sep=2pt] {\scriptsize~$|E|+1$} (wp);

    \draw (me) edge node[pos=0.4, fill=white, inner sep=2pt] {\scriptsize
    ~$i+3$}  node[pos=0.8, fill=white, inner sep=2pt] {\scriptsize~$p$} (wp);
    \draw (xe) edge node[pos=0.4, fill=white, inner sep=2pt] {\scriptsize
    ~$i+3$}  node[pos=0.8, fill=white, inner sep=2pt] {\scriptsize~$p$} (mp);
    \end{tikzpicture}

\end{center}
\caption{Visualization of the reduction for \IaccD and \Ireor. For some edge~$e_p=\{u,v\}\in E$, the edge gadget corresponding to $e_p$ and the vertex
gadget corresponding to~$u$ (assuming that $e_p=e_j^u$) are included as well
as $m^*$ and $w^*$ together with penalizing agents $m^\dagger_i$ and
$w^\dagger_i$ for some arbitrary $i\in[q]$.
Edges between the vertex and the edge gadget are dotted.}\label{fig:const-ex-del-acc}
\end{figure}
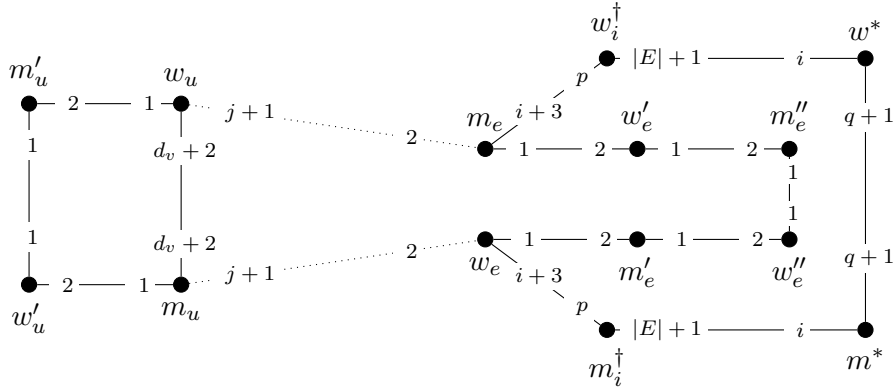
The preferences of
the agents~$m^*$,~$w^*$ and the penalizing agents are as follows:
\begin{align*}
    w^\dagger_i & : m_{e_1}\succ \dots \succ m_{e_{|E|}}\succ m^* \pend, &
    m^* & : w^\dagger_1 \succ \dots \succ w^\dagger_q \succ w^* \pend,\\
    m^\dagger_i & : w_{e_1}\succ \dots \succ w_{e_{|E|}}\succ w^* \pend,
    &   w^* &: m^\dagger_1 \succ \dots \succ m_q^\dagger \succ m^* \pend.
\end{align*}
Finally, we set~$\ell:= \binom{k}{2} + k$.
By $\mathcal{I}_{\text{del}}$ we denote the resulting instance of
\const-\ex-\accD
and by $\mathcal{I}_{\text{reor}}$ the resulting instance of
\const-\ex-\reor. In
the following, in \Cref{le:const-ex-reor-L1}, we prove the forward 
direction of the reduction for both \Ireor and \IaccD. In
\Cref{le:const-ex-reor-L2}, we prove the correctness of the backward direction 
for
\Ireor and in \Cref{le:const-ex-reor-L3} for \IaccD.
\begin{lemma} \label{le:const-ex-reor-L1}
If~$G$ contains a clique of size~$k$, then~$\mathcal{I}_\mathrm{del}$
and~$\mathcal{I}_\mathrm{reor}$ are YES-instances.
\end{lemma}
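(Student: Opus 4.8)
The plan is to follow the proof of \Cref{le:const-ex-add-L1}. Given a size-$k$ clique $C$ in $G$, I would spend the whole budget $\ell=k+\binom{k}{2}$ by ``deactivating'' the vertex gadget of every $v\in C$ and the edge gadget of every edge $e$ with $e\subseteq C$. For \IaccD, deactivating means deleting the pair $\{m_v',w_v'\}$ from every deactivated vertex gadget and the pair $\{m_e'',w_e''\}$ from every deactivated edge gadget; for \Ireor, it means instead reordering the preferences of $m_v'$ to $w_v\succ w_v'$ and those of $m_e''$ to $w_e'\succ w_e''$. In both cases the structural effect is the same: in a deactivated vertex gadget $m_v'$ pairs up with $w_v$ and $m_v$ with $w_v'$, so that $w_v$ and $m_v$ are matched to their overall top choices; in a deactivated edge gadget $m_e''$ pairs up with $w_e'$ and $m_e'$ with $w_e''$, which ``releases'' $m_e$ and $w_e$ from the gadget so that they can be matched to a penalizing woman and a penalizing man, respectively. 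For \Ireor one additionally observes that the deactivated pair can no longer become blocking, because the reordered agent now strictly prefers its new partner.

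Next I would write down the target matching explicitly, namely
\begin{align*}
 M := & \{\{m^*,w^*\}\}
 \cup \{\{m_v',w_v\},\{m_v,w_v'\} : v\in C\}
 \cup \{\{m_v',w_v'\},\{m_v,w_v\} : v\in V\setminus C\} \cup \\
 & \{\{m_e'',w_e'\},\{m_e',w_e''\} : e\subseteq C\}
 \cup \{\{m_e'',w_e''\},\{m_e,w_e'\},\{m_e',w_e\} : e\not\subseteq C\}
 \cup \pi \cup \pi',
\end{align*}
where $\pi$ is any stable matching of the bipartite sub-instance between $\{m_e : e\subseteq C\}$ and the $q=\binom{k}{2}$ penalizing women, and $\pi'$ any stable matching of the bipartite sub-instance between $\{w_e : e\subseteq C\}$ and the $q$ penalizing men (both exist by \citet{GaleS62}). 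Note that $M$ is complete (ignoring the filling women, as agreed) precisely because $C$ is a clique: the $\binom{k}{2}$ edges inside $C$ provide exactly the $q$ edge men and $q$ edge women needed to absorb all penalizing women and men.

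It then remains to check that $M$ admits no blocking pair, which I would verify agent by agent as in \Cref{le:const-ex-add-L1}. Most agents are matched to a top choice, or to a second choice whose own first choice no longer wants them, and hence cannot block: this covers all six agents of every untouched edge gadget, all four agents of every deactivated (vertex or edge) gadget, and the two agents $m_v',w_v'$ of every untouched vertex gadget. The remaining agents are $m^*$, $w^*$, the penalizing agents, the released edge men and women $m_e,w_e$ with $e\subseteq C$, and the agents $m_v,w_v$ with $v\notin C$. For these: (i) $m^*$ and $w^*$ get their worst partner, but every penalizing woman is matched to an edge man she ranks above $m^*$ and every penalizing man to an edge woman she ranks above $w^*$, so neither blocks; (ii) for $e=\{u,v\}\subseteq C$, the only women $m_e$ ranks above his penalizing partner are $w_e'$, $w_u$, $w_v$ — matched respectively to $m_e''$, $m_u'$, $m_v'$, i.e.\ to their own top choices, using $u,v\in C$ — together with penalizing women of smaller index, which cannot block with $m_e$ by the stability of $\pi$; hence $m_e$ blocks nothing, and since a penalizing woman ranks only edge men above $m^*$ (each matched either to a gadget-internal top choice or, for a clique edge, via $\pi$), no penalizing woman blocks either; the symmetric argument using $\pi'$ handles $w_e$ and the penalizing men; (iii) for $v\notin C$, $m_v$ gets his worst partner $w_v$, but every incident edge $e$ satisfies $e\not\subseteq C$, so $w_e$ is matched inside its untouched gadget to its top choice $m_e'$ and does not block with $m_v$, and symmetrically $w_v$ does not block with any incident edge man, which is matched to its top choice $w_e'$. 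I expect step (ii) to be the only delicate point: one has to verify that every woman a clique-edge man prefers to the penalizing women is matched to her top choice in $M$ — which is precisely what the vertex-gadget deactivations ensure — and that the pairing $\pi$ of penalizing women with clique-edge men has no blocking pair among those agents. Everything else is a routine case distinction, and the reason why these $k+\binom{k}{2}$ manipulations are also \emph{necessary} is deferred to \Cref{le:const-ex-reor-L2,le:const-ex-reor-L3}.
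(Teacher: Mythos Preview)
Your proposal is correct and follows essentially the same approach as the paper: the same manipulations (deleting $\{m_v',w_v'\}$ and $\{m_e'',w_e''\}$ for \IaccD, reordering $m_v'$ and $m_e''$ for \Ireor), the same target matching $M$, and the same agent-by-agent stability check. The only cosmetic difference is that the paper matches the released edge agents to penalizing agents via the explicit index-ordered pairing $\{m_{D[i]},w_i^\dagger\}$, $\{m_i^\dagger,w_{D[i]}\}$ rather than invoking an arbitrary stable matching $\pi,\pi'$ on the sub-instance; since all clique-edge men share the same preferences over penalizing women (and vice versa), these amount to the same thing, and your appeal to the stability of $\pi$ cleanly handles the case (ii) you flagged as delicate. (One harmless slip: this construction is already balanced, so no filling women are present here.)
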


\begin{proof}
Let~$C\subseteq V$ be a clique.
We denote by~$D[i]$ the edge with~$i$-th lowest index in the clique and for an 
edge $e=\{u,v\}\in E$, we write $e\subseteq C$ if $u\in C$ and $v\in C$.
In~$\mathcal{I}_\text{del}$, we delete the following~$\ell$ edges.
For each~$v\in C$, we delete~$\{m_{v}', w_v'\}$, and for each edge~$e\subseteq
C$, we delete~$\{m_e'', w_e''\}$.
In~$\mathcal{I}_\text{reor}$, we manipulate the following~$\ell$ agents.
For each~$v\in C$, we change the preferences of~$m_v'~$ to~$m_v': w_v
\pend$.
For each edge~$e\subseteq C$, we change the preferences of~$m_e''$ to~$m_e'': w_e' \pend$.

We claim that the matching
\begin{align*}
    M:= &\{\{m^*, w^*\}\} \cup \{\{m_v, w_v'\}, \{m_v', w_v\} : v\in
    C\}
    \cup \{\{m_v', w_v'\}, \{m_v,w_v\} : v\in V\setminus C\} \cup\\ & \{\{m_e, w_e'\}
    ,
    \{m_e', w_e\},
    \{m_e'', w_e''\}: e \not \subseteq C\} \cup\{\{m_e', w_e''\}, \{m_e'',
    w_e'\} : e
    \subseteq C\}\cup\\ & \{\{m_{D[i]}, w_i^\dagger\}, \{m_i^\dagger, w_{D[i]}\} :
    i\in
    [q]\},
\end{align*} which contains $\{m^*,w^*\}$, is stable in both modified instances.

To see this, first note that for each~$e\not\subseteq C$, the
agents~$m_e$,~$m_e''$,~$w_e$, and~$w_e''$ are matched to their top-choices,
and therefore are not part of a blocking pair.
As a consequence, also~$m_e'$ and~$w_e'$ cannot be part of a blocking pair.

For each agent~$v\in V\setminus C$, both~$m_v'$ and~$w_v'$ are matched to
their top-choices and thus not part of a blocking pair.
The only agents which $m_v$ prefers to $w_v$ are $w_v'$ and $w_e$ for every edge 
$e = \{v, u\}\in E$ incident to $v$;
however, we showed for all these agents that they are not part of a blocking pair.
Thus, $m_v$ is not part of a blocking pair.
Symmetrically, the only agents which $w_v$ prefers to $m_v$ are $m_v'$ and $m_e$ 
for every edge $e = \{v, u\} \in E$ incident to~$v$, and also these agents are 
not contained in a blocking pair.
	Therefore, also~$w_v$ is not part of a blocking pair.

For each agent~$v\in C$, the agents~$m_v$,~$m_v'$, and~$w_v$ are
matched to their top-choices and thus are not part of a blocking pair (note that 
for both \Ireor and \IaccD, we have modified the preferences of $m_v'$ such 
that $w_v$ is his top-choice).
In \IaccD, also woman~$w_v'$ is matched to her top-choice.
In \Ireor, woman~$w_v'$ is matched to her second choice, while her top-choice $m_v'$ is 
not part of a blocking pair.
Thus, $w_v'$ is also not part of a blocking pair.

Since all agents from~$\{m_e: e\subseteq C\}$ have the same preferences over
the penalizing women, and the penalizing women prefer each man from~$\{m_e:
e\subseteq C\}$ to~$m^*$, there is no blocking pair involving only agents
from~$\{m_e: e\subseteq C \} \cup \{m^*\} \cup \{w_i^\dagger: i \in [q]\}$.
Symmetrically, it follows that no blocking pair involves only agents from
$\{w_e: e\subseteq C\} \cup \{w^*\} \cup \{m_i^\dagger: i \in [q]\}$.
Thus,~$M$ is stable.  Note that if the $k$ vertices from $C$ were not to form a 
clique, then the set $\{m_e : e\subseteq C\}$ would consist of less than 
$\binom{k}{2}$ men. Thus, not all penalizing women are matched to an edge 
men in $M$ which implies that $m^*$ and a
penalizing women form a blocking pair for $M$.
\end{proof}

We now prove the backward direction for \Ireor.

\begin{lemma} \label{le:const-ex-reor-L2}
    If~$\mathcal{I_\mathrm{reor}}$ is a YES-instance, then~$G$ contains a
    clique of size~$k$.
\end{lemma}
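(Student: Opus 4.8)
The plan is to reverse-engineer the construction used in \Cref{le:const-ex-reor-L1}: in any successful manipulation each penalizing woman must be ``freed up'' onto an edge man, and each edge man freed this way must cost one reordering inside the edge gadget \emph{and} pull in reorderings at the gadgets of the edge's two endpoints. Bounding these costs against the budget $\ell=\binom{k}{2}+k$ will force the $\binom{k}{2}$ ``used'' edges to span at most $k$ vertices, which is only possible if they form a $K_k$.

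First I would fix a set $R$ of at most $\ell$ agents whose preferences are reordered and a stable matching $M$ of the altered instance with $\{m^*,w^*\}\in M$. Reordering a penalizing agent can only push edge men (resp.\ edge women) below $m^*$ (resp.\ $w^*$) in that agent's list, which never helps, so I may assume $R$ contains no penalizing agent; also $R\not\ni m^*,w^*$ by the definition of \const-\ex-\reor, so every agent of $R$ lies in a vertex gadget or an edge gadget. Since $m^*$'s unchanged list ranks every $w^\dagger_i$ above $w^*=M(m^*)$, stability forces each $w^\dagger_i$ to be matched to a man she prefers to $m^*$; as reordering preserves acceptability and the only men acceptable to $w^\dagger_i$ are the edge men and $m^*$, she is matched to a distinct edge man. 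Let $E_W=\{e\in E: M(m_e)\text{ is a penalizing woman}\}$, so $|E_W|=q=\binom{k}{2}$; symmetrically $|E_M|=q$, where $E_M=\{e: M(w_e)\text{ is a penalizing man}\}$. Write $E^R$ for the set of edges whose six-agent gadget meets $R$, and $V^R$ for the set of vertices whose four-agent gadget meets $R$; since all gadgets are pairwise disjoint, $|E^R|+|V^R|\le|R|\le q+k$.

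The second step is an edge-gadget chain argument showing $E_W\cup E_M\subseteq E^R$. Take $e\in E_W$ with $m_e\notin R$: then $m_e$ still ranks $w_e'$ above $M(m_e)$, so $w_e'$ must prefer $M(w_e')$ to $m_e$; if also $w_e'\notin R$, her list $m_e''\succ m_e$ forces $M(w_e')=m_e''$; if also $m_e''\notin R$, he ranks $w_e''$ above $M(m_e'')=w_e'$, so $w_e''$ must prefer $M(w_e'')$ to $m_e''$, which is impossible (even if $w_e''$ is unmatched) as $m_e''$ is $w_e''$'s top choice, unless $w_e''\in R$. Hence one of $m_e,w_e',m_e'',w_e''$ is in $R$, i.e.\ $e\in E^R$; a symmetric chain through $m_e',w_e'',m_e''$ handles $e\in E_M$. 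In particular $|E^R|\ge|E_W|=q$, hence $|V^R|\le k$. The third and crucial step is to show that $V(E_W)$, the set of vertices incident to an edge of $E_W$, has size at most $k$. I would prove $|V(E_W)|\le|V^R|+(|E^R|-q)$, which together with $|E^R|+|V^R|\le q+k$ and $|E_W|=q$ yields $|V(E_W)|\le k$. For the inequality I would injectively assign to each $u\in V(E_W)\setminus V^R$ a ``witness'' edge in $E^R\setminus E_W$. Fix $u\in V(E_W)\setminus V^R$ and $e=\{u,v\}\in E_W$. If $m_e$ still ranks $M(m_e)$ below $w_u$ (in particular whenever $m_e\notin R$), then $w_u$, whose list is unchanged since $u\notin V^R$, namely $m_u'\succ m_{e^u_1}\succ\dots\succ m_u$, must be matched above the edge man $m_e$; the option $M(w_u)=m_u'$ would trigger the vertex chain $m_u'\to w_u'$ and place an agent of $R$ in $u$'s gadget, contradicting $u\notin V^R$, so $w_u$ is matched to an earlier incident edge man $m_{e'}$; then the edge-gadget chain applied to $e'$ (whose edge man is matched to a non-top choice) shows $e'\in E^R$, while $e'\notin E_W$ since $M(m_{e'})=w_u$ is not a penalizing woman, and $e'\mapsto u$ is injective because $M$ matches $m_{e'}$ to a unique woman. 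Finally, once $|V(E_W)|\le k$ is established, the conclusion is immediate: $E_W$ is a set of $q=\binom{k}{2}$ edges all contained in a vertex set of size at most $k$, and a graph on at most $k$ vertices has at most $\binom{k}{2}$ edges, so this vertex set has exactly $k$ vertices and induces all $\binom{k}{2}$ possible edges, i.e.\ a $k$-clique of $G$.

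The main obstacle is the leftover subcase of the third step: an edge man may itself be reordered so that its matched penalizing woman is pushed above the endpoint women $w_u,w_v$, which lets those endpoints go uncovered ``for free'' through the $w_u$-argument above. Here one has to argue that such a dodge must be paid for elsewhere -- by using the symmetric chain on the $w_e$-side together with $|E_M|=q$ to extract forced extra reorderings in edge gadgets outside $E_W$ -- and then fold all the charges (vertex-gadget reorderings plus the cascade and $E_M$-side edge-gadget reorderings) into a single injection into $E^R\setminus E_W$; the standing assumption $k\ge6$ is what lets one discharge the small corner cases of this accounting. The load-bearing facts are the disjointness of all gadgets and the counting identity $|E^R|+|V^R|\le q+k$ with $|E^R|\ge q$.
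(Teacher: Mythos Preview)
Your overall plan---charge each penalizing agent to an edge-gadget reorder and each ``used'' endpoint to a vertex-gadget reorder, then count---is the right intuition, but two steps are not actually established.

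\textbf{The WLOG reduction is unjustified.} Your claim that ``reordering a penalizing agent can only push edge men below $m^*$'' is false: a \Ireor of $w_i^\dagger$ can put \emph{any} man above $m^*$ in her list, so she may then be stably matched to, say, a vertex man or some $m_e'$. Whether such a reorder can always be replaced within budget by one that avoids penalizing agents is a nontrivial exchange argument you do not give (and restoring $w_i^\dagger$'s original list can create new blocking pairs $\{m_{e'},w_i^\dagger\}$ with edge men $m_{e'}$ that now sit above her partner). The paper does not attempt this WLOG at all.

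\textbf{The injection in Step~3 is incomplete exactly where you say it is.} When $m_e\in R$ with $e=\{u,v\}\in E_W$, the reorder may place the penalizing woman $M(m_e)$ above both $w_u$ and $w_v$, so neither $\{m_e,w_u\}$ nor $\{m_e,w_v\}$ threatens to block and nothing forces $w_u$ onto an edge man or onto $m_u'$. Your sketched fix (``use the symmetric $w_e$-side and $|E_M|=q$'') is not an argument: for $e\in E_W\cap E_M$ with $m_e\in R$, the single reorder of $m_e$ already accounts for both chains you established in Step~2, leaving nothing extra to charge the uncovered endpoints to.

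The paper's proof is structurally different and avoids both issues. It introduces a parameter $p$ counting reordered agents that are either penalizing agents or edge agents matched to a penalizing agent, and calls a vertex agent \emph{happy} if it is reordered or matched to one of its $d_v+1$ best partners. The count is: matching an edge agent \emph{not} in $X_{\text{reor}}$ to a penalizing agent forces some other agent of that edge gadget into $X_{\text{reor}}$, so at least $\binom{k}{2}+\tfrac{p}{2}$ reorders are spent on edge gadgets or penalizing agents; hence $p\le 2k$ and at most $2k-p$ vertex agents are happy, WLOG at most $k-\tfrac{p}{2}$ happy vertex \emph{men}. At least $\binom{k}{2}-p$ penalizing men are matched to edge women $w_e\notin X_{\text{reor}}$, each of whom still prefers both endpoint men to her partner, so both endpoints are happy. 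This yields $\binom{k}{2}-p\le\binom{k-p/2}{2}$, i.e.\ $0\le p(p-4k+10)$, which together with $p\le 2k$ and $k\ge 6$ forces $p=0$. Once $p=0$, the clique follows immediately. The quadratic inequality is precisely what absorbs, in one line, both the reordered-penalizing-agent case and the reordered-edge-man case that your sketch leaves open.
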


\begin{proof}
Let~$X_{\text{reor}}$ be the set of at most~$\ell$ agents whose preferences
have been
reordered, and let~$M$ be a stable matching containing~$\{m^*, w^*\}$. In the
following, we call a vertex agent $a_v$ \emph{unhappy} if it is neither contained in $X_{\text{reor}}$ nor matched
to
one of its $d_v+1$ most preferred partners, i.e., it prefers all edge agents of 
edges incident to $v$ to its current
partner.
Note that for each edge gadget for an edge~$e\in E$ such that no agent from the
edge gadget is contained in~$X_{\text{reor}}$, every stable matching contains
the edges~$\{m_e, w_e'\}$,~$\{m_e', w_e\}$, and~$\{m_e'',w_e''\}$.
If an edge agent~$a_e\in \{m_e, w_e\}$ is the only agent from this edge gadget
contained in~$X_{\text{reor}}$, then matching~$M$
contains the edge~$\{m_e, w_e'\}$ if~$a_e = w_e$ and~$\{w_e, m_e'\}$ if~$a_e =
m_e$.
Furthermore, each penalizing agent~$a_i^\dagger$ needs to be contained
in~$X_{\text{reor}}$ or
matched to an edge agent~$a_e$ since, otherwise, $\{m_i^{\dagger}, w^*\}$ if $a_i^\dagger = m_i^{\dagger}$
	or $\{m^*, w_i^{\dagger}\}$ if $a_i^\dagger = w_i^{\dagger}$ blocks $M$.

Let $p$ be the number of agents from~$X_{\text{reor}}$ which are  
edge agents 
matched to a penalizing agent or are penalizing agents.
As there are~$2 \binom{k}{2}$ penalizing agents, at least $2 \binom{k}{2}-p$
of them
need to be
matched to edge agents which are not in $X_{\text{reor}}$, as otherwise at least one
penalizing agent forms a blocking pair together with $m^*$ or $w^*$.
Since, as argued above, matching an edge agent
to a penalizing agent requires that at least one agent from the corresponding edge gadget is
part of $X_{\text{reor}}$ (and it is not sufficient to reorder the preferences of the other edge agent from this gadget), it follows that at
least~$\frac{2\binom{k}{2} -p}{2}  + p= \binom{k}{2} + \frac{p}{2} \le \ell$ agents from $X_{\text{reor}}$ are
agents from edge gadgets or penalizing agents.
As $\ell = \binom{k}{2} + k$, it follows that $p \le 2k$.

For every vertex agent~$a_v\in \{m_v, w_v\}$ it holds that if none of the
vertices from its vertex gadget is contained in~$X_{\text{reor}}$, then
agent~$a_v$ is either
matched to an edge agent or unhappy.
As all but at most $k - \frac{p}{2}$ agents from $X_{\text{reor}}$ are either 
a penalizing agent or contained in an edge gadget, at 
least $2\binom{k}{2} - p$ edge agents
are matched to penalizing agents.
Because an edge agent may only be matched to an agent outside its edge gadget  
if the preferences of at least one agent from the gadget are modified, it 
follows that there can be at
most~$2k- p$ happy vertex agents.
Thus, without loss of generality, there are at most~$k - \frac{p}{2}$ happy
vertex men (otherwise we apply the following argument for happy vertex
women).

Note that at least~$\binom{k}{2} - p~$ penalizing men are matched to edge women 
which
are not contained in~$X_{\text{reor}}$ in $M$.
These~$\binom{k}{2} - p$ edge women~$w_e$ for some $e\in E$ prefer the vertex
men corresponding
to the endpoints of~$e$ to each penalizing man.
Since only the at most~$k - \frac{p}{2}$ happy vertex men may not prefer to be 
matched to an edge women corresponding to an incident edge, it 
follows that the~$\binom{k}{2}-p$ edges to which the $\binom{k}{2} - p$ edge 
women~$w_e$ correspond have at most~$k -
\frac{p}{2}$ endpoints.
This is only possible if~$\binom{k}{2} - p \le \binom{k -\frac{p}{2}}{2}$,
which is equivalent to~$0 \le p \cdot (p - 4k + 10)$, implying (as~$p \ge 0$) that~$p=
0$ or~$p \ge 4k -10$.
However, the latter case combined with our previous observation that $p \le 2k$ 
implies that $2k \ge p \ge 4
k - 10$, a contradiction for $k \ge 6$.
It follows that $p= 0$. This implies that indeed $k$ vertex men are happy,
no penalizing woman has been manipulated, and that in $\binom{k}{2}$ edge gadgets
exactly one agent, but no edge agent has been manipulated.
Thus, each penalizing woman is matched to an edge man $m_e$ such that both
endpoints of $e$ are happy.
It follows that there are~$\binom{k}{2}$~edges whose endpoints are among
the $k$ vertices whose vertex man $m_v$ is happy.
These~$k$~vertices clearly form a clique.
\end{proof}

In a similar way, we can show the analogous statement
for~$\mathcal{I}_{\text{del}}$.

\begin{lemma} \label{le:const-ex-reor-L3}
If~$\mathcal{I_\mathrm{del}}$ is a YES-instance, then~$G$ contains a clique
of
size~$k$.
\end{lemma}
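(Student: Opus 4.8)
The plan is to mirror the proof of \Cref{le:const-ex-reor-L2} essentially verbatim, under the correspondence ``reorder the preferences of agent~$a$'' $\leftrightarrow$ ``delete some pair incident to~$a$''. Let $X_{\mathrm{del}}$ be a set of at most $\ell=\binom{k}{2}+k$ deleted pairs witnessing that $\mathcal{I}_{\mathrm{del}}$ is a YES-instance, and let $M$ be a stable matching of the altered instance with $\{m^*,w^*\}\in M$. Call an agent \emph{touched} if some pair of $X_{\mathrm{del}}$ is incident to it, and a vertex or edge gadget \emph{touched} if it contains a touched agent; this notion of ``touched'' takes over the role played by ``belonging to $X_{\mathrm{reor}}$'' in \Cref{le:const-ex-reor-L2}, including in the definition of an \emph{unhappy} vertex agent.

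First I would re-establish, now in the \textsc{SMI} instance produced by the deletions, the two structural observations that drive \Cref{le:const-ex-reor-L2}. (i)~If the edge gadget of $e$ is untouched, then $M$ contains $\{m_e,w_e'\}$, $\{m_e',w_e\}$, and $\{m_e'',w_e''\}$ --- the four agents $m_e',w_e',m_e'',w_e''$ have preference lists lying entirely inside the gadget, so the mutual-top-choice cascade forces this irrespective of deletions made elsewhere --- and in particular the edge agents of an untouched edge gadget are not matched to penalizing agents. (ii)~If the vertex gadget of $v$ is untouched, then $\{m_v',w_v'\}\in M$; consequently $w_v$ (resp.\ $m_v$) is either matched to an edge agent incident to $v$ or unhappy, and an edge man $m_e$ with $e=\{u,v\}$ matched to a penalizing woman can be non-blocking with $w_u$ only if $\{m_e,w_u\}\in X_{\mathrm{del}}$ or $w_u$ is matched to $m_u'$ or to an earlier edge man incident to $u$ --- in every case forcing some gadget to be touched. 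The genuinely new points over \Ireor are that \IaccD may delete ``cross'' pairs such as $\{m_e,w_u\}$ or $\{m_i^\dagger,w_e\}$ and that a single deletion touches two agents at once; one checks that a cross deletion still touches a gadget agent (so it is counted) while freeing no gadget, and that a ``direct'' deletion $\{m^*,w_i^\dagger\}$ or $\{m_i^\dagger,w^*\}$ touches exactly one penalizing agent because its other endpoint is $m^*$ or $w^*$, so these deletions are never more budget-efficient than the freeing scheme of \Cref{le:const-ex-reor-L1}.

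With (i) and (ii) in hand, the counting is exactly that of \Cref{le:const-ex-reor-L2}. Since $m^*$ is matched to $w^*$ in $M$, each penalizing woman $w_i^\dagger$ either has $\{m^*,w_i^\dagger\}\in X_{\mathrm{del}}$ or is matched to an edge man (otherwise $\{m^*,w_i^\dagger\}$ blocks), and symmetrically each $m_i^\dagger$ is handled by a direct deletion or matched to an edge woman. Letting $p$ count the penalizing agents handled by a direct deletion and arguing as in \Cref{le:const-ex-reor-L2} --- at least $2\binom{k}{2}-p$ penalizing agents sit in touched edge gadgets, each such gadget absorbs at most two of them, the endpoint vertex gadgets of the corresponding edges are touched, and the direct deletions are disjoint from all gadget deletions --- one obtains the same chain of inequalities, culminating in a quadratic inequality in $p$ that forces $p=0$ for $k\ge 6$. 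Then no penalizing agent and no edge agent has been directly manipulated, exactly $\binom{k}{2}$ edge gadgets host penalizing women, the corresponding $2\binom{k}{2}$ endpoint incidences lie among $k$ touched vertex gadgets, and these $k$ vertices span $\binom{k}{2}$ edges, hence form a clique.

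The main obstacle I anticipate is precisely observation~(ii): verifying that the extra freedom of \IaccD --- deleting cross-gadget pairs, and one deletion affecting two agents --- cannot discharge the blocking constraints around the vertex agents, or dispose of penalizing agents, more cheaply than ``freeing'' whole vertex and edge gadgets as in \Cref{le:const-ex-reor-L1}. In particular, the ``earlier edge man'' subcase in (ii) must be routed, exactly as in \Cref{le:const-ex-reor-L2}, through the bound on the number of happy vertex agents, since there the touched gadget may differ from the one one would like to charge. Once this monotonicity is pinned down, the remainder of the counting transfers without change.
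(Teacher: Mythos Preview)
Your proposal is correct and follows essentially the same approach as the paper's own proof. The paper likewise defines $p$ as the number of deletions of the form $\{m^*,w_i^\dagger\}$ or $\{m_i^\dagger,w^*\}$, observes that freeing an edge agent for a penalizing agent forces at least one deletion with \emph{both} endpoints inside the corresponding edge gadget, derives the same inequality $\binom{k}{2}+\tfrac{p}{2}\le\ell$ (hence $p\le 2k$), bounds the number of happy vertex agents by $2k-p$, and then defers to the counting of \Cref{le:const-ex-reor-L2}. Your extra discussion of cross-gadget deletions and of one deletion touching two agents is exactly what is needed to justify that the paper's terse ``all but $k-\tfrac{p}{2}$ of the deleted pairs do not involve a vertex agent'' step goes through; the paper simply suppresses this verification.

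One terminological point worth tightening: your notion of a ``touched'' edge gadget (containing any agent incident to a deleted pair) is strictly weaker than what the counting actually uses, namely that the gadget contains a deletion with \emph{both} endpoints inside it. You already note that cross deletions touch but do not free, so the argument is sound, but when you write ``at least $2\binom{k}{2}-p$ penalizing agents sit in touched edge gadgets, each such gadget absorbs at most two of them,'' the budget charge you need is one \emph{internal} deletion per such gadget, not merely one incident deletion. Stating this as ``at least $\binom{k}{2}-\tfrac{p}{2}$ edge gadgets carry an internal deletion'' (as the paper does) makes the disjointness from the $p$ direct deletions, and hence the inequality, immediate.
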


\begin{proof}
Let~$X_{\text{del}}$ be the set of at most~$\ell$ pairs which have been
deleted, and let~$M$ be a stable matching containing~$\{m^*, w^*\}$.
Note that, for each edge gadget for an edge~$e$ such that no pair from the edge
gadget is contained in~$X_{\text{del}}$, any stable matching contains the
edges~$\{m_e, w_e'\}$,~$\{m_e', w_e\}$, and~$\{m_e'', w_e''\}$.
For each penalizing agent~$a_i^\dagger$, either the edge~$\{a_i^\dagger, a^*\}$
where~$a^*$ is the agent from~$\{m^*, w^*\}$ of opposite gender is contained
in~$X_{\text{del}}$ or~$a_i^\dagger$ is matched to an edge agent~$a_e$.
Let~$p$ be the number of pairs in~$X_{\text{del}}$ containing a penalizing
agent and an
agent from~$\{m^*, w^*\}$.
As there are~$2 \binom{k}{2}$ penalizing agents, at
least~$\frac{2\binom{k}{2} -p}{2}  + p= \binom{k}{2} + \frac{p}{2} \le 
\ell=\binom{k}{2} + k$
deletions happen where either both involved agents are from the same edge 
gadget or one of the involved agents is a penalizing agent and the other 
one is from~$\{m^* , w^*\}$.
From this, it follows that $p \le 2k$.

For every vertex agent~$a_v$ it holds that if no pair in the corresponding
vertex gadget is contained in~$X_{\text{del}}$, then the agent~$a_v$ is either
matched to an edge
agent or unhappy (where $a_v$ is unhappy if none of the edges incident to it is deleted and it is not matched to one of its first~$d_v +1$ most preferred partners).
Since, as argued above, all but $k - \frac{p}{2}$ of the deleted pairs do not 
involve a vertex agent,
and at least $2\binom{k}{2} - p$ edge agents are matched to penalizing agents,
there can be at
most~$2k- p$ happy vertex agents.
The rest of the argument is the same as in the proof of
\Cref{le:const-ex-reor-L2}.
\end{proof}

The following theorem follows directly from
\Cref{le:const-ex-reor-L1,le:const-ex-reor-L2,le:const-ex-reor-L3}.

\begin{theorem} \label{th:const-ex-reor}
Parameterized by the budget~$\ell$, \const-\ex-\accD is W[1]-hard.
Parameterized by~$\ell$, \const-\ex-\reor is W[1]-hard, and this also holds if one is only
allowed to reorder the preferences of agents of one gender.
\end{theorem}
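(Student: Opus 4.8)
The plan is to assemble the three preceding lemmas into parameterized reductions from \textsc{Clique}. For \const-\ex-\accD the reduction maps an instance $(G=(V,E),k)$ of \textsc{Clique} to the instance $\mathcal{I}_{\mathrm{del}}$ constructed above, with designated pair $\{m^*,w^*\}$ and budget $\ell := \binom{k}{2}+k$; for \const-\ex-\reor it maps $(G,k)$ to $\mathcal{I}_{\mathrm{reor}}$ with the same pair and budget. First I would check that both maps are computable in polynomial time: there are $|V|$ vertex gadgets, $|E|$ edge gadgets, and $2\binom{k}{2}+2$ further agents (the penalizing agents together with $m^*$ and $w^*$), the constructed instance is already balanced so no padding agents are needed, and every preference list has length polynomial in $|V|+|E|$. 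Second, I would note that $\ell=\binom{k}{2}+k$ depends only on the input parameter $k$, so the budget parameter of the produced instance is bounded by a function of $k$; together with polynomial running time this is precisely the (restricted) notion of parameterized reduction fixed in \Cref{sub:para}.

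Correctness is then immediate from the lemmas: \Cref{le:const-ex-reor-L1} supplies the forward direction for both actions, \Cref{le:const-ex-reor-L2} the backward direction for \Ireor, and \Cref{le:const-ex-reor-L3} the backward direction for \IaccD. Since \textsc{Clique} parameterized by $k$ is W[1]-hard \citep{DBLP:series/mcs/DowneyF99}, W[1]-hardness of \const-\ex-\accD and of \const-\ex-\reor with respect to $\ell$ follows.

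For the one-gender strengthening of the \Ireor result I would inspect the proof of \Cref{le:const-ex-reor-L1}: the solution exhibited there reorders only men, namely the agents $m_v'$ for $v\in C$ and $m_e''$ for $e\subseteq C$. Hence the forward direction still holds when reorderings are restricted to men. The backward direction \Cref{le:const-ex-reor-L2} remains valid a fortiori, since a men-only solution is in particular a solution and the counting argument there only bounds how many manipulated agents fall inside edge gadgets or are penalizing agents, never using that the manipulated set spans both genders. So the same reduction witnesses W[1]-hardness of the men-restricted variant, and, by the gender symmetry of the construction, of the women-restricted variant as well.

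This packaging step is routine; the genuine difficulty lies in the backward directions already established in \Cref{le:const-ex-reor-L2,le:const-ex-reor-L3}, which have to rule out ``cheating'' solutions that manipulate penalizing agents directly instead of paying for a clique. The crux there is the chain of inequalities forcing $\binom{k}{2}-p\le\binom{k-p/2}{2}$, i.e.\ $0\le p\,(p-4k+10)$, which combined with the a-priori bound $p\le 2k$ and the assumption $k\ge 6$ collapses to $p=0$; that is the step I would expect to be the main obstacle, and it is precisely why the construction adds the extra penalizing men $m_i^\dagger$ that the simpler \Iadd construction does not need.
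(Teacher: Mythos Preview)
Your proposal is correct and follows the same approach as the paper: the theorem is obtained by packaging \Cref{le:const-ex-reor-L1,le:const-ex-reor-L2,le:const-ex-reor-L3} into a parameterized reduction from \textsc{Clique}, and you have correctly verified the side conditions (polynomial-time construction, balanced agent counts, budget bounded in~$k$) as well as the reason the one-gender restriction goes through. The paper's own proof is a single sentence pointing to the three lemmas, so your write-up simply makes explicit what the paper leaves implicit.
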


Note that the presented construction, in contrast to the reduction for \Iadd
and \Iswap, does not have implications in terms of inapproximability of \Ireor and
\IaccD. In particular, as described in the beginning of this section, there
always exists a trivial solution of cost $2q$. In fact, we show in the next
section that \Ireor admits a factor-2 approximation.
Note further that the presented construction is also a valid parameterized
reduction
from \textsc{Clique} to \const-\ex-\swap;
however, we will derive a stronger hardness result for this problem (yielding also
FPT-inapproximability) in \Cref{sec:const-ex-swap}.

\subsubsection{\Iswap}\label{sec:const-ex-swap}
\Cref{cor:const-ex-add} showed that it is W[1]-hard to distinguish whether it 
is possible to make an edge~$e^* = \{m^*  , w^*\}$ part of a stable matching or 
no set of agents whose addition makes~$e^*$ part of a 
stable matching exists.
We now use this
W[1]-hardness to derive
an FPT-inapproximability result for
\const-\ex-\swap by a reduction from \const-\ex-\add.
We achieve this result in two steps.
First, we consider a variant of \Iswap,
which
we call \IswapR: Here, a subset~$\hat{A}$ of
agents is given, and one is only allowed to swap the first two agents in the
preference lists of agents from $\hat{A}$, while the preferences of all agents 
$A\setminus \hat{A}$ need to remain unmodified.
By reducing from \const-\ex-\add, we show that parameterized by the 
budget~$\ell$, it is W[1]-hard to decide whether
\const-\ex-\swapR has a solution with at most~$\ell$~swaps or has no solution
with an arbitrary number of allowed swaps.
Second, we derive our FPT-inapproximability result for \const-\ex-\swap by 
reducing from \const-\ex-\swapR.

We start by showing W[1]-hardness of \const-\ex-\swapR.

\begin{lemma}\label{cor:const-ex-swapr}
Parameterized by the budget~$\ell$, it is W[1]-hard to decide whether
\const-\ex-\swapR has a solution with at most $\ell$ swaps or has no solution
with an arbitrary number of allowed swaps.
\end{lemma}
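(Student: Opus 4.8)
The plan is a parameterized reduction from \const-\ex-\add, which by \Cref{cor:const-ex-add} is W[1]-hard already in the strong form that distinguishes instances admitting a solution with at most~$\ell$ additions from instances with no solution using any number of additions; moreover, by the construction used there we may assume $W_{\addag}=\emptyset$ (so every addable agent is a man) and $m^*,w^*\notin U_{\addag}$. I would use the transformation sketched in the paragraph \emph{\Iadd via Restricted \Iswap} of \Cref{se:relManip}, exploiting that for \IswapR no dummy agents are needed: keep every agent of the given \const-\ex-\add instance, and for each addable man $a\in U_{\addag}$ introduce a woman $a'$ with preferences $a':a\succ a''\pend$, a man $a''$ with preferences $a'':a'\pend$, and prepend $a'$ as the new top choice of~$a$; finally complete every preference list by appending all still-unlisted agents at the end \emph{so that each original agent ranks every gadget agent below every original agent}. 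Set $\hat A:=\{a':a\in U_{\addag}\}$ and keep the budget~$\ell$. This is polynomial-time computable, does not increase the parameter, and preserves $|U|=|W|$ because each gadget contributes one man and one woman.

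The technical core is a structural claim about the modified instance. Since each $\hat A$-agent has a preference list of the form $a\succ a''\succ\dots$, the only admissible \IswapR-operations exchange $a$ and $a''$ at the top of $a'$'s list, and repeating such a swap is pointless; hence every solution is, without loss of generality, described by a set $X_A\subseteq U_{\addag}$ of gadgets whose $a'$-list is flipped. After flipping the gadgets in $X_A$, for each $a\in U_{\addag}\setminus X_A$ the agents $a$ and $a'$ are mutual top choices, and for each $a\in X_A$ the agents $a'$ and $a''$ are mutual top choices, so every stable matching~$M$ of the modified instance contains the \emph{trapped pairs} $\{a,a'\}$ for $a\in U_{\addag}\setminus X_A$ and $\{a',a''\}$ for $a\in X_A$. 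Call an agent \emph{real} if it is present in the \const-\ex-\add instance obtained by adding~$X_A$ (i.e.\ an original man outside $U_{\addag}\setminus X_A$, or an original woman) and call the men $\{a'':a\in U_{\addag}\setminus X_A\}$ \emph{orphan men}. Then every non-real agent other than the orphan men sits at its top choice in~$M$ and lies in no blocking pair; every real man is matched by~$M$ to a real woman, so the restriction of~$M$ to the real agents leaves exactly $|U_{\addag}\setminus X_A|$ real women unmatched, namely those matched by~$M$ to orphan men; and here the completion rule is crucial: a pair of real agents blocks~$M$ in the modified instance if and only if it blocks the restriction of~$M$ in the \const-\ex-\add instance obtained by adding~$X_A$, because a real woman prefers every real man to every orphan man, so a blocking pair of the restriction that is ``hidden'' in~$M$ behind an orphan-man partner would in fact block~$M$.

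Given the claim, both directions are routine bookkeeping. For the forward direction, from an added set $X_A$ with $|X_A|\le\ell$ and a stable matching $M_0\ni\{m^*,w^*\}$ in the corresponding \const-\ex-\add instance — which saturates all real men and leaves exactly $|U_{\addag}\setminus X_A|$ real women unmatched — flip the gadgets in~$X_A$ and extend~$M_0$ by all trapped pairs and by any internally stable matching between the orphan men and the real women unmatched by~$M_0$; stability of~$M_0$ forbids a leftover agent from blocking with an $M_0$-matched real man, trapped agents block with nobody, the completion rule forbids an orphan man from blocking with a real woman, and internal stability handles the leftovers, so the result is a stable matching of the \IswapR instance containing $\{m^*,w^*\}$ obtained with at most~$\ell$ swaps. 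For the backward direction, any solution (of arbitrary size) corresponds to a flipped set~$X_A$ admitting a stable matching $M\ni\{m^*,w^*\}$, and the restriction of~$M$ to the real agents is, by the claim, a stable matching containing $\{m^*,w^*\}$ in the \const-\ex-\add instance obtained by adding~$X_A$. Hence a \const-\ex-\swapR-solution with at most~$\ell$ swaps exists iff the \const-\ex-\add instance has one adding at most~$\ell$ agents, and the \const-\ex-\swapR instance has no solution with any number of swaps whenever the \const-\ex-\add instance has none with any number of additions; together with \Cref{cor:const-ex-add} this yields the claimed W[1]-hardness.

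The step I expect to be the main obstacle is making the structural claim airtight — in particular proving that the gadget is genuinely inert, i.e.\ that the orphan men (the only gadget agents not pinned to a top choice) never occur in a blocking pair of the matchings used on either side, and that re-matching the real women left unmatched by~$M_0$ to orphan men creates no blocking pair. This is precisely what forces the delicate design choices: using a same-gender auxiliary agent~$a''$ rather than a bare binding agent, and completing the original agents' lists so that every original agent is ranked above every gadget agent. The subsequent reduction from \const-\ex-\swapR to \const-\ex-\swap that upgrades this into the FPT-inapproximability result is a separate gadget-insertion argument and is not part of this lemma.
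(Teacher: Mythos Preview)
Your proposal is correct and follows essentially the same approach as the paper: both reduce from \const-\ex-\add using the gadget from \Cref{se:relManip} that attaches, to each addable agent~$a$, a binding agent~$a'$ and a same-gender partner~$a''$, with $\hat A$ consisting of the binding agents. The paper handles general $W_{\addag}$ rather than invoking the one-gender restriction, and it dispatches the correctness in two sentences (``It is now straightforward to verify''), whereas you spell out the structural claim about trapped pairs, orphan men, and the completion rule in detail---but the construction and the logic are the same.
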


\begin{proof}
We reduce from \const-\ex-\add, for which it is W[1]-hard to distinguish whether
there is a solution with at most~$\ell$ additions or no solution for any 
number of additions 
(\Cref{cor:const-ex-add}).
Let $(\mathcal{I}, U_{\addag}, W_{\addag}, \ell)$ be an instance of
\const-\ex-\add.
We form an equivalent instance of \const-\ex-\swapR by adding for each man
$m\in U_{\addag}$ a woman $w_m$ and a man $m'$, where~$w_m$ is added to $m$'s
preferences as the top-choice.
The man $m'$ has $w_m$ as his top-choice and the woman $w_m$ has~$m$ as her
top-choice and $m'$ as her second most preferred man. All other agents follow
in an arbitrary order in the preferences of $w_m$ and $m'$.
All other agents add $w_m$ or $m'$ at the end of their preferences.
Similarly, we add for each woman~$w\in W_{\addag}$ two agents~$m_w$ and~$w'$ 
whose preferences are constructed analogously.
We set~$\hat{A}:= \{m_w : w\in W_{\addag}\} \cup \{w_m : m\in U_{\addag}\}$.

It remains to show that the instances are equivalent.
From a solution to \const-\ex-\add, consisting of the agents $X_U \cup X_W$,
one can get a solution to \const-\ex-\swapR  by swapping the two
most preferred men~$m$ and~$m'$ in the preferences of
$w_m$ for each~$m\in X_U$ and modifying the preferences of $m_w$ for 
each $w\in X_W$ analogously.
Vice versa, from a solution to \const-\ex-\swapR modifying the preferences of
a set of agents~$X$ one can get a solution to \const-\ex-\add by adding $w\in
W_{\addag}$ if $m_w \in X_U$ and $m\in U_{\addag}$ if $w_m \in X_W$.
It is now straightforward to verify the correctness.
\end{proof}

We continue by reducing \IswapR to \Iswap.
The basic idea behind this reduction is that we can introduce a set of dummy agents and use these agents to make the swaps not allowed in the \IswapR instance
too expensive.
First,
we show that
we can add a set of $r > \ell$ men $m_1^d$, \dots, $m_r^d$
and~$r$~women~$w_1^d$, \dots, $w_r^d$ to an SM instance such that, in any
instance
arising through at most $\ell$ swaps, any stable matching contains the edges
$\{m_i^d, w_i^d\} $ for all $i\in [r]$ regardless where the newly inserted
agents are placed in the preferences of the other agents.
This allows us to make swapping two neighboring non-dummy agents $a$ and~$a' $ 
in the
preference list of some non-dummy agent $b$ very expensive, as we 
can
insert (some of)
the newly added dummy agents in between $a$ and~$a'$ in~$b$'s preference list.

Given an SM instance $\mathcal{I}$ and a list of swap operations~$S$,
we denote by~$\mathcal{I}[S]$
the SM instance resulting from applying
the swaps from~$S$ to $\mathcal{I}$.

\begin{lemma}\label{lem:dummy-agents}
Let $\mathcal{I}$ be a \textsc{Stable Marriage} instance containing~$r$ 
men~$m_1^d, \dots, m_r^d$ and~$r$~women $w_1^d, \dots, w_r^d$ such that, for 
all $i\in [r]$, the
preferences of $w_i^d$ and $m^d_i$ match the following pattern (where all
indices are taken modulo $r$):
\begin{align*}
    m_i^d &: w_i^d \succ w^d_{i+1} \succ w^d_{i+2} \succ \dots \succ w^d_{r +
    i-1} \pend, \\
    w^d_i &: m^d_i \succ m^d_{i+1} \succ m^d_{i+2} \succ \dots \succ m^d_{r +
    i-1}\pend.
\end{align*}
For any list $S$ of at most $r-1$ swap operations, any stable matching in the
instance~$\mathcal{I}[S]$ contains the edges $\{m_i^d, w_i^d\}$ for all $i\in
[r]$.
\end{lemma}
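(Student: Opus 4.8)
The plan is to show that the $r$ dummy men and $r$ dummy women form, in every stable matching of $\mathcal{I}[S]$, the ``canonical'' matching $\{\{m_i^d, w_i^d\} : i \in [r]\}$. Observe first that in the original instance $\mathcal{I}$ (before any swaps), the restriction of the preferences to the dummy agents is exactly a cyclic instance in which $m_i^d$ ranks $w_i^d$ first, then $w_{i+1}^d$, and so on, and symmetrically for the women; so the only stable matching \emph{on the dummy agents alone} is the diagonal one. The task is to argue this survives (i) the interleaving with non-dummy agents in the full instance, and (ii) up to $r-1$ arbitrary swaps. The key structural fact I would isolate is a ``shift'' or ``rotation'' property: for any $t$, the matching $\{\{m_i^d, w_{i+t}^d\}\}$ is \emph{not} stable in $\mathcal{I}$ for $t \neq 0$, because the pair $\{m_{i}^d, w_i^d\}$ blocks it (each strictly prefers the partner with its own index to any later-indexed one). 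More usefully, I would prove the stronger claim that in $\mathcal{I}$ any matching that leaves some dummy agent unmatched to a dummy agent, or matches the dummy agents to each other non-diagonally, admits a blocking pair among the dummy agents.

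**The core argument via a ``displacement'' potential.** The cleanest route is an exchange/counting argument. Suppose $M$ is stable in $\mathcal{I}[S]$ but some $\{m_i^d, w_i^d\} \notin M$. Consider how many swaps from $S$ touch the preference list of each dummy agent; since $|S| \le r-1$, at least one dummy man, say $m_j^d$, and likewise the ``relevant'' dummy women have preference lists that are \emph{unmodified} on the dummy agents — but one has to be careful because a single swap can move a non-dummy agent past a dummy one. Here is the point I would push: in the \emph{unmodified} list of $m_i^d$, the women $w_i^d, w_{i+1}^d, \dots$ appear as a prefix (all non-dummy women come after), and similarly for $w_i^d$. A swap either exchanges two dummy agents (costing us, but there are few) or pushes one non-dummy agent one step. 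So after $\le r-1$ swaps, for the set of dummy agents whose lists were never altered in a way that inserts a non-dummy woman before $w_i^d$, the prefix structure is intact. I would then run a Gale–Shapley-style argument restricted to the dummy agents: start from $M$ and let each unmatched-to-a-dummy dummy man propose down his list; the first time a ``bad'' configuration arises, the diagonal pair $\{m_k^d, w_k^d\}$ for the relevant smallest-index $k$ is blocking. Formally: let $k$ be the smallest index with $M(m_k^d) \neq w_k^d$; if $w_k^d$ is matched to a non-dummy man or to $m_\ell^d$ with $\ell \neq k$, then since $m_k^d$'s (still-prefix) list ranks $w_k^d$ above everything except possibly agents already paired diagonally with smaller index, and $w_k^d$'s list ranks $m_k^d$ above all agents of strictly larger index and all non-dummy men, the pair $\{m_k^d, w_k^d\}$ blocks $M$ — unless a swap has locally inverted exactly this, but that can happen for at most $r-1$ indices, and a minimality/pigeonhole choice of $k$ avoids it.

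**Handling the swaps: the pigeonhole step.** The delicate part — and the one I expect to be the main obstacle — is making precise that $r-1$ swaps cannot simultaneously sabotage all $r$ ``diagonal blocking pairs.'' The argument I would give: for the diagonal pair $\{m_i^d, w_i^d\}$ to fail to be a blocking pair for a matching that is otherwise ``shifted,'' at least one of the two agents must have been manipulated so that its current partner is now ranked above the other dummy agent it would block with; each such manipulation requires a swap \emph{inside that agent's list that moves a relevant agent across the position of the diagonal partner}, and distinct diagonal pairs require disjoint such swaps (they involve disjoint agents). Hence at least one diagonal index $i^*$ is ``untouched'' in the relevant sense, and then — chasing the matching around the cycle starting from $i^*$ — one shows $\{m_{i^*}^d, w_{i^*}^d\}$ must actually be in $M$ (else it blocks), and by an induction around the cycle (each time ``peeling off'' the diagonal pair and noting the next index is now in the same situation), all diagonal pairs lie in $M$. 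I would write this induction carefully, since the bookkeeping of ``which $\le r-1$ lists were altered and how'' is exactly where an off-by-one could hide; but the combinatorial skeleton is: few swaps $\Rightarrow$ a clean index exists $\Rightarrow$ the cyclic structure forces the diagonal there $\Rightarrow$ remove it and recurse. This yields $\{m_i^d, w_i^d\} \in M$ for all $i \in [r]$, as claimed.
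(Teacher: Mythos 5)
Your overall strategy (assume some diagonal edge is missing, look at how the dummy agents are then displaced along a cycle, and count swaps) matches the paper's, but the counting step you rely on is too weak to close the argument. You argue: each sabotaged diagonal pair $\{m_i^d,w_i^d\}$ requires at least one swap in the list of $m_i^d$ or of $w_i^d$; these swaps are distinct across indices; hence with only $r-1$ swaps some index $i^*$ is untouched; then peel off $\{m_{i^*}^d,w_{i^*}^d\}$ and recurse around the cycle. The pigeonhole does yield one untouched index, so one diagonal edge is forced into $M$ --- but the recursion then stalls: after peeling, there remain $r-1$ diagonal pairs and still up to $r-1$ swaps, so nothing forces a second untouched index, and ``the next index is in the same situation'' is simply not guaranteed. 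Concretely, for $r=4$ consider the candidate stable matching that transposes indices $1$ and $2$ (i.e., contains $\{m_1^d,w_2^d\}$ and $\{m_2^d,w_1^d\}$) and is diagonal on $3$ and $4$: your ``one swap per sabotaged pair'' bound says only $2\le r-1$ swaps are needed to neutralize the would-be blocking pairs $\{m_1^d,w_1^d\}$ and $\{m_2^d,w_2^d\}$, so your argument cannot exclude this matching.

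The missing ingredient is quantitative: to make $w_i^d$ prefer $m_j^d$ to $m_i^d$ one needs at least $\dist(i,j) := (j-i) \bmod r$ swaps in $w_i^d$'s list, since $m_i^d$ is her first choice and $m_j^d$ sits $\dist(i,j)$ positions further down the dummy prefix; and around any displacement cycle these distances telescope to a positive multiple of $r$, hence to at least $r > r-1$. In the $r=4$ example, neutralizing $\{m_2^d,w_2^d\}$ alone already costs $r-1=3$ swaps (either $m_2^d$ must move $w_1^d$ up past $r-2$ other dummy women, or $w_2^d$ must move $m_1^d$ up analogously), and one more swap is needed for $\{m_1^d,w_1^d\}$, for a total of $r$. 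This cyclic-distance accounting --- together with the separate case in which the displaced dummy agents form a path through two non-dummy agents, which you mention only in passing --- is exactly what the paper's proof supplies and what your proposal lacks.
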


\begin{proof}
Let $S$ be any list of at most $r-1$ swap operations. For the sake of
contradiction,
assume that there exists an~$i_1\in [r]$ and a stable matching $M\in
\mathcal{M}_{\mathcal{I}[S]}$ such that~$\{m_{i_1}^d, w_{i_1}^d\}\notin M$.
This implies that there either exist $s$ indices $i_1,\dots i_s$ such 
that~$\{m^d_{i_1},w^d_{i_s}\}$ and~$\{m_{i_{j+1}}^d, w_{i_j}^d\} \in M$ for 
$j\in[s-1]$ or there exist some~$w\notin \{w_1^d, \dots, w_r^d\}$ and~$m\notin 
\{m_1^d, \dots, m_r^d\}$
together with~$s$ indices $i_1,\dots i_s$ such that~$\{m_{i_1}^d, w\} \in M$
and~$\{m, w_{i_s}^d\}\in M$ and~$\{m_{i_{j+1}}^d, w_{i_j}^d\} \in M$
for~$j\in[s-1]$.

In the first case, for every $j \in [s]$, at least one of~$w_{i_j}^d$ and
$m_{i_{j}}^d$ needs to prefer his or her partner in~$M$
to~$m_{i_j}^d$ and $w_{i_{j}}^d$, since $\{m_{i_j}, w_{i_j}\}$ does not block 
$M$.
Thus, we
assume without loss of generality that~$w_{i_1}^d$ prefers~$m_{i_{2}}^d$
to~$m_{i_{1}}^d$.
Furthermore, there exists no $i_j$ such that both $w_{i_j}^d$ and
$m_{i_{j+1}}^d$ prefer their partners in~$M$ to~$m_{i_j}^d$ and $w_{i_{j+1}}^d$,
respectively, as this would
already
require $r $ swaps.
It follows that $w_{i_j}^d$ prefers $m_{i_{j+1}}^d$ to $m_{i_j}^d$ for every $j \in [s]$ (where $i_{s+1} := i_s$).
Define $\dist (p, p'):= p' - p \bmod r $.
To make  $w_{i_j}^d$ prefer  $m_{i_{j+1}}^d$ to $m_{i_j}^d$, one needs to
perform at least $\dist (i_j, i_{j+1})$ swaps in the
preference list of $w_{i_j}^d$ (where~$i_{s+1} = i_1$).
	Summing over the number of swaps for $w_{i_1}, \dots, w_{i_s}$, we get that at least $r $ swaps
have been performed, a contradiction.

In the second case, we may assume without loss of generality by the same 
argument as in the first
case that all women $w_{i_j}^d$ prefer $m_{i_{j+1}}^d$ to $m_{i_j}^d$,
but
$m_{i_{j+1}}^d$ does not prefer~$w_{i_{j}}^d$ to~$w_{i_j+1}^d$ for $j \in
[s-1]$.
However, as otherwise $\{w_{i_s}^d, m_{i_s}^d\}$ forms a blocking pair, this 
implies that~$w_{i_s}^d$ prefers $m$ to $m_{i_s}^d$, which needs
at least~$r $ swaps, a contradiction.
\end{proof}

We now use
\Cref{lem:dummy-agents} to model
\IswapR by \Iswap by adding sufficiently many agents with preferences as in 
\Cref{lem:dummy-agents} such that any swap not allowed in the \IswapR instance drastically
exceeds the budget in the \Iswap instance:

\begin{lemma}\label{lem:swapr}
Let $(\mathcal{I}=(U, W, \mathcal{P}), \hat{A})$ be a \IswapR instance with 
$n$ men and $n $
women, and~$4\le c\in \mathbb{N}$.
One can create a \textsc{Stable Marriage} instance $\mathcal{I}'$ by
adding~$2n^{c} (n-1)$ dummy agents~$m_1^d, \dots, m_{n^{c} (n-1)}^d$, $w_1^d,
\dots, w^d_{n^{c} (n-1)}$ such that
\begin{itemize}
\item[(a)] all swaps which are allowed in $\mathcal{I}$ are also possible in 
$\mathcal{I}'$ (i.e., for every agent~$a$ which is allowed to swap agents $b$ 
and $b'$ in $\mathcal{I}$, there is no agent between $b$ and $b'$ in the 
preferences of agent~$a$ in $\mathcal{I}'$),
\item[(b)] for any list $S$ of allowed swap operations in $\mathcal{I}$, it
holds that
$\mathcal{M}_{\mathcal{I}'[S]} = \{M \cup \{\{m_i^d, w_i^d\} : i\in [n^c (n-1)]\}
:
M \in \mathcal{M}_{\mathcal{I}[S]}\}$, and 
\item[(c)] for any list $S'$ of at most $n^c$ swap operations in the instance
$\mathcal{I}'$, it holds that $\mathcal{M}_{\mathcal{I}[S]} = \{ M|_{U\cup
W} : M\in \mathcal{I}[S']\}$, where $S$ is the sublist of~$S'$ containing the swaps in $S'$
between the two most preferred agents of some agent $a\in \hat{A}$ in
$\mathcal{I}$.
Furthermore, any $M \in \mathcal{I}[S']$ contains the edge $\{m_i^d,
w_i^d\}$ for~$i\in[n^c(n-1)]$.
\end{itemize}

\end{lemma}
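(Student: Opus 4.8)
The plan is to build $\mathcal{I}'$ from $\mathcal{I}$ by keeping every non-dummy agent and inserting the $2n^c(n-1)$ new agents as follows; write $r:=n^c(n-1)$ for the number of dummy men, which equals the number of dummy women. Each dummy agent ranks all opposite-gender dummy agents first, in exactly the cyclic pattern required by \Cref{lem:dummy-agents}, and then all non-dummy agents in an arbitrary order. In the list of a non-dummy agent $a$, I would insert, between its $i$-th and $(i+1)$-th most preferred non-dummy agent, a fresh block of $n^c$ dummy agents of the opposite gender, for every consecutive pair — with one exception: if $a\in\hat A$, I leave $a$'s two most preferred non-dummy agents adjacent and put no dummy agent in front of them, moving the $n^c$ dummy agents that would have filled that gap to the very end of $a$'s list. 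Since a non-dummy list contains $n$ non-dummy agents and hence $n-1$ internal gaps, this uses exactly $r$ dummy agents of each gender, and $\mathcal{I}'$ has the same number of men and women. Property~(a) is then immediate: the only swaps allowed in $\mathcal{I}$ swap the top two entries of an agent of $\hat A$, and we deliberately kept those two entries adjacent in $\mathcal{I}'$.

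For (b) and (c) I would isolate three facts. \emph{Fact 1:} whenever at most $r-1$ swaps have been performed, \Cref{lem:dummy-agents} applies to the dummy agents (their mutual preferences match its pattern), so every stable matching of the perturbed instance contains all edges $\{m_i^d,w_i^d\}$; in particular the ``dummy diagonal'' is the unique stable matching of the sub-instance induced by the dummy agents. \emph{Fact 2:} after at most $n^c$ swaps, no pair consisting of a dummy agent and a non-dummy agent can block a matching that contains the dummy diagonal, because promoting a non-dummy agent above $w_i^d$ in $m_i^d$'s list (or above $m_i^d$ in $w_i^d$'s list) requires at least $r-1>n^c$ swaps inside that one list — in $\mathcal{I}'$ the agent $w_i^d$ sits above all $r$ dummy women, which in turn sit above every non-dummy woman, and a charging argument shows each of the $r-1$ other dummy women forces a distinct swap. \emph{Fact 3:} after at most $n^c$ swaps, the mutual order of the non-dummy agents in every list changes only by possibly swapping, for some $a\in\hat A$, $a$'s top two non-dummy agents; any other originally-consecutive non-dummy pair is separated by a block of $n^c$ dummy agents, so flipping it costs at least $n^c+1$ swaps. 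Here one uses $n^c(n-1)-1>n^c$, which holds for all $n\ge 3$ (the case $n\le 2$ being degenerate and treatable directly).

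Combining: for a list $S'$ of at most $n^c$ swaps, let $S$ be the sublist of those swaps of the form ``swap the top two entries of some $a\in\hat A$''. By Facts~2 and~3, restricting the non-dummy agents' preferences of $\mathcal{I}'[S']$ to non-dummy agents yields exactly $\mathcal{I}[S]$. Hence a matching $M$ is stable in $\mathcal{I}'[S']$ iff it contains the dummy diagonal (Fact~1), has no dummy--non-dummy blocking pair (Fact~2), and $M|_{U\cup W}$ is stable in $\mathcal{I}[S]$ (Fact~3; Fact~1 already rules out dummy--dummy blocking pairs); conversely any $N\in\mathcal{M}_{\mathcal{I}[S]}$ together with the dummy diagonal is stable in $\mathcal{I}'[S']$ by the same three facts. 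This gives the displayed identity in~(c) together with the additional claim that every stable matching of $\mathcal{I}'[S']$ contains all $\{m_i^d,w_i^d\}$. Finally, (b) is the special case where one performs only the allowed swaps $S$: these never touch any dummy agent's preference list, so the dummy sub-instance of $\mathcal{I}'[S]$ equals that of $\mathcal{I}'$ and Fact~1 applies for lists of \emph{any} length, while Facts~2 and~3 are then trivial; the same bijection shows $\mathcal{M}_{\mathcal{I}'[S]}=\{M\cup\{\{m_i^d,w_i^d\}:i\in[r]\}:M\in\mathcal{M}_{\mathcal{I}[S]}\}$.

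The step I expect to be the main obstacle is Fact~2 (and the bookkeeping in Fact~3): one has to carefully lower-bound the number of elementary swaps needed to move a non-dummy agent past a dummy agent inside a dummy's list, and simultaneously choose the number $n^c$ of dummy agents per gap so that the total $r=n^c(n-1)$ matches the prescribed $2r$, stays below $r$ so \Cref{lem:dummy-agents} bites (forcing $n\ge 3$), yet is at least $n^c$ so every forbidden swap provably overshoots a budget of $n^c$. Once the counting in Facts~2 and~3 is pinned down, the remainder is routine verification that no blocking pair of any of the three types (dummy--dummy, dummy--non-dummy, non-dummy--non-dummy) survives.
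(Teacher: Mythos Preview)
Your construction and proof match the paper's essentially verbatim: the same placement of $n^c$ dummies per gap (leaving the top-two gap open for agents in $\hat A$ and shunting that block to the end), the same appeal to \Cref{lem:dummy-agents} for the dummy diagonal, and the same counting to rule out forbidden swaps. Your Facts~1--3 spell out explicitly what the paper compresses into ``the lemma follows,'' and the minor undercount in Fact~2 (the true bound is $r$ rather than $r-1$, since the non-dummy must pass $w_i^d$ as well as the $r-1$ other dummies) is harmless because you only use $r-1>n^c$.
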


\begin{proof}
Let $(\mathcal{I}=(U, W, \mathcal{P}),\hat{A})$ be an instance of \IswapR.
We modify the given SM instance~$\mathcal{I}$ to obtain a new SM instance
$\mathcal{I}'$ by adding $n^{c } (n-1)$ additional men~$m_1^d, \dots, m^d_{
n^{c}(n-1)}$ and~$n^{c}(n-1) $ additional women $w^d_1, \dots,
w^d_{n^{c}(n-1)}$.
For the rest of the proof, all indices are taken modulo $n^{c}(n-1)$.
The preferences of these men and women are as follows:
\begin{align*}
    m_i^d &: w_i^d \succ w^d_{i+1} \succ w^d_{i+2} \succ \dots \succ w^d_{n^{c}
    (n-1) + i-1}\pend,\\
    w^d_i &: m^d_i \succ m^d_{i+1} \succ m^d_{i+2} \succ \dots \succ m^d_{n^{c}
    (n-1) + i-1}\pend.
\end{align*}

For a man $m\in U\setminus \hat{A}$ with $m: w_1 \succ w_2 \succ \dots \succ
w_n$ and a woman $w\in W\setminus \hat{A}$ with~$w : m_1 \succ m_2 \succ
\dots\succ m_n$, their modified preferences look as
follows:
$$m : w_1 \succ w_1^d \succ  w_2^d \succ \dots \succ w_{n^c}^d
    \succ w_2 \succ w_{n^c + 1}^d 
    \succ w^d_{n^c+2} \succ \dots \succ
    w_{2n^c}^d \succ w_3 \succ \dots \succ w_n,$$
    $$
    w : m_1 \succ m_1^d \succ m_2^d \succ \dots \succ m_{n^c}^d
    \succ m_2 \succ m_{n^c + 1}^d \succ m_{n^c + 2}^d \succ \dots \succ
    m_{2n^c}^d \succ m_3 \succ \dots \succ m_n.
    $$
    For a man $m\in U \cap \hat{A}$ with $m: w_1 \succ w_2 \succ \dots \succ
    w_n$ (and analogously for a woman~$w\in W\cap \hat{A}$), his modified
    preferences look as follows:
    \begin{multline*}
    m : w_1
    \succ w_2 \succ w_{n^c + 1}^d \succ w^d_{n^c+2} \succ \dots \succ
    w_{2n^c}^d 
    \succ w_3 \succ w_{2n^c+1} \succ \dots \succ w_{3n^c} \\
    \succ w_4
    \succ \dots \succ w_n
    \succ w_1^d \succ w_2^d \succ \dots \succ w_{n^c}^d.
    \end{multline*}

We now prove that the constructed SM instance $\mathcal{I}'$ indeed fulfills 
the three conditions from the lemma. Part (a) is obvious.
Part (b) is also clearly fulfilled, as $m_i^d$ and $w_i^d$ are their mutual top 
choices in $\mathcal{I}'[S]$ for
$i\in [n^c(n-1)]$ and therefore are matched together by any stable matching in  
$\mathcal{I}'[S]$ and can never form a blocking pair with another agent
(dummy agents can never be part of a swap from $S$, as they are not part of the 
instance $\mathcal{I}$).

To prove part (c), let $S'$ be any set of at most $n^c$ swap operations
in~$\mathcal{I}'$.
By \Cref{lem:dummy-agents}, any stable matching in~$\mathcal{I}'[S']$
contains the edges $\{m_i^d, w_i^d\}$ for $i\in [r]$.
As any change which does not involve at least one agent~$m_i^d$ or~$w_i^d$
swaps the two most preferred agents of some agent $a\in \hat{A}$, the lemma
follows.
\end{proof}

Combining the hardness result from \Cref{cor:const-ex-swapr} and the
construction from \Cref{lem:swapr} modeling \IswapR by \Iswap, we now prove
the following inapproximability result.
Note that the inapproximability is tight in the sense that there is always a
solution using at most~$2(n-1)$ swaps which, given the edge~$\{m^*, w^*\}$ that shall be contained in a stable matching, just swaps $m^*$ to be the most
preferred man of $w^*$ and swaps~$w^*$ to be the most preferred woman of $m^*$.

\begin{theorem} \label{th:const-ex-swap}
Unless FPT = W[1], \const-\ex-\swap does not admit
an~$\bigO(n^{1-\epsilon})$-approximation in~$f(\ell) n^{\bigO(1)}$ time for 
any $\epsilon>0$ and any computable function $f$. This also holds if
one is only allowed to swap the
first two men in the preferences of women.
\end{theorem}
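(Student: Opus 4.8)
The plan is to derive the inapproximability by combining the W[1]-hard \emph{gap} version of \const-\ex-\swapR from \Cref{cor:const-ex-swapr} with the simulation of \IswapR by \Iswap from \Cref{lem:swapr}, choosing the free parameter $c$ of that simulation large enough (in terms of $\epsilon$) to inflate the yes/no gap to a multiplicative factor of $\Omega(N^{1-\epsilon})$, where $N$ denotes the number of men (equivalently, women) of the resulting \Iswap instance.

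I would begin with an instance $(\mathcal{I}=(U,W,\mathcal{P}),\hat{A},\ell)$ of \const-\ex-\swapR for which, by \Cref{cor:const-ex-swapr}, it is W[1]-hard with respect to $\ell$ to decide whether it has a solution using at most $\ell$ swaps or no solution at all. Tracing this instance back through \Cref{cor:const-ex-swapr} and \Cref{cor:const-ex-add} (in the latter $W_{\addag}=\emptyset$, so the set $\hat{A}$ produced by \Cref{cor:const-ex-swapr} consists only of the newly created women $w_m$), I may additionally assume that $\hat{A}$ contains only women and that $\ell\le n^{2}$, where $|U|=|W|=n$. Now I fix $\epsilon>0$ (without loss of generality $0<\epsilon<1$), set $c:=\max\{4,\lceil 3/\epsilon\rceil\}$, and apply \Cref{lem:swapr} to obtain in polynomial time a \textsc{Stable Marriage} instance $\mathcal{I}'$ with $N:=n+n^{c}(n-1)=\Theta(n^{c+1})$ men and $N$ women; the \const-\ex-\swap instance to attack is $\mathcal{I}'$ together with $\{m^*,w^*\}$, and by construction the only swaps accounted for by \Cref{lem:swapr}(c) are swaps of the two most preferred \emph{men} in the preference list of some woman of $\hat{A}$.

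Next I would verify the promise gap. If $(\mathcal{I},\hat{A},\ell)$ is a yes-instance, then a witnessing list of at most $\ell$ swaps is, by \Cref{lem:swapr}(a), still a list of valid single swaps in $\mathcal{I}'$, and by \Cref{lem:swapr}(b) applying it yields an instance with a stable matching containing $\{m^*,w^*\}$; hence the minimum number of swaps needed for $\mathcal{I}'$ is at most $\ell$. If $(\mathcal{I},\hat{A},\ell)$ is a no-instance, I assume for contradiction that some list $S'$ of at most $n^{c}$ swaps makes $\{m^*,w^*\}$ part of a stable matching $M$ of $\mathcal{I}'[S']$; by \Cref{lem:swapr}(c), $M|_{U\cup W}$ is a stable matching of $\mathcal{I}[S]$, where $S\subseteq S'$ consists of the swaps between the two most preferred agents of some agent of $\hat{A}$, and since $m^*,w^*\in U\cup W$ this matching still contains $\{m^*,w^*\}$, so $S$ solves $\mathcal{I}$ --- contradicting that $\mathcal{I}$ has no solution. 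Thus $\mathcal{I}'$ needs more than $n^{c}$ swaps. Both implications persist verbatim if only swaps of the first two men in women's preferences are permitted: fewer allowed operations can only enlarge the optimum in the no-case, and the $\le\ell$ swaps of the yes-case are all of this form.

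It remains to cash in the gap. Assume for contradiction that \const-\ex-\swap admits an $\bigO(N^{1-\epsilon})$-approximation running in $f(\ell)\,N^{\bigO(1)}$ time, with guarantee $\rho(N)\le C\,N^{1-\epsilon}$. Running it on $(\mathcal{I}',\{m^*,w^*\})$ with target $\ell$ returns, in the yes-case, a solution of size at most $\rho(N)\cdot\ell\le C\,(2n^{c+1})^{1-\epsilon}\cdot n^{2}=\bigO(n^{(c+1)(1-\epsilon)+2})$; since $(c+1)(1-\epsilon)+2<c$ by the choice of $c$, this is below $n^{c}$ for all large $n$, the finitely many remaining small instances being solved by brute force. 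In the no-case, however, every solution has size more than $n^{c}$. Hence comparing the size of the returned solution (or its absence) against the computable threshold $n^{c}$ decides the W[1]-hard problem of \Cref{cor:const-ex-swapr} in $f(\ell)\,N^{\bigO(1)}=f(\ell)\,n^{\bigO(1)}$ time, forcing FPT $=$ W[1]; the same computation restricted to the allowed swap set proves the ``only first two men in women's preferences'' strengthening. The main obstacle is precisely this quantitative calibration: one must pick $c$ against both $\epsilon$ and the polynomial bound on $\ell$, and check that the $n^{c+1}$-scale blow-up of the instance --- together with the fact that an adversary may ``waste'' up to $n^{c}$ swaps on the dummy agents of \Cref{lem:swapr} before \Cref{lem:dummy-agents} no longer protects them --- still leaves an honest $\Omega(N^{1-\epsilon})$ separation between the two cases.
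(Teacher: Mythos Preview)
Your proposal is correct and follows essentially the same approach as the paper: reduce from the gap version of \const-\ex-\swapR (\Cref{cor:const-ex-swapr}) via the simulation of \Cref{lem:swapr}, with the blow-up parameter $c$ chosen as a function of $\epsilon$ so that the resulting multiplicative gap between yes- and no-instances exceeds $N^{1-\epsilon}$. The only differences are cosmetic---the paper fixes $\epsilon = 4/c$ and uses the bound $\ell \le n_R^3$, whereas you set $c = \max\{4,\lceil 3/\epsilon\rceil\}$ and use $\ell \le n^2$---and both parameter choices yield the same conclusion.
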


\begin{proof}
Fix~$\epsilon > 0$.
We assume without loss of generality that $\epsilon = 4c^{-1}$ for some $c \in \mathbb{N}$ (if this is not the case, then we can choose a smaller $\epsilon$ fulfilling this condition).

We reduce from \const-\ex-\swapR, for which it is W[1]-hard to distinguish 
whether
there is a solution with at most~$\ell$ manipulations or no solution for any 
number of allowed swaps 
	(\Cref{cor:const-ex-swapr}).
Let $(\mathcal{I}, \hat{A}, \ell)$ be an instance of \const-\ex-\swapR with at most~$n_R$ men and at most~$n_R$ women.
By \Cref{lem:swapr}, we can construct an instance of \const-\ex-\swap which
contains a solution of size at most~$\ell$ if and only if the corresponding
instance of \const-\ex-\swapR contains a solution of size at most~$\ell \le n_R^3$, and
otherwise any solution has size at least $n^c_R$.
Let $n = n_R^c (n_R^c - 1) + n_R^c = \Theta (n_R^{c+1} )$ be the number of men in the \const-\ex-\swap
instance.
Thus, given an instance of 
\const-\ex-\swapR, an~$\bigO(n^{1-\epsilon})$-approximation for 
\const-\ex-\swap allows to
decide whether there
exists a solution of size~$\bigO(\ell n ^{1-\epsilon}) \le \bigO(n_R^{(c+1)
(1-\epsilon) + 3}) = \bigO( n_R^{ ( \frac{4}{\epsilon} +1)(1 -\epsilon) +3} ) = \bigO(
n_R^{\frac{4}{\epsilon} - 4 + 1 - \epsilon +3}) = \bigO(n_R^{\frac{4}{\epsilon} -
\epsilon}) = \bigO(n_R^{c-\epsilon}) <
n_R^{c}$ in the given instance (the last inequality holds 
only for sufficiently 
large $n_R$).
Therefore, assuming that we have an $\bigO(n^{1-\epsilon})$-approximation algorithm for \const-\ex-\swap running in~$f(\ell) n^{\bigO(1)}$ time for some computable function~$f$ allows to decide \const-\ex-\swapR in $f(\ell) n_R^{\bigO(1)}$ time, which implies by
\Cref{cor:const-ex-swapr} that FPT = W[1].
\end{proof}

\subsection{An XP Algorithm for Constructive-Exists-Reorder}
\label{section:const-ex-reor}
Observe that for all manipulative actions except \Ireor, the membership of  
\const-\ex-$\mathcal{X}$ parameterized by $\ell$ in XP follows from a straightforward 
brute-force algorithm. We now show that \const-\ex-\reor also lies in XP 
parameterized by the budget $\ell$ using a simple algorithm.
\begin{proposition}\label{pr:const-ex-reorXP}
    \const-\ex-\reor can be solved in $\mathcal{O}(2^\ell n^{2\ell + 2})$ time.
\end{proposition}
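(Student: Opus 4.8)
The plan is to solve \const-\ex-\reor by brute force over two pieces of information --- the set $S$ of agents whose preference lists we reorder, and the set of edges of a target stable matching that are incident to $S\cup\{m^*,w^*\}$ --- reducing each guess to a polynomial-time feasibility test. Since $m^*,w^*\notin S$ and $|S|\le\ell$, there are $\bigO((2n)^\ell)=\bigO(2^\ell n^\ell)$ candidates for $S$; for each $S$, guessing a partner of the opposite gender for every agent of $S$ adds at most $n^{|S|}\le n^\ell$ branches. So there are $\bigO(2^\ell n^{2\ell})$ guesses in total, and it suffices to design an $\bigO(n^2)$ test per guess.

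The first ingredient is a structural observation that makes the reordering ``essentially free'' once $S$ is fixed: the reordering of an agent $a\in S$ may, without loss of generality, place $a$'s intended partner on top of $a$'s list. Precisely, if $M$ is a complete matching with $\{m^*,w^*\}\in M$ such that every blocking pair of $M$ with respect to the \emph{original} preferences contains an agent of $S$, then reordering each $a\in S$ so that $M(a)$ becomes $a$'s top choice (completing the rest of $a$'s list arbitrarily) makes $M$ stable: an agent with its $M$-partner on top never lies on the ``prefers'' side of a blocking pair, so a blocking pair of $M$ in the reordered instance would have to avoid $S$ and hence would already block $M$ under the original preferences, a contradiction. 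Conversely, for any successful reordering of a set $S$ of at most $\ell$ agents together with a stable matching $M\ni\{m^*,w^*\}$ of the reordered instance, every blocking pair of $M$ under the original preferences meets $S$, because a blocking pair disjoint from $S$ survives the reordering. Hence the given instance is a YES-instance if and only if there exist $S\subseteq A\setminus\{m^*,w^*\}$ with $|S|\le\ell$ and a complete matching $M\ni\{m^*,w^*\}$ all of whose blocking pairs with respect to the original preferences intersect $S$.

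It then remains, for a fixed $S$ and a fixed guess $M_0$ of the set of $M$-edges incident to $S\cup\{m^*,w^*\}$, to test whether some complete matching $M\supseteq M_0$ has all blocking pairs meeting $S$. We discard $M_0$ unless it is a partial matching that matches every agent of $S$, contains $\{m^*,w^*\}$, and has no blocking pair $\{m,w\}$ with $m,w$ both matched by $M_0$ and neither in $S$ --- such a pair could never be covered by $S$; this check costs $\bigO(n^2)$. Let $A'$ be the set of agents not matched by $M_0$; we look for $M$ of the form $M_0\cup M'$ with $M'$ a perfect matching on $A'$. A pair with both endpoints in $A'$ must not block $M'$ (ordinary stability). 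A pair $\{m,w'\}$ with $m\in A'$ and $w'$ a woman matched by $M_0$ with $w'\notin S$ blocks $M_0\cup M'$ exactly when $w'$ prefers $m$ to $M_0(w')$ and $m$ prefers $w'$ to $M'(m)$; since $M_0(w')$ is fixed, this is ruled out precisely by forbidding $m$ to be matched, in $M'$, to any $A'$-agent that $m$ ranks strictly below the highest-ranked such $w'$ in $m$'s list (if one exists) --- equivalently, by truncating $m$'s preference list at that position; we perform the symmetric truncation for every woman of $A'$. Pairs meeting $S$ need not be considered, since $S$-agents have their $M_0$-partner on top and never block. After the truncations, $\mathcal{I}$ restricted to $A'$ becomes an \textsc{SMI} instance, and $M_0$ extends as required if and only if this \textsc{SMI} instance admits a stable matching that matches all of $A'$; by the Rural Hospitals Theorem~\citep{roth1986allocation} this holds if and only if the man-proposing Gale--Shapley matching of the \textsc{SMI} instance matches all of $A'$. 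Computing the truncations and running Gale--Shapley costs $\bigO(n^2)$ (after an $\bigO(n^2)$-time one-off preprocessing of the rank arrays).

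The algorithm thus iterates over all pairs $(S,M_0)$, performs the $\bigO(n^2)$ feasibility test, and answers YES precisely when some pair passes (in which case it also recovers the reordering and the matching), for a total running time of $\bigO(2^\ell n^{2\ell})\cdot\bigO(n^2)=\bigO(2^\ell n^{2\ell+2})$. The main obstacle in turning this plan into a proof is the careful case analysis showing that the list truncations capture \emph{exactly} the blocking pairs of $M_0\cup M'$ that pair an agent matched by $M_0$ with a free agent --- in both directions, so that the \textsc{SMI} feasibility test is both sound and complete --- together with the (straightforward) structural lemma that the reordering may be taken to place intended partners on top. The remaining bookkeeping (that $M_0\cup M'$ is a perfect matching on $A$ containing $\{m^*,w^*\}$, that at most $\ell$ reorderings are used, and the running-time count) is routine.
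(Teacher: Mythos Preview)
Your proposal is correct and shares the paper's core idea: guess the set $S$ of at most $\ell$ agents to reorder together with their intended partners, observe that one may without loss of generality place each intended partner at the top of the corresponding list, and then spend $\bigO(n^2)$ per guess on a feasibility check. Where you diverge is in that check. The paper, after performing the ``partner-on-top'' reordering for the guessed agents, simply invokes Gusfield's $\bigO(n^2)$ stable-pair algorithm to test whether $\{m^*,w^*\}$ lies in some stable matching of the resulting SM instance---a one-line black-box call with no further case analysis. You instead insist that the stable matching extend the guessed partial matching $M_0$, and you realise this by building an \textsc{SMI} instance on the unmatched agents~$A'$ with preference lists truncated at the highest-ranked ``dangerous'' $M_0$-matched non-$S$ agent, then testing for a perfect stable matching via Gale--Shapley and the Rural Hospitals Theorem. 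Your route is more elementary (it avoids Gusfield's stable-pair algorithm) and makes the structure of the extension problem explicit, but it buys this at the price of the blocking-pair case analysis you rightly identify as the main obstacle; the paper's route sidesteps that analysis entirely.
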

\begin{proof}
    We guess the set $X$ of $\ell $ agents whose preferences are modified, and 
    for
    each agent~${a\in X}$, we guess an agent $T(a)$ to which $ a$ is matched in a
    stable matching containing $\{m^*,w^*\}$.
    For each $a \in X$, we modify the preferences of $a$ such that $T(a)$ is 
    its top-choice.
    We check whether the resulting instance contains a stable matching 
    containing the edge $\{m^*, w^*\}$.
    If any guess results in a stable matching containing $\{m^*, w^*\}$, then we 
    return YES, and NO otherwise.
    
    The running time of the algorithm is $\mathcal{O}(2^\ell n^{2\ell +2})$, as 
    we
    guess a set of $\ell$ out of $2n $ agents whose preference are modified, and
    for each of these $\ell$ agents we guess to which agent it is matched to.
    Checking whether there is a stable matching containing $\{m^*, w^*\}$ can 
    be done in $\bigO(n^2)$~time for each guess~\citep{Gusfield87}.
    
    It remains to show the correctness.
    If the algorithm returns YES, then the instance is clearly a YES-instance.
    To prove the other direction, assume that there exists a set $Y$ of at 
most $\ell$ agents such that in an
    instance $\mathcal{I}_Y$ arising through reordering the preferences 
    of~$Y$
    there exists a stable matching $M$ containing the edge $\{m^*, w^*\}$. Then,
    there exists a guess~$(X, T)$ for our algorithm such that $Y\subseteq X$ 
and $T(a) = M(a)$ for
    all $a\in X$.
    We claim that for the instance constructed by the algorithm using this 
    guess,
    the matching $M$ is stable, and therefore the algorithm returns YES.
    Note that no agent from $X$ can be contained in a blocking pair, as they 
    are all matched to their top-choice.
    However, any blocking pair containing no agent from $X$ would also be a
    blocking pair for $M$ in $\mathcal{I}_Y$, and therefore, no blocking pair
    exists.
\end{proof}

\subsection{Polynomial-Time Algorithms}\label{section:const-ex-poly}
Having encountered computational hardness for all 
manipulative actions but \Idelete, we now give
a polynomial-time algorithm for \const-\ex-\delete.
We then use this algorithm to derive a 2-approximation algorithm for 
\const-\ex-\reor.
The polynomial-time algorithm for
\Idelete might be particularly surprising because of the strong hardness result
that we derived for the related manipulative action \Iadd. However, the reason
for this is that, in a \const-\ex-$\mathcal{X}$ instance, we can compute a set 
of ``conflicting agents'' that prevent that the pair $\{m^*,w^*\}$ is part of 
a stable matching, and one
\Idelete operation can only decrease the number of conflicting agents by at 
most one (and, in fact, by deleting a conflicting agent, we can also always 
decrease the number of conflicting agents by one).
For \Iadd, however, there are
presumably multiple ways or no way how one could resolve a conflicting 
agent. Similarly, we
derived an inapproximability result for
\Iswap, while the seemingly similar manipulative action \reor admits a factor-2
approximation in the \Iconst-\Iex setting. The reason for this is that the
\Ireor operation is more
powerful, allowing to create in some sense trivial solutions again by 
identifying a set of ``conflicting agents''.
These trivial solutions then consist of reordering the preferences of all 
conflicting agents and we prove that this is a 2-approximation. Thus,
compared to the \Iswap setting where one can save quite some manipulative
actions by choosing the swaps to perform, here it does not make a fundamental
difference how
the agent's preferences are reordered. We start by presenting the polynomial-time 
algorithm for \Idelete, and then describe the factor-2 approximation algorithm 
for \Ireor. 

\subsubsection{\Idelete}\label{section:const-ex-del}

In sharp contrast to the hardness results for all other considered manipulative
actions,
there is a simple algorithm solving a given instance
of \const-\ex-\delete consisting of an SM instance $\mathcal{I}=(U, W, 
\mathcal{P})$ together with 
a 
man-woman pair $\{m^*,w^*\}$ and an integer~$\ell$ in time linear in the size of the input.
The algorithm is based on the following observation.
Let $W^*$ be the set of women preferred by $m^*$ to $w^*$, and $U^*$ the set of 
men preferred by $w^*$ to $m^*$.
In every stable matching~$M$ which includes
$\{m^*,w^*\}$, every woman in $W^*$ needs to be matched to a man
whom she prefers to~$m^*$, or she needs to be deleted. Analogously, every man
in $U^*$ needs to be matched to a woman which he prefers to $w^*$, or he needs to 
be deleted.
Consequently, all pairs consisting of an agent $a\in U^* \cup W^*$ and 
an agent~$a' $ which~$a$ does not prefer to $w^*$ or $m^*$ cannot be part of 
any stable matching.
After deleting these pairs and the agents $m^*$ and $w^*$, consider a stable 
matching~$M$ in the resulting instance.
If we add the pair $\{m^*,w^*\}$ to $M$, then in the original instance 
$\mathcal{I}$ each woman $w\in W^*$ not assigned by~$M$ will always 
create a blocking pair $\{m^*, w\}$ (independent of how and whether we extend 
$M$). Similarly,  each unassigned man $m\in U^*$ will always create a  
blocking pair $\{m , w^*\}$. Thus, the agents from $U^* \cup W^*$ that are 
unassigned in $M$ will be the set of ``conflicting agents''.
This observation motivates the following algorithm:
\begin{enumerate}
  \item Let $U^*$ be the set of agents which $w^*$ prefers to $m^*$.\\
  Let $W^*$ be the set of agents which $m^*$ prefers to $w^*$.
  \item Delete all edges
  $\{m, w\}$ such that $m \in U^*$ and $m$ does not prefer $w$ to $w^*$.\\
  Delete all edges $\{m, w\}$ such that $w \in W^*$ and $w$ does not prefer $m$ to $m^*$.\\
  Delete the agents $m^*$ and $w^*$.\\
  Call the resulting SMI instance~$\hat{\mathcal{I}}$.
  \item Compute a stable matching~$\hat M$ in $\hat{\mathcal{I}}$.
  \item Return the set~$A'$ of agents from $U^* \cup W^*$ which are unassigned 
in~$\hat M$ as the agents to be deleted.
\end{enumerate}

To show the correctness of this algorithm, we first investigate the influence of
a \Idelete operation on the set of agents matched in a stable matching.

\begin{lemma} \label{le:SMI-del}
    Let~$\mathcal{I}'$ be an \textsc{SMI} instance and~$a\in A$ some agent.
    Then,
    there exists at most one agent~$a'\in
    A$ which was unassigned in~$\mathcal{I}'$, i.e.,~$a'\notin \ma(\mathcal{I}')$, and is matched 
    in~$\mathcal{I}'\setminus\{a\}$, i.e.,~$a'\in
    \ma(\mathcal{I}'\setminus\{a\})$.
\end{lemma}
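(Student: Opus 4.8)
The plan is to compare a stable matching $M$ of $\mathcal{I}'$ with a stable matching $M'$ of $\mathcal{I}'\setminus\{a\}$ through their symmetric difference, and to show that this symmetric difference is, apart from cycles, a single alternating path emanating from $a$; the far endpoint of that path is then the only agent that can become newly assigned. First I dispose of the trivial case: by the Rural Hospitals Theorem both $\ma(\mathcal{I}')$ and $\ma(\mathcal{I}'\setminus\{a\})$ are well defined (and an SMI instance always has a stable matching), and if $a\notin\ma(\mathcal{I}')$ then every stable matching of $\mathcal{I}'$ already avoids $a$, is still blocking-pair-free in $\mathcal{I}'\setminus\{a\}$, and hence $\ma(\mathcal{I}'\setminus\{a\})=\ma(\mathcal{I}')$, so no agent as in the statement exists. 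Thus I may assume $a\in\ma(\mathcal{I}')$, so $a$ is matched by $M$ and (trivially) unmatched by $M'$.

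Next, set $D:=M\triangle M'$. Since $M$ and $M'$ are matchings, every agent is incident to at most one edge of $D\cap M$ and at most one edge of $D\cap M'$, so $D$ decomposes into vertex-disjoint alternating paths and cycles. Using the Rural Hospitals Theorem for $\mathcal{I}'$ and for $\mathcal{I}'\setminus\{a\}$, an agent is matched by $M$ iff it belongs to $\ma(\mathcal{I}')$ and matched by $M'$ iff it belongs to $\ma(\mathcal{I}'\setminus\{a\})$; hence an agent that is unassigned in $\mathcal{I}'$ but assigned in $\mathcal{I}'\setminus\{a\}$ is exactly an endpoint of a path of $D$ whose unique incident $D$-edge lies in $M'$. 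Since $a$ is matched by $M$ but not by $M'$, it has degree $1$ in $D$ and lies on exactly one path $P^{\ast}$, whose $D$-edge at $a$ lies in $M$. So it suffices to show that $P^{\ast}$ is the \emph{only} path of $D$: then every agent of the relevant kind is an endpoint of $P^{\ast}$, one endpoint being $a$ itself, which leaves at most one.

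The main obstacle is precisely this claim, that every path of $D$ passes through $a$. Here I would run the classical alternating-path propagation argument that underlies the structure theorem for stable matchings. Suppose $P=v_0,\dots,v_k$ is a path of $D$ with $a\notin P$. Each edge of $P$ is incident only to agents different from $a$, hence is an acceptable pair in \emph{both} $\mathcal{I}'$ and $\mathcal{I}'\setminus\{a\}$, and every agent of $P$ is present in both instances; therefore I may simultaneously invoke that no edge of $P$ blocks $M$ in $\mathcal{I}'$ and that no edge of $P$ blocks $M'$ in $\mathcal{I}'\setminus\{a\}$. Starting from an endpoint $v_0$ (matched by exactly one of $M,M'$), stability at the first edge forces $v_1$ to prefer its partner further along $P$ to $v_0$, and this preference pattern propagates edge by edge; evaluated at the last edge $\{v_{k-1},v_k\}$ it makes that edge block $M$ or block $M'$, a contradiction. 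The argument is symmetric in which matching owns the first edge of $P$, and also covers the case that both end-edges of $P$ lie in the same matching (then $P$ has odd length and one gets a blocking pair for the matching \emph{not} owning the end-edges).

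Granting the claim, $D$ consists of $P^{\ast}$ together with some cycles, and every agent on a cycle is matched by both $M$ and $M'$ and hence irrelevant. Therefore the agents that are unassigned in $\mathcal{I}'$ but assigned in $\mathcal{I}'\setminus\{a\}$ are among the two endpoints of $P^{\ast}$; since one endpoint is $a$ — whose $D$-edge lies in $M$, and which is absent from $\mathcal{I}'\setminus\{a\}$ in any case — at most the other endpoint qualifies, which gives the claimed bound of one. The only genuinely delicate point is the propagation step: one must double-check that the two-stable-matchings path argument survives the fact that $M$ and $M'$ are stable in different instances, which it does exactly because a path avoiding $a$ only ever confronts pairs that are legal in, and subject to the stability constraints of, both instances; the remainder is a routine case analysis on the parity of $P$ and on which matching supplies its end-edges.
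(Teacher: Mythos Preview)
Your proposal is correct and follows essentially the same approach as the paper: take a stable matching in each instance, look at their symmetric difference, observe that cycles are irrelevant, and use an alternating-path propagation argument to show that any path not containing~$a$ would yield a blocking pair for one of the two matchings, so the unique path through~$a$ has at most one endpoint that can be newly matched. Your write-up is in fact a bit more careful than the paper's, as you explicitly dispose of the case $a\notin\ma(\mathcal{I}')$ and spell out why the propagation works despite $M$ and $M'$ being stable in different instances.
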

\begin{proof}
    Let~$M'$ be a stable matching in~$\mathcal{I}'$ and~$M^a$ be a stable
    matching in~$\mathcal{I}'\setminus\{a\}$.
    We prove that~$|\ma(M^a)\setminus \ma(M')|\leq 1$,
    which clearly implies the lemma.
    To do so, we examine the symmetric
    difference~$S$ of the two matchings~$M'$ and~$M^a$.
    The set~$S$ is a union of paths and even-length cycles. We do not care 
about cycles, as all agents in such a
    cycle are matched in both~$M'$ and~$M^a$.
    Therefore, we turn to paths.

    First, of all, we claim that all paths include~$a$.
    Assume that there exists a path~$P=(a_1,\dots,a_{k+1})$ for some $k\in
    \mathbb{N}$ not
    including~$a$,
    starting
    without loss of generality with a pair from~$M'$. Then, as~$\{a_1,a_2\}$
    is not a
    blocking
    pair in~$M^a$ and $a_1$ is unassigned in $M'$, it holds 
that~$a_3\succ_{a_2} a_1$. As
    $\{a_2,a_3\}$ is not a blocking pair in~$M'$, it holds that~$a_4\succ_{a_3}
    a_2$. Consequently, it holds that~$a_{k+1}\succ_{a_{k}} a_{k-1}$. If~$k$ is
    even, then~$a_{k+1}$ is unassigned in~$M'$, and thus
    $\{a_{k},a_{k+1}\}$ is a blocking pair in~$M'$, contradicting the stability
    of~$M'$. Otherwise,~$a_{k+1}$ is unassigned in~$M^a$, and~$\{a_{k},a_{k+1}\}$
    is blocking in~$M^a$.

    From this it follows that there exists at most one maximal path~$P$, which 
needs to involve agent~$a$.
    As agent~$a$ cannot be part of~$M^a$, it needs to be one of
    the endpoints of~$P$.
    Consequently, the only agent that could be
    matched in~$M^a$ but not matched in~$M'$ is the other endpoint of~$P$.
\end{proof}

\noindent Using \Cref{le:SMI-del}, we now show the correctness of the 
algorithm.
\begin{theorem} \label{const-ex-delete}
    \const-\ex-\delete is solvable in~$\mathcal{O}(n^2)$ time.
\end{theorem}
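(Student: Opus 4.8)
The plan is to prove that the set $A'$ output by the algorithm is a minimum-cardinality set of agents whose deletion makes $\{m^*,w^*\}$ part of a stable matching (so the decision problem is answered by testing $|A'|\le\ell$); the $\bigO(n^2)$ bound then follows, since Steps~1--4 only compute $U^*$ and $W^*$, truncate at most $2n$ preference lists, run one propose--reject computation of a stable matching in the \textsc{SMI} instance $\hat{\mathcal{I}}$, and read off $A'$. For a set $B$ of agents with $m^*,w^*\notin B$ (sets containing $m^*$ or $w^*$ can never be solutions), let $\hat{\mathcal{I}}\setminus B$ denote the \textsc{SMI} instance obtained by deleting $B$ from $\hat{\mathcal{I}}$; note this is exactly the instance that Steps~1--2 produce when run on $\mathcal{I}\setminus B$. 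The core claim to establish is that $\{m^*,w^*\}$ lies in some stable matching of $\mathcal{I}\setminus B$ if and only if $(U^*\cup W^*)\setminus B\subseteq\ma(\hat{\mathcal{I}}\setminus B)$.

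I would prove both directions of this equivalence by transporting matchings between $\mathcal{I}\setminus B$ and $\hat{\mathcal{I}}\setminus B$ along the edge $\{m^*,w^*\}$. If $N\ni\{m^*,w^*\}$ is stable in $\mathcal{I}\setminus B$, then each agent of $(U^*\cup W^*)\setminus B$ is matched by $N$ to a partner it prefers to $m^*$ respectively $w^*$ (otherwise it forms a blocking pair with $m^*$ or $w^*$); hence every edge of $N$ other than $\{m^*,w^*\}$ survives the truncation defining $\hat{\mathcal{I}}$, so $N\setminus\{\{m^*,w^*\}\}$ is a matching of $\hat{\mathcal{I}}\setminus B$, it is stable there (any blocking pair would block $N$), and by the Rural Hospitals Theorem every agent of $(U^*\cup W^*)\setminus B$ lies in $\ma(\hat{\mathcal{I}}\setminus B)$. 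Conversely, given a stable $\hat M$ in $\hat{\mathcal{I}}\setminus B$ with $(U^*\cup W^*)\setminus B\subseteq\ma(\hat M)$, the matching $\hat M\cup\{\{m^*,w^*\}\}$ is stable in $\mathcal{I}\setminus B$: a blocking pair using $m^*$ or $w^*$ is impossible because every woman of $W^*\setminus B$ (man of $U^*\setminus B$) is matched in $\hat M$ via a surviving edge, hence to someone she prefers to $m^*$ (he prefers to $w^*$); and a blocking pair $\{m,w\}$ avoiding $m^*,w^*$ either still exists in $\hat{\mathcal{I}}\setminus B$ (contradicting stability of $\hat M$) or was truncated away, in which case one of $m,w$ lies in $U^*\cup W^*$, is matched by $\hat M$ to a better partner, and therefore does not prefer the other. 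Applying the ``if'' direction to $B=A'$ yields feasibility, once one observes that deleting the $\hat{\mathcal{I}}$-unmatched agents $A'$ leaves a stable matching of $\hat{\mathcal{I}}$ stable, so $\ma(\hat{\mathcal{I}}\setminus A')=\ma(\hat{\mathcal{I}})$ and $(U^*\cup W^*)\setminus A'\subseteq\ma(\hat{\mathcal{I}}\setminus A')$.

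For optimality, set $c(B):=|(U^*\cup W^*)\setminus B|-|(U^*\cup W^*)\cap\ma(\hat{\mathcal{I}}\setminus B)|$, the number of ``conflicting agents'' after deleting $B$; by the equivalence, $B$ is a solution iff $c(B)=0$, while $c(\emptyset)=|A'|$ and $c(A')=0$. It then suffices to show $c(B\cup\{a\})\ge c(B)-1$ for every agent $a$, since iterating yields $c(B)\ge|A'|-|B|$, so every solution performs at least $|A'|$ deletions. This bound splits into cases: if $a\notin U^*\cup W^*$, or $a\in U^*\cup W^*$ is matched in $\hat{\mathcal{I}}\setminus B$, then the first term of $c$ drops by at most one while the second grows by at most one (the latter by \Cref{le:SMI-del}); and if $a\in U^*\cup W^*$ is unmatched in $\hat{\mathcal{I}}\setminus B$, then a stable matching of $\hat{\mathcal{I}}\setminus B$ leaving $a$ unassigned remains stable after $a$ is removed, so $\ma$ is unchanged and $c$ drops by exactly one. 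I expect the main obstacle to be precisely this control of how $\ma(\hat{\mathcal{I}}\setminus B)$ can grow under a single deletion, which is exactly where \Cref{le:SMI-del} is used, together with the small observation that deleting an already unmatched agent cannot make anyone newly matched (so that the decrease is one rather than two).
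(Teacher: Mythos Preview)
Your argument is correct and uses the same two ingredients as the paper---\Cref{le:SMI-del} and the observation that deleting an already-unmatched agent leaves $\ma$ unchanged---with the feasibility of $A'$ shown via the same matching $\hat M\cup\{\{m^*,w^*\}\}$; your potential-function formulation of optimality is just a repackaging of the paper's pigeonhole-and-contradiction argument over the sequence $\hat{\mathcal I}\setminus\{b_1,\dots,b_i\}$. One small imprecision to fix when you write it out: in the subcase ``$a\in U^*\cup W^*$ is matched in $\hat{\mathcal I}\setminus B$'' you need that the second term of $c$ does \emph{not} grow (you lose $a$ from it and gain at most one new agent by \Cref{le:SMI-del}, so the net change is $\le 0$), not merely that it grows by at most one---otherwise your stated bounds would only give $c(B\cup\{a\})\ge c(B)-2$.
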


\begin{proof}
    Since a stable matching in an SMI instance can be computed in $\bigO(n^2)$ 
    time~\citep{GaleS62}, the set $A'$ clearly can be 
    computed in $\bigO(n^2)$.
    We claim that the given \const-\ex-\delete instance is a YES-instance if 
and only if 
    $|A'|\le \ell$.

    First assume $|A'| \le \ell$.
    Let $\hat M$ be the stable matching in $\hat{\mathcal{I}}$ computed by the algorithm.
    We define $M :=\hat M \cup \{\{m^*, w^*\}\}$, and claim that $M$ is a stable 
    matching in 
    $\mathcal{I}\setminus A'$, showing that the given \const-\ex-\delete 
instance is a YES-instance.
    For the sake of a contradiction, assume that there exists a blocking 
    pair~$\{m, w\}$.
    Note that~$\{m, w\}$ contains neither $m^*$ nor $w^*$ since all agents which
    $m^*$ and~$w^*$ prefer to each other are either deleted or matched to an 
    agent they
    prefer to $m^*$ and $w^*$.
    Since $\{m, w\}$ is not a blocking pair in $\hat{\mathcal{I}}\setminus A'$, it 
    contains an agent~$a$ from $(U^* \cup W^*)\setminus A'$, and $a$ 
prefers 
$w^*$ to~$w$ if~$a= m$ or $a$ prefers $m^* $ to $m$ if $a = w$.
	We assume without loss of generality that~$a = m$.
    As~$m$ is matched in $\hat M$, he prefers $\hat M(m) = M(m)$ to $w^*$.
    Thus, $m$ prefers~$M(m)$ to $w$, a contradiction to $\{m,w\}$ being blocking
    for $M$.

    Now assume that $|A'| > \ell$.
    For the sake of contradiction, assume that there exists a
    set of agents~$B'=\{b_1,\dots,b_k\}$ with $k\le \ell$ such that 
$\mathcal{I}\setminus B'$ admits a stable matching~$M$ containing~$\{m^*, w^*\}$.
    For each~$i\in \{0, 1, \dots, k\}$, let~$\hat{M}_i$ be a stable matching
    in~$\hat{\mathcal{I}}\setminus \{b_1, \dots, b_i\}$.
    By the definition of $A'$, all agents from $A'$ are unassigned in $\hat{M}_0$.
    Note that each agent~$a\in A' \subseteq U^* \cup W^*$ is either part 
of~$B'$ or 
prefers $M(a)$ to $m^*$ or $w^*$ due to the stability of $M$; in particular, 
$a$ is either contained in $B'$ or matched in $M$.
    Since~$k\le \ell < |A'|$, there exists
    an~$i$ such that there exist two
    agents~$a,a'\in A'$ which are unassigned in~$\hat{M}_{i-1}$ and not contained
    in~$\{b_1, \dots, b_{i-1}\}$ but matched in~$\hat{M}_i$ or contained in~$
    \{b_1,\dots b_{i}\}$.
    It follows from \Cref{le:SMI-del} that it is not possible that both~$a$
    and~$a'$ are unassigned in~$\hat{M}_{i-1}$ but matched in~$\hat{M}_i$. Consequently,
    without loss of generality it needs to hold that~$a = b_i$ with~$a$ being
    unassigned in $\hat{M}_{i-1}$, and~$a'\in
    \ma(\hat{\mathcal{I}}\setminus
    \{b_1,\dots b_{i}\}) \setminus \ma(\hat{\mathcal{I}}\setminus
    \{b_1,\dots b_{i-1}\})$.
    However, deleting an agent that was previously unassigned does not change the set
    of matched agents, i.e.,~$\ma(\hat{\mathcal{I}}\setminus
    \{b_1,\dots b_{i-1}\})=\ma(\hat{\mathcal{I}}\setminus
    \{b_1,\dots b_{i}\})$, since a matching that is stable 
in~$\hat{\mathcal{I}}\setminus
    \{b_1,\dots b_{i-1}\}$ is also stable in~$\ma(\hat{\mathcal{I}}\setminus
    \{b_1,\dots b_{i}\})$ (deleting unassigned agents cannot
    create new blocking pairs).
This contradicts~$a'\in \ma(\hat{\mathcal{I}}\setminus
    \{b_1,\dots b_{i}\}) \setminus \ma(\hat{\mathcal{I}}\setminus
    \{b_1,\dots b_{i-1}\})$.
\end{proof}

The algorithm of \Cref{const-ex-delete} 
directly extends to the setting where an arbitrary number
of edges is given and the goal is to delete agents such that there exists a
stable matching which is a superset of the given set of edges.

\subsubsection{Factor-2 Approximation for \Ireor} \label{subsub:const-reor}

We now follow  a similar approach as for \const-\ex-\delete to 
construct a
factor-2 approximation for the optimization version of \const-\ex-\reor 
(notably, this algorithms crucially relies on our assumption that $|U|=|W|$).
We construct an instance $\hat{\mathcal{I}}$ identically as in the case of 
\const-\ex-\delete:
Let $W^*$ be the set of women preferred by $m^*$ to $w^*$, and $U^*$ 
be the set of
men preferred by~$w^*$ to~$m^*$.
In every stable matching~$M$ which contains
$\{m^*,w^*\}$, every woman which~$m^*$ prefers to~$w^*$ needs to be matched to a 
man
which she prefers to~$m^*$, or her preference list needs to be reordered. Analogously, every man
which~$w^*$ prefers to~$m^*$
needs to be matched to a woman which he prefers to $w^*$ or his preferences need to be reordered.
Consequently, all pairs consisting of an agent $a\in U^* \cup W^*$ and
an agent~$a' $ which $a$ does not prefer to $w^*$ or~$m^*$ cannot be part of
any stable matching. This observation motivates a transformation of the
given \textsc{SM} instance~$\mathcal{I}$ to a \textsc{SMI} 
instance~$\hat{\mathcal{I}}$ through the deletion of all such pairs.
We also delete~$w^*$ and $m^*$ from $\hat{\mathcal{I}}$ and compute a stable
matching~$M$ in the resulting instance.

We observe
analogously to \Cref{le:SMI-del} that by reordering the preferences of an
agent,
at most two previously unassigned agents become matched in an
\textsc{SMI} instance:
\begin{lemma}\label{le:const-ex-reor-2}
    Let~$\mathcal{I}'$ be an \textsc{SMI} instance, $a^*\in A$ some agent, and
    let~${\mathcal{I}}^{a^*}$ denote the instance arising from~$\mathcal{I}'$ by
    reordering and
    extending
    $a^*$'s
    preferences arbitrarily. Then, there
    exists at most one man~$m\in U$ and at most one woman~$w\in W$ who are
    unassigned in
    $\mathcal{I}'$, i.e.,~$m,w\notin \ma(\mathcal{I}')$, and who are matched in
    $\mathcal{I}^{a^*}$, i.e.,~$m,w\in
    \ma({\mathcal{I}}^{a^*})$.
\end{lemma}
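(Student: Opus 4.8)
The plan is to transcribe the proof of \Cref{le:SMI-del} almost verbatim, with the deleted agent~$a$ replaced by the reordered agent~$a^*$, which now survives in both instances but carries different preferences in~$\mathcal{I}'$ and in~$\mathcal{I}^{a^*}$. Since the set of assigned agents coincides across all stable matchings of an \textsc{SMI} instance (Rural Hospitals Theorem), it suffices to fix one stable matching~$M'$ of~$\mathcal{I}'$ and one stable matching~$M^{a^*}$ of~$\mathcal{I}^{a^*}$ and to show that~$\ma(M^{a^*})\setminus\ma(M')$ contains at most one man and at most one woman. I would consider the symmetric difference~$S := M'\triangle M^{a^*}$, which is a vertex-disjoint union of alternating paths and even cycles; every agent lying on a cycle or being an internal vertex of a path is matched by both~$M'$ and~$M^{a^*}$, so only path endpoints can change their assignment status. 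In particular, an agent that is unassigned in~$\mathcal{I}'$ but assigned in~$\mathcal{I}^{a^*}$ has exactly one edge in~$S$, namely its~$M^{a^*}$-edge, so it is an endpoint of some path of~$S$ whose incident~$S$-edge lies in~$M^{a^*}$.

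The key step is to prove that \emph{every} path of~$S$ passes through~$a^*$. Consider a path~$P=(a_1,\dots,a_{k+1})$ with $a^*\notin P$. A short case check shows that no agent~$a_i$ on~$P$ is matched to~$a^*$ by either~$M'$ or~$M^{a^*}$: an internal vertex of~$P$ is matched by both matchings only to its two path-neighbours, and for an endpoint an additional (non-$S$) partner would have to lie in~$M'\cap M^{a^*}$, which would force a second~$S$-edge at that endpoint, a contradiction. Hence all preferences, acceptabilities, and partners relevant along~$P$ are identical in~$\mathcal{I}'$ and~$\mathcal{I}^{a^*}$, so the propagation argument of \Cref{le:SMI-del} applies word for word: orienting~$P$ so that~$\{a_1,a_2\}\in M^{a^*}$, non-blockingness of~$\{a_1,a_2\}$ in~$M'$ forces~$a_3\succ_{a_2}a_1$, then non-blockingness of~$\{a_2,a_3\}$ in~$M^{a^*}$ forces~$a_4\succ_{a_3}a_2$, and so on down the path, until at the far end~$\{a_k,a_{k+1}\}$ turns out to block~$M'$ or~$M^{a^*}$ (depending on the parity of~$k$), a contradiction. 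Since~$a^*$ lies in at most one connected component of~$S$, it follows that~$S$ contains at most one path.

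Finally I would count: all newly assigned agents are endpoints of this unique path~$P$, so there are at most two of them. If there are exactly two, they are the two endpoints of~$P$ and both of their~$S$-edges lie in~$M^{a^*}$, which forces~$P$ to have an odd number of edges; since the vertices of an alternating path alternate between men and women, the two endpoints then have opposite genders. Hence~$\ma(M^{a^*})\setminus\ma(M')$ contains at most one man and at most one woman, as claimed. I expect the only real (and still minor) obstacle to be the bookkeeping in the key step --- verifying carefully that a path avoiding~$a^*$ is genuinely insulated from the preference change --- after which the argument is a direct copy of \Cref{le:SMI-del}.
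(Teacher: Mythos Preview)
Your proposal is correct and follows essentially the same approach as the paper: take the symmetric difference of a stable matching in each instance, observe that only paths can change assignment status, argue via the alternating-preference propagation of \Cref{le:SMI-del} that every such path must contain~$a^*$, and finish with the parity/gender count on the unique path. The paper's own proof is in fact just a two-sentence pointer to \Cref{le:SMI-del}, so your write-up is more detailed; your ``insulation'' check (that no vertex of a path avoiding~$a^*$ is matched to~$a^*$) is correct but slightly more than strictly needed, since the propagation only ever compares preferences of agents on the path itself.
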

\begin{proof}
    The proof proceeds analogously to the proof of
    \Cref{le:SMI-del}. Let~$M'$ be some stable matching in
    $\mathcal{I}'$ and~$M^{a^*}$ a stable matching 
in~$\mathcal{I}^{a^*}$. 
    We examine again the symmetric
    difference of $M'$ and ${M}^{a^*}$ and
    conclude that only the unique maximal path including~$a^*$ can change the
    set of matched agents. As this path has only two endpoints, at
    most two agents which are not matched in~$M'$ can become matched in~${M}^{a^*}$.
    Assuming that the path has an even number of edges, only one of the
    endpoints can be matched
    in~${M}^{a^*}$. 
    Assuming that the path has an odd number of edges, one of
    the endpoints needs to correspond to a woman and one to a man.
\end{proof}
Now, similarly to \Cref{const-ex-delete}, it is possible to construct a
straightforward solution, which matches one agent from $U^* \cup W^*$
which is currently
not matched in any stable matching in $\hat{\mathcal{I}}$ by reordering the
preferences of one agent.
Using \Cref{le:const-ex-reor-2}, we now show that this approach yields a
factor-2 approximation of the optimal solution:
\begin{proposition}\label{th:const-ex-reor-2}
    One can compute a factor-2 approximation of the optimization
    version of
    \const-\ex-\reor in
    $\mathcal{O}(n^2)$ time.
\end{proposition}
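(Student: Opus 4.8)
The plan is to mirror the proof of \Cref{const-ex-delete}, turning its agent deletions into reorderings. As there, let $W^*$ be the women $m^*$ prefers to $w^*$ and $U^*$ the men $w^*$ prefers to $m^*$, build the \textsc{SMI} instance $\hat{\mathcal I}$ by deleting every pair $\{m,w\}$ with $m\in U^*$ that $m$ does not prefer to $w^*$, every pair $\{m,w\}$ with $w\in W^*$ that $w$ does not prefer to $m^*$, and then the agents $m^*,w^*$; compute in $\mathcal{O}(n^2)$ time a stable matching $\hat M$ of $\hat{\mathcal I}$ and let $A'\subseteq U^*\cup W^*$ be its unassigned agents (the same set for every stable matching of $\hat{\mathcal I}$). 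The algorithm reorders exactly the $|A'|$ agents of $A'$: it pairs up $\min(|A'\cap U^*|,|A'\cap W^*|)$ agents of $A'\cap U^*$ with the same number of agents of $A'\cap W^*$ and makes each such pair mutually top-ranked, and it reorders each remaining agent of $A'$ so that some distinct agent unassigned by $\hat M$ and lying outside $U^*\cup W^*$ becomes its top choice. Here $|U|=|W|$ is essential: it makes $\hat{\mathcal I}$ balanced, so the numbers of men and women unassigned by $\hat M$ agree; and since $\hat M$, being stable, cannot leave an unassigned man outside $U^*$ together with an unassigned woman outside $W^*$, this guarantees both that enough fresh partners exist and that $\hat M$ together with the newly created pairs is a complete stable matching of the new \textsc{SMI} instance. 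Adding $\{m^*,w^*\}$ to that matching yields a stable matching of the reordered instance, exactly as in \Cref{const-ex-delete}. This gives a valid solution of size $|A'|$ in $\mathcal{O}(n^2)$ time.

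It remains to show $|A'|\le 2k$, where $k$ is the optimum. Fix an optimal solution reordering agents $b_1,\dots,b_k$ (none equal to $m^*$ or $w^*$) and a stable matching $M$ with $\{m^*,w^*\}\in M$ in the resulting instance. For $0\le i\le k$, let $\mathcal I_i$ be $\mathcal I$ with the first $i$ of these reorderings performed, let $\hat{\mathcal I}_i$ be obtained from $\mathcal I_i$ by the same conflicting-pair deletion together with deleting $m^*$ and $w^*$ (the sets $U^*,W^*$ do not change, as $m^*,w^*$ are never reordered), and set $c_i:=|\{a\in U^*\cup W^*:a\notin\ma(\hat{\mathcal I}_i)\}|$, so $\hat{\mathcal I}_0=\hat{\mathcal I}$ and $c_0=|A'|$. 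Since $M$ is stable and contains $\{m^*,w^*\}$, rerunning the ``no surviving conflict'' part of the proof of \Cref{const-ex-delete} inside $\mathcal I_k$ shows that no pair of $M$ other than $\{m^*,w^*\}$ is deleted in $\hat{\mathcal I}_k$, so $M\setminus\{\{m^*,w^*\}\}$ is a stable matching of $\hat{\mathcal I}_k$ covering all of $U^*\cup W^*$; hence $c_k=0$. As $|A'|=c_0-c_k=\sum_{i=1}^k(c_{i-1}-c_i)$, it suffices to prove $c_{i-1}-c_i\le 2$ for every $i$.

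The key point is that $\hat{\mathcal I}_i$ differs from $\hat{\mathcal I}_{i-1}$ only in the preference list of the single agent $b_i$: whether a pair $\{m,w\}$ is deleted depends only on the preferences of $m^*$, of $w^*$, of $m$ (when $m\in U^*$), and of $w$ (when $w\in W^*$), so only pairs incident to $b_i$ can change status, and therefore $\hat{\mathcal I}_i$ is obtained from $\hat{\mathcal I}_{i-1}$ by replacing $b_i$'s (possibly incomplete) preference list, possibly also changing which partners $b_i$ accepts. This is a common generalisation of \Cref{le:SMI-del} and \Cref{le:const-ex-reor-2}, and the same symmetric-difference argument applies: for stable matchings $M_{i-1}$ of $\hat{\mathcal I}_{i-1}$ and $M_i$ of $\hat{\mathcal I}_i$, every path in the symmetric difference of $M_{i-1}$ and $M_i$ that has an endpoint unassigned in one of the two matchings must pass through $b_i$ --- a path avoiding $b_i$ consists of pairs present in both instances along which both endpoints keep their preferences, and following the usual preference chain along it produces a blocking pair for $M_{i-1}$ or $M_i$. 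Since all such paths pass through $b_i$ and $b_i$ lies on at most one path of the symmetric difference, there is at most one such path; it has two endpoints (which lie on opposite sides of the bipartition when both of them switch from unassigned to assigned), so at most one man and at most one woman move from unassigned in $\hat{\mathcal I}_{i-1}$ to assigned in $\hat{\mathcal I}_i$, whence $c_{i-1}-c_i\le 2$. This establishes $|A'|\le 2k$.

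I expect the main obstacles to be the two steps that go beyond \Cref{const-ex-delete} and \Cref{le:const-ex-reor-2}: verifying $c_k=0$ by reproducing the absence-of-surviving-conflicts argument inside the reordered instance, and --- the real technical point --- upgrading \Cref{le:const-ex-reor-2} from reordering one agent to replacing one agent's whole (possibly incomplete) preference list, which is forced because reordering an agent of $U^*\cup W^*$ also changes which of its incident pairs survive in $\hat{\mathcal I}_\cdot$. The explicit $|A'|$-reordering solution and its correctness follow \Cref{const-ex-delete} closely and should be routine once the two uses of the hypothesis $|U|=|W|$ noted above are in place.
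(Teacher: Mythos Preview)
Your proof is correct and follows the same strategy as the paper: reorder exactly the agents of $U^*\cup W^*$ left unassigned by a stable matching $\hat M$ of $\hat{\mathcal I}$, and lower-bound the optimum by half their number via the symmetric-difference argument of \Cref{le:const-ex-reor-2}. The only technical difference is in the lower bound: the paper defines its intermediate instances $\hat{\mathcal I}_i$ by giving each reordered agent $b_j$ its \emph{complete} new preference list (and restoring $b_j$ in everyone else's list), so that \Cref{le:const-ex-reor-2} applies verbatim as ``reorder and extend'', whereas you re-apply the conflicting-pair deletion at each step and therefore need --- and correctly sketch --- the mild extension of that lemma to arbitrary replacement of a single agent's (possibly incomplete) list.
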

\begin{proof}
    Given an instance of \const-\ex-\reor consisting of an SM
    instance~$\mathcal{I}=(U,W,\mathcal{P})$, budget~$\ell$, and the pair~$\{m^*, 
w^*\}$, we construct an \textsc{SMI} instance~$\hat{\mathcal{I}}$ and
    define~$U^*$ and $W^*$ as described above. The
    2-approximation algorithm proceeds as
    follows (recall that $\ma(M)$ is the set of agents assigned by matching 
    $M$):
    \begin{itemize}
        \setlength\itemsep{0em}
        \item Compute a stable matching~$\hat{M}$ in~$\hat{\mathcal{I}}$.
        \item Let $A^* := (U^* \cup W^*) \setminus \ma (\hat{M})$ and $A' := A 
\setminus \ma (\hat{M})$.
        Reorder the preferences of each agent from~$A^*$ such that all agents from $A'$ of opposite gender are in the beginning of the preferences.
    \end{itemize}

    Let $N$ be a stable matching on the instance restricted to agents from $A'$ 
    (with modified preferences).
    Note that the number of men which $\hat{M}$ leaves unassigned equals the number
    of women which $\hat{M}$ leaves unassigned (here we use that $|U| = |W|$), and thus, $N$ matches every agent 
    from $A'$. 
    The correctness of the solution returned by the algorithm and the
    approximation factor can be proven in a way similar to the proof of
    \Cref{const-ex-delete}:

    First, we show that~$M' := \hat{M} \cup N \cup \{\{m^*, w^*\}\}$ is a stable
    matching after the described reorderings of the preferences.
    No agent from~$A^*$ is part of a blocking pair, as $N$ is stable and every agent from $A^*$ prefers every agent from $A'$ of opposite sex to every agent from~$A\setminus A'$.
    No agent from $A' \setminus A^*$ is part of a blocking pair, as $N$ is stable and no agent from $A\setminus A'$ prefers an agent from $A'$ to its partner in $\hat M$.
    By the same arguments as in \Cref{const-ex-delete},
    no
    agent from $A \setminus A'$ is part of a blocking pair.

    It remains to show that at least~$q:=\frac{|A^*|}{2}$ reorderings are needed.
    Assume that less than~$q$ reorderings are needed, and let~$B = \{b_1, 
    \dots, b_k\}$ with $k< q$ be the set
    of agents whose preferences have been reordered;
    we refer to the SM instance arising through these reorderings as 
$\mathcal{I}^*$.
    Let $M^*$ be a stable matching containing $\{m^*,w^*\}$ in the instance~$\mathcal{I}^*$.
    Let $\hat{M}_i$ be a stable matching in the instance $\hat{\mathcal{I}}_i$ arising 
    from $\hat{\mathcal{I}}$ by replacing the preference list of~$b_1, \dots, b_i$ 
    by their reordered, complete preferences (this includes adding for each
    agent~$a\in A\setminus \{b_1, \dots, b_i\}$
    such that $\{a, b_j\}$ is not contained in $\hat{\mathcal{I}}$ the agent $b_j$
    in $a$'s preferences at its position in $a$'s preferences in 
$\mathcal{I}$).
    Note that $M^*$ is a stable matching in~$\hat{\mathcal I}_k$ (if $M^*$ contained 
    an edge $\{m, w\}$ not contained in $\hat{\mathcal{I}}_k$, then $m\in U^*$ and 
    $m$ would prefer~$w^*$ to~$M^* (m)$, implying that $\{m, w^*\}$ is a 
    blocking pair, or $w\in W^*$ and $w $ would prefer $m^*$ to~$M^* (w)$, 
    implying 
    that $\{m^*, w\}$ is a blocking pair).
    Then, every agent from $A$ is matched in~$\hat{M}_k$.
    By the definition of $A^*$, all agents from $A^*$ are unassigned in $\hat{M}_0$.
As it holds that~$k < \frac{|A^*|}{2}$ and all agents from $A^*$ are 
unassigned in a stable matching $\hat{\mathcal{I}}_0$, there needs to exist 
some~$j\in 
\{0, 1, \dots, k-1\}$ such that three agents from $A^*$ that are unassigned in 
a stable 
matching in  $\hat{\mathcal{I}}_{j}$ are matched in a stable matching in  
$\hat{\mathcal{I}}_{j+1}$. This contradicts \Cref{le:const-ex-reor-2}.
    It follows that the assumption~$|B| < \frac{|A^*|}{2}$ is wrong, and thus, 
    the algorithm computes a 2-approximation.
\end{proof}

Note that the above approach does not directly carry over to the case $|U | \neq |W|$.
The problem is that the matching $M'$ constructed in the proof of 
\Cref{th:const-ex-reor-2}
is 
not 
necessarily stable in this case. The reason for this is that $|U \cap A^*| > |W 
\cap A'|$ (or $|W \cap 
A^*| > |U \cap 
A'|$) might hold. In this case, there remains at least one unassigned man 
from $U \cap A^*$ (or at least one unassigned woman from $W \cap A^*$) in $M'$ 
which 
then 
forms a 
blocking pair with~$w^*$ (or~$m^*$) for $M'$.
In fact, an optimal solution might be much larger than $|A^*|$, showing that better lower bounds are needed to design a constant-factor approximation for the general case.
For example, consider an instance of \const-\ex-\reor with~$|U| = |W| +1$, 
where 
$w^*$ prefers all but one man $m'$ to $m^*$, and every other woman has~$m'$ as 
her top-choice.
Furthermore, every man but $m^*$ has $w^*$ as his last choice, while~$m^*$ has 
$w^*$ as his top-choice.
Then $\hat{\mathcal{I}}$ arises through the deletion of $m^*$ and $w^*$, and every 
stable matching in~$\hat{\mathcal{I}}$ leaves exactly one man~$m \in 
U\setminus\{m^*, 
m'\}$ unassigned.
Consequently, we have~$A^* = \{m\}$.
However, every stable matching has to match every man from $U\setminus \{m^*, 
m'\}$ to a woman from $W \setminus \{w^*\}$, implying that $m'$ will be the 
only man unassigned in a stable matching.
Therefore, any optimal solution has to reorder the preferences of every woman 
from~$W\setminus \{w^*\}$.

\begin{remark}[Destructive-Exists] \label{remark} As already briefly discussed 
in the introduction, instead of considering \Iconst-\Iex it is also 
possible to 
consider \Idest-\Iex where given a SM instance and a man-woman pair 
$\{m^*,w^*\}$, we want to 
alter 
the SM
instance such that~$\{m^*,w^*\}$ is not part of at least one stable matching. 
Notably, polynomial-time algorithms for \Iconst-\Iex carry over to 
\Idest-\ex, as we can solve the latter problem by running the algorithm for 
\Iconst-\Iex for all man-woman pairs involving one of $m^*$ and $w^*$. 
Moreover, it is also possible to adapt our hardness 
reduction for \const-\ex-\add to \dest-\ex-\add: We introduce an additional 
man~$m^{**}$ which has $w^{*}$ as his top-choice and change the preferences of 
$w^*$ to~$w^* \colon m^*\succ m^{**}\succ \dots$ and set $\{m^{**},w^*\}$ to 
the pair that 
we want to exclude from some stable matching. Notably, a stable matching does 
not include $\{m^{**},w^*\}$ if and only if it includes $\{m^*,w^*\}$. After 
the described modifications, the reduction no longer has any implications 
concerning inapproximability, as we need to use that our budget is $\ell=k+ 
\binom{k}{2}$ in 
the proof of correctness of the backward direction of the 
reduction (\Cref{le:const-ex-add-L2}), which needs to be slightly adapted. 
Thus, we 
can conclude that \dest-\ex-\add parameterized by $\ell$ is W[1]-hard. By 
modeling \Iadd by \Iswap as described in 
\Cref{sec:const-ex-swap}, we can also conclude that \dest-\ex-\swap 
parameterized by $\ell$ is W[1]-hard. 
\end{remark}

\section{\Iexact-\Iex} \label{se:ex-ex}
In this section, we aim to make a given matching in an SM
instance stable by 
performing some manipulative actions. The difference to the 
\Iconst-\Iex setting considered in the previous section is that now
instead of one edge that should be included in some stable matching, a complete 
matching is given which shall be made stable. At first sight, it is unclear 
whether making the goal more specific in the sense of providing the complete 
matching instead of just one edge in some matching makes the problem easier or 
harder. In this section, we prove 
that providing this additional information makes the problem usually 
easier, as 
for all manipulative actions~$\mathcal{X}$ 
for which we showed hardness in the previous 
section, \exact-\ex-$\mathcal{X}$ becomes polynomial-time 
solvable.\footnote{The only manipulative action for which \Iexact-\Iex is 
harder than \Iconst-\Iex is \Idelete. However, recall that we came up with a 
modified definition of \Iexact-\Iex for \Idelete. As this definition makes 
\exact-\ex-\delete quite different from the problem for the other actions and 
also from \const-\ex-\delete, this observation does not contradict our previous 
claim.}
The intuitive reason for this difference is that the problem of making a 
given matching $\Mst$ stable simplifies to ``resolving'' all pairs that are 
blocking 
for 
$\Mst$, which turns out to be solvable in polynomial-time. In contrast to 
this, for  
\const-\ex-$\mathcal{X}$, we also need to decide which matching including the 
given pair we want to make stable, a
task which turned out to be hard for most manipulative actions.  

Next, we start by 
considering the actions \IaccD, \Ireor, and \Iswap before turning to the 
manipulative actions \Idelete and \Iadd for which we need to adapt the problem 
definition.

\subsection{Polynomial-Time Algorithms for \IaccD, \Ireor,  and \Iswap}
\label{se:ex-ex:ap}
As argued above, in the \Iexact-\Iex setting, we need to 
``resolve'' all pairs 
that are blocking for the given matching $\Mst$. This requires manipulating the 
preferences
of at least one agent~$a$ in each blocking pair such that it no longer prefers
the 
other agent in the blocking pair to $\Mst(a)$. For a 
matching~$M$ and an SM instance~$\mathcal{I}$, we denote by~$\bp(M,\mathcal{I})$
the set of all blocking pairs of~$M$ in~$\mathcal{I}$. For a blocking 
pair~$\beta=\{m,w\}\in 
\bp(M,\mathcal{I})$ and an agent~$a\in \beta$, we denote by~$\beta(a)$ the 
other 
agent in
the 
blocking pair.

The optimal solution for an instance of \exact-\ex-\accD is to 
delete the 
acceptability of all blocking pairs. The set of blocking pairs can be computed 
in~$\mathcal{O}(n^2)$ time. This solution is optimal since it is always necessary 
to delete the acceptability 
of all blocking pairs and, by doing so, no new pairs will become blocking. 

\begin{observation} \label{th:ex-ex-accD}
    \exact-\ex-\accD is solvable in~$\mathcal{O}(n^2)$ time. 
\end{observation}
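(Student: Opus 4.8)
The plan is to show that the unique minimum strategy is to delete the acceptability of \emph{every} blocking pair of $\Mst$ in $\mathcal{I}$, so that the optimal number of \IaccD operations equals $|\bp(\Mst,\mathcal{I})|$, a quantity that can be computed in $\mathcal{O}(n^2)$ time.

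First I would verify correctness of this strategy. Let $\mathcal{I}'$ be the SMI instance obtained from $\mathcal{I}$ by deleting the acceptability of all pairs in $\bp(\Mst,\mathcal{I})$. No blocking pair of $\Mst$ can be an $\Mst$-edge, since $\Mst$ is complete and an agent neither is unassigned nor prefers someone to its own partner; hence every edge of $\Mst$ is still mutually acceptable in $\mathcal{I}'$, so $\Mst$ remains a valid matching there. The key point is monotonicity: deleting acceptabilities can only destroy blocking pairs, never create them. Indeed, whether $\{m,w\}$ blocks $\Mst$ depends only on whether $m$ and $w$ are still mutually acceptable and on the rank of $w$ in $m$'s list relative to $\Mst(m)$ (and symmetrically for $w$); since we never delete an $\Mst$-edge and deleting other acceptabilities does not change relative orders, a pair that blocks $\Mst$ in $\mathcal{I}'$ must already block $\Mst$ in $\mathcal{I}$. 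But every such pair has been deleted in $\mathcal{I}'$, so $\Mst$ has no blocking pair in $\mathcal{I}'$ and is stable.

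Next I would argue optimality. Let $D$ be any set of \IaccD operations after which $\Mst$ is stable, and write $\mathcal{I}\setminus D$ for the resulting instance. Since $\Mst$ must still be a matching there, no $\Mst$-edge lies in $D$. Now take any $\beta=\{m,w\}\in\bp(\Mst,\mathcal{I})$. In $\mathcal{I}\setminus D$ the partners $\Mst(m)$ and $\Mst(w)$ remain acceptable and all preference orders are unchanged, so $\beta$ still satisfies the preference conditions of a blocking pair; therefore $\beta$ fails to block $\Mst$ in $\mathcal{I}\setminus D$ only if $m$ and $w$ are no longer mutually acceptable, i.e.\ $\beta\in D$. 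As $\Mst$ is stable in $\mathcal{I}\setminus D$, we conclude $\beta\in D$, hence $\bp(\Mst,\mathcal{I})\subseteq D$ and $|D|\ge|\bp(\Mst,\mathcal{I})|$, which matches the strategy above.

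Finally, the running time: after an $\mathcal{O}(n^2)$-time preprocessing recording $\rank(a,\cdot)$ for every agent, each of the $n^2$ man--woman pairs can be tested for being blocking in $\mathcal{O}(1)$ time, so $\bp(\Mst,\mathcal{I})$, and thus the answer, is obtained in $\mathcal{O}(n^2)$ time. I do not expect a real obstacle here; the only subtlety deserving care is the monotonicity claim together with the observation that $\Mst$-edges are never blocking (hence never deleted), which is exactly what makes ``delete all blocking pairs'' both feasible and forced.
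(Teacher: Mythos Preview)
Your proposal is correct and follows exactly the argument the paper gives: delete the acceptability of every blocking pair of $\Mst$, observe that this is necessary (since \IaccD operations cannot resolve a blocking pair without deleting it and cannot create new ones) and sufficient, and note that the set of blocking pairs is computable in $\mathcal{O}(n^2)$ time. The only difference is that you spell out the monotonicity and feasibility details more carefully than the paper does.
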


We now turn to \Ireor.
Since it is possible to resolve all blocking pairs containing an agent
by manipulating this agent, \exact-\ex-\reor corresponds to finding a 
minimum-cardinality
subset~$A'\subseteq U\cup W$ such 
that~$A'$ covers~$\bp(\Mst,\mathcal{I})$, i.e., each pair that 
blocks~$\Mst$ in 
the given SM 
instance~$\mathcal{I}$ contains an agent from $A'$.
From this observation, the following proposition directly follows. 
\begin{proposition} \label{th:ex-ex-reor}
    \exact-\ex-\reor reduces to finding a vertex cover in a bipartite
    graph and is, hence, solvable in~$\mathcal{O}(n^{2.5})$ time.
\end{proposition}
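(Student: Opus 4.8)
The plan is to make the observation preceding the proposition precise by constructing an explicit bipartite graph whose (minimum) vertex covers are exactly the feasible (and in particular the minimum-cardinality) solutions of \exact-\ex-\reor. Concretely, given the SM instance $\mathcal{I}=(U,W,\mathcal{P})$ and the target matching $\Mst$, I would form the bipartite graph $H=(U\cup W,\bp(\Mst,\mathcal{I}))$, i.e., join $m\in U$ and $w\in W$ by an edge precisely when $\{m,w\}\in\bp(\Mst,\mathcal{I})$. Since every blocking pair consists of one man and one woman, $H$ is bipartite with sides $U$ and $W$, and the claim reduces to showing that the minimum number of reorderings needed to make $\Mst$ stable equals the size of a minimum vertex cover of $H$.

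Next I would prove the two directions of this correspondence. For necessity, suppose $A'\subseteq U\cup W$ is the set of agents whose preference lists are reordered in some successful manipulation, and let $\{m,w\}$ be any edge of $H$. If neither $m$ nor $w$ lies in $A'$, then both their preference lists and their $\Mst$-partners are unchanged, so $\{m,w\}$ still blocks $\Mst$ in the altered instance, a contradiction; hence $A'$ is a vertex cover of $H$. For sufficiency, given any vertex cover $A'$ of $H$, reorder the preference list of each $a\in A'$ so that $\Mst(a)$ becomes its top choice (the rest being arbitrary). After this modification each $a\in A'$ is matched to its favourite partner and thus cannot be part of any blocking pair; consequently any blocking pair of $\Mst$ in the modified instance must consist of two agents outside $A'$, whose preferences and $\Mst$-partners are untouched — but such a pair would be an edge of $H$ not covered by $A'$, which is impossible. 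Hence $\Mst$ becomes stable, and $A'$ is a feasible solution using $|A'|$ reorderings.

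The one point that needs care — and the step I would flag as the main obstacle — is the sufficiency direction: reordering preferences can in principle create \emph{new} blocking pairs, so a naive ``reorder the covered agents in some way'' argument would be unsound. The fix is exactly the ``promote $\Mst(a)$ to the top'' choice above, which renders every reordered agent blocking-pair-free and thereby confines any surviving blocking pair to the untouched part of the instance. Finally, for the running time I would note that $\bp(\Mst,\mathcal{I})$ can be listed in $\mathcal{O}(n^2)$ time, that $H$ has $2n$ vertices and $\mathcal{O}(n^2)$ edges, and that a minimum vertex cover of a bipartite graph can be obtained from a maximum matching via K\H{o}nig's theorem in $\mathcal{O}(|E|\sqrt{|V|})=\mathcal{O}(n^2\cdot\sqrt{n})=\mathcal{O}(n^{2.5})$ time using the Hopcroft--Karp algorithm, which dominates the overall running time.
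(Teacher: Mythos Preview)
Your proposal is correct and follows essentially the same approach as the paper: build the bipartite graph on $U\cup W$ whose edges are the blocking pairs of $\Mst$, argue that any feasible set of reordered agents must be a vertex cover while any vertex cover can be turned into a feasible solution by promoting $\Mst(a)$ to the top of each covered agent's list, and then invoke Hopcroft--Karp (via K\H{o}nig's theorem) for the $\mathcal{O}(n^{2.5})$ bound. Your explicit discussion of why the ``promote $\Mst(a)$ to the top'' trick avoids creating new blocking pairs is a nice touch that the paper handles more tersely.
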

\begin{proof}
    Given an instance of \exact-\ex-\reor consisting of an SM instance
    $\mathcal{I}=(U, W, \mathcal{P})$ together with a matching~$\Mst$ and 
    budget $\ell$, we 
    construct a bipartite 
    graph 
    as follows. For each~$a\in A$, we introduce a vertex~$v_a$ and we connect 
    two vertices~$v_a,v_{a'}$ if~$\{a,a'\}\in \bp(\Mst,\mathcal{I})$. Note that
    the resulting graph is bipartite, as there cannot exist a blocking pair 
    consisting of two men or two women. We compute a minimum vertex cover 
    $X$ in the graph, i.e., a subset of vertices such that all edges are 
    incident to at least one vertex in $X$, using the Hopcroft–Karp algorithm 
    in 
    $\mathcal{O}(n^{2.5})$~time~\citep{DBLP:conf/focs/HopcroftK71}. For 
    each~$v_a\in X$, we reorder~$a$'s 
    preferences such that~$\Mst(a)$ becomes~$a$'s top-choice. After these
    reorderings,~$\Mst$ is stable, as for each blocking pair~$\beta$, for at 
    least
    one involved agent~$a\in \beta$, the agent~$\Mst(a)$ is now~$a$'s 
    top-choice, and no new blocking
    pairs are created by this procedure. Moreover, the computed solution is 
    optimal. For the sake of contradiction, let us assume that there exists a 
    smaller solution. Then, there exists an~$\{m,w\}\in \bp(\Mst,\mathcal{I})$ 
    where
    neither~$m$'s nor~$w$'s preferences have been modified. However, this 
    implies that~$\{m,w\}$ still blocks~$\Mst$.
\end{proof}

Now, we turn to the manipulative action \Iswap. Here, the 
cost of resolving a blocking pair~$\{m,w\}$ by manipulating $m$'s preferences is 
the number of swaps needed 
to swap~$\Mst(m)$ 
with $w$ in the preferences of $m$, and the cost of resolving the pair by manipulating $w$'s preferences 
is the number of swaps 
needed to swap $\Mst(m)$ with $w$ in the preferences of~$w$. This observation could lead to the conjecture 
that it is optimal to determine for each blocking pair the agent with the lower 
cost and then resolve the pair by performing the corresponding swaps.  
However, this approach is not optimal, as by resolving some blocking pair 
involving an agent also another blocking pair involving this 
agent might be resolved, as we observe it in the following example:
\begin{example} \label{ex:exact-ex-swap}
    Consider an instance of \exact-\ex-\swap consisting of the 
    following SM instance together
    with budget~$\ell=3$ and 
    $\Mst=\{\{m_1,w_3\}, \{m_2,w_2\}, \{m_3,w_1\}\}$:
    \begin{itemize}
        \setlength\itemsep{0em}
        \item~$m_1: w_1 \succ w_2\succ w_3$,
        \item~$m_2: w_2 \succ w_3\succ w_1$,
        \item~$m_3: w_2 \succ w_3\succ w_1$,
        \item~$w_1: m_1 \succ m_2\succ m_3$,
        \item~$w_2: m_1 \succ m_3\succ m_2$, and
        \item~$w_3: m_1 \succ m_2\succ m_3$.
    \end{itemize} The set of blocking pairs of~$\Mst$ is: 
    $\bp(\Mst,\mathcal{I})=\{\{m_1,w_1\}, \{m_1,w_2\}, \{m_3,w_2\}\}$.  In this 
    example, the cost to resolve the pair~$\{m_1,w_1\}$ by modifying~$m_1$'s 
    preference list is two, i.e., swap 
    $w_3$ 
    and $w_2$ and subsequently~$w_3$ and~$w_1$. However, by 
    doing this, also the other blocking pair $\{m_1,w_2\}$ including 
    $m_1$ is resolved. In fact, these swaps are part of the unique optimal 
    solution 
    to make $\Mst$ stable, which is to 
    swap~$w_3$ and~$w_2$ and subsequently~$w_3$ and~$w_1$ in~$m_1$'s preference
    relation and to swap~$m_3$ and~$m_2$ in~$w_2$'s preference relation.
\end{example}

In the following, we 
describe how on instance of \exact-\ex-\swap can be solved in time cubic in the
number of agents by reducing it to an instance of \textsc{Minimum Cut}. In 
the \textsc{Minimum Cut} problem, we are given a directed graph 
$G=(V,E)$, a cost function $c : E \rightarrow \mathbb{N} \cup \{\infty\}$,
two distinguished 
vertices~$s$ and $t$ and an integer $k$, and the question is
to 
decide whether there exists a subset of arcs of total weight at most~$k$ such 
that every $(s,t)$-path in~$G$ includes at least one of these arcs.

Before describing the reduction, let us introduce some notation. For two 
agents~${a,a'\in A}$, 
let~$c(a,a')$ denote the 
number of swaps needed such that~$a$ prefers~$\Mst(a)$ to~$a'$, i.e.,
$c(a,a') := \max\big(\rank(a,\Mst(a))-\rank(a,a'),0\big)$, where $\rank(a,a')$ is 
one plus the number of agents which $a$ prefers to $a'$.
Moreover, for each~$a\in A$, let~$q_a$ denote the number of blocking 
pairs 
involving~$a$ and let~$\beta^a_1,\dots, \beta^a_{q_{a}}$ be a list of these 
blocking 
pairs ordered decreasingly by the number of swaps in $a$'s preferences 
needed 
to resolve the blocking pair, 
i.e.,~$c(a,\beta^a_1(a))\geq 
c(a,\beta^a_2(a))\geq 
\dots \ge c(a, \beta^a_{q_a}(a))$.
For a blocking pair~$\beta\in U\times W$ with~$a\in \beta$, we denote by~$\id(a,\beta)$ the position of blocking pair $\beta$ in $a$'s list of blocking 
pairs, that 
is, $\id(a,\beta) =i$ if $\beta=\beta^a_i$.
Using this 
notation, we now
prove the following:

\begin{theorem} \label{th:ex-ex-swap}
    \exact-\ex-\swap is solvable in~$\mathcal{O}(n^4)$ time. 
\end{theorem}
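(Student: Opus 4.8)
The plan is to reduce \exact-\ex-\swap, via a normal-form argument, to a weighted covering problem on the set $\bp(\Mst,\mathcal{I})$ of blocking pairs, and then to model that covering problem as an instance of \textsc{Minimum Cut}. First I would show that an optimal solution may be assumed to only move $\Mst(a)$ \emph{upwards} in the preference list of each agent $a$: such an upward move never creates a new blocking pair (afterwards $a$ prefers $\Mst(a)$ to a weakly smaller set of agents, so $a$ cannot become the ``improving'' side of a fresh blocking pair), and for any agent $a$ and any set $S$ of blocking pairs one wants to resolve ``from $a$'s side'', the cheapest way to make $a$ prefer $\Mst(a)$ to every agent in $\{\beta(a):\beta\in S\}$ is to push $\Mst(a)$ just above the $a$-most-preferred such agent, costing $\max_{\beta\in S}c(a,\beta(a))$ swaps; optimality of this follows from the standard lower bound that the number of adjacent transpositions between two orders is at least the number of their discordant pairs. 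Consequently the minimum number of swaps making $\Mst$ stable equals the minimum of $\sum_{a\in A}\max_{\beta\in S_a}c(a,\beta(a))$ over all families $(S_a)_{a\in A}$ with $a\in\beta$ for all $\beta\in S_a$ and $\beta\in S_m\cup S_w$ for every blocking pair $\beta=\{m,w\}$ (convention $\max_\emptyset=0$). Since the cost of $S_a$ depends only on its $a$-most-preferred element, we may assume $S_a=\{\beta^a_{j_a},\dots,\beta^a_{q_a}\}$ for some $j_a\in\{1,\dots,q_a+1\}$ (with $j_a=q_a+1$ meaning $S_a=\emptyset$), its cost being $c(a,\beta^a_{j_a}(a))$ (and $0$ if $j_a=q_a+1$), and $\beta$ is then resolved from $a$'s side exactly when $\id(a,\beta)\ge j_a$. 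So the task becomes: minimize $\sum_a c(a,\beta^a_{j_a}(a))$ over all $(j_a)$ with $\id(m,\beta)\ge j_m$ or $\id(w,\beta)\ge j_w$ for every blocking pair $\beta=\{m,w\}$.

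Next I would construct the directed graph for \textsc{Minimum Cut}. Besides $s$ and $t$, for each man $m$ with $q_m\ge 1$ I add a path $s\to c^1_m\to c^2_m\to\dots\to c^{q_m}_m$ in which the arc entering $c^i_m$ has cost $c(m,\beta^m_i(m))$; for each woman $w$ with $q_w\ge 1$ I add a path $d^{q_w}_w\to d^{q_w-1}_w\to\dots\to d^1_w\to t$ in which the arc leaving $d^i_w$ has cost $c(w,\beta^w_i(w))$; and for every blocking pair $\beta=\{m,w\}$ I add an arc $c^{\id(m,\beta)}_m\to d^{\id(w,\beta)}_w$ of cost $\infty$. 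Note all finite arc costs are at least $1$, since a blocking partner of $a$ is always $a$-preferred to $\Mst(a)$. The intended reading of a minimum cut $(S,T)$, with $S$ the vertices reachable from $s$ after removing the cut arcs, is that $j_a$ is the unique index whose defining arc on agent $a$'s path is cut ($j_a=q_a+1$ if no arc of that path is cut); then $c^i_m\in S$ iff $i<j_m$, and $t$ is reachable from $d^i_w$ iff $i<j_w$.

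For correctness, a finite cut cuts no $\infty$-arc, so for every $\beta=\{m,w\}$ it cannot have both $c^{\id(m,\beta)}_m\in S$ and $d^{\id(w,\beta)}_w\in T$, which is exactly the covering constraint; and a minimum cut cuts exactly one arc on each agent's path (any further cut there is dominated, and the potential interaction of several $\infty$-arcs entering one $d_w$-path is settled by cutting only the single arc leaving the lowest $S$-vertex of that path), so its value is precisely $\sum_a c(a,\beta^a_{j_a}(a))$ for the induced feasible $(j_a)$. Conversely, a feasible $(j_a)$ yields a cut of the same value by putting $c^i_m\in S$ iff $i<j_m$ and $d^i_w\in S$ iff $i\ge j_w$. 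Hence the minimum $(s,t)$-cut value equals the minimum number of swaps needed, and the given instance is a YES-instance iff this value is at most~$\ell$.

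For the running time: $\bp(\Mst,\mathcal{I})$ and all values $c(a,\cdot)$ are computable in $\mathcal{O}(n^2)$ time using rank tables, and sorting the per-agent blocking-pair lists costs $\mathcal{O}(n^2\log n)$; the graph has $\mathcal{O}(\sum_a q_a)=\mathcal{O}(|\bp(\Mst,\mathcal{I})|)=\mathcal{O}(n^2)$ vertices and arcs (and maximum flow value $\mathcal{O}(n^2)$), so a minimum $(s,t)$-cut is found in $\mathcal{O}(|V|\,|E|)=\mathcal{O}(n^4)$ time. The step I expect to require the most care is exactly the correctness of the cut construction — verifying that a minimum cut cuts one ``prefix-boundary'' arc per agent path and that the induced $(j_a)$ is feasible with matching cost — whereas the normal-form reduction of the first step, though it is the conceptual heart, is routine once one has the Kendall--tau lower bound and the observation that upward moves create no new blocking pairs.
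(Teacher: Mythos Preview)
Your proposal is correct and follows essentially the same route as the paper: the normal-form reduction (only push $\Mst(a)$ upwards, encode a solution by how far each agent's partner is pushed) and the \textsc{Minimum Cut} construction (one source-to-sink path per man, one sink-to-source path per woman, with arc weights $c(a,\beta^a_i(a))$ in decreasing order, and an $\infty$-arc per blocking pair linking the two paths) are exactly the paper's, as is the $\mathcal{O}(|V|\cdot|E|)=\mathcal{O}(n^4)$ running-time bound. Your intermediate reformulation via the families $(S_a)_{a\in A}$ and then the indices $(j_a)_{a\in A}$ makes the equivalence to the min-cut value a bit more explicit than the paper's presentation, but the construction and correctness argument are otherwise the same.
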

\begin{proof}
	Assume we are given an instance of \exact-\ex-\swap consisting of an SM 
	instance $\mathcal{I}=(U, W, \mathcal{P})$, a matching~$\Mst$, and budget $\ell$.
	Let $A = \{a_1, \dots, a_{2n}\}$.
    We first show that there is always an optimal solution 
	which, for each agent~$a\in A$,  only swaps
    $M^*(a)$ (upwards) 
    in the preference of~$a$ (i.e., $M^*(a)$ becomes more preferred by~$a$). 
    Let $S$ be a list of swap operations of minimum cardinality such that after 
    performing the swap operations~$S$, the given matching~$\Mst$ is stable. 
    Let $a\in A$ be some agent and $i\in \mathbb{N}$ the number of swap 
    operations 
    from~$S$ modifying the preferences of $a$. Then, it is possible to 
    replace these~$i$ swap operations by swapping~$\Mst(a)$ by~$i$ positions to 
    the top in $a$'s 
    preference list. The resulting list of swap operations~$S'$ consists of the 
    same number of swaps and  makes $\Mst$ still stable, as there is no agent 
    which~$a$ prefers to~$\Mst(a)$ after the swap operations in~$S'$ but not 
    after the 
    swap operations in~$S$. 
    As a consequence, it is 
    enough to 
    consider the solutions to the given \exact-\ex-\swap instance that 
    correspond to a tuple
    $(d_{a_1},\dots, d_{a_{2n}})$, where~$d_a$ encodes the number of times
    $\Mst(a)$ is swapped with its left neighbor in~$a$'s  preference relation.
    Note that~$(d_{a_1},\dots, d_{a_{2n}})$ is a valid solution to the
    problem if for each blocking pair~$\{m,w\}\in \bp(\Mst,\mathcal{I})$ it
    holds that 
    $d_m\geq c(m,w)$ or that~$d_w\geq c(w,m)$. Now, we are ready to 
    reduce the given \exact-\ex-\swap instance to an
    instance of the \textsc{Minimum Cut} problem. 
    
    \smallskip 
    \textbf{Reduction to \textsc{Minimum Cut}.}
    We start by constructing a weighted directed graph~$G= (V, E)$ as follows: For each man~$m$, 
    we introduce one vertex for each blocking pair $m$ is part of:~$u^m_1,\dots,
    u^m_{q_m}$.
    Similarly, for each woman~$w\in W$, we introduce one vertex for each 
    blocking pair $w$ is part of:~$u^w_1,\dots,
    u^w_{q_w}$.
    Moreover, we add a source~$s$ and a sink~$t$.
    
    Turning to the 
    arc set, for each~$m\in 
    U$ that is included in at least one blocking pair, we introduce an arc 
    from~$s$ to~$u^m_1$ of cost~$c(m,\beta^m_1(m))$.
    Moreover, for each~$i\in [q_m-1]$, we introduce an arc from~$u^m_i$ 
    to~$u^m_{i+1}$ of cost~$c(m,\beta^m_{i+1}(m))$. For each woman~$w\in W$ 
    that is included in at least one blocking pair, we
    introduce an arc from~$u^w_1$ to~$t$ of 
    cost~$c(w,\beta^w_1(w))$.
    Moreover, for each~$i\in [q_w-1]$, we introduce an arc from~$u^w_{i+1}$ to 
    $u^w_i$ of cost~$c(w,\beta^w_{i+1}(w))$. For each blocking 
    pair~$\beta=\{m,w\}\in 
    \bp(\Mst,\mathcal{I})$, we
    introduce an arc from~$u^m_{\id(m,\beta)}$ to~$u^w_{\id(w,\beta)}$ of 
    infinite 
    cost.
    Finally, we set $k := \ell$.
    We 
    visualize the described reduction in \Cref{fig:exact-ex-swap} where
    the graph corresponding to \Cref{ex:exact-ex-swap} is displayed.  
    
    \begin{figure}[bt]
        \begin{center}
            \begin{tikzpicture}[xscale =2 , yscale = 1.2]
            \node[vertex, label=90:$s$] (sig) at (0, 1) {};
            
            \node[vertex, label=-90:$u_{1}^{m_1}$] (m11) at (0.7, 0) {};
            \node[vertex, label=90:$u_{2}^{m_1}$] (m12) at (0.7, 1) {};
            \node[vertex, label=90:$u_{1}^{m_3}$] (m31) at (0.7, 2) {};
            
            \node[vertex, label=-90:$u_{1}^{w_1}$] (w11) at (1.7, 0) {};
            \node[vertex, label=0:$u_{1}^{w_2}$] (w21) at (1.7, 1) {};
            \node[vertex, label=90:$u_{2}^{w_2}$] (w22) at (1.7, 2) {};
            \node[vertex, label=0:$t$] (tau) at (2.4, 0) {};
            
            \draw[-{Latex[length=2.5mm]}] (sig) edge node[pos=0.5, fill=white, 
            inner 
            sep=2pt]
            {\scriptsize
                $2$}  (m11);
            \draw[-{Latex[length=2.5mm]}] (sig) edge node[pos=0.5, fill=white, 
            inner sep=2pt]
            {\scriptsize
                $2$}  (m31);
            \draw[-{Latex[length=2.5mm]}] (m11) edge node[pos=0.5, fill=white, 
            inner sep=2pt]
            {\scriptsize~$1$}  (m12);
            \draw[-{Latex[length=2.5mm]}] (m11) edge node[pos=0.5, fill=white, 
            inner sep=2pt]
            {\scriptsize~$\infty$}  (w11);
            \draw[-{Latex[length=2.5mm]}] (m12) edge node[pos=0.5, fill=white, 
            inner sep=2pt]
            {\scriptsize~$\infty$}  (w21);
            \draw[-{Latex[length=2.5mm]}] (m31) edge node[pos=0.5, fill=white, 
            inner sep=2pt]
            {\scriptsize~$\infty$}  (w22);
            \draw[-{Latex[length=2.5mm]}] (w22) edge node[pos=0.5, fill=white, 
            inner sep=2pt]
            {\scriptsize~$1$}  (w21);
            \draw[-{Latex[length=2.5mm]}] (w11) edge node[pos=0.5, fill=white, 
            inner sep=2pt]
            {\scriptsize
                $2$}  (tau);
            \draw[-{Latex[length=2.5mm]}] (w21) edge node[pos=0.5, fill=white, 
            inner sep=2pt]
            {\scriptsize
                $2$}  (tau);
            
            \end{tikzpicture}
            
        \end{center}
        \caption{\textsc{Min-Cut} graph constructed to solve 
            \Cref{ex:exact-ex-swap}.
            The number on an edge denotes its weight.
            Note that~$\beta_1^{m_1}=\{m_1,w_1\}$,~$\beta_2^{m_1}=\{m_1,w_2\}$ 
            with
            $c(m_1,\beta_1^{m_1}(m_1))=2$ 
            and~$c(m_1,\beta_2^{m_1}(m_1))=1$;~$\beta_1^{m_3}=\{m_3,w_2\}$ with
            $c(m_3,\beta_1^{m_3}(m_3))=2$;~$\beta_1^{w_1}=\{m_1,w_1\}$ with
$c(w_1,\beta_1^{w_1}(w_1))=2$;~$\beta_1^{w_2}=\{m_1,w_2\}$,~$\beta_2^{m_2}=\{m_3,
w_2\}$
            with
            $c(w_2,\beta_1^{w_2}(w_2))=2$ and~$c(w_2,\beta_2^{w_2}(w_2))=1$. 
            The unique minimum
            $(s,t)$-cut is
            $E':=\{(s,u_1^{m_1}), (u_2^{w_2},u_1^{w_2})\}$.
        }\label{fig:exact-ex-swap}
    \end{figure}
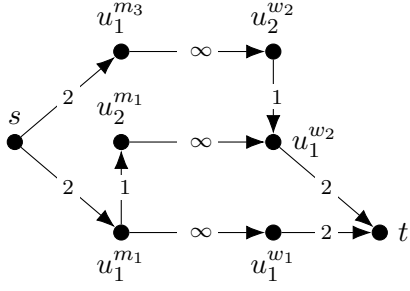
    
    The general idea of the construction is that cutting an arc incident to 
    some vertex~$u_i^a$
    of an agent~$a\in A$ of cost~$c$ is equivalent to swapping 
    up~$\Mst(a)$ in
    $a$'s preference list~$c$ times. Thereby, all blocking pairs with
    costs at most~$c$ for~$a$ are resolved (all paths visiting the corresponding
    vertices are cut) and we encode for each agent~$a$ the entry in the solution 
    tuple 
    by the arc incident to one of its vertices~$u_i^a$ contained in the cut (where no arc contained in the cut corresponds to performing no swaps). For each blocking pair
    the involved woman or the 
    involved man needs to resolve the pair, as otherwise there still exists an 
    $(s,t)$-path. 
    
    Formally, we compute a minimum cut~$E'\subseteq 
    E$  of the constructed graph, which can be done in $\mathcal{O}(|V| \cdot |E|) = \mathcal{O} (n^4)$ time 
    \citep{DBLP:journals/jal/KingRT94,DBLP:conf/stoc/Orlin13}.
    Note that for each agent~$a$ at most one arc to one of 
    $u^a_1,\dots, u^a_{q_a}$ is contained in~$E'$.
    For the sake of contradiction, let us assume that there exist two vertices 
    $u^a_i$ and~$u^a_j$ with~$i<j$ such that 
    both the arc to~$u^a_i$ and~$u^a_j$ have been cut.
    Then, already cutting the arc to~$u^a_i$ destroys all~$(s,t)$-paths 
    visiting~$u^a_j$, contradicting the minimality of the cut.
    
    Using~$E'$, we construct a
    solution tuple as follows. For each agent~$a\in A$, we set~$d_a=0$ if no 
arc to a 
    vertex from
    $u^a_1,\dots, u^a_{q_a}$ has been cut. Otherwise, let~$u^a_i$ be the
    destination of the arc in the cut. We set 
    $d_a=c(a,\beta^a_i(a))$. Note that the cost 
    of the cut corresponds to the cost of the constructed solution.
    
    \textbf{Correctness.} It remains to prove that the solution~$(d_{a_1},\dots,
    d_{a_{2n}})$ computed by our algorithm is indeed a 
    solution to the given \const-\ex-\swap instance and that no solution of 
    smaller cost exists.
    To prove the first part, for the sake of contradiction, let us assume that 
    there exists a pair 
    $\beta=\{m,w\}\in \bp(\Mst,\mathcal{I})$ that is still blocking, i.e., it
    holds that~$d_m< c(m,w)$ and ~$d_w< c(w,m)$.
    However, this implies that the graph~$G'$ arising from $G$ through the deletion of the edges from~$E'$ still contains a path 
from~$s$ to 
    $u^m_{\id(m,\beta)}$ 
    and from~$u^w_{\id(w,\beta)}$ to~$t$: No arcs on the unique path from 
    $s$ to~$u^m_{\id(m,\beta)}$ were cut by~$E'$, as they were all of cost 
    greater 
    than~$c(m,w)$.
    A symmetric argument shows that no edges on the path from~$u^w_{\id(w,\beta)}$ 
    to~$t$ are contained in~$E'$.
    Moreover, there exists an arc of infinite cost from~$u^m_{\id(m,\beta)}$ 
    to~$u^w_{\id(w,\beta)}$
    which implies the existence of an~$(s,t)$-path in $G'$. This leads to 
a 
    contradiction.
    
    To prove the second part, let us assume that there exists a 
    solution~$(d'_{a_1},\dots, d'_{a_{2n}})$ of smaller cost. However, one can 
    construct from this a cut $E''$ of smaller cost than the 
computed minimum cut $E'$,
    a contradiction. For each 
    agent~$a\in A$, we include in the cut~$E''$ the arc to the 
vertex
    from~$u^a_1,\dots, u^a_{q_a}$ of maximum index $i$ such that ${c( a, 
\beta_i^a(a)) \le 
    d'_a}$. 
    Clearly, $E''$ has the same cost as $(d'_{a_1}, \dots, d'_{a_{2n}})$ and is therefore cheaper than the computed minimum cut $E'$, so it remains
    to show that $E''$ is indeed an $(s,t)$-cut, i.e., after
    deleting all arcs from~$E''$, there is no
    $(s,t)$-path.
    For the sake of contradiction, let us assume that there exists an
    $(s,t)$-path after deleting the arcs from $E''$. Then there exist~$m\in U$ 
and~$w\in W$
    with~$i\in [q_m]$ and 
    $j\in [q_w]$ such that this path includes the arc~$(u^m_i, u^w_j)$.
    However, as it needs to hold that~$d'_m\geq c(m,\beta^m_i(m))$ or~$d'_w\geq 
    c(w,\beta^w_j(w))$ (as otherwise $\{m, w\}$ would block $M^*$ after the 
bribery), either
    an arc from the unique path from~$s$ to~$u^m_i$ 
    or an arc from the unique path from~$u^w_j$ to~$t$ is part of~$E''$. 
    This 
    leads to a contradiction.
\end{proof}

Having seen that \exact-\ex-\swap/\accD/\reor are solvable in polynomial time, 
it is natural to ask whether these tasks remain tractable 
if we instead of specifying a full matching only specify a set of edges that 
should be made part of a stable matching. 
Note that we have seen in \Cref{se:Const-Ex-W1} that 
\const-\ex-\swap/\accD/\reor, where the goal is to make just one edge  
part of a stable matching, are NP-hard. Looking now at cases in between these 
two extremes, i.e., an arbitrary number 
$j$ of edges that should be included in some stable matching is given, it is 
straightforward to come up with an FPT-algorithm with respect to the parameter~$n-j$.
\begin{proposition}
    For a given SM instance
    $\mathcal{I}$ and partial matching~$\widetilde{M}\subseteq U\times W$ of 
    size~$j$, one can decide in $(n-j)!n^{\mathcal{O}(1)}$ time whether 
	it is possible to modify~$\mathcal{I}$ 
    using \Iswap/ \IaccD/ \Ireor actions such that~$\widetilde{M}$ is part of 
    some 
    stable 
    matching. 
\end{proposition}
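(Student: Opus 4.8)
The plan is to reduce the problem to $(n-j)!$ instances of \exact-\ex-$\mathcal{X}$, each solvable in polynomial time by \Cref{th:ex-ex-accD} (for $\mathcal{X}=\IaccD$), \Cref{th:ex-ex-reor} (for $\mathcal{X}=\Ireor$), and \Cref{th:ex-ex-swap} (for $\mathcal{X}=\Iswap$). The key observation is that a partial matching $\widetilde{M}$ of size $j$ leaves exactly $n-j$ men and $n-j$ women unassigned, and in any stable matching $M'$ of a \emph{complete} \textsc{SM} instance with $\widetilde{M}\subseteq M'$ these remaining agents are matched to each other; hence there are exactly $(n-j)!$ candidate completions $M\supseteq\widetilde{M}$, and it suffices to test, for each of them, whether $M$ can be made stable using at most $\ell$ actions of type $\mathcal{X}$. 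Concretely, I would enumerate all $(n-j)!$ perfect matchings $M$ with $\widetilde{M}\subseteq M$, invoke for each the polynomial-time algorithm behind the corresponding result to compute the minimum number of $\mathcal{X}$-actions needed to make $M$ stable, and answer \textsc{yes} iff some $M$ admits such a solution of size at most $\ell$; the running time is $(n-j)!\,n^{\mathcal{O}(1)}$.

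Correctness in the direction ``if some completion works, then $\widetilde{M}$ is part of a stable matching'' is immediate. For the converse, for $\mathcal{X}\in\{\Iswap,\Ireor\}$ the altered instance is still a complete \textsc{SM} instance, so every stable matching is perfect and thus is itself a completion of $\widetilde{M}$ of the required form, meaning nothing is lost by restricting attention to the enumerated completions.

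The one point requiring care, and the main obstacle, is $\mathcal{X}=\IaccD$: here the altered instance is an \textsc{SMI} instance whose stable matchings need not be perfect, so a priori a solution could produce a non-perfect stable matching $M'\supseteq\widetilde{M}$ not captured by any perfect completion. To handle this, I would show that from any set $D$ of at most $\ell$ acceptability deletions after which some stable $M'\supseteq\widetilde{M}$ exists, one obtains a set $D'$ with $|D'|\le|D|$ after which a \emph{perfect} completion of $\widetilde{M}$ is stable: if $m_1,\dots,m_r$ and $w_1,\dots,w_r$ are the remaining agents left unassigned by $M'$, then stability of $M'$ forces every pair $\{m_i,w_{i'}\}$ to lie in $D$ (otherwise it would block $M'$); set $M:=M'\cup\{\{m_i,w_i\}:i\in[r]\}$ and $D':=D\setminus\{\{m_i,w_i\}:i\in[r]\}$. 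A short case analysis of the potential blocking pairs of $M$ under deletions $D'$ --- pairs avoiding all $m_i,w_i$, pairs $\{m_i,w\}$ with $w\notin\{w_1,\dots,w_r\}$ and pairs $\{m,w_i\}$ with $m\notin\{m_1,\dots,m_r\}$, and pairs $\{m_i,w_{i'}\}$ with $i\neq i'$ --- shows that no blocking pair of $M$ survives, using the stability of $M'$ and the fact that an agent unassigned by $M'$ blocks $M'$ together with every acceptable partner who is not matched to a strictly preferred agent. Thus enumerating perfect completions is without loss of generality, and the reduction to \exact-\ex-$\mathcal{X}$ goes through for all three actions.
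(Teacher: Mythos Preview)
Your proposal is correct and follows the same approach as the paper: enumerate the $(n-j)!$ perfect completions of $\widetilde{M}$ and solve \exact-\ex-$\mathcal{X}$ on each. In fact you are more careful than the paper's own proof, which simply states the brute-force without justifying correctness; your additional argument for $\mathcal{X}=\IaccD$ (restoring the deleted pairs $\{m_i,w_i\}$ among the unassigned agents to obtain a perfect stable matching with no more deletions) fills a gap the paper leaves implicit, since after \IaccD operations the resulting \textsc{SMI} instance may admit non-perfect stable matchings containing~$\widetilde{M}$.
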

\begin{proof}
    Let $\mathcal{X} \in \{\Iswap, \IaccD, \Ireor\}$.
    The idea is to brute-force over all possibilities~$\widetilde{M}'$ of 
    matching the 
    remaining~$2(n-j)$ agents not included in $\widetilde{M}$ to each other. 
There are~$(n-j)!$ such 
    possibilities (fix an ordering of men and iterate over all possible 
    orderings of women and match two agents at the same position in the 
    orderings to each other). For each possibility, we employ
	the algorithm for \exact-\ex-$\mathcal{X}$ to decide whether the complete matching
    $\widetilde{M}\cup \widetilde{M}'$ can be 
    made stable using at most~$\ell$ manipulative actions of type~$\mathcal{X}$.
\end{proof}

\subsection{\Idelete and \Iadd}
In this section, we turn to the manipulative actions \Idelete and \Iadd for 
which we needed to adapt the definition of \Iexact-\Iex. Recall that in the 
context of these manipulative actions we are given a 
complete matching~$\Mst$ involving all agents in the instance and the goal 
is 
to modify the instance 
such that there exists a stable matching $M'$ with $M'\subseteq \Mst$.
First, we show that for \Iadd  the problem can be solved in linear time. 
Second, we argue that, in contrast to all other 
manipulative actions, the \Iexact-\Iex question is 
computationally hard for the manipulative action \Idelete. This is at first sight
surprising since \Idelete is the only manipulative action 
for 
which the 
\Iconst-\Iex question is solvable in polynomial time. However, it can 
be easily explained by the fact that we need to use the adapted definition 
of the \Iexact-\Iex problem here. 

We start by showing that \exact-\ex-\add is solvable in 
linear time in the input size. On an intuitive level, 
this is due to the fact that, for 
\Iadd, it is already determined by the instance which agents we have to insert
to allow for the existence of a stable matching~$M'\subseteq \Mst$, as in case 
some agent $a$ blocks the matching $M'$ in the instance consisting of agents 
$U_{\orig}\cup W_{\orig}\cup 
X_A$ for some $X_A \subseteq U_{\addag} \cup W_{\addag}$ the 
only possibility to resolve this is to add $M'(a)$ to $X_A$. Following 
this 
idea, we prove that \exact-\ex-\add is linear-time solvable. 

\begin{proposition} \label{th:ex-ex-add}
    \exact-\ex-\add can be solved in
    $\mathcal{O}(n^2)$ time.
\end{proposition}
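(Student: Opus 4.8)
The plan is to show that the agents one is forced to add are determined by a short fixed-point computation which, with incremental maintenance of the blocking-pair set, runs in $\bigO(n^2)$ time.

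Start with a normal form. For any $X_A\subseteq U_{\addag}\cup W_{\addag}$, let $P:=U_{\orig}\cup W_{\orig}\cup X_A$ be the resulting set of present agents. If some $M'\subseteq\Mst$ is stable in the instance induced by $P$, then $M'=\Mst|_P$: whenever $a\in P$ and $\Mst(a)\in P$ but $\{a,\Mst(a)\}\notin M'$, both $a$ and $\Mst(a)$ are unassigned by $M'$ (as $M'\subseteq\Mst$), so $\{a,\Mst(a)\}$ blocks~$M'$. Hence deciding whether a given $X_A$ works is a single stability check of $\Mst|_P$, computable in $\bigO(n^2)$ time, and the only real choice is which agents to make present. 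Next, a locality observation: adding one agent $b$ to $P$ changes the $\Mst$-partner of no agent other than $\Mst(b)$, hence can only remove blocking pairs incident to $\Mst(b)$ and can only create blocking pairs incident to~$b$. Consequently a blocking pair $\{m,w\}$ of $\Mst|_P$ can be destroyed only by adding $\Mst(m)$ (helpful only if $m$ weakly prefers $\Mst(m)$ to $w$) or by adding $\Mst(w)$ (symmetrically); in particular, if $m$ and $w$ are both already present and each strictly prefers the other to its $\Mst$-partner, the instance is a NO-instance.

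This yields the algorithm: set $P:=U_{\orig}\cup W_{\orig}$; while $\Mst|_P$ is unstable, pick a blocking pair and resolve it by the forced addition (using the other candidate if the first would create an unresolvable blocking pair, and reporting NO if neither candidate works); finally accept iff $|P|-|U_{\orig}\cup W_{\orig}|\le\ell$. Since $P$ only grows and $|P|\le 2n$, there are $\bigO(n)$ rounds, and since adding $b$ affects only the $\bigO(n)$ pairs incident to $b$ or $\Mst(b)$, the blocking-pair set can be maintained in $\bigO(n)$ time per round, giving $\bigO(n^2)$ overall.

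One direction of correctness is immediate: if the algorithm accepts, then by the loop guard $\Mst|_P$ is stable and at most $\ell$ agents were added. The other direction --- that the number of additions performed lower-bounds $|X_A|$ over all valid solutions --- is the crux. The locality analysis forces each addition except when the resolved blocking pair has both endpoints present and unassigned; here the key structural fact is that among the present, unassigned men $U_0$ and women $W_0$ at any stage, \emph{every} pair in $U_0\times W_0$ blocks $\Mst|_P$, so any valid completion must match all of $U_0$ or all of $W_0$ --- a ``cover one full side of a complete bipartite graph'' constraint. I expect the main obstacle to be the exchange argument handling exactly this case: showing that resolving these pairs greedily never overshoots, that the forced additions triggered in cascade stay within the budget, and (for the optimization variant) that the residual choice of which side to complete reduces to a bipartite matching computation. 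The remaining details are routine bookkeeping.
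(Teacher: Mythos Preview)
Your normal-form observation ($M'=\Mst|_P$) and the locality analysis are correct, but the greedy algorithm you describe is not, and the gap is precisely the one you yourself flag as ``the main obstacle''. When a blocking pair $\{m,w\}$ has both endpoints lonely and \emph{both} additions $\Mst(m)$ and $\Mst(w)$ would destroy it, the choice between them can change the final cost, and your one-step lookahead (``use the other candidate if the first creates an unresolvable blocking pair'') does not detect this. Concretely, take $U_{\orig}=\{m_1,m_2\}$, $W_{\orig}=\{w_1\}$, $U_{\addag}=\{m_1'\}$, $W_{\addag}=\{w_1',w_2'\}$, with $\Mst=\{\{m_1,w_1'\},\{m_2,w_2'\},\{m_1',w_1\}\}$, and let every agent rank its $\Mst$-partner first. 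Initially $\{m_1,w_1\}$ and $\{m_2,w_1\}$ both block. For $\{m_1,w_1\}$ both candidates $w_1'$ and $m_1'$ are ``helpful'' and neither creates an unresolvable pair, so your rule permits adding $w_1'$; but then $\{m_2,w_1\}$ still blocks and a second addition is forced, giving cost~$2$. The optimum is to add only $m_1'$ (cost~$1$): $w_1$ becomes matched to her top choice and both blocking pairs vanish. So your algorithm can overshoot, and no exchange argument will repair it because the greedy choices are not interchangeable. Your remark that the residual choice ``reduces to a bipartite matching computation'' is also off track: the additions cascade, so counting which side of $U_0\times W_0$ to cover is not a static matching problem.

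The paper sidesteps all of this with a single structural observation you noted but did not exploit: in any feasible $P$ there cannot be both a lonely man and a lonely woman (they would block). Hence one may assume without loss of generality that the solution has no lonely man, and run the algorithm a second time with the genders swapped. Under the ``no lonely man'' assumption every addition becomes genuinely forced: first add $\Mst(m)$ for every $m\in U_{\orig}$ whose partner is in $W_{\addag}$; then, while some lonely woman $w$ is preferred by a present man to his $\Mst$-partner, add $\Mst(w)$. No choice ever arises, so no exchange argument is needed, and the $\bigO(n^2)$ bookkeeping you describe goes through unchanged.
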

\begin{proof}
	
	\begin{algorithm}[t]
		
		\KwData{An SM instance~$\mathcal{I}$, a complete matching~$\Mst$, a
			budget~$\ell$, and two sets~$U_{\addag}$ and~$W_{\addag}$.}
		Set~$X_A:=\{w\in W_{\addag}: \exists m\in U\setminus U_{\addag} \text{ 
		with } 
		\Mst(u)=w\}$\; \label{line:xainit}
		\While{there exists a lonely woman~$w\in   W_{\orig} \cup X_A$
			and some man~$m \in  U_{\orig} \cup X_A$ with~$w\succ_m 
			\Mst(m)$}{Add~$\Mst(w)$ to 
			$X_A$\;\label{line:xa}}
		
		\eIf{$\Mst|_{U_{\orig}\cup W_{\orig}\cup X_A}$ is stable  
		and~$|X_A|\leq 
			\ell$}{\Return~$X_A$\;}{\Return False\;}
		\caption{Linear-time algorithm for 
			\exact-\ex-\add.}\label{al:exact-ex-add}
	\end{algorithm} 

	Assume we are given an instance of \exact-\ex-\add consisting of an SM 
	instance~$(U, W, \mathcal{P})$ together with two 
	subsets~$U_{\addag}
	\subseteq
	U$ and~$W_{\addag} \subseteq W$, a matching $\Mst$, and budget $\ell$.
    For a set~$X_A \subseteq U_{\addag} \cup W_{\addag}$, we call an 
agent~$a\in 
U_{\orig}\cup W_{\orig}\cup 
    X_A$  
    \textit{lonely} if~$\Mst(a)\notin
    U_{\orig}\cup W_{\orig}\cup X_A$.
    Note that for a solution~$X_A$ of agents to be added, there cannot exist 
both a
    lonely man and a lonely woman, as they otherwise would form a blocking 
    pair. We assume without loss of generality
    that the instance admits a solution~$X_A$ without a lonely man or no solution
    at all (we can do this by applying the following algorithm twice (once with the role of men and women switched) and then taking the smaller solution).
We show that \Cref{al:exact-ex-add} solves the problem in~$\mathcal{O} 
(n^2)$ time.
    As there exists no lonely man, for each man~$m\in U_{\orig}\cup X_A$ also~$\Mst(m)$ needs to 
    be
    contained in~$W_{\orig} \cup X_A$.
    Thus, every woman added to $X_A$ in Line~\ref{line:xainit} needs to be contained in every solution.
    Moreover, there cannot exist a lonely 
    woman~$w$ and a man~$m\in U_{\orig}\cup X_A$ which prefers~$w$ 
to~$\Mst(m)$, as otherwise $w$ and $m$ would form a blocking pair. This implies 
that all agents added to~$X_A$ in Line~\ref{line:xa} are 
    necessary to create a stable matching which is a subset of~$\Mst$. By 
    adding more agents to~$X_A$, it is never possible to 
    resolve any blocking pairs for~$\Mst|_{U_{\orig}\cup W_{\orig}\cup X_A}$, 
    as we have already ensured 
    that no lonely woman is part of a blocking pair. 
    Thereby, if the instance admits a solution, then~$X_A$ computed by
    \Cref{al:exact-ex-add} is a solution of minimum size. This proves the 
    correctness of the algorithm.
    
    All parts of \Cref{al:exact-ex-add} except for the \textbf{while}-loop can be clearly
    performed in $\mathcal{O} (n^2) $ time.
    To see that the \textbf{while}-loop can be executed in $\mathcal{O} (n^2)$
    time overall, we compute the set of lonely women once before entering the
    \textbf{while}-loop.
    The \textbf{while}-loop can be executed at most $n$ times since there
    are only $n$ women.
    In each execution of the \textbf{while}-loop, we update the set of critical 
lonely
    women in $\mathcal{O} (n)$ time by checking for each woman~$w' \in W_{\orig} \cup (X_A\cap W)$ 
whether the man $\Mst (w)$ added in the last execution of the 
\textbf{while}-loop prefers $w'$ to
    $\Mst (\Mst (w))$ and adding $w'$ to the set of lonely women if this is
    the case.
\end{proof}

In contrast to this, for \Idelete, our modified goal definition allows for more 
flexibility in the 
problem to encode 
computationally hard problems, as it is possible to decide which agents one 
wants to delete from the instance to resolve all initially present blocking 
pairs.
The intuitive reason why this problem is NP-hard is the following:
First, one can ensure that for each deleted agent $a$, agent~$M^*(a)$ needs to 
be deleted as well.
By ensuring 
this, selecting the $n-\ell$ agents that remain after the modifications (and 
thereby also implicitly the $\ell$ agents to be deleted) corresponds to 
finding an independent set of size~$n-\ell$ in the 
``underlying graph'', where each vertex corresponds to a pair in $\Mst$ and two 
vertices 
are 
connected if the two corresponding pairs cannot be part of the same stable 
matching. Note that in
contrast to \Cref{th:ex-ex-reor}, this graph is no longer bipartite. As 
\textsc{Independent Set} is NP-complete~\citep{Karp72}, it follows that
\exact-\ex-\delete (and, in fact, also \exact-\uni-\delete, where the 
problem is to make a given matching the unique stable matching) is also 
NP-complete. 
In the following, we present this hardness result in more detail before 
showing that 
\exact-\ex-\delete 
parameterized 
by the budget~$\ell$ is fixed-parameter tractable. 

\begin{proposition} \label{th:ex-ex-del}
    \exact-\ex/\uni-\delete is 
    NP-complete. This also holds if one is only allowed to delete pairs from 
the given 
    matching~$\Mst$.
\end{proposition}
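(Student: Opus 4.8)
The plan is to reduce from \textsc{Independent Set}, which is NP-complete~\citep{Karp72}, and to do so with a single construction that simultaneously handles the \ex and \uni variants and the ``delete pairs only'' restriction. Membership in NP is easy: a guessed deletion set $D$ can be checked in polynomial time, since — as already argued informally before the statement — whenever both endpoints of an $\Mst$-pair survive, they must be matched to each other in any $M'\subseteq\Mst$ (otherwise both are unassigned and block), so $M'$ is forced to be $\Mst$ restricted to the surviving agents; one then just tests stability, and for \uni additionally compares the man- and woman-optimal stable matchings of $\mathcal{I}\setminus D$ (both computable in polynomial time, unique stable matching iff they coincide).

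For hardness, given a graph $G=(V,E)$ and integer $k$, I would engineer an SM instance whose conflict structure on the pairs of $\Mst$ is exactly $G$. For every vertex $v$ introduce a man $m_v$ and a woman $w_v$ with $\{m_v,w_v\}\in\Mst$, and let $m_v$ rank the women $\{w_u : \{u,v\}\in E\}$ (in a fixed order) above $w_v$ and everybody else below, and symmetrically for $w_v$. Then for each edge $\{u,v\}$ the pair $\{m_u,w_v\}$ (and $\{m_v,w_u\}$) blocks $\Mst$, these are the only blocking pairs, and deleting the $\Mst$-pair of $u$ or of $v$ resolves both. To prevent ``cheap'' solutions that delete just one agent of a pair, I would attach to each vertex gadget a constant-size guard: an extra $\Mst$-pair $\{p_v,q_v\}$ with $q_v: m_v\succ p_v\succ\cdots$ and with $p_v$ placed at the very bottom of every woman's list except $q_v$'s, so that deleting $w_v$ alone leaves $m_v$ unassigned and creates the new blocking pair $\{m_v,q_v\}$, forcing $q_v$ out as well (a symmetric guard forbids deleting $m_v$ alone); thus a solution may be assumed to delete whole $\Mst$-pairs, and the same instance also witnesses hardness of the ``delete pairs only'' variant. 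Choosing the budget $\ell$ so that one can afford exactly $|V|-k$ pair deletions — but not, say, an entire side of the market — a family of deletions leaves a stable matching iff the surviving vertices form an independent set; and whenever they do, every surviving gadget has a forced matching (each $m_v$'s preferred women are exactly the deleted $w_u$), so the surviving matching is automatically the \emph{unique} stable one. Hence the resulting instance is a yes-instance of \exact-\ex-\delete (equivalently of \exact-\uni-\delete, and of the pair-restricted versions) iff $G$ has an independent set of size at least $k$.

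The step I expect to be the main obstacle is precisely the guard argument and the budget bookkeeping: one must verify that no combination of single-agent deletions — whose effects cascade into neighbouring gadgets, since an unassigned $m_v$ would block with every $w_u$ for $u\sim v$ — can neutralise vertices more cheaply than deleting whole pairs within the chosen budget, and that the budget is at once large enough for the intended solution and too small for degenerate ones (for instance deleting all men, which would make the empty matching vacuously stable). Confirming that every edge of $G$ really forces an unavoidable blocking pair among surviving pairs, and that the guard agents introduce no unintended blocking pairs inside or between gadgets, is routine but must be carried out with some care. (Unlike the bipartite structure exploited in \Cref{th:ex-ex-reor}, the conflict graph here is an arbitrary graph, which is exactly why the problem jumps from polynomial-time solvable to NP-hard.)
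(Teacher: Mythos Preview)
Your proposal is correct and follows essentially the same reduction from \textsc{Independent Set} as the paper: vertex pairs $\{m_v,w_v\}\in\Mst$ whose mutual blocking encodes the edges of $G$, together with auxiliary $\Mst$-pairs that penalise deleting only one agent of a vertex pair. The one substantive difference is in the guard mechanism. The paper attaches $2|V|$ dummy pairs $\{\widetilde m_v^i,\widetilde w_v^i\}$ to each vertex (with $\widetilde m_v^i$ preferring $w_v$ and $\widetilde w_v^i$ preferring $m_v$ to their partners); since the budget $\ell=2(|V|-k)$ is strictly smaller than the number of dummies at any single vertex, one simply cannot delete all of them, so a surviving lone $m_v$ or $w_v$ is immediately caught by some surviving dummy. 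This sidesteps exactly the cascade bookkeeping you flag as your main obstacle: no charging argument is needed, and there is no worry about degenerate solutions that delete an entire side. Your constant-size guards also work, but you do need the charging argument you hint at (each $v$ with $m_v$ or $w_v$ deleted contributes at least two agents from its own gadget to the deletion set, since the surviving lone vertex agent forces the corresponding guard out), and you must check that the resulting unassigned guard partners cause no new blocking pairs --- which is fine given your placement of $p_v$ at the bottom of all other women's lists. Both constructions yield uniqueness of the surviving matching for free, and both handle the ``pairs only'' restriction in the same way, so the two proofs are really the same argument with different-sized hammers for the one delicate step.
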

\begin{proof}
    For \exact-\ex-\delete, membership in NP is obvious, as it is possible to 
determine in polynomial time whether a matching is stable. Further, for 
\exact-\uni-\delete, membership in NP follows from the fact that it is 
possible to determine in polynomial 
    time whether a stable matching is unique, e.g., by running the Gale-Shapley 
    algorithm to compute a stable matching $M$ and afterwards the Gale-Shapley 
    algorithm with roles of women and
    men swapped to compute a stable matching $M'$ and checking whether $M$ 
and~$M'$ are identical; then and only then $M$ is the unique stable matching.
    
    We show the NP-hardness of \exact-\ex/\uni-\delete  by a 
reduction from the NP-complete 
    ~\textsc{Independent Set} problem \citep{Karp72}.
    Given an undirected graph~$G$ and an integer $k$, \textsc{Independent Set} 
    asks whether there
    are $k$~pairwise non-adjacent vertices in~$G$. 
    Given an \textsc{Independent Set} instance~$G=(V,E)$ and integer~$k$, we 
    denote by~$u_v^1,\dots u_v^{d_v}$
    the list of all neighbors of a vertex~$v\in V$. The general 
    idea of the reduction is to introduce for each vertex $v\in V$ a man-woman 
pair who 
    are matched to each other in the given matching $\Mst$ and a penalizing 
    gadget that ensures that if one of the two agents from this pair is 
deleted, then the 
    other one 
    needs to be deleted as well.
    We construct the preferences of the agents in such a way that for every 
    edge~$\{v, v'\} \in E$, the agents corresponding to~$v$ prefer the agent 
    corresponding to~$v'$ of opposite gender to its partner in~$\Mst$.
    Thus, two pairs from $\Mst$ can be part of the same stable matching
    if and only if they are non-adjacent in the given graph.
    Hence, finding a solution to the manipulation problem of size~$\ell$
    corresponds to finding an independent set of size $|V| -\ell$. 
    
    Formally, the construction of the 
    corresponding \exact-\ex-\delete instance works as follows. In the SM
    instance~$\mathcal{I}$, we introduce for each~$v\in V$ a gadget consisting 
    of one vertex man
    $m_v$, one vertex woman~$w_v$, and dummy men and women~$\widetilde{m}^i_v$ 
    and
    $\widetilde{w}^i_v$ for~$i\in [2|V|]$. For all~$v\in V$, the vertex man $m_v$ 
    and the vertex
    woman $w_v$ have the preferences 
    \begin{align*}
    m_v&\colon w_{u_v^{1}}\succ \dots \succ w_{u_v^{d_v}} \succ w_v \succ
    \widetilde{w}_{v}^1 \succ \dots \succ \widetilde{w}_{v}^{2|V|} \pend,  \\
    w_v &\colon m_{u_v^{1}}\succ \dots \succ m_{u_v^{d_v}} \succ
    m_v\succ 
    \widetilde{m}_{v}^1 \succ \dots \succ \widetilde{m}_{v}^{2|V|} \pend
    \end{align*}
    
    \noindent and the dummy men and women~$\widetilde{m}^i_v$ and 
    ~$\widetilde{w}^i_v$ 
    for~$i\in 
    [2|V|]$ have the preferences 
    \begin{align*}
        \widetilde{m}_v^i&\colon w_v \succ \widetilde{w}_v^i \pend, 
        & \widetilde{w}_v^i&\colon m_v \succ \widetilde{m}_v^i \pend.
    \end{align*}
    We set~$\Mst:=\{\{m_v,w_v\}: v\in V\} \cup 
    \{\{\widetilde{m}^i_v,\widetilde{w}^i_v\}: v\in V, i\in [2|V|]\}$ and 
    $\ell:=2(|V|-k)$.
    We now prove the correctness of our construction.
    
    \smallskip
    ($\Rightarrow$) Let~$V'\subseteq V$ be an independent set of size $k$ 
    in~$G$. We claim that deleting the agent set~$A=\{\{m_v,w_v\}: v\in 
    V\setminus 
    V' 
    \}$, which is of size~$2(|V|-k) = \ell$, is a solution to the constructed 
    \exact-\ex-\delete  
    instance, i.e.,~$M'=\Mst\setminus \{\{m_v,w_v\}: v\in V\setminus 
    V' 
    \}$ is a stable matching in the resulting instance. For the sake 
    contradiction, assume that there exists a blocking pair for~$M'$ in 
    $\mathcal{I}\setminus A$. However, no vertex man or vertex woman can be part
    of 
    such a blocking pair, as they are all matched to their top-choice among the 
    remaining agents. Every dummy agent is matched to the best non-vertex agent and thus does not form a blocking pair 
    for~$M'$. Thus, $M'$ is stable (in fact, $M'$ is even the unique stable 
matching in  
    $\mathcal{I}\setminus A$). 
    
    ($\Leftarrow$) Assume that there exists a subset $A'\subseteq 
    U\cup W$ of agents of size at 
most~$\ell =2(n-k)$ such that some matching~$M'\subseteq 
    \Mst$ 
    is stable in~$\mathcal{I}\setminus A'$. First of all note that it is
    never possible to delete for some~$v\in V$ all corresponding dummy men or 
    to delete all dummy women, as the number of 
    both dummy men and dummy women for each 
    vertex exceeds the given budget. From this it follows that if~$A'$ 
    contains~$m_v$ for some~$v\in V$, 
    then it also has to contain~$w_v$. The reason for this is that 
    otherwise~$w_v$ 
    together with some non-deleted dummy agent~$\widetilde{m}_v^j$ forms a blocking pair for
    $M'$. Similarly, if~$A'$ contains~$w_v$ for some~$v\in V$, then it also
    contains~$m_v$. We now claim that~$V':=\{v\in V : m_v\notin A' \}$
    forms an independent set of size at least~$k$. First of all note that~$V'$
    has size at most~$k$, as there exist~$n$ vertices,~$|A'|\leq 2(|V|-k) = 
\ell$, 
    and~$m_v\in A'$ implies~$w_v\in A'$.
    For all~$v,v' \in V'$ we have that $m_v$ and $w_{v'}$ are still present
    in $\mathcal{I}\setminus A'$ by the definition of $V'$.
    Thus, $\{v, v'\} \not \in E$ for all $v,v'\in V'$, as otherwise
    $\{m_v,w_{v'}\}$ forms a blocking pair for~$M'$ in $\mathcal{I}\setminus 
A'$. 
\end{proof}
In contrast to the W[1]-hardness results for the other manipulative actions for 
\Iconst-\Iex, 
\exact-\ex-\delete parameterized by~$\ell$ is 
fixed-parameter tractable. 
The algorithm is based on a simple search tree. 
We pick a blocking pair and branch over which endpoint of the blocking pair gets deleted.
After deleting the selected endpoint, we recompute the set of blocking pairs
and decrease $\ell $ by 1 (see \Cref{al:exact-ex-del}).

\begin{algorithm}[t]
    
    \KwData{An SM instance~$\mathcal{I}$, a complete matching~$\Mst$, and
    a budget~$\ell$.}
    Set~$A'=\emptyset$ and let~$P$ be the set of blocking pairs for~$\Mst$ in 
    $\mathcal{I}$\;
    \While{there exists a pair in~$P$}{
    \If{$\ell\leq0$}{\Return False\;}
    Pick a pair~$\{w, m\} \in P$
    and branch over its endpoints $v = m$ or $v= w$\;
    Set~$A' :=A'\cup \{v\}$ and~$\ell :=\ell-1$\; 
    Set~$P$ to be the set of blocking pairs for
    $\Mst \setminus \{e : e\cap A' \neq \emptyset\}$ in $\mathcal{I}\setminus A'$\;
    }
\Return~$A'$\;
\caption{FPT algorithm wrt.~$\ell$ for 
\exact-\ex-\delete.}\label{al:exact-ex-del}
\end{algorithm}
    
\begin{proposition} \label{th:ex-ex-delFPT}
    \exact-\ex-\delete can be solved in
    $\mathcal{O}(n^2 2^\ell)$ time.
\end{proposition}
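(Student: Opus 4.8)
The plan is to establish the correctness and running time of \Cref{al:exact-ex-del}. The starting point is that a deletion is entirely determined by which agents it removes: given a set $A'\subseteq U\cup W$, write $M'_{A'} := \Mst\setminus\{e: e\cap A'\neq\emptyset\}$ for the restriction of $\Mst$ to the surviving agents. I would first argue that $A'$ is a solution (i.e.\ some $M'\subseteq\Mst$ is stable in $\mathcal{I}\setminus A'$) if and only if $M'_{A'}$ itself is stable in $\mathcal{I}\setminus A'$ and $|A'|\le\ell$: if $M'\subseteq\Mst$ is stable in $\mathcal{I}\setminus A'$ and some edge $\{m,w\}\in\Mst$ with $m,w\notin A'$ were missing from $M'$, then $m$ and $w$ are both unassigned by $M'$ and — since all preference lists are complete over the opposite gender — $\{m,w\}$ blocks $M'$; hence $M'=M'_{A'}$. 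So the algorithm's test "$M'_{A'}$ has no blocking pair" is exactly the right one, and in particular a surviving "lonely" man and a surviving "lonely" woman (whose $\Mst$-partners were deleted) automatically appear as a blocking pair of $M'_{A'}$, so no special case is needed for them.

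Next I would prove soundness and completeness of the branching. Soundness is immediate: the algorithm outputs $A'$ only once no pair blocks $M'_{A'}$, and the budget is never allowed to drop below $0$, so $M'_{A'}$ is stable and $|A'|\le\ell$. For completeness I would prove, by induction on the remaining budget, the invariant: at a search-tree node with deletion set $A'$ (remaining budget $\ell-|A'|\ge 0$), if there is a solution $B^*$ with $A'\subseteq B^*$ and $|B^*|\le\ell$, then the subtree rooted there outputs a valid solution. If the current set $P$ of blocking pairs of $M'_{A'}$ is empty, then $A'$ is already a solution and is returned. If $P\neq\emptyset$ and the remaining budget is $0$, then $B^*=A'$, contradicting that $M'_{A'}$ has a blocking pair, so the hypothesis is vacuous. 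Otherwise pick $\{m,w\}\in P$; the key claim is that $m\in B^*$ or $w\in B^*$. Indeed, as $A'\subseteq B^*$, going from $A'$ to $B^*$ no agent gains a partner and no agent's partner improves, so if both $m$ and $w$ survived $B^*$, then in $M'_{B^*}$ agent $m$ is still unassigned or still prefers $w$ to $M'_{B^*}(m)$, and symmetrically for $w$; thus $\{m,w\}$ would still block $M'_{B^*}$, contradicting stability. Hence one of the two branches $v=m$, $v=w$ taken by the algorithm produces $A'\cup\{v\}\subseteq B^*$ with remaining budget $\ge 0$, and the induction hypothesis applies.

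For the running time, every recursive call strictly decreases the remaining budget and branching stops at $0$, so the search tree has depth at most $\ell$ and, having branching factor $2$, at most $2^{\ell+1}$ nodes. At each node we recompute the blocking pairs of $M'_{A'}$ in $\mathcal{I}\setminus A'$; after precomputing rank tables once, this takes $\mathcal{O}(n^2)$ time (each of the $\mathcal{O}(n^2)$ man-woman pairs is checked in $\mathcal{O}(1)$), and all remaining bookkeeping per node is within $\mathcal{O}(n^2)$. This yields the claimed $\mathcal{O}(n^2 2^\ell)$ bound.

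I expect the main subtlety to be the "key claim" in the completeness argument, because deleting agents can create \emph{new} blocking pairs that did not block the original $\Mst$ (e.g.\ after deleting $\Mst(m)$, agent $m$ becomes unassigned and may block with previously uninvolved agents). The point to nail down is therefore the monotonicity statement that, along a growing set of deletions, any pair that blocks the \emph{current} matching continues to block unless one of its two endpoints is deleted — which is precisely what forces every solution extending the current partial deletion to hit the pair the algorithm branches on.
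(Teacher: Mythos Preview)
Your proposal is correct and follows essentially the same approach as the paper: both analyze \Cref{al:exact-ex-del}, both rely on the key claim that for any blocking pair of the current restricted matching at least one endpoint must be deleted by every solution, and both bound the running time via a depth-$\ell$ binary search tree with $\mathcal{O}(n^2)$ work per node. Your write-up is considerably more detailed---you make explicit that the only candidate matching is $M'_{A'}$ and you spell out the monotonicity argument behind the key claim---whereas the paper's proof is a two-line sketch, but the underlying argument is the same.
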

\begin{proof}
 We claim that \Cref{al:exact-ex-del} solves \exact-\ex-\delete in the stated 
 running time.
 The correctness follows directly from the fact that for each blocking pair
 one of its endpoints needs to be deleted.
 The running time follows from the fact that the set of blocking pairs can be 
 determined in $O(n^2)$ time and the search tree has depth~$\ell$ and branches 
 into two children at each node.
\end{proof}

\section{\Iexact-\Iuni} \label{ex-uni}
In this section, we turn from the task of making a given matching stable to the task of 
making the given matching the unique stable matching. We show that 
this change makes the considered computational problems significantly more 
demanding in the 
sense that the \Iexact-\Iuni problem is W[2]-hard with respect to~$\ell$ for  
\Ireor and \Iadd and NP-complete for \Iswap. In contrast, the problem for \IaccD is 
solvable in 
polynomial time. Recall that we have already proven in \Cref{th:ex-ex-del} 
that \Iexact-\Iuni is NP-complete for \Idelete.

\subsection{Hardness Results} \label{ex-uni-hard}

Both the W[2]-hardness result for the manipulative action \Ireor and the 
NP-completeness for \Iswap 
follow from the same parameterized reduction from the NP-complete and 
W[2]-complete  
\textsc{Hitting Set} 
problem parameterized by solution size \citep{DBLP:series/mcs/DowneyF99} with 
small modifications. In 
an instance of \textsc{Hitting Set}, we are given a universe~$Z$, a 
family~$\mathcal{F}=\{F_1,\dots, F_p\}$ of subsets of~$Z$, and an 
integer~$k$, and the task is to decide whether there exists a hitting set 
of size at most~$k$, i.e., a set~$X\subseteq Z$ with 
~$|X| \le k$ and~$X\cap F \neq \emptyset~$ for all~$F\in 
\mathcal{F}$. The general 
idea of the construction is as follows:
For each set 
$F\in \mathcal{F}$, we add a \emph{set gadget} consisting of two men and two women, and, 
for 
each element $z\in Z$, we add an \emph{element gadget} consisting of a man-woman pair. We 
connect all set gadgets to the element gadgets corresponding to the elements in 
the set. The preferences are constructed in a way such that in each set gadget 
where none of the element gadgets connected to it is manipulated, the two women can 
switch their partners and the resulting matching is still stable.
In contrast, when an element gadget connected to the set gadget is manipulated, then 
this switch creates a blocking pair and every stable matching contains the same edges in this gadget. 
Thereby, the given matching~$\Mst$  is 
the unique stable matching in the altered instance if and only if the 
manipulated element-gadgets form a hitting set. 
Note that in the 
following reduction, rather unintuitively, we manipulate agents to rank their 
partner in~$\Mst$ worse to make~$\Mst$ the unique stable matching.
\begin{theorem} \label{th:ex-uni-reor}
    \exact-\uni-\reor parameterized by~$\ell$ is W[2]-hard, and this also holds if the given 
    matching $\Mst$ is already stable in the original instance and one is only 
    allowed to modify the preferences of agents of one gender.
\end{theorem}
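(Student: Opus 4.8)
The plan is to give a polynomial-time parameterized reduction from \textsc{Hitting Set} --- universe $Z$, family $\mathcal{F}=\{F_1,\dots,F_p\}$ of non-empty subsets of $Z$, parameter $k$ --- to \exact-\uni-\reor with $\ell:=k$. For each $z\in Z$ I add an \emph{element pair} $m_z,w_z$ with $\{m_z,w_z\}\in\Mst$, ranked so that $w_z$ is $m_z$'s top choice. For each $F_j$ I add a \emph{set gadget} with men $a_j,b_j$ and women $c_j,d_j$, $\{a_j,c_j\},\{b_j,d_j\}\in\Mst$, whose preferences are those of the classical minimal two-stable-matchings example, except that the element women of $F_j$ are inserted into $a_j$'s list between $c_j$ and $d_j$, and $a_j$ is inserted into each such $w_z$'s list just below $m_z$:
\begin{align*}
a_j&: c_j \succ (w_z)_{z\in F_j} \succ d_j \pend, & c_j&: b_j \succ a_j \pend,\\
b_j&: d_j \succ c_j \pend, & d_j&: a_j \succ b_j \pend,\\
m_z&: w_z \pend, & w_z&: m_z \succ (a_j)_{j\colon z\in F_j} \pend,
\end{align*}
all lists completed arbitrarily but with every agent's $\Mst$-partner left at rank $1$ or $2$. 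The numbers of men and women agree; every man holds his top choice in $\Mst$, so $\Mst$ is the man-optimal matching and, in particular, is stable in the original instance. Performing inside $F_j$ the switch $a_j\mapsto d_j$, $b_j\mapsto c_j$ gives a second stable matching; this switch is a rotation $\rho_j$ of $\Mst$, and $\rho_1,\dots,\rho_p$ are pairwise incomparable and form the whole rotation poset, so the original instance has exactly $2^p$ stable matchings, one per subset of performed switches.

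The crux --- and the deliberately counterintuitive point --- is the effect of \emph{demoting} $\Mst(w_z)=m_z$: one reorder of $w_z$ that moves $m_z$ below every $a_j$ with $z\in F_j$ keeps $\Mst$ stable (each such $a_j$ still prefers $c_j$ to $w_z$, so no new pair blocks $\Mst$), but turns $\{a_j,w_z\}$ into a blocking pair of every stable matching that performs the switch $\rho_j$ (there $a_j$ is matched to $d_j$, which he likes less than $w_z$, and $w_z$ now prefers $a_j$ to $m_z$). So a single reorder of $w_z$ destroys all stable matchings using the switch of any $F_j\ni z$. Conversely, I would verify that destroying $\rho_j$ while keeping $\Mst$ stable requires reordering $c_j$, $d_j$, or some $w_z$ with $z\in F_j$ (and that reordering a man never destroys a switch, so the restriction to women costs nothing), and that no reorder ever creates a stable matching outside the $2^p$-element lattice. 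Hence, from a successful set $R$ of reorders one reads off a hitting set of size $\le|R|$ (take all $z$ with $w_z\in R$, plus one element of each $F_j$ whose switch was destroyed only via $c_j$ or $d_j$), while a hitting set $X$ yields $|X|$ good reorders (demote $m_z$ for $z\in X$); thus $\Mst$ can be made the unique stable matching with $\le\ell=k$ reorders iff $\mathcal{F}$ has a hitting set of size $\le k$. Since only reorders of women are ever used and $\Mst$ is already stable in the original instance, the reduction gives both strengthenings, and W[2]-hardness (hence NP-hardness) follows because \textsc{Hitting Set} parameterized by solution size is W[2]-hard.

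I expect the bulk of the work to be the blocking-pair bookkeeping behind three verification points: (i) $\Mst$ is, and stays, stable under any allowed reorder; (ii) in the original instance the stable matchings are exactly the $2^p$ switch-combinations, which needs that performing $\rho_j$ exposes no further rotation (true because afterwards $c_j,d_j$ hold their top choices and $a_j,b_j$ cannot improve, each woman they would prefer being held by her own top choice); and (iii) the only ways to destroy a switch are the three cheap ones above --- so the optimum number of reorders equals the minimum hitting-set size, both with and without the one-gender restriction --- and no reorder enlarges the set of stable matchings. Each step is routine but somewhat delicate; the most error-prone part is (iii), i.e.\ ruling out that a clever reorder (of the gender that may be reordered) destroys several switches at once with fewer than a hitting set's worth of reorders.
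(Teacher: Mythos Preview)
Your construction is exactly the paper's (up to renaming $a_j,b_j,c_j,d_j\leftrightarrow m_F^1,m_F^2,w_F^1,w_F^2$), and your forward direction---demoting $m_z$ below the relevant $a_j$'s in $w_z$'s list---is precisely what the paper does. For the backward direction the paper takes a slightly simpler route than your plan: rather than arguing which reorders can or cannot destroy a switch, it maps \emph{every} reordered agent (including $a_j,b_j,m_z$) to an element of $Z$ (an arbitrary one of $F_j$ for gadget agents, and $z$ for $m_z,w_z$), then checks directly that for any unhit set $F$ the switched matching at $F$ is still stable; this sidesteps your verification (iii) and the claim that reordering a man never destroys a switch.
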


\begin{proof}
    We give a parameterized reduction from \textsc{Hitting Set}, 
    which is known 
    to be W[2]-complete 
    parameterized by the solution size~$k$ \citep{DBLP:series/mcs/DowneyF99}.
    Given a \textsc{Hitting Set} instance $((Z,\mathcal{F}=\{F_1,\dots , 
F_q\}),k)$, for each element~$z\in Z$, we add a man~$m_z$ and a woman~$w_z$, 
which are 
    the top-choices of each other (the preferences of both $m_z$ and $w_z$ are
    extended arbitrarily to include all agents of opposite sex).
    For each set~$F = \{z_1, \dots, z_q\}\in \mathcal{F}$, we add two 
    men~$m_F^1$ 
    and~$m_F^2$ and two women~$w_F^1$ and~$w_F^2$ with the following 
    preferences:
    \begin{align*}
    m_F^1 &: w_F^1 \succ w_{z_1} \succ w_{z_2} \succ \dots \succ w_{z_q} \succ 
    w_F^2 \pend, &m_F^2 & : w_F^2 \succ w_F^1 \pend,\\
    w_F^1 & : m_F^2 \succ m_F^1 \pend, &w_F^2 & : m_F^1 \succ m_F^2 \pend.	
    \end{align*}
    We set~$\Mst:= \{\{m_z, w_z\} : z\in Z\} \cup \{\{m_F^1, w_F^1\}, \{m_F^2, 
    w_F^2\}: 
    F\in \mathcal{F}\}$ to be the man-optimal matching, and~$\ell:= 
    k$ (see 
    \Cref{fig:ex-uni-reor} for a visualization). 
    
    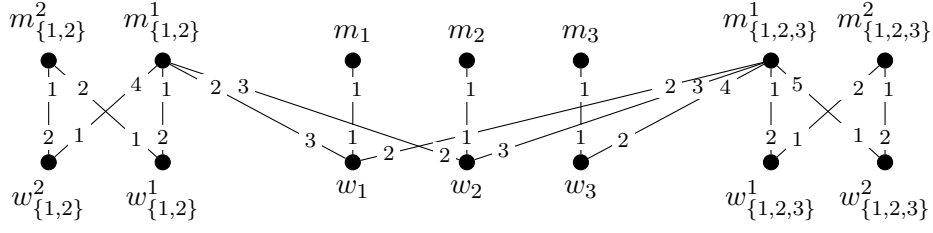
\begin{figure}[t]
    \begin{center}
        \begin{tikzpicture}[xscale =2 , yscale = 0.9]
        \node[vertex, label=-90:$w_1$] (w1) at (0, 0) {};
        \node[vertex, label=90:$m_1$] (m1) at (0, 1.5) {};
        
        \node[vertex, label=-90:$w_2$] (w2) at (0.75, 0) {};
        \node[vertex, label=90:$m_2$] (m2) at (0.75, 1.5) {};
        
        \node[vertex, label=-90:$w_3$] (w3) at (1.5, 0) {};
        \node[vertex, label=90:$m_3$] (m3) at (1.5, 1.5) {};
        \node[vertex, label=-90:$w_{\{1,2,3\}}^1$] (wF1) at (2.75, 0) {};
        \node[vertex, label=90:$m_{\{1,2,3\}}^1$] (mF1) at (2.75, 1.5) {};
        \node[vertex, label=-90:$w_{\{1,2,3\}}^2$] (wF2) at (3.5, 0) {};
        \node[vertex, label=90:$m_{\{1,2,3\}}^2$] (mF2) at (3.5, 1.5) {};
        
        \node[vertex, label=-90:$w_{\{1,2\}}^1$] (wF1b) at (-1.25, 0) {};
        \node[vertex, label=90:$m_{\{1,2\}}^1$] (mF1b) at (-1.25, 1.5) {};
        \node[vertex, label=-90:$w_{\{1,2\}}^2$] (wF2b) at (-2, 0) {};
        \node[vertex, label=90:$m_{\{1,2\}}^2$] (mF2b) at (-2, 1.5) {};
        
        \draw (w1) edge node[pos=0.2, fill=white, inner sep=2pt] {\scriptsize 
            $1$}  node[pos=0.76, fill=white, inner sep=2pt] {\scriptsize~$1$} 
            (m1);
        \draw (w3) edge node[pos=0.2, fill=white, inner sep=2pt] {\scriptsize 
            $1$}  node[pos=0.76, fill=white, inner sep=2pt] {\scriptsize~$1$} 
            (m3);
        
        \draw (wF1) edge node[pos=0.2, fill=white, inner sep=2pt] {\scriptsize 
            $2$}  node[pos=0.76, fill=white, inner sep=2pt] {\scriptsize~$1$} 
            (mF1);
        \draw (wF2) edge node[pos=0.2, fill=white, inner sep=2pt] {\scriptsize 
            $2$}  node[pos=0.76, fill=white, inner sep=2pt] 
            {\scriptsize~$1$}(mF2);
        \draw (wF1) edge node[pos=0.2, fill=white, inner sep=2pt] {\scriptsize 
            $1$}  node[pos=0.76, fill=white, inner sep=2pt] 
            {\scriptsize~$2$}(mF2);
        \draw (mF1) edge node[pos=0.2, fill=white, inner sep=2pt] {\scriptsize 
            $5$}  node[pos=0.76, fill=white, inner sep=2pt] 
            {\scriptsize~$1$}(wF2);
        
        \draw (w3) edge node[pos=0.2, fill=white, inner sep=2pt] {\scriptsize 
            $2$}  node[pos=0.76, fill=white, inner sep=2pt] {\scriptsize~$4$} 
        (mF1);	
        \draw (w2) edge node[pos=0.1, fill=white, inner sep=2pt] {\scriptsize
            $3$}  node[pos=0.76, fill=white, inner sep=2pt] {\scriptsize~$3$} 
        (mF1);
        \draw (w1) edge node[pos=0.07, fill=white, inner sep=2pt] {\scriptsize 
            $2$}  node[pos=0.76, fill=white, inner sep=2pt] {\scriptsize~$2$} 
        (mF1);
        
        \draw (wF1b) edge node[pos=0.2, fill=white, inner sep=2pt] {\scriptsize 
            $2$}  node[pos=0.76, fill=white, inner sep=2pt] {\scriptsize~$1$} 
        (mF1b);
        \draw (wF2b) edge node[pos=0.2, fill=white, inner sep=2pt] {\scriptsize 
            $2$}  node[pos=0.76, fill=white, inner sep=2pt] 
            {\scriptsize~$1$}(mF2b);
        \draw (wF1b) edge node[pos=0.2, fill=white, inner sep=2pt] {\scriptsize 
            $1$}  node[pos=0.76, fill=white, inner sep=2pt] 
            {\scriptsize~$2$}(mF2b);
        \draw (mF1b) edge node[pos=0.2, fill=white, inner sep=2pt] {\scriptsize 
            $4$}  node[pos=0.76, fill=white, inner sep=2pt] {\scriptsize 
            $1$}(wF2b);		
        
        \draw (w2) edge node[pos=0.05, fill=white, inner sep=2pt] {\scriptsize
            $2$}  node[pos=0.76, fill=white, inner sep=2pt] {\scriptsize~$3$} 
        (mF1b);
        \draw (w1) edge node[pos=0.2, fill=white, inner sep=2pt] {\scriptsize 
            $3$}  node[pos=0.76, fill=white, inner sep=2pt] {\scriptsize~$2$} 
        (mF1b);

        \draw (w2) edge node[pos=0.2, fill=white, inner sep=2pt] {\scriptsize
            $1$}  node[pos=0.76, fill=white, inner sep=2pt] {\scriptsize~$1$}
            (m2);
        \end{tikzpicture}
    \end{center}
    \caption{Example of the hardness reduction from
        \Cref{th:ex-uni-reor} for the \textsc{Hitting Set} 
        instance~$Z=\{1,2,3\}$,
        $\mathcal{F}=\{\{1,2,3\},\{1,2\} \}$.}
    \label{fig:ex-uni-reor}
\end{figure}
        
    We now show that the given \textsc{Hitting Set} instance admits a solution 
    of 
    size~$k$ if and only if it is possible to make~$M^*$ the unique stable
    matching in the constructed SM instance by reordering the preferences of at 
    most $\ell$ agents. 
    
    ($\Rightarrow$)
    Let~$X \subseteq Z$ be a hitting set.
    For $z\in X$, we modify the preferences of~$w_{z}$ to be the following:
    ~$w_{z} : m_{F_1}^1 \succ m_{F_2}^1\succ \dots \succ m_{F_q}^1 \succ 
    m_{z} 
    \pend$.
    Matching~$\Mst$ is still a stable matching, as each man is matched to his 
    top-choice.
    To show that~$\Mst$ is the unique stable matching, we utilize that if a 
    second stable matching~$M'$ exists, then the union~$\Mst \cup M'$ needs 
    to 
    contain at least one cycle which consists of alternating edges from the two 
    matchings. Moreover, as $\Mst$ is man-optimal (as every man is matched to 
his top-choice), each woman contained in the 
    cycle needs to prefer the man matched to her in $M'$ to the man matched to
    her in $\Mst$. We now argue that such a cycle cannot exist and thereby that 
    $\Mst$ 
    is the unique stable matching. 
    First of all, note that this cycle cannot contain an agent~$m_z$ (and thus neither~$w_z$) for some~$z\in Z$, as no woman 
    prefers such 
    a man to her partner in~$\Mst$.
    Thus, $M'$ contains $\{m_z, w_z\}$ for every~$z\in Z$.
    Next, we show that the cycle does not contain agent~$m_F^1$ for all $F\in 
    \mathcal{F}$.  
    Because $X$ contains at least one~$z_F\in F$, 
    the preferences of at least one woman $w_{z_F}$ with $z_F\in F$ have been 
    reordered 
    such that $w_{z_F}$ now prefers $m_F^1$ to $m_{z_F}$. As no woman $w_{z}$ for $z\in 
Z$ can be part of a cycle in $\Mst \cup M'$, a cycle in~$\Mst \cup M'$ 
containing 
    $m_F^1$ would imply that~$M'$ matches~$m_F^1$ to some 
    woman from a set-gadget that is not his top-choice. However, 
then 
    $M'$ is blocked by the pair~$\{m_F^1,  w_z\}$, a contradiction.
    Hence, the cycle cannot contain an agent~$m_F^1$.
    Therefore, the cycle contains an agent~$m_F^2$.
    Since~$m_F^1$ and 
    therefore also~$\Mst(m_F^1 ) = w_F^1$ are not contained in the 
cycle, this implies that $M'$ matches $m_F^2$ to a woman to which he 
prefers $w_F^1$. Thus, $M'$ is blocked by the 
    pair~$\{m_F^2, w_F^1\}$ in this case, a contradiction. As we have exhausted all cases, it 
follows that $\Mst$ is the unique stable matching after the bribery.
    
    ($\Leftarrow$)
    Let~$S$ be the set of at most $\ell$ agents such that modifying their preferences
    can make~$\Mst$ the unique stable matching. From 
    this, 
    we construct a solution $X$ to the given \textsc{Hitting Set} problem as 
    follows.
    For each agent~$m_z$ or~$w_z$ contained in~$S$, we add the element~$z$ 
    to~$X$, 
    and for each agent~$m_F^i$ or~$w_F^i$ contained in~$S$, we add an arbitrary 
    element~${z\in F}$ to~$X$.
    Clearly,~$|X| \le |S| \le \ell = k$, so it remains to show that~$X$ is a hitting 
    set.
    
    Assume that~$F\in \mathcal{F}$ does not intersect~$X$.
    Then the preferences of~$m_F^i$,~$w_F^i$, and all~$w_z$ for~$z\in F$ 
    are 
    unchanged after the bribery.
    We claim that~$M' := \bigr( \Mst \setminus \{\{m_F^1, w_F^1\}, \{m_F^2, 
    w_F^2\}\}\bigr) \cup \{\{m_F^1, w_F^2\}, \{m_F^2, w_F^1\}\}$ is then a 
stable 
    matching, 
    contradicting the fact that~$\Mst$ is the unique stable matching.
    
    As $\Mst$ is stable and $\Mst$ and $M'$ only differ in $m_F^1$, $w_F^2$, 
    $m_F^2$, and 
    $w_F^1$, any blocking pair for~$M'$ must contain an agent~$m_F^i$ 
    or~$w_F^i$ 
    for some 
    ~$i \in \{1, 2\}$.
    Agents~$w_F^1$  and $m_F^2$ are matched to their top-choice and thus are 
not part of 
    a 
    blocking pair. The only agent which~$m_F^2$ prefers to~$w_F^1$ is 
    $w_F^2$. However, $w_F^2$ is matched to her top-choice in~$M'$. Similarly, 
as after the bribery 
    all agents which~$m_F^1$ prefers to~$w_F^2$ are matched to 
    their 
    top-choices, and thus do not participate in a blocking pair. Thus, no 
    blocking pair for $M'$ exists.
\end{proof}

We now adapt the parameterized reduction presented in the proof of 
\Cref{th:ex-uni-reor} to show that the \Iexact-\Iuni problem parameterized 
by~$\ell$ is also W[2]-hard for the manipulative action \Iadd.
\begin{proposition} \label{th:ex-uni-add}
    \exact-\uni-\add parameterized by~$\ell$ is W[2]-hard, and this also holds if the given 
    matching $\Mst$ is already stable in the instance restricted to $U_{\orig} \cup W_{\orig}$ and one is only 
    allowed to add agents of one gender.
\end{proposition}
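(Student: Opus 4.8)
The plan is to adapt the parameterized reduction from \textsc{Hitting Set} used in the proof of \Cref{th:ex-uni-reor}, replacing the \emph{reordering} of the element women $w_z$ by making these women \emph{addable}. Starting from a \textsc{Hitting Set} instance $((Z,\mathcal{F}=\{F_1,\dots,F_q\}),k)$, I keep the set gadgets verbatim: for each $F=\{z_1,\dots,z_p\}\in\mathcal{F}$ introduce original men $m_F^1,m_F^2$ and original women $w_F^1,w_F^2$ with $m_F^1: w_F^1\succ w_{z_1}\succ\dots\succ w_{z_p}\succ w_F^2\pend$, $m_F^2: w_F^2\succ w_F^1\pend$, $w_F^1: m_F^2\succ m_F^1\pend$, and $w_F^2: m_F^1\succ m_F^2\pend$. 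For each element $z\in Z$ the element gadget consists of an original man $m_z$ with $m_z: w_z\pend$ and a woman $w_z$ with $w_z: m_{F_{i_1}}^1\succ\dots\succ m_{F_{i_r}}^1\succ m_z\pend$, where $F_{i_1},\dots,F_{i_r}$ are exactly the sets of $\mathcal{F}$ containing $z$. Crucially, I set $U_{\addag}:=\emptyset$ and $W_{\addag}:=\{w_z: z\in Z\}$, put $\Mst:=\{\{m_z,w_z\}: z\in Z\}\cup\{\{m_F^1,w_F^1\},\{m_F^2,w_F^2\}: F\in\mathcal{F}\}$, and $\ell:=k$. Every man is matched to his top choice in $\Mst$, so $\Mst$ restricted to any set of present agents is man-optimal; moreover $\Mst$ restricted to $U_{\orig}\cup W_{\orig}$ (in which all $m_z$ are unmatched) is stable, since all present women can be made to rank every $m_z$ last via the arbitrary completion of their lists.

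Next I would prove equivalence. For the backward direction, given a hitting set $X$ I add the women $\{w_z: z\in X\}$; then $\Mst$ restricted to the present agents is stable (the same blocking-pair check as in \Cref{th:ex-uni-reor}, noting that unmatched men $m_z$ with $z\notin X$ are ranked last by all present women). To see uniqueness, I run the alternating-cycle argument from \Cref{th:ex-uni-reor}: a second stable matching would induce, in its symmetric difference with $\Mst$, a cycle along which each woman improves; by man-optimality such a cycle avoids every $m_z$ (no woman prefers $m_z$ to her $\Mst$-partner) and hence every $w_z$, so it lies inside a single set gadget and swaps the partners of $w_F^1$ and $w_F^2$. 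Picking $z^\star\in X\cap F$, the woman $w_{z^\star}$ is present, is matched to $m_{z^\star}$ (she cannot do better since the $m_F^1$'s all keep their top partners), and ranks $m_F^1$ above $m_{z^\star}$, so $\{m_F^1,w_{z^\star}\}$ blocks the swapped matching — contradiction. For the forward direction, a solution adding at most $\ell$ agents corresponds to a set $X\subseteq Z$ with $|X|\le k$; if some $F\in\mathcal{F}$ is not hit, then all $w_z$ with $z\in F$ are absent, and swapping the partners of $w_F^1,w_F^2$ inside gadget $F$ yields a matching $M'\subseteq\Mst$ on the present agents whose only candidate blocking pairs $\{m_F^1,w_z\}$ ($z\in F$) are all void, so $M'$ is a second stable matching, contradicting uniqueness; hence $X$ is a hitting set.

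The reduction runs in polynomial time, adds only women (one gender), has $\Mst$ already stable on $U_{\orig}\cup W_{\orig}$, and sets $\ell=k$, so W[2]-hardness of \textsc{Hitting Set} parameterized by $k$ transfers to \exact-\uni-\add parameterized by $\ell$, yielding the proposition together with both strengthenings. The main obstacle is the uniqueness analysis in the backward direction: one has to argue that the \emph{only} possible alternative stable matchings are the within-set-gadget partner swaps. This reuses the alternating-cycle/man-optimality reasoning of \Cref{th:ex-uni-reor}, with the one new wrinkle that after adding only a subset of $W_{\addag}$ the two sides become unbalanced and the men $m_z$ with $z\notin X$ stay unmatched — which is harmless precisely because every present woman ranks these men last.
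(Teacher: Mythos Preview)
Your proposal is correct and follows essentially the same approach as the paper's proof: both adapt the \textsc{Hitting Set} reduction from \Cref{th:ex-uni-reor} by making the element women $w_z$ the addable agents (with their ``post-reorder'' preferences already built in), and both arguments for the two directions are the same alternating-cycle/man-optimality analysis. One small slip: in your forward direction you write that the swapped matching satisfies $M'\subseteq\Mst$, but of course it does not (it contains $\{m_F^1,w_F^2\}$ and $\{m_F^2,w_F^1\}$); what you actually need---and what your surrounding text correctly argues---is only that $M'$ is a \emph{second} stable matching, contradicting uniqueness of the subset of $\Mst$.
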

\begin{proof}
    We show \Cref{th:ex-uni-add} by adapting the reduction from 
    the proof of \Cref{th:ex-uni-reor}. We  adapt the reduction 
slightly by changing the preferences of
    $w_z$ for all~$z\in Z$ to~$w_{z} \colon m_{F_1}^1 \succ m_{F_2}^1 \succ  
    \dots \succ 
    m_{F_p}^1 \succ m_{z} \pend$. Moreover, we set 
    $\Mst:= \{\{m_z, w_z\} : z\in Z\} \cup \{\{m_F^1, w_F^1\}, \{m_F^2, w_F^2\} : F\in 
    \mathcal{F}\}$, we set~$\ell:= k$, and the agents that can be added 
    to~${U_{\addag}:=\emptyset}$ and ~$W_{\addag}:=\{w_z: z\in Z\}$. 
    
    ($\Rightarrow$)
    Let~$X\subseteq Z$ be a hitting set. We set~$X_A=\{w_z: z\in X\}$ to be 
the set of 
    added agents. We claim that 
    $M'=\{\{m_z, w_z\} : z\in X\}\cup\{\{m_F^1, w_F^1\}, \{m_F^2, w_F^2\}: F\in 
    \mathcal{F}\} \subseteq \Mst$ is the unique stable matching after adding 
    the 
    agents from~$X_A$. 
    First of all note that~$M'$ is stable, as all matched men are matched to their top 
    choice and no woman prefers one of the unassigned man to their assigned 
    partner. The argument why~$M'$ is the unique stable matching is analogous 
    to the proof of
    \Cref{th:ex-uni-reor}.
    
    ($\Leftarrow$)
    Let~$X_A$ be the set of agents added in a solution to the constructed 
    \exact-\uni-\add instance and let 
    $M':=\Mst|_{U_{\orig}\cup W_{\orig}\cup X_A}$ be the unique stable matching. 
    We 
    claim 
    that~$X=\{z\in Z: w_z\in X_A\}$ is a hitting set.
    
    Assume that~$F\in \mathcal{F}$ does not intersect~$X$. Then~$M'' := \bigr( 
    M' \setminus \{\{m_F^1, w_F^1\}, \{m_F^2, 
    w_F^2\}\}\bigr) \cup \{\{m_F^1, w_F^2\}, \{m_F^2, w_F^2\}\}$ is also a stable 
    matching by an argument analogous to the proof of \Cref{th:ex-uni-reor}, 
contradicting 
    the fact that~$\Mst$ is the unique stable matching. 
\end{proof}

Finally, we adapt the reduction from \Cref{th:ex-uni-reor} to prove
NP-hardness for the manipulative action \Iswap. Here, we utilize the 
fact that \Ireor operations can be modeled by (up to $n^2+n$) \Iswap operations
(see \Cref{se:relManip} for a high-level discussion of this fact).
To do so, we
adapt the reduction such that it is only possible to modify the preferences of 
women $w_z$ and add an ``activation cost'' to modifying the preferences of 
$w_z$ such that only the preferences of a fixed number of women can be modified 
but  for these we can modify them arbitrarily.
\begin{proposition} \label{th:ex-uni-swap}
    \exact-\uni-\swap is NP-complete, and this also holds if the given 
    matching~$\Mst$ is already stable in the original instance and we are only 
    allowed to modify the preferences of agents of one gender.
\end{proposition}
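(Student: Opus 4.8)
The plan is to adapt the parameterized reduction from \textsc{Hitting Set} used in \Cref{th:ex-uni-reor}, replacing the single \reor action on a woman $w_z$ by a block of forbidden swaps that forces any useful modification of $w_z$ to pay a fixed \emph{activation cost} $C$, after which $w_z$'s list can effectively be re-sorted at unit cost. Concretely, I keep the element agents $m_z,w_z$ and the set gadgets $m_F^1,m_F^2,w_F^1,w_F^2$ with the same preferences as in \Cref{th:ex-uni-reor}, and the same target matching $\Mst$ (the man-optimal matching, which is stable but not unique in the original instance). Using \Cref{lem:dummy-agents}, I add $r:=\ell+1$ pairs of dummy agents with the cyclic preference pattern of that lemma; these are matched among themselves in every reachable stable matching and never block, so they do not affect the analysis. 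I use the dummies for two purposes: (i) to \emph{forbid} every swap I do not want (all swaps in men's lists and in the lists of $w_F^1,w_F^2$) by inserting $\ell+1$ dummies between the relevant consecutive agents, so that such a swap would cost more than $\ell$; and (ii) to place, in the list of each $w_z$, a block of $C$ dummy men \emph{between} $m_z$ (kept as her top real man, so $\Mst(w_z)=m_z$) and the block of set men $m_F^1$ of the sets $F$ containing $z$ (and, in all ``$\pend$'' completions, I rank every $m_z$ at the very bottom of every list other than $w_z$'s). With this layout, making even one set man $m_F^1$ precede $m_z$ in $w_z$'s list costs at least $C+1$ swaps (each adjacent transposition changes one inversion and the dummy barrier is inert), while making the set men of an arbitrary collection of sets through $z$ all precede $m_z$ costs at most $C+|\mathcal F|$ swaps. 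I set $C:=k\,|\mathcal F|+1$ and $\ell:=k(C+|\mathcal F|)$.

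For the forward direction, given a hitting set $X$ with $|X|\le k$ I choose for each $F$ some $z_F\in X\cap F$, and for each $z\in X$ I move $m_z$ down in $w_z$'s list far enough that $m_F^1\succ_{w_z} m_z$ for every $F$ with $z_F=z$; this costs at most $C+|\mathcal F|$ swaps per $z\in X$, hence at most $\ell$ in total, and afterwards every man is still matched to his top choice, so $\Mst$ remains (man-optimal, hence) stable. Uniqueness follows by exactly the cycle argument of \Cref{th:ex-uni-reor}: a second stable matching would make $\Mst\cup M'$ contain an alternating cycle on which every woman prefers her $M'$-partner; no $m_z$ (and hence no $w_z$) lies on it, so $M'(m_z)=w_z$ for all $z$; therefore every $m_F^1$ is matched to $w_F^1$ (else it forms a blocking pair with $w_{z_F}$, whose modified list ranks $m_F^1$ above her partner $m_z$), which forces $m_F^2$ onto the cycle and produces the blocking pair $\{m_F^2,w_F^1\}$ — a contradiction.

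For the backward direction, a solution is a list of at most $\ell$ swaps which, by (i), all lie in the lists of the women $w_z$. If $\Mst$ is the unique stable matching after the swaps, then for each $F$ some $z\in F$ satisfies $m_F^1\succ_{w_z}m_z$ (otherwise flipping gadget $F$ exactly as in \Cref{th:ex-uni-reor} yields a second stable matching); fix such a $z=:z_F$. By the layout of $w_{z_F}$'s list, at least $C+1$ swaps were used there, so $X:=\{z_F: F\in\mathcal F\}$ satisfies $|X|\cdot (C+1)\le \ell = k(C+1)+k(|\mathcal F|-1)$; since $C+1=k|\mathcal F|+2>k(|\mathcal F|-1)$, this forces $|X|\le k$, and $X$ is a hitting set by construction. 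Membership in NP is immediate (check that $\Mst$ equals both the man-optimal and the woman-optimal stable matching of the modified instance). Since all designated swaps act on women, the construction also proves hardness when only women's preferences may be modified; and as $\ell=\Theta(k^2|\mathcal F|)$ is not bounded by any function of $k$, it yields NP-completeness rather than a $W[\cdot]$-hardness, consistent with \Cref{ta:sum}.

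The main obstacle is the bookkeeping around the activation gadget: one must simultaneously check that $\Mst$ stays stable after the cheap re-sorting of activated women (which needs $m_z$ to remain $w_z$'s top \emph{real} man before activation and every $m_F^1$ to stay matched to its top choice), that no unintended swap can shortcut the $C$-dummy barrier or otherwise help, and that $C$ and $\ell$ are tuned so that $k$ activations fit while $k+1$ do not — the inequality $(|X|-k)(C+1)\le k(|\mathcal F|-1)$ being precisely what makes the counting work out.
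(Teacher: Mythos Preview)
Your approach matches the paper's: both adapt the \textsc{Hitting Set} reduction of \Cref{th:ex-uni-reor} by inserting dummy agents (as in \Cref{lem:dummy-agents}) to forbid all swaps outside the element-women's lists and to impose an activation cost on moving $m_z$ below any set man in $w_z$'s list. The paper uses $n^5$ dummy pairs, placing $n^4$ between every consecutive pair in all lists except the $w_z$'s and $n^2$ after $m_z$ in each $w_z$, with budget $k(n^2+n)$; your parameters $C = k|\mathcal{F}|+1$ and $\ell = k(C+|\mathcal{F}|)$ serve the same purpose, and your inequality $(|X|-k)(C+1) \le k(|\mathcal{F}|-1)$ together with $C+1 > k(|\mathcal{F}|-1)$ is a clean way to force $|X| \le k$.

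There is, however, one concrete slip that makes the construction fail as written: you take $r := \ell+1$ dummy pairs but then want to place $\ell+1$ dummies between \emph{every} pair of consecutive real agents in all men's lists and in the lists of $w_F^1, w_F^2$. A preference list is a strict total order and contains each dummy exactly once, so $\ell+1$ dummy men fill only a single gap in any one woman's list; to block all $N-1$ gaps in a list with $N = |Z|+2|\mathcal{F}|$ real entries you need $(N-1)(\ell+1)$ distinct dummies. This is still polynomial in the input, so the fix is simply to take enough dummies (exactly as the paper does with $n^5$); once you do, \Cref{lem:dummy-agents} applies, and the rest of your argument goes through.
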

\begin{proof}
Membership in NP is obvious, as it is possible to determine in polynomial 
    time whether a stable matching is unique (as discussed in 
\Cref{th:ex-ex-del}).
    
    We adapt the reduction from \textsc{Hitting Set} to \exact-\uni-\reor 
    from~\Cref{th:ex-uni-reor} as 
    follows.
    Let $(Z, \mathcal{F}, k)$ be an instance of \textsc{Hitting Set}, and let 
    $(\mathcal{I}, \Mst, \ell)$ be the instance of \exact-\uni-\reor 
constructed in
    the reduction described in the proof of \Cref{th:ex-uni-reor}.
    We assume that $\ell \le 2n$, as otherwise $\mathcal{I}$ is a trivial 
    YES-instance.
    Furthermore, we assume $n \ge 3$, since otherwise the \textsc{Hitting Set}
    instance can be solved by brute force.
    In the 
    following we modify~$(\mathcal{I}=(U, W, \mathcal{P}), \Mst, \ell)$ as 
follows. 
    We add $n^5$ men $m_1^d$, \dots, $m_{n^5}^d$ and $n^5 $ women $w_1^d$, 
    $\dots$, $w_{n^5}^d$, where, for all $i\in [n^5]$, the preferences of 
$m_i^d$ and $w_i^d$ are as 
    follows (indices are taken modulo $n^5$):
    \begin{align*}
    m_i^d &: w_i^d \succ w^d_{i+1} \succ w^d_{i+2} \succ \dots \succ w^d_{n^5 + 
        i-1} \pend, \\
    w^d_i &: m^d_i \succ m^d_{i+1} \succ m^d_{i+2} \succ \dots \succ m^d_{n^5 + 
        i-1} \pend.
    \end{align*}
    Now, for each $z\in Z$, we add in $w_z$'s preference list $n^2$ dummy men 
after 
    $m_z$, and the remaining dummy men at the end of $w_z$'s preference list.
    For all other women~$w\in W\setminus\{w_z: z\in Z\}$, we insert $n^4$ 
    dummy men between each two neighboring agents in $w$'s preferences.
    For each men $m\in U$, we insert $n^4$ dummy women between any two neighboring 
    agents in $m$'s preferences.
    We set $M':= \Mst \cup \{\{m_i^d, w_i^d\} : i\in [n^5]\}$ and the overall 
    budget 
    to~$\ell':=k(n^2+n)$.
    The reduction clearly runs in polynomial time, so it remains to show its 
    correctness.
    
    $(\Rightarrow)$
    Let $X$ be a hitting set. For each $z\in X$, we 
    change 
    the preferences of~$w_{z}$ by swapping~$m_{z}$ down  $n^2+n$ times such 
    that after the modification~$w_z$ prefers, for all $F\in \mathcal{F}$, the 
    agent 
    $m_F^1$ to $m_{z}$.
    Thus, the overall number of performed swaps is at most~$\ell'$.
    By \Cref{lem:dummy-agents}, any stable matching in the modified instance 
    contains the edges~$\{m_i^d, w_i^d\}$ for each $i\in [n^5]$.
    It follows by the same arguments as in the proof of 
\Cref{th:ex-uni-reor} 
    that $M'$ is the unique stable matching after the bribery.
    
    $(\Leftarrow)$
    Since $\ell' < n^4$, we can swap pairs containing  two non-dummy agents 
    only 
    for agents~$w_z$ for some $z\in Z$.
    Note that swapping the agent $m_z$ with any non-dummy agent in the 
    preferences of $w_z$ requires at least $n^2$ swaps, and thus, this happens 
    for at most~$k $~such agents.
    The corresponding elements of $Z$ now form a hitting set by
    arguments analogous
    to the proof of \Cref{th:ex-uni-reor}.
\end{proof}

\begin{remark}
	In \Iexact-\Iuni, we are given a matching $\Mst$ and want to make this 
	matching stable.
	However, another possible objective might be to just ensure that there 
	exists a 
	unique stable 
	matching in the manipulated instance, irrespective of which matching 
	is the stable one.
	We remark that the reductions presented in this section also show 
	W[2]-hardness
	parameterized by the budget respectively NP-hardness for this objective.
\end{remark}

\subsection{Algorithms}
Contrasting  the hardness results for all other manipulative actions,
\exact-\uni-\accD  turns out to be solvable in polynomial time. 
On an intuitive level, one 
reason for this is that we can only delete agents from the preferences of other agents, but we cannot swap the order of agents in the preferences.
In particular, we cannot change whether an agent~$a$ is before or after~$\Mst (a')$  in the preferences of $a'$.
Recall that the hardness
reductions for \Ireor and \Iswap crucially rely on this feature. We start this 
subsection by giving definitions and facts about rotations 
\citep{DBLP:books/daglib/0066875} that we will use afterwards to construct a 
polynomial-time algorithm for \exact-\uni-\accD and an XP-algorithm for 
\exact-\uni-\reor parameterized by~$\ell$.
For more details on rotations, we refer to the monograph of 
\citet{DBLP:books/daglib/0066875}.
 
For a stable matching~$M$ and a man $m\in U$, let~$s_M(m)$ 
denote the first woman~$w$ succeeding~$M(m)$ in~$m$'s preference list who 
prefers~$m$ to~$M(w)$.
If no such woman exists, then we set~$s_M (m) := \emptyset$.
A \emph{rotation exposed in a stable matching~$M$} is a
sequence~$\rho=(m_{1},w_{1}),\dots,(m_{{r}},w_{{r}})$ such that for 
each~$k\in 
[r]$ it holds that~$\{m_{k}, w_{k}\}\in M$ and 
$w_{{k+1}}=s_M(m_{{k}})$, where indices are taken modulo~$r$. We call 
such a rotation a \textit{man-rotation} and~$s_M(m)$ the \textit{rotation 
    successor} of~$m$.
    There is a close relation between rotations and stable matchings (see 
    e.g.~\citep{DBLP:books/daglib/0066875}).
    It is easy to see that given a rotation $(m_{1}, w_{1}), \dots, (m_{r}, 
w_{r})$ exposed in a stable matching~$M$, matching~$M' := \bigl(M \setminus 
\{\{m_k, w_k\} : k\in [r] \}\bigr) \cup \{ \{m_k, w_{k+1}\} : k 
\in  [r] \}$ is again a stable matching.
    We will mainly use the ``reverse direction'' of this statement, namely that 
    the absence of rotations can be used to prove the uniqueness of a stable 
    matching.
    
    \begin{example}
    Consider the SM instance depicted in \cref{fig:ex-uni-reor} in 
    \Cref{ex-uni-hard} and let $M$ be 
    the matching in which every man is matched to his top-choice. Then, the 
    rotation 
    successor of~$m^1_{\{1,2\}}$ in~$M$ is $w^2_{\{1,2\}}$, i.e., 
    $s_M(m^1_{\{1,2\}})=w^2_{\{1,2\}}$. Moreover, the rotation successor 
    of~$m^2_{\{1,2\}}$ is $w^1_{\{1,2\}}$, i.e., 
    $s_M(m^2_{\{1,2\}})=w^1_{\{1,2\}}$. Thus, 
    $(m^1_{\{1,2\}},w^1_{\{1,2\}}),(m^2_{\{1,2\}},w^2_{\{1,2\}})$ is a rotation 
    exposed in 
    the stable matching $M$, which proves that $M$ is not the unique stable 
    matching. 
\end{example}

We define the rotation successor~$s_W(w)$ of~$w\in W$ 
analogously and call a rotation where the roles of men and women are switched 
\textit{woman-rotation}. As a matching is unique if and only if it
exposes neither a man-rotation nor a woman-rotation 
\citep{DBLP:books/daglib/0066875}, we can
reformulate the goal of \exact-\uni-\accD: 
Modify the given SM instance by deleting the acceptability of at most~$\ell$
pairs such that neither a man-rotation nor a woman-rotation is exposed 
in~$\Mst$.

We start by making two straightforward observations:
\begin{observation}\label{obs:rotation1}
    To determine whether a stable matching~$M$ is the unique stable matching, 
    it is enough to know the rotation successors of all agents. 
\end{observation}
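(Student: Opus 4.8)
The plan is to read the observation off directly from the two facts recalled just above it: that a stable matching $M$ is the unique stable matching if and only if it exposes neither a man-rotation nor a woman-rotation, together with the definition of an exposed rotation. First I would observe that a man-rotation exposed in $M$ is, by definition, a cyclic sequence $(m_1,w_1),\dots,(m_r,w_r)$ with $\{m_k,w_k\}\in M$ and $w_{k+1}=s_M(m_k)$ for all $k$ (indices modulo $r$); the only data occurring in this definition are the matching $M$ and the values $s_M(m)$ for $m\in U$. Hence the Boolean ``$M$ exposes some man-rotation'' is a function of $M$ and the man-rotation successors alone, and symmetrically ``$M$ exposes some woman-rotation'' is a function of $M$ and the values $s_W(w)$ for $w\in W$.

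To make the phrase ``enough to know'' concrete, I would then spell out the check as a cycle search. Define a directed graph $H_U$ on the vertex set $U$, putting an arc from $m$ to $M(s_M(m))$ whenever $s_M(m)\neq\emptyset$ (this uses only $M$ and $s_M|_U$). Every directed cycle $m_1\to\cdots\to m_r\to m_1$ in $H_U$ corresponds to the man-rotation $(m_1,M(m_1)),\dots,(m_r,M(m_r))$ exposed in $M$, and conversely every exposed man-rotation yields such a cycle; note $H_U$ has no self-loop since $s_M(m)$ strictly succeeds $M(m)$ in $m$'s list, so $M(s_M(m))\neq m$. Thus $M$ exposes a man-rotation iff $H_U$ contains a directed cycle. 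Defining $H_W$ on $W$ symmetrically, the uniqueness criterion becomes: $M$ is the unique stable matching iff neither $H_U$ nor $H_W$ contains a directed cycle — a condition that can be evaluated from $M$ and the rotation successors of all agents, proving the observation.

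The hard part here is essentially nonexistent, since everything follows immediately from the recalled characterization; the only points worth stating carefully are that $M$ is treated as known (we are given a stable matching $M$), and that one should verify the straightforward correspondence between directed cycles in $H_U$ (resp.\ $H_W$) and man-rotations (resp.\ woman-rotations) exposed in $M$, including the absence of self-loops.
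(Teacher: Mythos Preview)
Your proposal is correct and matches the paper's treatment: the paper also treats the observation as immediate from the definition of an exposed rotation and the characterization of uniqueness via the absence of man- and woman-rotations, and then describes precisely the same directed-graph cycle check (with an added sink vertex) that you spell out.
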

This observation gives rise to a simple algorithm to check whether a stable 
matching~$M$ exposes a man-rotation.
We create a sink~$t$ and for each man~$m\in U$ a vertex~$v_m$.
We insert an arc from~$v_m$ to~$v_{m'}$ if~$M(m')$ is~$m$'s 
rotation successor in~$M$, i.e.,~$M(m')=s_M(m)$, and an arc from~$v_m$ to~$t$ 
if 
$m$ 
does not have a rotation successor. Then, checking whether~$M$ exposes a 
man-rotation reduces to checking whether there exists a cycle in the 
constructed directed graph. 

In the following, for an agent~$a\in A$, we refer to its preferences induced by 
the 
set of all agents it prefers to~$M(a)$ as the \textit{first part} of its 
preferences and to  its 
preferences induced by the set of all agents to which 
it prefers~$M(a)$ as the \textit{second 
    part} of its preferences. Using this notation, we can make the following 
observation, which holds also if the roles of women and men are switched: 
\begin{observation} \label{obs:rotation}
    To determine whether a stable matching~$M$ exposes a man-rotation, it is 
    enough to know the first part of the women's preferences and the second 
    part of the men's preferences.  
\end{observation}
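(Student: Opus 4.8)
The plan is to unfold the definition of the rotation successor $s_M(m)$ and identify exactly which parts of the preference profile it depends on. Recall that $s_M(m)$ is the first woman $w$ that succeeds $M(m)$ in $m$'s preference list and satisfies $m\succ_w M(w)$, and $s_M(m):=\emptyset$ if no such woman exists. First I would note that, since the matching $M$ is given, $M(m)$ is known for every $m\in U$; hence the list of women succeeding $M(m)$ in $m$'s preference order — together with their relative ranking — is precisely the second part of $m$'s preferences in the terminology just introduced. Scanning this list from most to least preferred, the only test ever applied to a candidate woman $w$ is whether $w$ prefers $m$ to $M(w)$.

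Next I would observe that this test needs only limited information as well: $M(w)$ is known because $M$ is given, and $m\succ_w M(w)$ holds if and only if $m$ lies in the set of agents that $w$ prefers to $M(w)$, i.e., $m$ belongs to the first part of $w$'s preferences. Consequently, knowing the second part of every man's preference list together with the first part of every woman's preference list suffices to compute $s_M(m)$ for all $m\in U$ (and, symmetrically, $s_W(w)$ for all $w\in W$ after swapping the roles of the genders).

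Finally I would invoke \Cref{obs:rotation1}: once all rotation successors are known, one forms the directed graph with a vertex $v_m$ for each man, an arc $v_m\to v_{m'}$ whenever $M(m')=s_M(m)$, and an arc $v_m\to t$ whenever $s_M(m)=\emptyset$, and checks whether this graph contains a cycle; $M$ exposes a man-rotation exactly when it does. Since this graph is completely determined by the rotation successors, and these in turn are determined by the data named in the statement, the observation follows. There is no genuinely hard step here; the only point requiring slight care is that we really do need the \emph{ordering} of the second part of the men's lists, since $s_M(m)$ selects the first qualifying woman, whereas for the women's first parts plain membership is enough — but as "knowing the first part" is in any case no stronger than knowing the whole list, this poses no problem, and the symmetric statement for women follows verbatim.
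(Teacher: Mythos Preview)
Your argument is correct and is exactly the natural justification the paper has in mind; in the paper the observation is stated without an explicit proof, with the surrounding text (the definition of $s_M(m)$ and the graph-based rotation check described after \Cref{obs:rotation1}) making it essentially self-evident. Two minor remarks: the parenthetical about $s_W(w)$ is superfluous here since the statement concerns only man-rotations, and strictly speaking \Cref{obs:rotation1} addresses uniqueness rather than man-rotations per se---what you actually use is the cycle-detection algorithm described immediately after it, which indeed depends only on the men's rotation successors.
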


A simple 
approach to solve the \exact-\uni-\accD problem would be to start by computing 
the set of 
rotations in the given matching and delete the acceptability of one pair in 
each rotation. However, thereby, we would change the rotation successors 
of some agents, which could lead to new rotations. That is why we need to be 
careful when 
choosing the pair within each rotation which one wants to delete. 

To circumvent this issue, we first observe that, by \Cref{obs:rotation}, it is never 
beneficial to delete 
the acceptability of some pair~$\{m,w\}$ with $m\in U$ and $w\in W$ if $m$ and~$w$ 
both appear in the 
same part of each others preferences, as this implies that none of them can be 
the rotation successor of the other. Moreover, it is possible to separately 
solve the 
problem of ensuring that the given matching~$\Mst$ does not expose a man-rotation
and the problem of ensuring that the given matching~$\Mst$ does not expose a woman-rotation.
To solve the former problem, we only care about the rotation successors of all men. Thereby, 
we only delete the acceptability of pairs~$\{m,w\}$ where~$w$ appears in the 
second part of~$m$'s preferences and~$m$ appears in the first part of~$w$'s 
preferences. For woman-rotations, the situation is 
symmetric. We solve both 
problems by reducing them to the \textsc{Minimum Weight Spanning 
Anti-Arborescence} problem. In an 
instance of the \textsc{Minimum Weight Spanning Anti-Arborescence} problem, we 
are given a directed graph $G=(V,E)$ with arc costs and a budget $k\in 
\mathbb{N}$. 
The question is whether there exists a spanning anti-arborescence, i.e., an 
acyclic 
subgraph of~$G$ such that all vertices of~$G$ but one have out-degree 
exactly one, of cost at most~$k$.  \textsc{Minimum Weight Spanning 
Anti-Arborescence} can be 
solved in~$\mathcal{O}(|E|+|V|\log |V|)$
time
\citep{edmonds1967optimum,DBLP:journals/combinatorica/GabowGST86}.

Given an SM instance $\mathcal{I}=(U, W, \mathcal{P})$ and a matching 
$\Mst$, the basic idea of the algorithm is the following (we present the 
algorithm
for excluding man-rotations; excluding woman-rotations can be done 
symmetrically):
For a set~$F$ of deleted 
acceptabilities, let~$s_{\Mst}^F (m)$ denote the rotation successor 
of~$m$ after the deletion of~$F$.
To ensure that there is no man-rotation, we need to find a set~$F$ of deleted acceptabilities such that the graph where 
the agents form the vertex set and the arc set is~$\{ (w, m) : m\in U \land 
w\in W\land \{m, w\} \in \Mst \} \cup 
\{(m, s_{\Mst}^F (m)): m\in U\text{ with } s_{\Mst}^F (m) \neq \emptyset\}$ 
is 
acyclic.
Note that we can change~$s_{\Mst} (m)$ only by deleting the pair~$\{m, 
s_{\Mst}(m)\}$. In this case, the new rotation successor becomes the first 
woman~$w'$ 
succeeding~$s_{\Mst}(m)$ in~$m$'s preference list which prefers~$m$ 
to~$\Mst(w')$.
Thus, the costs of making a woman~$w'$ the rotation successor of~$m$ is the 
number of women~$w$ such that $m$ prefers $\Mst(m)$ to $w$ and $w$ to $w'$ and 
$w$ prefers $m$ to $\Mst(w)$.
We now argue that the problem of making~$\{(w, m) : m\in U \land w\in W \land 
\{m, w\} \in \Mst\} \cup \{(m, s_{\Mst}^F (m)): m\in U\text{ with } 
s_{\Mst}^F (m) 
\neq \emptyset\}$ acyclic can be expressed as an instance of \textsc{Minimum 
Weight Spanning 
    Anti-Arborescence}.

\begin{theorem}\label{thm:exact-uni-accD}
    \exact-\uni-\accD can be solved in~$\mathcal{O}(n^2)$ time.
\end{theorem}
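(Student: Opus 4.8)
The plan is to reduce the optimization version of \exact-\uni-\accD to (two instances of) \textsc{Minimum Weight Spanning Anti-Arborescence}, and then answer the decision question by comparing the optimum with~$\ell$.

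First I would dispose of blocking pairs. If $\Mst$ is not stable in $\mathcal{I}$, then every pair in $\bp(\Mst,\mathcal{I})$ must have its acceptability deleted in any solution (no \accD operation on a different pair can destroy a given blocking pair), and deleting exactly these pairs makes $\Mst$ stable; so I may pay $|\bp(\Mst,\mathcal{I})|$ up front, pass to the resulting \textsc{SMI} instance, and assume from now on that $\Mst$ is stable. By the rotation characterization recalled above, it then remains to find a minimum-size set $F$ of further acceptability deletions such that $\Mst$ exposes neither a man-rotation nor a woman-rotation. By \Cref{obs:rotation}, deleting $\{m,w\}$ can influence whether a man-rotation is exposed only when $w$ lies in the second part of $m$'s list and $m$ in the first part of $w$'s list, and it can influence woman-rotations only in the mirror-image situation; these two families of ``relevant'' pairs are disjoint from each other and from $\bp(\Mst,\mathcal{I})$, since the three cases are distinguished by whether $w$ precedes or follows $\Mst(m)$ in $m$'s list and whether $m$ precedes or follows $\Mst(w)$ in $w$'s list. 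Hence an optimal $F$ splits into an optimal ``man-part'' and an optimal ``woman-part'', whose sizes simply add.

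For the man-part I would build a digraph $G$ on the vertex set $A\cup\{t\}$. For every woman $w$, add the forced zero-cost arc $(w,\Mst(w))$. For every man $m$: for each woman $w'$ with $\Mst(m)\succ_m w'$ and $m\succ_{w'}\Mst(w')$ add an arc $(m,w')$ whose cost is the number of women $w$ with $\Mst(m)\succ_m w\succ_m w'$ and $m\succ_w\Mst(w)$ --- exactly the deletions needed to make $w'$ the rotation successor $s_{\Mst}^F(m)$ of $m$ --- and add an arc $(m,t)$ whose cost is the number of all women $w$ with $\Mst(m)\succ_m w$ and $m\succ_w\Mst(w)$, corresponding to $m$ having no rotation successor. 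The key observation is that the deletions attached to different men are on disjoint pairs and that a man-relevant deletion of $\{m,w\}$ never changes the rotation successor of any woman or of any other man; hence simultaneously installing a choice of successor (or $\emptyset$) for every man costs exactly the sum of the chosen arc costs. A selection of one out-arc per non-$t$ vertex of $G$ encodes precisely such a global choice of man-rotation successors together with the forced woman-arcs, and, since $t$ has no out-arcs, it forms a spanning anti-arborescence iff it is acyclic; moreover a directed cycle in it is, unwinding the definitions, exactly a man-rotation exposed in the modified $\Mst$ (the forced arcs make any cycle alternate $w_1\to m_1\to w_2\to m_2\to\cdots$ with $\{m_k,w_k\}\in\Mst$ and $w_{k+1}=s_{\Mst}^F(m_k)$). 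Therefore a minimum-weight spanning anti-arborescence of $G$ has weight equal to the minimum number of man-relevant deletions; the woman-part is handled by the symmetric construction. The instance is a YES-instance iff $|\bp(\Mst,\mathcal{I})|$ plus these two optima is at most $\ell$.

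For the running time, $\bp(\Mst,\mathcal{I})$ and all arc costs are computed in $\mathcal{O}(n^2)$ time in total --- for each $m$, scan $m$'s preference list below $\Mst(m)$ once, testing $m\succ_w\Mst(w)$ in $\mathcal{O}(1)$ via precomputed ranks and maintaining a running count --- the graph $G$ has $\mathcal{O}(n)$ vertices and $\mathcal{O}(n^2)$ arcs, and \textsc{Minimum Weight Spanning Anti-Arborescence} runs in $\mathcal{O}(|E|+|V|\log|V|)=\mathcal{O}(n^2)$ time and is invoked twice. I expect the main obstacle to be the cost bookkeeping of the third paragraph: arguing that a prescribed profile of rotation successors is realizable by precisely $\sum_m(\text{arc cost})$ deletions --- neither fewer (no deletion can be ``reused'' across two men) nor more (no unwanted side effects force additional deletions) --- together with the bijection between directed cycles in $G$ and exposed rotations.
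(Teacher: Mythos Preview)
Your proposal is correct and follows essentially the same route as the paper: delete the blocking pairs, split the remaining problem into independently solvable man- and woman-rotation parts via \Cref{obs:rotation}, and reduce each part to \textsc{Minimum Weight Spanning Anti-Arborescence} where arc costs count the deletions needed to install a prescribed rotation successor. The only cosmetic difference is that the paper places one vertex per matched pair (plus the sink), whereas you keep men and women as separate vertices joined by forced zero-cost arcs $(w,\Mst(w))$; contracting those arcs yields exactly the paper's graph, so the two constructions are equivalent.
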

\begin{proof}
    Clearly, any solution needs to delete all blocking pairs.
    Thus, we assume without loss of generality that~$\Mst$ is a stable matching.
    
    Given an instance $(\mathcal{I}=(U, W, \mathcal{P}), \Mst, \ell)$ of 
\exact-\uni-\accD, we reduce the problem to two instances of the
    \textsc{Minimum Weight Spanning Anti-Arborescence} problem.
    The first instance of this problem that is responsible for deleting all 
man-rotations is constructed as follows.
    The graph contains a vertex~$v_m$ for each edge~$\{m,w\}\in 
\Mst$ as well as a
sink~$t$.
    We add an arc~$(v_m, v_{m'})$ if~$\Mst(m')$ prefers~$m$ to~$m'$ and~$m$ 
    prefers~$\Mst (m)$ to~$w'$.
    The weight of this arc is the number of women~$w^*$ such that~$m$ 
    prefers~$w^*$ to~$\Mst (m')$ and~$\Mst (m)$ to~$w^*$, and~$w^*$ prefers~$m$ 
    to~$\Mst(w^*)$ (i.e., the number of acceptabilities which need to be deleted
    to make $\Mst (m')$ the rotation successor of $m$).
    Furthermore, there is an arc $(v_m, t)$ for all $\{m,w\} \in \Mst$.
    The weight of this arc is the number of women~$w^*$ such that~$m$ accepts $w^*$ and 
    prefers~$\Mst (m)$ to~$w^*$, and~$w^*$ prefers~$m$ 
    to~$\Mst(w^*)$ (i.e., the number of acceptabilities which need to be deleted
    to make $\emptyset$ the rotation successor of $m$).
    We call this graph~$H_U$. Similarly, we construct a graph~$H_W$ (where the 
    roles of men and women are
    exchanged). 
    
    We claim that~$\Mst$ can be made the unique stable matching after the 
    deletion 
    of~$\ell$ pairs if and only if the minimum weight anti-arborescences in~$H_U$
    and~$H_W$ together have weight at most~$\ell$. 
        
    ~$(\Rightarrow)$
    Let~$F\subseteq \{\{m, w\} : m \in U , w\in W\}$ be a set of at most~$\ell$ 
pairs whose deletion make~$\Mst$ 
    the 
    unique stable 
    matching.
    Let~$F_W:= \{\{m, w\} \in F: w\succ_m \Mst(m)\}$ and~$F_U := \{\{m, w\} \in 
F: 
    m\succ_w \Mst(w)\}$.
    For any man~$m$, let~$s_m := v_{\Mst(w')}$, where~$w'$ is the woman
    best-ranked by~$m$ succeeding~$\Mst(m)$ such that~$w$ prefers~$m$ 
    to~$\Mst(w)$ 
    after 
    the manipulation, i.e,.~$w'$ is the rotation successor of~$m$ after the
    manipulation.
    If no such woman exists, then we set~$s_m := t$.
    We construct an anti-arborescence in~$\mathcal{A}_U$ of cost at 
most~$|F_U|$ by 
    adding for
    each pair~$\{m, w\}\in \Mst$ the arc~$(v_m, s_m)$ to the anti-arborescence.
    We claim that~$\mathcal{A}_U$ is an anti-arborescence.
    Every vertex but~$t$ has exactly one outgoing arc, so it is enough to show 
    that there does not exist a cycle.
    As we have inserted for each man an arc from the node including him to the 
    node including his rotation successor, there cannot exist any cycle in the 
    anti-arborescence, as such a cycle would induce an exposed man-rotation in the 
    modified 
    SM instance which would
    contradict the uniqueness of $\Mst$ in the modified SM instance.
    
    In the same way one can construct an anti-arborescence of cost~$|F_W|$ 
    in~$H_W$. The constructed anti-arborescences together 
    have
    weight at most~$|F_W| + |F_U| \le |F| \le \ell$, as any arc in~$F_W\cap F_U$ would 
    be 
    a blocking pair for $\Mst$.
    
    ~$(\Leftarrow)$
    Let~$\mathcal{A}_U~$ be an anti-arborescence in~$H_U$, and~$\mathcal{A}_W$ 
    be 
    an anti-arborescence in~$H_W$.
    For every arc~$(v_m, t)\in \mathcal{A}_U$, we delete the acceptability of 
all pairs~$\{m, w'\}$ with $m$ preferring~$\Mst (m)$ to~$w' $ and $w'$ 
preferring~$m$ to $\Mst (w')$.
    For every
    arc~$( v_m, v_{\widetilde{m}}) \in \mathcal{A}_U$, we 
    delete 
    the acceptability of all
    pairs~$\{m, w'\}$ with~$m$ 
    preferring~$w'$ to~$\Mst (\widetilde{m})$, and~$\Mst (m)$ to~$w'$, and~$w'$ 
    preferring~$m$ 
    to~$\Mst(w')$.
    After these deletions,~$\widetilde{w}$ is the rotation successor of~$m$. 
    Let~$F_U$ denote the set of pairs deleted.
    We proceed with $\mathcal{A}_W$ analogously, and denote as~$F_W$ the set 
    of deleted pairs.
    By construction, $\mathcal{A}_U$~has cost~$|F_U|$, and $\mathcal{A}_W$ has 
    cost~$|F_W|$.
    
    Assume that~$\Mst$ is not the unique stable matching after deleting the 
    pairs from $F_U\cup F_W$.
    Then, without loss of generality, a man-rotation is exposed 
    in~$\Mst$:~$(m_{i_1}, 
    w_{j_1}), 
    \dots, (m_{i_{r}},
    w_{j_{r}})$.
    As we already observed, the anti-aborescence~$\mathcal{A}_U$ contains all 
    arcs~$(v_m, v_{\widetilde{m}})$ where~$\widetilde{w}$ 
    is~$m$'s 
    rotation-successor (after the deletion of~$F_U$).
    Thus,~$\mathcal{A}_U$ contains the arcs~$(v_{m_{i_k}}, 
    v_{m_{i_{k+1}}})$ for
    all~$k\in [r]$ (all indices are taken modulo~$r$). This implies 
    that~$\mathcal{A}_U$
    contains a cycle, a contradiction to~$\mathcal{A}_U$ being an 
    anti-arborescence.
\end{proof}

We conclude this section by constructing an XP algorithm for \exact-\uni-\reor 
parameterized by~$\ell$ which runs in $\mathcal{O}(2^\ell n^{2\ell+2})$ time. 
This algorithm is described in \Cref{alg:xp-unique-reor} and requires as input 
an instance of \exact-\uni-\reor consisting of an SM instance 
$\mathcal{I}=(U, W, \mathcal{P})$, a complete matching $\Mst$, and a budget 
$\ell$. 
\begin{algorithm}
    \Input{An SM instance~$\mathcal{I}=(U, W, \mathcal{P})$, a complete 
matching~$\Mst$, and
    a budget~$\ell$.}
   Guess sets $X_U \subseteq U$ and $X_W \subseteq W$ with $|X_U \cup X_W| \le \ell$\;\label{line:guessx}
  For each $m \in X_U$, guess $s^{\reor}_{\Mst} (m) \in W \cup \{ \emptyset\}$\;\label{line:guesssm}
  For each $w\in X_W$, guess $s^{\reor}_{\Mst} (w) \in M \cup 
\{\emptyset\}$\;\label{line:guesssw}
  \If{there exists a blocking pair~$\{m, w\}$ for~$\Mst$ with $m\notin X_U$ and 
  $w\notin X_W$}{Reject this guess\;\label{line:checkbp}}
  \If{there exists an agent $a\in X_U\cup X_W$ with $s^{\reor}_{\Mst} (a) 
\notin 
X_U \cup X_W$ and $s^{\reor}_{\Mst} (a)$ preferring $M(s^{\reor}_{\Mst} (a))$ 
to $a$}{Reject this guess\;\label{line:checksuccs}}
  \If{there exists $m\in X_U$ and a woman $w\in W \setminus X_W$ such that $s^{\reor}_{\Mst} (m) = \emptyset$ and $w$ prefers $m$ to $\Mst (w)$ or there exists $w\in X_W$ and $m\in U\setminus X_U$ such that $s^{\reor}_{\Mst} (w) = \emptyset$ and $m$ prefers $w$ to $\Mst (m)$}{Reject the guess\;\label{line:checkrotation}}
  \If{there exists $m\in X_U$ and $w\in X_W$ such that $m = s_{\Mst}^{\reor} (w) $ and $w = s^{\reor}_{\Mst} (m)$}{Reject the guess\;\label{line:checkmutualsuccessor}}
  Let~$H$ be an empty directed graph\;\label{line:begin}
   Add a vertex~$t$ to $H$\; \label{line:addt}
    \ForEach{$\{m, w\} \in \Mst$}
    {
    Add a vertex~$v_m$ to $H$\;\label{line:addvm}
    }
    \ForEach{$m \in X_U$}{
     Add arc $(v_m, v)$ to $H$, where $v:=t$ if $s^{\reor}_{\Mst} 
     (m)=\emptyset$ 
     and $v:=v_{m'}$ with $m': = \Mst ({s^{\reor}_{\Mst} (m)})$ 
     otherwise\;\label{line:arcxu}
     }
    \ForEach{$m\in U\setminus X_U$\label{line:beginm}}{
      Let $\widetilde{s}_{\Mst} (m) \in W \setminus X_W$ be the woman from $W\setminus X_W$ which $m$ likes most such that
        $m$ prefers $\Mst (m)$ to $\widetilde{s}_{\Mst} (m)$, and
        $\widetilde{s}_{\Mst} (m)$ prefers $m$ to $\Mst 
        \big(\widetilde{s}_{\Mst} 
        (m)\big)$\;\label{line:computewm}

      \If{no such $\widetilde{s}_{\Mst} (m)$ exists}{
        Add $(v_m, t)$ to $H$\;\label{line:sinkedge}
        \ForEach{$w\in X_W$ such that $m$ prefers $\Mst (m)$ to $w$ and $s^{\reor}_{\Mst} (w) \neq m$}{
        Add arc $(v_m, v_{m'})$ to $H$, where $m' := \Mst 
        (w)$\;}\label{line:arc}
      }

      \Else{
        Add arc $(v_m, v_{m'})$ to $H$, where $m' := \Mst (\widetilde{s}_{\Mst} (m))$\;\label{line:arcsuccessor}
        \ForEach{$w\in X_W$ such that $m $ prefers $\Mst (m)$ to $w$ to $\widetilde{s}_{\Mst} (m)$ and $s^{\reor}_{\Mst} (w) \neq m$}{
          Add arc $(v_m, v_{m'})$ to $H$, where $m' := \Mst (w)$\;\label{line:arcxw}
        }
      }
  }\label{line:endconstruction}
  
  \If{$H$ does not contain a spanning anti-arborescence}{
    Reject this guess\;\label{line:reject1}}\label{line:end}
  
  Repeat Lines~\ref{line:begin}--\ref{line:end} with roles of women and men 
swapped\;\label{line:sym}
  
  Accept this guess\;

\caption{An XP algorithm wrt.~$\ell$ for 
\exact-\uni-\reor.}\label{alg:xp-unique-reor}
\end{algorithm}

The algorithm starts by guessing the
subsets of men~$X_U\subseteq U$ and women 
$X_W\subseteq W$ of summed size $\ell$ whose preferences we reorder 
(Line~\ref{line:guessx}). We reject a
guess if there exists a blocking pair~$\{m,w\}\in \bp(\Mst,\mathcal{I})$ such 
that~$m\notin X_U$ and~$w\notin X_W$, as in this case it is not possible to 
resolve this blocking pair using the guessed agents (Line~\ref{line:checkbp}). Moreover, for each~$m\in
X_U$, we guess his rotation-successor~$s^\reor_{\Mst}(m)\in W\cup 
\{\emptyset\}$ after 
the reorderings and for each~$w\in X_W$,
we guess her rotation successor~$s^\reor_{\Mst}(w)\in U\cup \{\emptyset\}$ 
(Lines~\ref{line:guesssm} and~\ref{line:guesssw}). We reject the guess if it is 
impossible for an agent $a\in X_U\cup X_W$ to make $s^\reor_{\Mst}(a)$ $a$'s 
rotation successor.
This is the case if $s^\reor_{\Mst}(a)\notin X_U\cup X_W$  
and~$s^\reor_{\Mst}(a)$ prefers~$M(s^\reor_{\Mst}(a))$ to $a$  
(Line~\ref{line:checksuccs}). 
Further, if there exists some~$m\in X_U$ for which we have guessed 
that~$s^\reor_{\Mst}(m)=\emptyset$ and there exists some~$w\in W\setminus X_W$ 
that 
prefers~$m$ to~$\Mst(w)$, then we reject
the guess,
as in this case if~$m$ ranks~$w$ above~$\Mst(m)$, we create a blocking pair and 
if $m$ ranks~$\Mst(m)$ 
above~$w$, then~$m$ has a rotation successor; the same holds with roles of 
women and 
men swapped (Line~\ref{line:checkrotation}).
We also check whether there is a man-woman pair~$(m, w)$ with $m \in X_U$ and 
$w\in X_W$ such that we guessed that they are their mutual rotation successor 
and reject a guess in this case (Line~\ref{line:checkmutualsuccessor}), because 
if $w$ is the rotation successor of $m$, then $m $ prefers $\Mst (m)$ to $w$, 
and if $m $ is the rotation successor of~$w$, then $m$ prefers $w$ to $\Mst 
(m)$ 
and these two conditions clearly cannot be satisfied at the same time.

In the end, for all agents~$a\in X_U\cup X_W$, we will reorder their 
preferences such that
their rotation successor is ranked directly after~$\Mst(a)$.
The only influence that the preferences of~$a$ can have on the rotation 
successor of another agent~$a'$ is whether $a$ prefers $a'$ to $\Mst 
(a)$ or not.
By
\Cref{obs:rotation}, it follows that in order to make $\Mst$ the unique stable matching, we only need to decide which 
agents are in the first part of~$a$'s preferences (and can then order them arbitrarily before $\Mst (a)$ in the preferences of $a$). Again, by
\Cref{obs:rotation}, 
this can be 
solved for the
men in~$X_U$ and women in~$X_W$ separately (with the exception that we need to 
ensure that there is no man-woman pair which are mutual rotation successors; 
however, this may happen only if both agents are contained in $X_U \cup X_W$, 
and we exclude that this happens in Line~\ref{line:checkmutualsuccessor}): 
Selecting the first part of the 
preferences of~$w\in X_W$ only influences the rotation successors of all 
men. Similarly, selecting the first part of the preferences of
$m\in X_U$ only influences whether there exists
a woman-rotation. Consequently, it is possible to split the problem into 
two parts. We describe how to determine the preferences of all~$w\in X_W$,
thereby, resolving all man-rotations. The woman-rotations can be resolved 
symmetrically (Line~\ref{line:sym}).

To determine how to reorder the preferences of all~$w\in X_W$, we reduce 
the problem to an 
instance of \textsc{Spanning Anti-Arborescence} (Lines~\ref{line:begin}--\ref{line:endconstruction}). We construct the directed graph
as follows. For each pair~$\{m,w\}\in \Mst$, we introduce a vertex~$v_{m}$ (Line~\ref{line:addvm}). 
Moreover, we add a sink~$t$ (Line~\ref{line:addt}). For 
all~$m\in X_U$, we add an arc from~$v_{m}$
to the vertex corresponding to the man matched to the guessed rotation 
successor of $m$ in $\Mst$, i.e.,~$v_{\Mst\textbf{}(s^\reor_{\Mst}(m))}$, or to $t$ if 
$s^\reor_{\Mst}(m)=\emptyset$  (Line~\ref{line:arcxu}). Now, we add for each 
two vertices~$v_m$ and 
$v_{m'}$ with~$m \in U \setminus X_U$ and $m'\in X_U$ an arc from~$v_m$ 
to~$v_{m'}$ if we can reorder the guessed 
agents' preferences such that~$\Mst(m')$ is~$m$'s rotation successor (Lines~\ref{line:beginm}--\ref{line:endconstruction}).

More formally, for 
each $m\in U\setminus X_U$, we denote as $\widetilde{s}_{\Mst}(m)$ 
agent~$m$'s most-preferred 
woman~$w\in W\setminus X_W$ who prefers~$m$ to
$\Mst(w)$ and is ranked after $\Mst(m)$ by~$m$ (independent of how the 
preferences of the guessed agents are reordered, $m$ cannot have a rotation 
successor to which he prefers $\widetilde{s}_{\Mst}(m)$). We deal with the case 
that $\widetilde{s}_{\Mst}(m)$ is not well-defined separately below. Now, for 
each
$w'\in X_W$, who is ranked between $\Mst(m)$ and 
$\widetilde{s}_{\Mst}(m)$ in the preferences of~$m$ and fulfills ${s}^{\reor}_{\Mst} (w') \neq m$, we add an arc from $v_{m}$ to 
$v_{\Mst(w')}$ (for those women, we can decide whether they rank $m$ before or 
after $\Mst(w')$ and thus whether they become $m$'s rotation successor or not). 
Note that we require ${s}^{\reor}_{\Mst}(w') \neq m$ as otherwise~$w'$ must 
prefer $\Mst (w')$ to $m$ and thus cannot be the rotation successor of $m$.
Moreover, we add an arc from $v_{m}$ to 
$v_{\Mst(\widetilde{s}_{\Mst}(m))}$ (Lines~\ref{line:arcsuccessor} 
and~\ref{line:arcxw}). As mentioned above, it may happen 
that~$\widetilde{s}_{\Mst}(m)$ is undefined for some~$m\in 
U\setminus X_U$. In this case, we add an arc between $v_m$ and $v_{\Mst(w')}$ 
for each~$w'\in 
X_W$ which~$m$ ranks below $\Mst(m)$ and fulfills ${s}^{\reor}_{\Mst} (w') \neq m$ and an arc from $v_m$ to~$t$ (Lines~\ref{line:sinkedge} and~\ref{line:arc}).
We call the resulting graph~$H_U$ and the graph constructed
using the same algorithm with
the roles of men and women switched~$H_W$.

We compute an anti-arborescence in $H_U$. In the anti-arborescence, for each 
$v_m$, the end point of its outgoing 
arc corresponds to the 
rotation successor of $m$ in the modified instance. To ensure this, we 
construct 
the	preferences of all women $w\in X_W$ as follows. For each $w\in X_W$, 
we rank all men $m\in U$ 
such that there is an arc from $v_m$ to $v_{\Mst(w)}$ in the anti-arborescence 
in 
an arbitrary order before $\Mst(w)$, while we place the guessed
man~$s^\reor_{\Mst}(w)$ directly after~$\Mst(w)$ and add the remaining agents in an
arbitrary order after~$s^\reor_{\Mst}(w)$.
If $s^{\reor}_{\Mst} (w) = \emptyset$, then we place all men $m \in U$ such 
that there is an arc from $v_m$ to $v_{\Mst (w)}$ in the anti-arborescence 
before $\Mst (w)$ and all other men in an arbitrary order after $\Mst (w)$.

We use the same procedure to determine the preferences of all $m\in X_U$. 
Thereby, if there exist
anti-arborescences in $H_U$ and $H_W$, we are able to reorder the preferences 
of 
the guessed agents such that $\Mst$ becomes the unique stable 
matching. Thus, we return YES in this case.
Otherwise, we reject this guess (Lines~\ref{line:reject1} and~\ref{line:sym}),
continue with the next guess and return NO after rejecting the last 
guess.

It remains to prove the correctness of the algorithm: 
\begin{lemma} \label{le:ex-uni-reor-XP1}
    If the algorithm accepts a guess, then there exists a solution to the given 
    instance of \exact-\uni-\reor.
\end{lemma}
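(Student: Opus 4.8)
The plan is to start from an accepted guess, build the reordered profile explicitly, and then verify its two defining properties. An accepted guess supplies the sets $X_U\subseteq U$ and $X_W\subseteq W$ with $|X_U\cup X_W|\le\ell$, the guessed rotation successors $s^{\reor}_{\Mst}(a)$ for $a\in X_U\cup X_W$, and spanning anti-arborescences of the graphs $H_U$ (the graph $H$ built in Lines~\ref{line:begin}--\ref{line:end}) and $H_W$ (the one produced by the symmetric repetition in Line~\ref{line:sym}); call these anti-arborescences $\mathcal{A}_U$ and $\mathcal{A}_W$. I would define the profile $\mathcal{P}'$ exactly as prescribed after the algorithm: for every $w\in X_W$, place the men $m$ with $(v_m,v_{\Mst(w)})\in\mathcal{A}_U$ (in any order) before $\Mst(w)$, then $s^{\reor}_{\Mst}(w)$ directly after $\Mst(w)$ if it is not $\emptyset$, then the remaining men arbitrarily; symmetrically for every $m\in X_U$ using $\mathcal{A}_W$; and leave every other agent's preferences untouched. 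Since only the agents in $X_U\cup X_W$ are modified, at most $\ell$ reorderings are performed, so it remains to prove that $\Mst$ is the unique stable matching of $(U,W,\mathcal{P}')$.

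First I would check that $\Mst$ is stable in $\mathcal{P}'$. Take any pair $\{m,w\}\notin\Mst$. If $m\notin X_U$ and $w\notin X_W$, both lists are unchanged, so a block would already exist in $\mathcal{I}$, which is excluded by the rejection in Line~\ref{line:checkbp}. Otherwise, up to the symmetric case, assume $w\in X_W$; in $\mathcal{P}'$ the men $w$ prefers to $\Mst(w)$ are precisely those $m'$ with $(v_{m'},v_{\Mst(w)})\in\mathcal{A}_U$ (the guessed successor $s^{\reor}_{\Mst}(w)$ sits behind $\Mst(w)$ and is irrelevant here). Inspecting how arcs into $v_{\Mst(w)}$ enter $H_U$ (Lines~\ref{line:arcxu},\ref{line:arc},\ref{line:arcxw}) shows such an arc from $v_{m'}$ is created only when $\{m',w\}$ cannot block $\Mst$: either $m'\in X_U$ with $w=s^{\reor}_{\Mst}(m')$, in which case $m'$ prefers $\Mst(m')$ to $w$ in $\mathcal{P}'$, or $m'\notin X_U$ and $m'$ already prefers $\Mst(m')$ to $w$ in $\mathcal{I}$. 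The only remaining danger is that $m$ itself ends up preferring $w$ to $\Mst(m)$; by the symmetric description of $\mathcal{P}'$ on $X_U$ this forces $(v_w,v_{\Mst(m)})\in\mathcal{A}_W$, hence $m=s^{\reor}_{\Mst}(w)$ while simultaneously $w=s^{\reor}_{\Mst}(m)$ — exactly the configuration forbidden by Line~\ref{line:checkmutualsuccessor}. Hence no pair blocks $\Mst$.

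Next I would show that $\Mst$ exposes neither a man-rotation nor a woman-rotation in $\mathcal{P}'$; together with stability and the rotation theory cited above, this gives that $\Mst$ is the unique stable matching. By symmetry it suffices to handle man-rotations, which are cycles in the digraph $D_U$ on the vertices $v_m$ carrying the arc $v_m\to v_{\Mst(s_{\Mst}(m))}$ for every man $m$ possessing a rotation successor $s_{\Mst}(m)$ in $\mathcal{P}'$. The core claim is that $D_U$ is an arc-subgraph of $\mathcal{A}_U$ (reading $t$ as ``no rotation successor''), which at once yields that $D_U$ is acyclic. For $m\in X_U$, $v_m$ has a single outgoing arc in $H_U$ (to $v_{\Mst(s^{\reor}_{\Mst}(m))}$, or to $t$), so $\mathcal{A}_U$ contains it, and using Lines~\ref{line:checksuccs} and~\ref{line:checkrotation} — and, when $s^{\reor}_{\Mst}(m)\in X_W$, the fact that $m$ then lies in the first part of that woman's list in $\mathcal{P}'$ — one checks that the genuine rotation successor of $m$ in $\mathcal{P}'$ is $s^{\reor}_{\Mst}(m)$. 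For $m\notin X_U$ one walks down $m$'s unchanged list below $\Mst(m)$: a woman $w\notin X_W$ qualifies iff she prefers $m$ to $\Mst(w)$ in $\mathcal{I}$ (the first such being $\widetilde{s}_{\Mst}(m)$), and a woman $w\in X_W$ qualifies iff $(v_m,v_{\Mst(w)})\in\mathcal{A}_U$; since $v_m$ has out-degree one in the anti-arborescence, at most one $w\in X_W$ qualifies, so the rotation successor of $m$ is exactly the head of the unique arc $\mathcal{A}_U$ picks out of $v_m$ among those offered in Lines~\ref{line:sinkedge}--\ref{line:arcxw} (the guards $s^{\reor}_{\Mst}(w)\neq m$ on those arcs are what make the men placed before $\Mst(w)$ in $\mathcal{P}'$ coincide with the in-arcs of $v_{\Mst(w)}$ in $\mathcal{A}_U$). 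Thus $D_U\subseteq\mathcal{A}_U$ is acyclic, no man-rotation is exposed, and symmetrically $\mathcal{A}_W$ rules out woman-rotations, so $\Mst$ is the unique stable matching and the accepted guess indeed yields a solution.

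The main obstacle I expect is the bookkeeping of the last paragraph: arguing, for all men at once, that $\mathcal{P}'$ realizes precisely the rotation-successor digraph encoded by the arcs selected in $\mathcal{A}_U$ (and $\mathcal{A}_W$). The delicate points are the uniqueness of the outgoing arc in an anti-arborescence (which is what prevents two distinct $X_W$-women from both ``capturing'' the same man), the interaction between the guessed successors of $X_U\cup X_W$ and the first parts dictated by $\mathcal{A}_U$ and $\mathcal{A}_W$, and confirming that each of the rejection tests in Lines~\ref{line:checkbp},~\ref{line:checksuccs},~\ref{line:checkrotation}, and~\ref{line:checkmutualsuccessor} is invoked exactly where a consistency conflict would otherwise arise.
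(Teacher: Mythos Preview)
Your proposal is correct and follows essentially the same approach as the paper's proof: construct the reordered profile from the anti-arborescences $\mathcal{A}_U$ and $\mathcal{A}_W$, verify stability of $\Mst$ by tracing which agents land in the first part of a reordered list, and establish uniqueness by showing that the rotation-successor digraph coincides with (and is therefore bounded by) the acyclic $\mathcal{A}_U$ (respectively~$\mathcal{A}_W$). The paper's version is terser and separates out the well-definedness of the reordered preferences (the mutual-successor conflict excluded by Line~\ref{line:checkmutualsuccessor}) as a preliminary step, whereas you fold this into the stability argument as your ``remaining danger''; your case analysis for computing the actual rotation successor of each $m\notin X_U$ is also more explicit than the paper's, but the logic is the same.
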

\begin{proof}
    We now prove that for every pair $(\mathcal{A}_U, \mathcal{A}_W)$ of 
    anti-arborescences found in
    the graphs~$H_U$ and~$H_W$, the resulting reorderings of the preferences
    make $\Mst$ the unique stable matching.
    First we show that the preferences are well-defined:
    The only case in which we require for two agents~$a$ and $a'$ that $a $ 
    prefers~$a'$ to $\Mst (a')$ and $a$ prefers $\Mst (a')$ to $a'$ is when $a' 
    $ 
    is the rotation successor of~$a$ and $a$ is the rotation successor of~$a'$.
    However, due to the check in Line~\ref{line:checkmutualsuccessor}, this is 
    not possible.
    For the sake of contradiction, let 
    us 
    assume that $\Mst$ is not the unique stable matching. There are two 
    possibilities, either $\Mst$ is not a stable matching or $\Mst$ is not the 
    unique 
    stable matching. 
    
    First, we show that $\Mst$ is stable.
    Since we rejected each guess containing a blocking pair~$\{m, w\}$ with $m 
\in U\setminus 
    X_U$ and $w\in W \setminus X_W$, each blocking pair involves at least one 
    agent
    from $X_U\cup X_W$.
    Fix a blocking pair $\{m, w\}$.
    We assume without loss of generality
    that~$m\in X_U$.
    The algorithm
    constructs the preferences of $m$ such that $m$ only prefers a
    woman~$w$ to $\Mst(m)$ if the arc from $v_w$ to $v_{\Mst(m)}$ is part of 
    the 
    anti-arborescence $\mathcal{A}_W$.
    However, such an arc only exists in~$H_W$ if $w$ ranks $m$ below $\Mst(w)$,
    which implies that~$\{m,w\}$ cannot be blocking.
    
    Now we show that $\Mst$ is the unique stable matching.
    To do so, we show that for each man $m$ with $(v_m, v_{\Mst(w)})\in 
    \mathcal{A}_U$, the rotation successor of $m$ is the women $w$.
    If~$\mathcal{A}_U$ contains the arc $(v_m, t)$, then $m$ has no rotation 
    successor.
    A symmetric statement also holds for each woman.
    From this, the uniqueness of $\Mst$ easily follows, as for any rotation 
    (without loss
    of generality a man-rotation)
    $(m_{i_1},w_{j_1}),\dots,(m_{i_{r}},w_{j_{r}})$ exposed in $\Mst$, the 
    anti-arborescence~$\mathcal{A}_U$ then contains the arcs 
    $(v_{m_{i_{j}}},v_{m_{i_{j+1}}})$ for all $j\in
    [r]$, and therefore contains a cycle, a contradiction to $\mathcal{A}_U$ 
being an 
    anti-arborescence.
    
    So consider a man $m$ with $(v_m, v_{\Mst(w)})\in \mathcal{A}
    _U$.
    By the definition of $H_U$, man~$m$ prefers~$\Mst(m) $ to $w$.
    We claim that for every woman $w'\in W$ whom $m$ ranks between~$\Mst(m)$ and 
    $w$, 
    it holds that $w'$ prefers $\Mst(w')$ to $m$ after the modifications. If 
    $w'\in X_W$, then we reorder the preferences of $w'$ in this way; 
    otherwise, this follows since~$H_U$ contains edge~$(v_m, v_{\Mst (w)})$. It remains to show that 
    $w$ prefers $m$ to $\Mst(w)$. 
    If $w\in X_W$, then we reordered the preferences of $w$ such that $w$ 
    prefers $m$ to $\Mst(w)$.
    Otherwise, woman $w$ prefers $m$ to $\Mst(w)$ since~$H_U$ contains edge~$(v_m, v_{\Mst (w)})$.
    If $(v_m, t) \in \mathcal{A}_U$, then analogous arguments show that any 
    woman $w$ after~$\Mst(m) $ in $m$'s preferences does not prefer $m$ to~$\Mst(w)$.
\end{proof}

\begin{lemma} \label{le:ex-uni-reor-XP2}
    If the algorithm rejects every guess, then there exists no solution to the given 
    instance of \exact-\uni-\reor.
\end{lemma}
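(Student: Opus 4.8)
The plan is to prove the contrapositive: assuming the given \exact-\uni-\reor instance admits a solution, I will exhibit one guess that \Cref{alg:xp-unique-reor} accepts. So let $\mathcal{I}'$ be an instance obtained from $\mathcal{I}=(U,W,\mathcal{P})$ by reordering the preferences of a set $Y=Y_U\cup Y_W$ of at most $\ell$ agents (with $Y_U\subseteq U$, $Y_W\subseteq W$) such that $\Mst$ is the unique stable matching in $\mathcal{I}'$; without loss of generality $Y$ is exactly the set of agents whose list actually changed, so $|Y|\le\ell$. The guess I would feed to the algorithm is $X_U:=Y_U$ and $X_W:=Y_W$ in Line~\ref{line:guessx}, and, in Lines~\ref{line:guesssm}--\ref{line:guesssw}, $s^{\reor}_{\Mst}(a)$ equal to the \emph{true} rotation successor of $a$ with respect to $\Mst$ in $\mathcal{I}'$ for every $a\in X_U\cup X_W$ (with value $\emptyset$ if $a$ has no rotation successor in $\mathcal{I}'$). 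The claim is that this guess survives every rejection check and yields the required anti-arborescences.

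First I would dispatch the four early rejection tests. Line~\ref{line:checkbp}: a blocking pair $\{m,w\}$ for $\Mst$ with $m\notin X_U$ and $w\notin X_W$ has both preference lists unchanged from $\mathcal{I}$ to $\mathcal{I}'$, hence also blocks $\Mst$ in $\mathcal{I}'$, contradicting stability there. Line~\ref{line:checksuccs}: if $s^{\reor}_{\Mst}(a)\notin X_U\cup X_W$ then its preferences agree in $\mathcal{I}$ and $\mathcal{I}'$, and by the definition of a rotation successor (in $\mathcal{I}'$) the agent $s^{\reor}_{\Mst}(a)$ prefers $a$ to $\Mst(s^{\reor}_{\Mst}(a))$, so the test does not fire. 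Line~\ref{line:checkrotation}: if $m\in X_U$ with $s^{\reor}_{\Mst}(m)=\emptyset$ and some $w\in W\setminus X_W$ prefers $m$ to $\Mst(w)$, then this preference also holds in $\mathcal{I}'$, so either $m$ ranks $w$ above $\Mst(m)$ in $\mathcal{I}'$ (then $\{m,w\}$ blocks $\Mst$ in $\mathcal{I}'$) or $m$ ranks $\Mst(m)$ above $w$ (then $m$ has a rotation successor, i.e.\ $s^{\reor}_{\Mst}(m)\neq\emptyset$), a contradiction in both cases; the mirror case is symmetric. Line~\ref{line:checkmutualsuccessor}: if $m\in X_U$, $w\in X_W$, $w=s^{\reor}_{\Mst}(m)$, and $m=s^{\reor}_{\Mst}(w)$, then the first equality forces $w$ to prefer $m$ to $\Mst(w)$ while the second forces $w$ to prefer $\Mst(w)$ to $m$, impossible. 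Hence none of these lines reject the guess.

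Next I would build a spanning anti-arborescence in $H_U$ (the one in $H_W$ is obtained symmetrically via Line~\ref{line:sym}). For every vertex $v_m$ I select the outgoing arc determined by the true rotation successor $w^\star:=s^{\reor}_{\Mst}(m)$ of $m$ in $\mathcal{I}'$: the arc $(v_m,t)$ if $w^\star=\emptyset$, and $(v_m,v_{\Mst(w^\star)})$ otherwise. For $m\in X_U$ this is exactly the arc added in Line~\ref{line:arcxu} for the guessed value. For $m\in U\setminus X_U$, whose list is unchanged, I would verify the chosen arc is present in $H_U$ by a case distinction: if $w^\star=\emptyset$ then no woman of $W\setminus X_W$ after $\Mst(m)$ prefers $m$, so $\widetilde{s}_{\Mst}(m)$ is undefined and $(v_m,t)$ is added in Line~\ref{line:sinkedge}; if $w^\star\in W\setminus X_W$ then no woman of $W\setminus X_W$ strictly between $\Mst(m)$ and $w^\star$ in $m$'s list prefers $m$, so $w^\star=\widetilde{s}_{\Mst}(m)$ and Line~\ref{line:arcsuccessor} adds $(v_m,v_{\Mst(w^\star)})$; if $w^\star\in X_W$ then $w^\star$ lies after $\Mst(m)$ and before $\widetilde{s}_{\Mst}(m)$ (or below $\Mst(m)$ when $\widetilde{s}_{\Mst}(m)$ is undefined) in $m$'s list, and since $w^\star$ prefers $m$ to $\Mst(w^\star)$ we get $s^{\reor}_{\Mst}(w^\star)\neq m$, so Line~\ref{line:arcxw} (resp.\ Line~\ref{line:arc}) adds $(v_m,v_{\Mst(w^\star)})$. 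The selected arcs give out-degree one to every vertex except $t$; a directed cycle $v_{m_1}\to\cdots\to v_{m_r}\to v_{m_1}$ among them would mean $\Mst(m_{k+1})=s^{\reor}_{\Mst}(m_k)$ for all $k$, i.e.\ $(m_1,\Mst(m_1)),\dots,(m_r,\Mst(m_r))$ is a man-rotation exposed in $\Mst$ in $\mathcal{I}'$, contradicting uniqueness of $\Mst$. Hence the selected arcs form a spanning anti-arborescence of $H_U$, and symmetrically one exists in $H_W$, so Lines~\ref{line:reject1} and~\ref{line:sym} do not reject either; the guess is accepted, contradicting the hypothesis. Therefore no solution exists.

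The part I expect to be most delicate is the arc-existence check in the third paragraph: matching the true rotation-successor function of the hypothetical solution against the somewhat intricate construction of Lines~\ref{line:beginm}--\ref{line:endconstruction}, in particular the auxiliary woman $\widetilde{s}_{\Mst}(m)$ and the side conditions ``$s^{\reor}_{\Mst}(w')\neq m$'' guarding the arcs into $X_W$-vertices. Once this bookkeeping is done, acyclicity is immediate, since a cycle is literally an exposed rotation, which the uniqueness of $\Mst$ forbids.
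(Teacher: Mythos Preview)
Your proposal is correct and follows essentially the same approach as the paper's proof: prove the contrapositive, feed the algorithm the guess $X_U=Y_U$, $X_W=Y_W$ with the true rotation successors from $\mathcal{I}'$, argue each early rejection test is passed, and then select as anti-arborescence arcs the ones pointing to the true rotation successors, verifying they lie in $H_U$ (resp.\ $H_W$) and are acyclic because a cycle would be an exposed rotation. Your write-up is in fact more careful than the paper's in explicitly walking through Lines~\ref{line:checkbp}--\ref{line:checkmutualsuccessor} and in the three-way case split for $w^\star$ in the arc-existence check (in particular the side condition $s^{\reor}_{\Mst}(w^\star)\neq m$), which the paper treats more tersely.
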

\begin{proof}
    For the sake of contradiction, let us assume that the given instance of
    \exact-\uni-\reor admits a solution. We claim that
    there exists a guess of~$X_U\cup X_W$ and their rotation successors for 
    which 
    $H_U$ and
    $H_W$ 
    both 
    admit anti-arborescences. This leads to a contradiction, thereby proving 
    the lemma. 
    
    Assume that there exists a set $Y_A= Y_U\cup Y_W$ with
    $Y_U\subseteq U$ and $Y_W\subseteq W$ of $\ell$~agents and a reordering of 
    the preferences of these agents such that $\Mst$ is the unique stable 
    matching 
    in the resulting instance. Let $\mathcal{I'}$ denote the manipulated 
    instance and let $s_{\Mst}^{\mathcal{I}'}(m)$ denote the rotation successor 
    of 
    some $m\in U$ in $\mathcal{I'}$ and $s_{\Mst}^{\mathcal{I}'}(w)$ the 
rotation 
    successor of some~$w\in W$. Then, there exists a guess where $X_U=Y_U$, 
    $X_W=Y_W$, and for all $m\in X_U$, his rotation successor is~$s_{\Mst}^{\mathcal{I}'}(m)$, i.e., $s_{\Mst}^\reor (m) =  
    s_{\Mst}^{\mathcal{I}'}(m)$,
    and for all $w\in X_U$, her rotation successor is $s_{\Mst}^{\mathcal{I}'}(w)$. First of all 
    note that the guess is not immediately rejected, as for each blocking pair 
    one of the involved agents needs to be part of $X_U\cup X_W$, and no agent 
    from $X_A$ without a rotation successor can be preferred by an unmodified 
    agent of opposite gender to its partner in $\Mst$.
    Furthermore, there cannot be a man-woman pair~$(m, w)$ with $m \in X_U$ and $w\in X_W$ such that they are to be their mutual rotation successor:
    If $w$ is the rotation successor of $m$, then $m $ prefers $\Mst (m)$ to $w$, and if $m $ is the rotation successor of $w$, then $m$ prefers $w$ to $\Mst (m)$ and these two conditions clearly cannot be satisfied at the same time.
    
    We describe how to construct an 
    anti-arborescence $\mathcal{A}_U$ for $H_U$ while the construction 
    for~$\mathcal{A}_W$ works
    analogously. 
    For 
    each $m\in U$, we include the arc 
    $(v_m,v_{\Mst(s_{\Mst}^{\mathcal{I}'}(m))})$ in 
    $\mathcal{A}_U$ and the arc $(v_m,t)$ if
    $s_{\Mst}^{\mathcal{I}'}(m)=\emptyset$.
    As there is no rotation exposed in $\Mst$ in $\mathcal{I}'$, the resulting 
    graph $\mathcal{A}_U$ is acyclic and every vertex but $t$ has out-degree 
    exactly 1, i.e., $\mathcal{A}_U$ is indeed an anti-arborescence.
    
    It remains to show that $\mathcal{A}_U$ is a subgraph of $H_U$.
    Fix an arc $(v_m, v_{\Mst(s_{\Mst}^{\mathcal{I}'} (m))})\in \mathcal{A}_U$ 
    (where~$v_{\Mst(\emptyset)} := t$).
    For all $m\in X_U$, this arc is contained in $H_U$, as we have already 
    guessed~$s_{\Mst}^{\mathcal{I}'}(m)$ and added the arc
    $(v_m,v_{\Mst(s_{\Mst}^{\mathcal{I}'}(m))})$ 
    to $H_U$. For all $m\in U\setminus X_U$, the 
    woman~$s_{\Mst}^{\mathcal{I}'}(m)$ 
    needs
    to be ranked below $\Mst(m)$ in $m$'s preferences. Moreover, by
    definition,~$s_{\Mst}^{\mathcal{I}'}(m)$ is the first woman after $\Mst(m)$ 
    in 
    $m$'s preferences who prefers~$m$ over~$\Mst(w)$. Thereby, $w$ cannot be 
    ranked after the first
    woman $\widetilde{s}_{\Mst} (m) \in W\setminus X_W$ who prefers~$m$ 
    to~$\Mst(\widetilde{s}_{\Mst} (m))$. Thus, if $w$
    is not~$\widetilde{s}_{\Mst} (m)$, then $w$ is contained in $X_W$.
    In fact, for
    all such women $w$ there exists an arc from $v_m$ to~$v_{\Mst(w)}$ in 
    $H_U$.
\end{proof}

The developed algorithm runs in $\mathcal{O}(2^\ell n^{2\ell+2})$
time since we iterate over up to~$\binom{2n}{\ell}$ guesses for~$X_A$ and for each of 
these guesses, we iterate over $\mathcal{O}(n^\ell)$ guesses for the rotation successors. 
For each guess of~$X_A$ and the rotation successors, graph~$H$ and the anti-arborescence can be computed in 
$\mathcal{O}(n^2)$ time. 
Altogether, the following theorem results from \Cref{le:ex-uni-reor-XP1} 
and 
\Cref{le:ex-uni-reor-XP2}:

\begin{theorem} \label{pr:ex-uni-reor-XP}
    \exact-\uni-\reor is solvable in $\mathcal{O}(2^\ell n^{2\ell+2})$ time.
\end{theorem}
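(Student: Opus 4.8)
The plan is to derive the theorem by combining the two correctness lemmas already established with a direct running-time analysis of \Cref{alg:xp-unique-reor}. For correctness, \Cref{le:ex-uni-reor-XP1} guarantees that whenever the algorithm accepts a guess, the preferences of the guessed agents can be reordered so that $\Mst$ becomes the unique stable matching, hence every YES-output is correct; conversely, \Cref{le:ex-uni-reor-XP2} shows that if the instance admits a solution, then the guess consisting of the agents whose preferences are reordered in that solution, together with their actual rotation successors, is accepted, hence a NO-output (produced only after all guesses have been rejected) is also correct. Thus \Cref{alg:xp-unique-reor} decides \exact-\uni-\reor.

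For the running time, I would first bound the number of guesses. In Line~\ref{line:guessx} the algorithm picks $X_U \cup X_W$ as a subset of the $2n$ agents of size at most $\ell$, for which there are $\sum_{i=0}^{\ell}\binom{2n}{i} = \mathcal{O}\bigl((2n)^\ell\bigr) = \mathcal{O}\bigl(2^\ell n^\ell\bigr)$ possibilities; then in Lines~\ref{line:guesssm}--\ref{line:guesssw} it picks, for each of the at most $\ell$ agents of $X_U \cup X_W$, one rotation successor among the at most $n+1$ candidates, contributing a further factor of $\mathcal{O}\bigl((n+1)^\ell\bigr) = \mathcal{O}(n^\ell)$. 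So there are $\mathcal{O}(2^\ell n^{2\ell})$ guesses in total. Next I would check that a single guess is processed in $\mathcal{O}(n^2)$ time: after an $\mathcal{O}(n^2)$-time preprocessing computing all ranks and the set of blocking pairs of $\Mst$, the consistency checks in Lines~\ref{line:checkbp}--\ref{line:checkmutualsuccessor} each take $\mathcal{O}(n^2)$ time; the graph $H$ has $\mathcal{O}(n)$ vertices and $\mathcal{O}(n^2)$ arcs and can be built in $\mathcal{O}(n^2)$ time; and, since we only need to decide existence of (and if needed produce) a spanning anti-arborescence, this can be done in $\mathcal{O}(|E| + |V|\log|V|) = \mathcal{O}(n^2)$ time by the algorithm of \citet{edmonds1967optimum,DBLP:journals/combinatorica/GabowGST86}. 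The symmetric construction with the roles of men and women swapped in Line~\ref{line:sym} only doubles this. Multiplying, the total running time is $\mathcal{O}(2^\ell n^{2\ell}) \cdot \mathcal{O}(n^2) = \mathcal{O}(2^\ell n^{2\ell+2})$, as claimed.

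The one place I would be careful is arguing that building the arc set of $H$ really costs only $\mathcal{O}(n^2)$ and not $\mathcal{O}(n^3)$: for each man $m \notin X_U$, the relevant women (namely $\widetilde{s}_{\Mst}(m)$ and the members of $X_W$ that $m$ ranks between $\Mst(m)$ and $\widetilde{s}_{\Mst}(m)$, or below $\Mst(m)$ in the degenerate case) should be enumerated by a single left-to-right scan of $m$'s preference list, which costs $\mathcal{O}(n)$ per man and $\mathcal{O}(n^2)$ overall, while the arcs incident to vertices $v_m$ with $m \in X_U$ are read off directly from the guessed successors. Everything else in the analysis is routine bookkeeping, and the theorem then follows immediately from \Cref{le:ex-uni-reor-XP1,le:ex-uni-reor-XP2} together with this time bound.
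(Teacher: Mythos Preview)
Your proposal is correct and follows essentially the same approach as the paper: correctness via \Cref{le:ex-uni-reor-XP1,le:ex-uni-reor-XP2}, and the running-time bound obtained by multiplying $\mathcal{O}(2^\ell n^\ell)$ choices for $X_U\cup X_W$, $\mathcal{O}(n^\ell)$ choices for the rotation successors, and $\mathcal{O}(n^2)$ work per guess. Your write-up is in fact more explicit than the paper's, which simply asserts the $\mathcal{O}(n^2)$ per-guess cost without justifying the construction of $H$ or the anti-arborescence check; your single-scan argument for building the arc set of $H$ is a useful addition.
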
 

\section{Conclusion} \label{se:conc}
We provided a first comprehensive study of the computational complexity
of several manipulative actions and goals in the context of the 
\textsc{Stable Marriage} problem. 
Our diverse set of 
computational complexity results is surveyed in \cref{ta:sum}.

Several challenges for future research remain.
In contrast to the constructive setting considered here, there is also a 
destructive view on 
manipulation, where the goal is to prevent a
certain constellation (see 
\Cref{subsub:const-reor} for a discussion which of our results translate).
Moreover, for the \Iconst-\Iuni scenario not presented here (where one edge shall be contained in every stable matching), our 
hardness 
results for \Iconst-\Iex carry over, as the stable matchings constructed in the proofs of \Cref{th:const-ex-swap,th:const-ex-reor,cor:const-ex-add} are indeed unique.
However, the algorithm for \const-\ex-\delete as well as the 2-approximation 
algorithm for \const-\ex-\reor do not work for \Iconst-\Iuni.
A very specific open question is whether the \exact-\uni-\swap 
problem is fixed-parameter tractable 
when parameterized by the budget.
Additionally, there is clearly a lot of room for investigating more 
manipulative actions. For instance, a manipulator might be able to divide the 
set 
of agents into two parts, where a separate matching for each part needs to be 
found and agents from different parts cannot form a blocking pair. Further, in 
the 
presence of ties, 
assuming a less powerful manipulator, a 
manipulator might only be able to break ties in the preferences  (whether this 
manipulation can have an 
impact depends however on the concrete 
stability concept and the manipulation goal considered). Notably, in several 
matching mechanisms used in practice, e.g., in the context of matching students 
to schools in Estonia \citep{estreport} and in several larger US cities 
\citep{erdil2008s} and in the context of assigning residents to hospitals in 
Scotland 
\citep{scotreport}, ties in the 
agent's preferences are broken uniformly at random. Clearly, one may also
extend the study 
of external 
manipulation 
to stable 
matching problems beyond
\textsc{Stable Marriage}.
A further natural extension of our work would be to consider
weighted manipulation, where one assumes that each possible manipulative action 
comes at a specific cost. All our results 
for the two \Iexact settings as well as the hardness results for \Iconst carry over to the weighted case in a straightforward way, while the 
polynomial-time 
algorithm for \const-\ex-\delete does not work anymore. 

Lastly, it might also be interesting to analyze the power of different 
manipulative actions in real-world scenarios. We already did some very 
preliminary experiments 
for all polynomial-computable cases 
on synthetic data having between 30 and 200 agents, where the 
preferences of agents were drawn uniformly at 
random 
from all possible preferences. The manipulation goal was also set uniformly at 
random. The following two observations were particularly surprising to us: In 
the \Iconst-\Iex setting, \Idelete 
operations seem to be quite 
powerful, as most of the time deleting a 
moderately 
low number of agents (around 10\%) sufficed. In the \Iexact-\Iex setting, 
\Ireor operations are not 
as powerful as one might intuitively suspect, as,  on average, close to half of 
the agents needed to be 
modified---note that there always exists a trivial solution where the 
preferences of 
all agents from one gender are reordered. 

\bigskip

\subsection*{Acknowledgments}
An extended abstract of the paper appeared in the proceedings of the 
\emph{13th International Symposium on Algorithmic Game Theory (SAGT~2020)}, 
Springer LNCS~12283, pages~163--177, 2020. Niclas Boehmer was 
supported by the DFG project MaMu  (NI 
369/19). Klaus 
Heeger was supported by the DFG Research Training Group 2434 ``Facets of 
Complexity''.

\bibliographystyle{plainnat}

\begin{thebibliography}{49}
\providecommand{\natexlab}[1]{#1}
\expandafter\ifx\csname urlstyle\endcsname\relax
  \providecommand{\doi}[1]{doi: #1}\else
  \providecommand{\doi}{doi: \begingroup \urlstyle{rm}\Url}\fi

\bibitem[Anwar and Bahaj(2003)]{DBLP:journals/te/AnwarB03}
Arif~A. Anwar and A.~S. Bahaj.
\newblock Student project allocation using integer programming.
\newblock \emph{IEEE Transactions on Education}, 46\penalty0 (3):\penalty0
  359--367, 2003.

\bibitem[Aziz et~al.(2015)Aziz, Seedig, and von Wedel]{DBLP:conf/atal/AzizSW15}
Haris Aziz, Hans~Georg Seedig, and Jana~Karina von Wedel.
\newblock On the susceptibility of the deferred acceptance algorithm.
\newblock In \emph{Proceedings of the 2015 International Conference on
  Autonomous Agents and Multiagent Systems ({AAMAS} '15)}, pages 939--947,
  2015.

\bibitem[Bartholdi~III et~al.(1992)Bartholdi~III, Tovey, and
  Trick]{bartholdi1992hard}
John~J Bartholdi~III, Craig~A Tovey, and Michael~A Trick.
\newblock How hard is it to control an election?
\newblock \emph{Mathematical and Computer Modelling}, 16\penalty0
  (8-9):\penalty0 27--40, 1992.

\bibitem[Boehmer et~al.(2020)Boehmer, Bredereck, Knop, and
  Luo]{DBLP:conf/ijcai/BoehmerBKL20}
Niclas Boehmer, Robert Bredereck, Dusan Knop, and Junjie Luo.
\newblock Fine-grained view on bribery for group identification.
\newblock In \emph{Proceedings of the 29th International Joint Conference on
  Artificial Intelligence ({IJCAI} '20)}, pages 67--73, 2020.

\bibitem[Boehmer et~al.(2021)Boehmer, Bredereck, Faliszewski, and
  Niedermeier]{DBLP:journals/corr/abs-2010-09678}
Niclas Boehmer, Robert Bredereck, Piotr Faliszewski, and Rolf Niedermeier.
\newblock Winner robustness via swap- and shift-bribery: Parameterized counting
  complexity and experiments.
\newblock In \emph{Proceedings of the 30th International Joint Conference on
  Artificial Intelligence ({IJCAI} '21)}, pages 52--58. ijcai.org, 2021.

\bibitem[Cechl{\'{a}}rov{\'{a}} and Fleiner(2005)]{CechlarovaF05}
Katar{\'{\i}}na Cechl{\'{a}}rov{\'{a}} and Tam{\'{a}}s Fleiner.
\newblock On a generalization of the stable roommates problem.
\newblock \emph{{ACM} Transactions on Algorithms}, 1\penalty0 (1):\penalty0
  143--156, 2005.

\bibitem[Chen et~al.(2019)Chen, Skowron, and Sorge]{DBLP:conf/ec/ChenSS19}
Jiehua Chen, Piotr Skowron, and Manuel Sorge.
\newblock Matchings under preferences: Strength of stability and trade-offs.
\newblock In \emph{Proceedings of the 2019 {ACM} Conference on Economics and
  Computation ({EC} '19)}, pages 41--59, 2019.

\bibitem[Clark(2006)]{Clark2006}
Simon Clark.
\newblock The uniqueness of stable matchings.
\newblock \emph{Contributions in Theoretical Economics}, 6\penalty0
  (1):\penalty0 8, 2006.

\bibitem[Consuegra et~al.(2013)Consuegra, Kumar, and
  Narasimhan]{CONSUEGRA2013468}
Mario~E. Consuegra, Rajnish Kumar, and Giri Narasimhan.
\newblock Comment on “on the uniqueness of stable marriage matchings”
  [{E}conomic {L}etters 69(1):1–8, 2000].
\newblock \emph{Economics Letters}, 121\penalty0 (3):\penalty0 468, 2013.
\newblock \doi{https://doi.org/10.1016/j.econlet.2013.09.019}.

\bibitem[Cseh and Heeger(2020)]{DBLP:journals/disopt/CsehH20}
{\'{A}}gnes Cseh and Klaus Heeger.
\newblock The stable marriage problem with ties and restricted edges.
\newblock \emph{Discrete Optimization}, 36:\penalty0 100571, 2020.

\bibitem[Downey and Fellows(2013)]{DBLP:series/mcs/DowneyF99}
Rodney~G. Downey and Michael~R. Fellows.
\newblock \emph{Fundamentals of Parameterized Complexity}.
\newblock Texts in Computer Science. Springer, 2013.
\newblock ISBN 978-1-4612-6798-0.

\bibitem[Drgas-Burchardt(2013)]{MR3023042}
Ewa Drgas-Burchardt.
\newblock A note on the uniqueness of stable marriage matching.
\newblock \emph{Discussiones Mathematicae. Graph Theory}, 33\penalty0
  (1):\penalty0 49--55, 2013.

\bibitem[Edmonds(1967)]{edmonds1967optimum}
Jack Edmonds.
\newblock Optimum branchings.
\newblock \emph{Journal of Research of the National Bureau of Standards B},
  71\penalty0 (4):\penalty0 233--240, 1967.

\bibitem[Eeckhout(2000)]{EECKHOUT20001}
Jan Eeckhout.
\newblock On the uniqueness of stable marriage matchings.
\newblock \emph{Economics Letters}, 69\penalty0 (1):\penalty0 1--8, 2000.
\newblock ISSN 0165-1765.
\newblock \doi{https://doi.org/10.1016/S0165-1765(00)00263-9}.

\bibitem[Erdil and Ergin(2008)]{erdil2008s}
Aytek Erdil and Haluk Ergin.
\newblock What's the matter with tie-breaking? {I}mproving efficiency in school
  choice.
\newblock \emph{American Economic Review}, 98\penalty0 (3):\penalty0 669--89,
  2008.

\bibitem[Faliszewski and Rothe(2016)]{DBLP:reference/choice/FaliszewskiR16}
Piotr Faliszewski and J{\"{o}}rg Rothe.
\newblock Control and bribery in voting.
\newblock In \emph{Handbook of Computational Social Choice}, pages 146--168.
  Cambridge University Press, 2016.

\bibitem[Faliszewski et~al.(2009)Faliszewski, Hemaspaandra, and
  Hemaspaandra]{DBLP:journals/jair/FaliszewskiHH09}
Piotr Faliszewski, Edith Hemaspaandra, and Lane~A. Hemaspaandra.
\newblock How hard is bribery in elections?
\newblock \emph{Journal of Artificial Intelligence Research}, 35:\penalty0
  485--532, 2009.

\bibitem[Faliszewski et~al.(2017)Faliszewski, Skowron, and
  Talmon]{DBLP:conf/atal/FaliszewskiST17}
Piotr Faliszewski, Piotr Skowron, and Nimrod Talmon.
\newblock Bribery as a measure of candidate success: Complexity results for
  approval-based multiwinner rules.
\newblock In \emph{Proceedings of the 16th Conference on Autonomous Agents and
  MultiAgent Systems ({AAMAS} '17)}, pages 6--14, 2017.

\bibitem[Gabow et~al.(1986)Gabow, Galil, Spencer, and
  Tarjan]{DBLP:journals/combinatorica/GabowGST86}
Harold~N. Gabow, Zvi Galil, Thomas~H. Spencer, and Robert~Endre Tarjan.
\newblock Efficient algorithms for finding minimum spanning trees in undirected
  and directed graphs.
\newblock \emph{Combinatorica}, 6\penalty0 (2):\penalty0 109--122, 1986.

\bibitem[Gai et~al.(2007)Gai, Mathieu, de~Montgolfier, and
  Reynier]{DBLP:conf/icdcs/GaiMMR07}
Anh{-}Tuan Gai, Fabien Mathieu, Fabien de~Montgolfier, and Julien Reynier.
\newblock Stratification in {P2P} networks: {A}pplication to {B}it{T}orrent.
\newblock In \emph{Proceedings of the 27th {IEEE} International Conference on
  Distributed Computing Systems ({ICDCS} '07)}, 2007.

\bibitem[Gale and Shapley(1962)]{GaleS62}
D.~Gale and L.~S. Shapley.
\newblock College admissions and the stability of marriage.
\newblock \emph{The American Mathematical Monthly}, 69\penalty0 (1):\penalty0
  9--15, 1962.

\bibitem[Gelain et~al.(2011)Gelain, Pini, Rossi, Venable, and
  Walsh]{10.1007/978-3-642-22427-0_4}
Mirco Gelain, Maria~Silvia Pini, Francesca Rossi, K.~Brent Venable, and Toby
  Walsh.
\newblock Male optimal and unique stable marriages with partially ordered
  preferences.
\newblock In \emph{Proceedings of the First and Second International Workshop
  on Collaborative Agents, Research and Development (CARE '09/'10)}, pages
  44--55, 2011.
\newblock ISBN 978-3-642-22427-0.

\bibitem[Gonczarowski(2014)]{DBLP:conf/sigecom/Gonczarowski14}
Yannai~A. Gonczarowski.
\newblock Manipulation of stable matchings using minimal blacklists.
\newblock In \emph{Proceedings of the 15th {ACM} Conference on Economics and
  Computation ({EC} '14)}, page 449, 2014.
\newblock \doi{10.1145/2600057.2602840}.

\bibitem[Gusfield(1987)]{Gusfield87}
Dan Gusfield.
\newblock Three fast algorithms for four problems in stable marriage.
\newblock \emph{{SIAM} Journal on Computing}, 16\penalty0 (1):\penalty0
  111--128, 1987.

\bibitem[Gusfield and Irving(1989)]{DBLP:books/daglib/0066875}
Dan Gusfield and Robert~W. Irving.
\newblock \emph{The Stable Marriage Problem - Structure and Algorithms}.
\newblock Foundations of Computing Series. {MIT} Press, 1989.
\newblock ISBN 978-0-262-07118-5.

\bibitem[Heyneman et~al.(2008)Heyneman, Anderson, and
  Nuraliyeva]{doi:10.1086/524367}
Stephen P. Heyneman, Kathryn H. Anderson, and Nazym Nuraliyeva.
\newblock The cost of corruption in higher education.
\newblock \emph{Comparative Education Review}, 52\penalty0 (1):\penalty0 1--25,
  2008.

\bibitem[Hopcroft and Karp(1973)]{DBLP:conf/focs/HopcroftK71}
John~E. Hopcroft and Richard~M. Karp.
\newblock An n\({}^{\mbox{5/2}}\) algorithm for maximum matchings in bipartite
  graphs.
\newblock \emph{{SIAM} Journal on Computing}, 2\penalty0 (4):\penalty0
  225--231, 1973.

\bibitem[Hosseini et~al.(2021)Hosseini, Umar, and
  Vaish]{DBLP:journals/corr/abs-2012-04518}
Hadi Hosseini, Fatima Umar, and Rohit Vaish.
\newblock Accomplice manipulation of the deferred acceptance algorithm.
\newblock In \emph{Proceedings of the 30th International Joint Conference on
  Artificial Intelligence ({IJCAI} '21)}, pages 231--237. ijcai.org, 2021.

\bibitem[Hussain et~al.(2019)Hussain, Gamage, Sagor, Tariq, Ma, and
  Imran]{hussain2019systematic}
Sajjad Hussain, Kelum~A.A. Gamage, Md~Hasanuzzaman Sagor, Faisal Tariq, Ling
  Ma, and Muhammad~Ali Imran.
\newblock A systematic review of project allocation methods in undergraduate
  transnational engineering education.
\newblock \emph{Education Sciences}, 9\penalty0 (4):\penalty0 258, 2019.

\bibitem[Irving(2011)]{scotreport}
Rob Irving.
\newblock Matching practices for entry-labor markets – {S}cotland, 2011.
\newblock
  \url{matching-in-practice.eu/wp-content/uploads/2013/05/MiP_-Profile_No.3.pdf}.

\bibitem[Irving et~al.(2003)Irving, Manlove, and
  Scott]{DBLP:conf/stacs/IrvingMS03}
Robert~W. Irving, David Manlove, and Sandy Scott.
\newblock Strong stability in the hospitals/residents problem.
\newblock In \emph{Proceedings of the 20th Annual Symposium on Theoretical
  Aspects of Computer Science ({STACS} '03)}, pages 439--450, 2003.

\bibitem[Karp(1972)]{Karp72}
Richard~M. Karp.
\newblock Reducibility among combinatorial problems.
\newblock In \emph{Proceedings of a symposium on the Complexity of Computer
  Computations, held March 20-22, 1972, at the {IBM} Thomas J. Watson Research
  Center}, pages 85--103, 1972.

\bibitem[Kazakov(2001)]{York}
Dimitar Kazakov.
\newblock Coordination of student project allocation, 2001.
\newblock \url{www-users.cs.york.ac.uk/kazakov/papers/proj.pdf}.

\bibitem[King et~al.(1994)King, Rao, and Tarjan]{DBLP:journals/jal/KingRT94}
Valerie King, S.~Rao, and Robert~Endre Tarjan.
\newblock A faster deterministic maximum flow algorithm.
\newblock \emph{Journal of Algorithms}, 17\penalty0 (3):\penalty0 447--474,
  1994.

\bibitem[Knuth(1976)]{Knuth76}
Donald~E. Knuth.
\newblock \emph{Mariages stables et leurs relations avec d'autres probl\`emes
  combinatoires}.
\newblock Les Presses de l'Universit\'{e} de Montr\'{e}al, Montreal, Que.,
  1976.
\newblock ISBN 0-8405-0342-3.
\newblock Introduction \`a l'analyse math\'{e}matique des algorithmes,
  Collection de la Chaire Aisenstadt.

\bibitem[Lebedev et~al.(2007)Lebedev, Mathieu, Viennot, Gai, Reynier, and
  de~Montgolfier]{DBLP:journals/corr/abs-cs-0612108}
Dmitry Lebedev, Fabien Mathieu, Laurent Viennot, Anh{-}Tuan Gai, Julien
  Reynier, and Fabien de~Montgolfier.
\newblock On using matching theory to understand {P2P} network design.
\newblock In \emph{Proceedings of the International Network Optimization
  Conference 2007 ({INOC} '07)}, 2007.

\bibitem[Liu and Peng(2015)]{LIU2015104}
Qijun Liu and Yaping Peng.
\newblock Corruption in college admissions examinations in {C}hina.
\newblock \emph{International Journal of Educational Development}, 41:\penalty0
  104--111, 2015.

\bibitem[Mai and Vazirani(2018{\natexlab{a}})]{DBLP:conf/esa/MaiV18}
Tung Mai and Vijay~V. Vazirani.
\newblock Finding stable matchings that are robust to errors in the input.
\newblock In \emph{Proceedings of the 26th Annual European Symposium on
  Algorithms ({ESA} '18)}, pages 60:1--60:11, 2018{\natexlab{a}}.

\bibitem[Mai and
  Vazirani(2018{\natexlab{b}})]{DBLP:journals/corr/abs-1804-05537}
Tung Mai and Vijay~V. Vazirani.
\newblock Stable matchings, robust solutions, and finite distributive lattices.
\newblock \emph{arXiv preprint}, arXiv:1804.05537 [cs.DM], 2018{\natexlab{b}}.

\bibitem[Manlove et~al.(2002)Manlove, Irving, Iwama, Miyazaki, and
  Morita]{DBLP:journals/tcs/ManloveIIMM02}
David Manlove, Robert~W. Irving, Kazuo Iwama, Shuichi Miyazaki, and Yasufumi
  Morita.
\newblock Hard variants of stable marriage.
\newblock \emph{Theoretical Computer Science}, 276\penalty0 (1-2):\penalty0
  261--279, 2002.

\bibitem[Manlove(2013)]{DBLP:books/ws/Manlove13}
David~F. Manlove.
\newblock \emph{Algorithmics of Matching Under Preferences}, volume~2 of
  \emph{Series on Theoretical Computer Science}.
\newblock WorldScientific, 2013.
\newblock ISBN 978-981-4425-24-7.

\bibitem[Orlin(2013)]{DBLP:conf/stoc/Orlin13}
James~B. Orlin.
\newblock Max flows in {$O(nm)$} time, or better.
\newblock In \emph{Proceedings of the 45th ACM Symposium on Theory of Computing
  (STOC '13)}, pages 765--774, 2013.

\bibitem[Pini et~al.(2011)Pini, Rossi, Venable, and
  Walsh]{DBLP:journals/aamas/PiniRVW11}
Maria~Silvia Pini, Francesca Rossi, Kristen~Brent Venable, and Toby Walsh.
\newblock Manipulation complexity and gender neutrality in stable marriage
  procedures.
\newblock \emph{Autonomous Agents and Multi-Agent Systems}, 22\penalty0
  (1):\penalty0 183--199, 2011.

\bibitem[Reny(2021)]{Reny21}
Philip~J. Reny.
\newblock A simple sufficient condition for a unique and student-efficient
  stable matching in the college admissions problem.
\newblock \emph{Economic Theory Bulletin}, 9:\penalty0 7--9, 2021.
\newblock \doi{10.1007/s40505-020-00197-2}.

\bibitem[Roth(1982)]{DBLP:journals/mor/Roth82}
Alvin~E. Roth.
\newblock The economics of matching: Stability and incentives.
\newblock \emph{Mathematics of Operations Research}, 7\penalty0 (4):\penalty0
  617--628, 1982.

\bibitem[Roth(1986)]{roth1986allocation}
Alvin~E Roth.
\newblock On the allocation of residents to rural hospitals: a general property
  of two-sided matching markets.
\newblock \emph{Econometrica}, 54\penalty0 (2):\penalty0 425--427, 1986.

\bibitem[Shen et~al.(2018)Shen, Tang, and Deng]{DBLP:conf/aaai/ShenTD18}
Weiran Shen, Pingzhong Tang, and Yuan Deng.
\newblock Coalition manipulation of {G}ale-{S}hapley algorithm.
\newblock In \emph{Proceedings of the 32nd {AAAI} Conference on Artificial
  Intelligence ({AAAI} '18)}, pages 1210--1217, 2018.

\bibitem[Teo et~al.(2001)Teo, Sethuraman, and
  Tan]{DBLP:journals/mansci/TeoST01}
Chung{-}Piaw Teo, Jay Sethuraman, and Wee{-}Peng Tan.
\newblock Gale-{Shapley} stable marriage problem revisited: Strategic issues
  and applications.
\newblock \emph{Management Science}, 47\penalty0 (9):\penalty0 1252--1267,
  2001.

\bibitem[Triin et~al.(2014)Triin, P{\~{o}}der, and Veski]{estreport}
Lauri Triin, Kaire P{\~{o}}der, and Andr{\'{e}} Veski.
\newblock Matching practices for elementary schools – {E}stonia, 2014.
\newblock
  \url{matching-in-practice.eu/wp-content/uploads/2014/02/MiP_-Profile_No.18.pdf}.

\end{thebibliography}

\end{document}